\newcommandx{\note}[2][1=]{\todo[linecolor=blue,backgroundcolor=blue!25,bordercolor=blue,#1]{#2}}
\let\oldnl\nl
\newcommand{\nonl}{\renewcommand{\nl}{\let\nl\oldnl}}
\newlength\mylen
\newcommand\myinput[1]{\nonl
  \settowidth\mylen{\KwIn{}}%
  \setlength\hangindent{\mylen}%
  \hspace*{\mylen}#1\\}
\newcommand{\IP}{{\rm I}\kern-0.18em{\rm P}}
\newcommand{\1}{{\rm 1}\kern-0.24em{\rm I}}
\newcommand{\E}{{\rm I}\kern-0.18em{\rm E}}
\newcommand{\R}{{\rm I}\kern-0.18em{\rm R}}
\newcommand{\cD}{\mathcal{D}}
\newcommand{\tF}{\Tilde{F}}
\newcommand{\tf}{\Tilde{f}}
\newcommand{\cA}{\mathcal{A}}
\newcommand{\cI}{\mathcal{I}}
\newcommand{\htau}{{\widehat{\tau}}}
\newcommand{\hDelta}{{\widehat{\Delta}}}
\renewcommand{\P}[1]{\mathbb{P}\left[#1\right]}
\newcommand{\EV}[1]{\mathbb{E}\left[#1\right]}
\newcommand{\I}[1]{\mathbb{I}\left[#1\right]}
\newcommand\independent{\protect\mathpalette{\protect\independenT}{\perp}}
\def\independenT#1#2{\mathrel{\rlap{$#1#2$}\mkern2mu{#1#2}}}
\newtheorem{theorem}{Theorem}
\newtheorem{assumption}{Assumption}
\newtheorem{lemma}{Lemma}
\newtheorem{proposition}{Proposition}
\newtheorem{corollary}{Corollary}
\newtheorem{definition}{Definition}
\title{Adaptive conformal classification with noisy labels}
\author{Matteo Sesia\thanks{Department of Data Sciences and Operations, University of Southern California, Los Angeles, California.}, Y.~X.~Rachel Wang\thanks{School of Mathematics and Statistics, University of Sydney, Sydney, Australia.}, Xin Tong\footnotemark[1]}
\date{\today}
\begin{document}

\maketitle

\begin{abstract}
This paper develops novel conformal prediction methods for classification tasks that can automatically adapt to random label contamination in the calibration sample, leading to more informative prediction sets with stronger coverage guarantees compared to state-of-the-art approaches. This is made possible by a precise characterization of the effective coverage inflation (or deflation) suffered by standard conformal inferences in the presence of label contamination, which is then made actionable through new calibration algorithms. Our solution is flexible and can leverage different modeling assumptions about the label contamination process, while requiring no knowledge of the underlying data distribution or of the inner workings of the machine-learning classifier. The advantages of the proposed methods are demonstrated through extensive simulations and an application to object classification with the CIFAR-10H image data set.
\end{abstract}

\section{Introduction} \label{sec:intro}

\subsection{Background and motivation}

Conformal inference \citep{vovk2005algorithmic} is a versatile and increasingly popular framework for estimating the uncertainty of predictions output by any supervised learning model, including for example modern deep neural networks for multi-class classification.
Its two key strengths are that: (1) it requires no parametric assumptions about the data distribution, making it relevant to a variety of real-world applications; and (2) it can accommodate arbitrarily complex {\em black-box} predictive models, mitigating the risk of early obsolescence in the rapidly evolving field of machine learning.
In a nutshell, conformal inference is able to transform the output of any model into a relatively informative and well-calibrated {\em prediction set} for the unknown label of a future test point, while providing precise coverage guarantees in finite samples.
Intuitively, this is achieved by carefully leveraging the empirical distribution of suitable residuals (or {\em conformity scores}) evaluated on held-out data that were not utilized for training.
Notably, these guarantees can be established assuming only that the calibration data are {\em exchangeable} (or, for simplicity, independent and identically distributed) random samples from the population of interest.
While this framework can provide useful uncertainty estimates while relying on weaker assumptions compared to classical parametric modeling, existing conformal prediction methods are not always fully satisfactory.
One limitation that we aim to address in this paper is the reliance on the assumption that the calibration data are all labeled correctly, which is often unrealistic.

In fact, it may be expensive or even outright unfeasible to acquire accurately labeled (or {\em clean}) data in many applications, even if there is an abundance of lower-quality observations with imperfect labels \citep{natarajan2013learning,sukhbaatar2014training,song2022learning,yao2022asymmetric}, which we also call {\em noisy} or {\em contaminated}. 
For example, the Amazon Mechanical Turk is a crowdsourcing platform that allows researchers and organizations to assign labels to large-scale unsupervised data by leveraging a global workforce of remote human annotators \citep{sorokin2008utility}.
This platform is utilized across diverse fields, including image recognition and natural language processing.
Crowdsourcing tends to be fast and cost-effective, but it raises concerns about poor annotation quality and its impacts on the reliability of downstream analyses \citep{ipeirotis2010quality,snow2008cheap,kennedy2020shape,aguinis2021mturk}.
Other applications motivating our work are those in which data with labels carrying sensitive personal information are randomly anonymized to safeguard user privacy \citep{warner1965randomized,evfimievski2003limiting,kasiviswanathan2011can,ghazi2021deep,angelopoulos2022private}.

Although significant efforts have been dedicated to {\em training} predictive models using data with noisy labels \citep{natarajan2013learning,sukhbaatar2014training,northcutt2021pervasive,algan2021image,song2022learning}, the challenge of {\em calibrating} those models using conformal inference has only recently begun to receive some attention \citep{cauchois2022predictive,einbinder2022conformal,barber2022conformal} and is still understudied. 
This paper aims to help fill this gap.

\subsection{Preview of our contributions and paper outline} \label{sec:intro-preview}

This paper addresses the questions of whether and how conformal inference should account for the possible presence of incorrectly labeled calibration samples, focusing on the task of constructing prediction sets for multi-class classification.
After recalling the relevant background, we begin in Section~\ref{sec:preliminaries} by carefully studying the impact of label contamination on the effective coverage achieved by standard conformal prediction sets. 
Our analysis shows and quantifies precisely how label contamination may lead to prediction sets that are either too liberal or too conservative.
The practical significance of these results becomes clear in Section~\ref{sec:methods}, where we develop and study a novel calibration method that can automatically adapt to label contamination, without assuming any knowledge of the data distribution or of the classifier.
As we shall see, our method can produce more informative prediction sets with more robust coverage guarantees compared to standard approaches.
Initially, we assume the label contamination process is known and belongs to a broad family encompassing most of the typical models from the related literature on learning from noisy labels \citep{natarajan2013learning, scott2013classification,ghosh2017robust}.
Then, we extend our solution to accommodate models that may depend on unknown parameters, and we explain how to estimate those parameters.

In Section~\ref{sec:empirical}, we demonstrate the practical performance of our methods through extensive numerical experiments and an analysis of the CIFAR-10H image classification data set \citep{peterson2019human}.
A preview of some results is given here by Figure~\ref{fig:exp-cifar10-marginal}, which demonstrates that our methods are practical and produce smaller (more informative) prediction sets compared to standard conformal inference techniques, while maintaining valid coverage.
Finally, Section~\ref{sec:discussion} concludes with a discussion and some ideas for future work.

Some content is deferred to the Appendix, for lack of space.
Section~\ref{app:review-aps} reviews the standard conformal prediction approach.
Section~\ref{app:methods} provides additional algorithmic tables outlining the proposed methods.
Section~\ref{app:simplified-methods} describes some simplified implementations of our methods under more restricted contamination models.
Section~\ref{app:worst-case} discusses the relation between our work and existing theoretical worst-case results about the behavior  of standard conformal predictions applied to non-exchangeable data \citep{barber2022conformal}.
Section~\ref{app:extensions-theory} extends our theoretical results from Section~\ref{sec:preliminaries} to study marginal instead of label-conditional coverage.
Section~\ref{app:extensions} extends our methods from Section~\ref{sec:methods} to construct adaptive prediction sets satisfying other types of theoretical guarantees, such as marginal coverage and calibration-conditional coverage \citep{vovk2012conditional}.
Further extensions encompassing even stronger guarantees, such as equalized coverage over protected categories \citep{romano2019malice}, would also be possible but are omitted for length-related reasons.
Section~\ref{app:proofs} contains all mathematical proofs.
Section~\ref{app:figures} describes additional numerical results, and Section~\ref{app:experiments} provides further technical details about our experiments.

\begin{figure}[!htb]
\centering 
\includegraphics[width=0.85\linewidth]{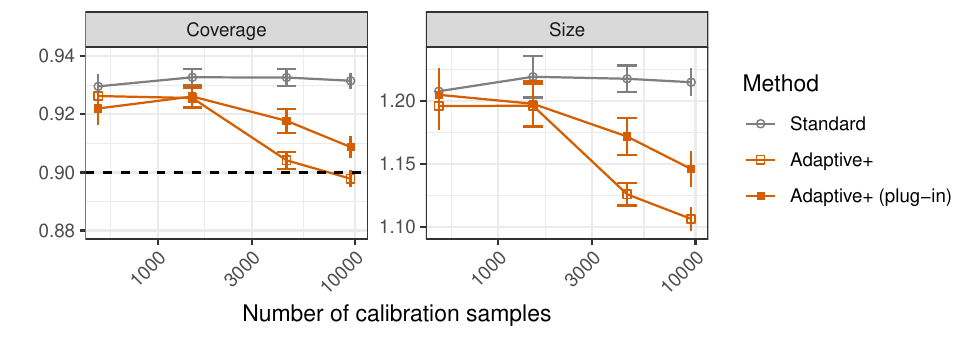}
\caption{Performance of the proposed adaptive conformal prediction methods on CIFAR-10H image data with imperfect labels.
The results are shown as a function of the calibration sample size and compared to the performance of standard conformal predictions, which are too conservative due to the presence of inaccurate labels.
The dashed line indicates the nominal 90\% marginal coverage level.}
\label{fig:exp-cifar10-marginal}
\end{figure}

\subsection{Related work} \label{sec:related-work}

Conformal inference was pioneered by Vovk and collaborators \citep{vovk2005algorithmic} and has become a very active area of research \citep{lei2013distribution,lei2014distribution,lei2018distribution}, with applications including outlier detection \citep{smith2015conformal,guan2019prediction,Liang_2022_integrative_p_val,bates2021testing,bashari2023derandomized}, regression \citep{romano2019conformalized,sesia2020comparison,sesia2021conformal}, and classification \citep{hechtlinger2018cautious,sadinle2019least,cauchois2020knowing,romano2020classification,einbinder2022training}.

Numerous works have studied the robustness of conformal inferences to the breakdown of standard data exchangeability assumptions. 
However, these investigations largely concentrated on different challenges, such as covariate shift \citep{tibshirani2019conformal}, repeated sampling \citep{sesia2023conformal}, time-series dependencies \citep{gibbs2021adaptive}, and label shift \citep{podkopaev2021distribution,si2023pac}.
We refer to Section~\ref{subsec:linear-contam-model} for further details on the crucial distinction between label {\em shift} and the label {\em contamination} problem considered in this paper.

\citet{barber2022conformal} conducted a general theoretical study of conformal inference with non-exchangeable data.
Although their goals and approaches differ from ours, some of their insights also apply to scenarios involving label contamination. 
Our findings are consistent with theirs, albeit somewhat weaker, when viewed through the lens of a {\em worst-case} analysis of traditional methods.
However, our goal is not to refine their {\em theoretical} analysis. 
On the contrary, we make a completely separate methodological contribution---we develop novel methods for constructing more informative prediction sets in the presence of {\em random} label contamination.

Our paper is more closely related to \citet{cauchois2022predictive} and \citet{einbinder2022conformal}, which also considered conformal prediction sets calibrated using imperfectly labeled data.
However, our research is distinct and involves many novelties. 
The problem of \citet{cauchois2022predictive} is different because they allow the calibration data to contain imperfect labels but seek to predict analogous imperfect labels for the test data. By contrast, we aim to predict the true (clean) labels.
Our perspective is more aligned with that of \citet{einbinder2022conformal}, but their efforts are focused on establishing the conservativeness of standard conformal inferences to certain forms of random label contamination, and they do not attempt to mitigate that (often excessive) conservativeness or to protect against other forms of contamination.
Our theoretical analyses take a different approach, leading to more general and quantitative results that hold for a broader class of contamination models.
Then, our main contribution is to develop novel methods that can automatically adapt to label contamination.
To the best of our knowledge, this is the first work to propose conformal prediction methods that are adaptive to label contamination.

\section{Preliminaries} \label{sec:preliminaries}

\subsection{Problem statement} \label{sec:setup}

Consider $n+1$ data points $(X_i,Y_i,\tilde{Y}_i)$, for $i \in [n+1] = \{1,\ldots,n+1\}$, where $X_i \in \mathbb{R}^d$ is a feature vector, $Y_i \in [K]$ is a latent categorical label, and $\tilde{Y}_i \in [K]$ is an observable label that we interpret as a contaminated version of $Y_i$.
Assume the data are i.i.d.~random samples from some unknown distribution. 
A weaker {\em exchangeability} assumption often turns out to be sufficient in the related literature, but we focus on i.i.d.~data in this paper.
The problem we consider is that of constructing informative conformal prediction sets for the true label $Y_{n+1}$ of a test point with features $X_{n+1}$, leveraging the available observations $(X_i, \tilde{Y}_i)$ indexed by $[n]$, to which we collectively refer as $\mathcal{D}$.
Of course, in order to establish precise coverage guarantees, it will be necessary to introduce some assumptions about the relation between the true and contaminated labels, as we shall see.
Our setup thus extends the standard conformal prediction framework for uncontaminated data, which corresponds for the special case in which $\tilde{Y} = Y$ almost surely.

\subsection{Relevant technical background} \label{sec:background}

This section recalls some relevant background on standard conformal classification methods, which ignore the possibility of label contamination.
Our notation is inspired by that of \citet{romano2020classification} but involves some helpful adaptations.
Note that we will refer to standard conformal prediction approaches as working with observations $(X, \tilde{Y})$ to keep our notation consistent, although the prior literature typically did not distinguish between $Y$ and $\tilde{Y}$.

Let $\mathcal{A}$ denote any learning algorithm for classification; e.g., a logistic regression model or a neural network.
The role of $\mathcal{A}$, which is treated as a black-box throughout the paper, is to train a model $\hat{\pi}$ that estimates the distribution of $\tilde{Y} \mid X$ using the data in $\mathcal{D}$.
For any $k \in [K]$ and $x \in \mathbb{R}^d$, let $\hat{\pi}(x,k) \in [0,1]$ be the estimated probability of $\tilde{Y} = k \mid X=x$.
In the following, it will sometimes be convenient to assume that the distribution of $\hat{\pi}(X,k)$ is continuous.
This condition could be relaxed at the cost of a more involved notation, but it can be made realistic by simply adding a small amount of independent noise to the output of the model.
Further, we assume that $\hat{\pi}$ is normalized; i.e., $\sum_{k=1}^{K} \hat{\pi}(x,k) = 1$.
Aside from these requirements, we allow $\hat{\pi}$ to be anything and do not necessarily expect it to model $\tilde{Y} \mid X$ accurately.
Thus, typical off-the-shelf classifiers can provide suitable statistics $\hat{\pi}$ by default. 
For example, one may choose $\hat{\pi}$ to be the output of the final soft-max layer of a deep neural network.

This paper studies how to translate such a black-box model into a reliable prediction set for $Y_{n+1}$ given $X_{n+1}$.
But first we need to review the standard conformal approach to the special case where $Y=\tilde{Y}$.
The first key notion that we recall is that of a {\em prediction function}.

\begin{definition}[Prediction function] \label{def:pred-function}
Let $\mathcal{C}$ be a set-valued function, whose form may depend on the model $\hat{\pi}$, that takes as input $x \in \mathbb{R}^d$ and $\tau = (\tau_1, \ldots, \tau_K) \in [0,1]^K$, and returns as output a subset of $[K]$.
We say that $\mathcal{C}$ is a prediction function if it is monotone increasing with respect to each element of $\tau$, 
is such that the event $k \in \mathcal{C}(x, \tau)$ depends on $\tau$ only through its $k$-th element $\tau_k$, and
satisfies $k \in \mathcal{C}(X, \tau)$ whenever $\tau_k=1$, for any $k \in [K]$.
\end{definition}
Note that the dependence of a prediction function $\mathcal{C}$ on $\hat{\pi}$ will typically be kept implicit unless otherwise necessary to avoid ambiguity; i.e., $\mathcal{C}(x, \tau) = \mathcal{C}(x, \tau; \hat{\pi}) \subseteq [K]$.

For any prediction function $\mathcal{C}$, we define the {\em conformity score} function $\hat{s}: \mathbb{R}^d \times [K] \mapsto [0,1]$, also implicitly depending on $\hat{\pi}$, as that function which outputs the smallest value of $\tau_k$ allowing the label $k$ to be contained in the set $\mathcal{C}(x, \tau_k)$. That is,
\begin{align}  \label{eq:conf-scores}
  \hat{s}(x, k)
  = \inf \left\{ \tau_k \in [0,1] : k \in \mathcal{C}(x, \tau_k) \right\}.
\end{align}
Note that the short-hand $\mathcal{C}(X, \tau_k)$ above is a slight abuse of notation, since $\mathcal{C}$ was defined for a vector-valued input $\tau$, but it does not introduce any ambiguity. 

An example of $\mathcal{C}$ is the function that outputs the set of all labels $k \in [K]$ for which the estimated conditional probability of $\tilde{Y}=k \mid X=x$ is sufficiently large; i.e.,
\begin{align} \label{eq:pred-function-hps}
  \mathcal{C}(x, \tau; \hat{\pi})
  & := \{ k \in [K] : \hat{\pi}(x,k) \geq 1 - \tau_k\}.
\end{align}
The associated scores are  $\hat{s}(x,k) = 1-\hat{\pi}(x,k)$. These are sometimes called {\em homogeneous conformity scores} because the prediction function in~\eqref{eq:pred-function-hps} is not designed to account for heteroscedasticity in the distribution of $Y \mid X$  \citep{cauchois2020knowing}.
While this choice of $\mathcal{C}$ is convenient to keep the notation simple, all of our results also extend to other prediction functions, including those associated with the more sophisticated {\em generalized inverse quantile} conformity scores of \citet{romano2020classification}, which we review in Section~\ref{app:review-aps}.

A standard implementation of conformal inference begins by randomly splitting the data into two disjoint subsets, $\mathcal{D}^{\text{train}}$ and $\mathcal{D}^{\mathrm{cal}}$, such that $\mathcal{D} = \mathcal{D}^{\text{train}} \cup \mathcal{D}^{\mathrm{cal}}$.
The model $\hat{\pi}$ is trained using the data in $\mathcal{D}^{\text{train}}$. The held-out data in $\mathcal{D}^{\mathrm{cal}}$ are utilized to compute conformity scores $\hat{s}(X_i, k)$ via~\eqref{eq:conf-scores}, for all $k \in [K]$ and $i \in \mathcal{D}^{\mathrm{cal}}$, according to the desired prediction function $\mathcal{C}$.
These scores are then utilized to calibrate a prediction set for a new test point with features $X_{n+1}$ as follows. 
For each $k \in [K]$, define $\mathcal{D}_k^{\mathrm{cal}} = \{ i \in \mathcal{D}^{\mathrm{cal}} : \tilde{Y}_i = k\}$, $n_k = |\mathcal{D}_k^{\mathrm{cal}}|$, and $\hat{\tau} = (\hat{\tau}_1, \ldots, \hat{\tau}_K)$, where
\begin{align} \label{eq:def-tau-hat}
  \hat{\tau}_k = \lceil (1+n_k)\cdot(1-\alpha) \rceil \text{-th smallest value in } \{\hat{s}(X_i,k)\}_{i \in \mathcal{D}_k^{\mathrm{cal}}},
\end{align}
and $\alpha \in (0,1)$ is the desired significance level.
The prediction set for $Y_{n+1}$ is given by $\hat{C}(X_{n+1}) = \mathcal{C}(X_{n+1}, \hat{\tau}; \hat{\pi})$.
See Algorithm~\ref{alg:standard-lab-cond} in Section~\ref{app:standard-lab-cond} for a summary of this method.
This procedure makes it possible to prove that $\hat{C}(X_{n+1})$ has \textit{label-conditional coverage} \citep{vovk2003mondrian},
\begin{align} \label{eq:def-lab-cond-coverage}
  \P{ Y_{n+1} \in \hat{C}(X_{n+1}) \mid Y_{n+1} = k} \geq 1-\alpha, \qquad \text{ for all } k \in [K],
\end{align}
as long as $Y = \tilde{Y}$ almost surely. See Proposition~\ref{prop:standard-coverage-label} for a formal statement of this result, which also provides an almost-matching coverage upper bound.
Note that the probability in~\eqref{eq:def-lab-cond-coverage} is taken with respect to $X_{n+1}$ and the data in $\mathcal{D}$, both of which are random.
In the following sections, we will study the behavior of this method while allowing $\tilde{Y} \neq Y$.

In the meantime, we recall that an alternative standard approach is to construct prediction sets $\hat{C}(X_{n+1})$ satisfying the following weaker notion of {\em marginal coverage}:
\begin{align} \label{eq:def-marg-coverage}
  \P{ Y_{n+1} \in \hat{C}(X_{n+1})} \geq 1-\alpha.
\end{align}
As long as $Y = \tilde{Y}$ almost surely, prediction sets with marginal coverage can be obtained by simply replacing the subset $\mathcal{D}_k^{\mathrm{cal}}$ with $\mathcal{D}^{\mathrm{cal}}$ in Algorithm~\ref{alg:standard-lab-cond}. See Algorithm~\ref{alg:standard-marg} and Proposition~\ref{prop:standard-coverage-marginal} in Section~\ref{app:standard-marg} for further details.
While marginal coverage~\eqref{eq:def-marg-coverage} is not as strong as label-conditional coverage~\eqref{eq:def-lab-cond-coverage}, it is a useful notion because it is easier to achieve using smaller, and hence more informative, prediction sets when limited data are available.
Therefore, we will study how to efficiently control both~\eqref{eq:def-lab-cond-coverage} and~\eqref{eq:def-marg-coverage} using contaminated data, leaving it to practitioners to determine which type of guarantee is most appropriate for a given application.
However, due to lack of space, most details about marginal coverage are presented in the Appendix.




\subsection{General coverage bounds under label contamination} \label{sec:general-coverage-bounds}

This section analyzes theoretically the behavior of standard conformal classification approaches applied with contaminated calibration data.
In particular, we demonstrate how label noise can cause the effective coverage of these methods to be either inflated or deflated, as made precise by an explicit factor that depends on the distribution of the conformity scores. 
For simplicity, we focus here on label-conditional coverage~\eqref{eq:def-lab-cond-coverage}, studying the behavior of Algorithm~\ref{alg:standard-lab-cond}.
These results will be extended to study marginal coverage~\eqref{eq:def-marg-coverage} in Section~\ref{app:extensions-theory}.

It is worth emphasizing that the results presented in this section require no assumptions on the contamination process and encompass a wide range of scenarios in which label noise leads to over-coverage or under-coverage; thus, our analysis is more general than that of \citet{einbinder2022conformal}, which focused on establishing conservativeness under a narrower random corruption model.
That being said, it will become clear in Section~\ref{sec:methods} that some assumptions about the contamination model are useful to achieve the more ambitious goal of developing practical conformal prediction methods that can automatically adapt to label noise.

The following notation will be helpful. For any $k,l \in [K]$ and $t \in \mathbb{R}$, define
  \begin{align} \label{eq:cdf-scores}
\begin{split}
    F_l^{k}(t) & := \P{\hat{s}(X,k) \leq t \mid Y=l, \mathcal{D}^{\text{train}}},  \\
    \tilde{F}_l^{k}(t) & := \P{\hat{s}(X,k) \leq t \mid \tilde{Y}=l, \mathcal{D}^{\text{train}}}.
  \end{split}
  \end{align}
In words, $F_l^{k}$ is the cumulative distribution function of $\hat{s}(X,k)$, based on a fixed function $\hat{s}$ and a random sample $X$ from the distribution of $X \mid Y=l$, namely $P_l$. Analogously, $\tilde{F}_l^{k}$ is the cumulative distribution function corresponding to a random sample $X$ from the distribution of $X \mid \tilde{Y}=l$, namely $\tilde{P}_l$.
For any $k \in [K]$ and $t \in \mathbb{R}$, define also $\Delta_k(t)$ as
  \begin{align} \label{eq:delta}
    \Delta_k(t) & := F_k^{k}(t) - \tF_k^{k}(t).
  \end{align}
We refer to $\Delta_k(t)$ as the {\em coverage inflation factor} because its expected value controls the discrepancy between the real and nominal coverage of the prediction sets output by Algorithm~\ref{alg:standard-lab-cond}, as established by the next result.
Note that $\Delta_k(t)$ may be either positive or negative.

\begin{theorem}\label{thm:coverage-lab-cond}
Suppose $(X_i,Y_i,\tilde{Y}_i)$ are i.i.d.~for all $i \in [n+1]$.
Fix any prediction function $\mathcal{C}$ satisfying Definition~\ref{def:pred-function}, and let $\hat{C}(X_{n+1})$ indicate the prediction set output by Algorithm~\ref{alg:standard-lab-cond} applied using the corrupted labels $\tilde{Y}_i$ instead of the clean labels $Y_i$, for all $i \in [n]$.
Then,
\begin{align} \label{eq:prop-label-cond-coverage-lower}
  \P{Y_{n+1} \in \hat{C}(X_{n+1}) \mid Y_{n+1} = k} \geq 1 - \alpha + \EV{\Delta_k(\htau_k)}.
\end{align}
Further, if the conformity scores $\hat{s}(X_i,\tilde{Y}_i)$ used by Algorithm~\ref{alg:standard-lab-cond} are almost-surely distinct,
\begin{align} \label{eq:prop-label-cond-coverage-upper}
  \P{Y_{n+1} \in \hat{C}(X_{n+1}) \mid Y_{n+1} = k} \leq 1 - \alpha + \frac{1}{n_k+1} + \EV{\Delta_k(\htau_k)}.
\end{align}
\end{theorem}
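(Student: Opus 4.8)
The plan is to reduce the contaminated label-conditional coverage to the clean case by exploiting the structure of Algorithm~\ref{alg:standard-lab-cond}: the threshold $\htau_k$ is computed purely from the scores $\{\hat{s}(X_i,k)\}_{i\in\cD_k^{\mathrm{cal}}}$, where membership in $\cD_k^{\mathrm{cal}}$ is governed by the \emph{contaminated} labels $\tilde{Y}_i=k$, whereas the coverage event $Y_{n+1}\in\hat C(X_{n+1})$ is, by the properties in Definition~\ref{def:pred-function}, equivalent to $\hat{s}(X_{n+1},Y_{n+1})\le \htau_{Y_{n+1}}$, which involves the \emph{clean} test label. So conditionally on $\{Y_{n+1}=k\}$ and on $\cD^{\text{train}}$, the test score $\hat{s}(X_{n+1},k)$ has CDF $F_k^k$, while the $n_k$ calibration scores feeding $\htau_k$ are i.i.d.\ with CDF $\tF_k^k$. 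First I would condition on $\cD^{\text{train}}$ throughout (so $\hat{s}$ is a fixed function) and then further condition on $n_k$, so that $\htau_k$ is the $\lceil(1+n_k)(1-\alpha)\rceil$-th order statistic of $n_k$ i.i.d.\ draws from $\tF_k^k$, independent of the test point.

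The core computation is then a standard order-statistic coverage bound, but with a mismatch between the calibration CDF $\tF_k^k$ and the test CDF $F_k^k$. Writing $m=\lceil(1+n_k)(1-\alpha)\rceil$, I would use
\begin{align*}
  \P{\hat{s}(X_{n+1},k)\le \htau_k \mid Y_{n+1}=k,\ \cD^{\text{train}},\ n_k}
  = \EV{F_k^k(\htau_k)}
  = \EV{\tF_k^k(\htau_k)} + \EV{\Delta_k(\htau_k)},
\end{align*}
using the definition~\eqref{eq:delta} of $\Delta_k$. Now $\tF_k^k(\htau_k)$ is exactly the evaluation of the calibration CDF at its own $m$-th order statistic, so its expectation is governed by the usual beta/uniform argument: if the calibration scores were continuous, $\tF_k^k$ applied to them yields i.i.d.\ uniforms, $\tF_k^k(\htau_k)$ is distributed as the $m$-th order statistic of $n_k$ uniforms with mean $m/(n_k+1)$, and $m/(n_k+1)=\lceil(1+n_k)(1-\alpha)\rceil/(n_k+1)\ge 1-\alpha$. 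In general (without assuming continuity of $\hat s(X,k)$ under $\tilde Y=k$), the inequality $\EV{\tF_k^k(\htau_k)}\ge 1-\alpha$ still holds by the standard conformal argument — this is precisely the content of the clean-data lower bound in Proposition~\ref{prop:standard-coverage-label} applied with $\tF$ in place of $F$ — giving~\eqref{eq:prop-label-cond-coverage-lower} after taking expectations over $n_k$ and $\cD^{\text{train}}$. For the upper bound~\eqref{eq:prop-label-cond-coverage-upper}, I would invoke the extra hypothesis that the scores $\hat{s}(X_i,\tilde Y_i)$ are a.s.\ distinct, which makes $\tF_k^k(\htau_k)$ exactly the $m$-th order statistic of $n_k$ uniforms (no ties inflating the rank), so $\EV{\tF_k^k(\htau_k)}=m/(n_k+1)\le 1-\alpha+1/(n_k+1)$, and the claim follows by the same decomposition.

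A few technical points need care. The equivalence of the coverage event with $\hat{s}(X_{n+1},Y_{n+1})\le \htau_{Y_{n+1}}$ uses all three parts of Definition~\ref{def:pred-function} — monotonicity, the coordinatewise dependence, and the $\tau_k=1$ boundary condition — together with the defining infimum~\eqref{eq:conf-scores}; I would state this as a small lemma-style observation. One must also handle the degenerate case $n_k=0$ (or $n_k$ small enough that $m>n_k$), where $\htau_k$ is conventionally set so that $\hat C$ contains $k$ with probability one, i.e.\ $F_k^k(\htau_k)=1$ and $\tF_k^k(\htau_k)=1$, so both inequalities hold trivially; the $1/(n_k+1)$ slack in the upper bound is harmless there. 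The main obstacle — really the only non-mechanical step — is making the conditioning rigorous: one must check that after conditioning on $\cD^{\text{train}}$ and on the value of $n_k$, the multiset of calibration scores $\{\hat{s}(X_i,k)\}_{i\in\cD_k^{\mathrm{cal}}}$ is genuinely i.i.d.\ from $\tF_k^k$ and independent of $(X_{n+1},Y_{n+1})$. This follows because the $(X_i,Y_i,\tilde Y_i)$ are i.i.d., so conditioning on which indices have $\tilde Y_i=k$ and on $\cD^{\text{train}}$ leaves the corresponding $X_i$ i.i.d.\ from $\tP_k$; the test point is independent of everything in $\cD$. Once that is in place, the rest is the routine order-statistics bookkeeping sketched above, and taking the outer expectation over $\cD^{\text{train}}$ and $n_k$ reproduces~\eqref{eq:prop-label-cond-coverage-lower} and~\eqref{eq:prop-label-cond-coverage-upper} verbatim.
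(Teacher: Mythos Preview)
Your proposal is correct and follows essentially the same approach as the paper: both arguments use the equivalence $\{Y_{n+1}\in\hat C(X_{n+1})\}=\{\hat s(X_{n+1},k)\le\htau_k\}$ on $\{Y_{n+1}=k\}$, decompose the resulting coverage probability as $\EV{\tF_k^k(\htau_k)}+\EV{\Delta_k(\htau_k)}$, and then bound the first term via the standard conformal coverage result (Proposition~\ref{prop:standard-coverage-label}) applied to the contaminated labels. The paper packages the first term directly as $\P{\tilde Y_{n+1}\in\hat C(X_{n+1})\mid \tilde Y_{n+1}=k}$ and invokes the proposition, whereas you spell out the order-statistic mechanics more explicitly, but the substance is identical.
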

Above, $\EV{\Delta_k(\htau_k)}$ denotes an expected value taken with respect to the randomness in all data in $\mathcal{D}$, including the calibration samples $\mathcal{D}^{\mathrm{cal}}$ upon which $\htau_k$ depends.
Note that it would also possible to obtain slightly stronger versions of~\eqref{eq:prop-label-cond-coverage-lower} and~\eqref{eq:prop-label-cond-coverage-upper} in which the data $\mathcal{D}$ are treated as fixed and $\EV{\Delta_k(\htau_k)}$ is replaced by $\Delta_k(\htau_k)$; see the proof of Theorem~\ref{thm:coverage-lab-cond}.
In any case, this result tells us that standard conformal inferences may be either overly conservative or too liberal when dealing with contaminated data, depending on whether $\EV{\Delta_k(\htau_k)}$ is positive or negative.

This result, which we extend in Section~\ref{app:extensions-theory} to study marginal coverage, serves as the starting point of our methodology.
As a preliminary step towards the goal of achieving {\em tight} coverage at level $1-\alpha$, the next subsection introduces additional modeling assumptions about the label contamination process and sheds more light onto the situations in which standard conformal inferences can be guaranteed to be conservative.

\subsection{Coverage lower bounds under a general linear contamination model}
\label{subsec:linear-contam-model}

To obtain more interpretable and actionable expressions for the coverage bounds presented above, it is necessary to introduce some assumptions about the relation between the latent labels $Y$ and the observable labels $\tilde{Y}$.
Fortunately, significant progress can be made by simply assuming that $\tilde{Y}$ is conditionally independent of $X$ given $Y$.
This corresponds to a widely used class of label contamination models \citep{natarajan2013learning, ghosh2017robust}.
\begin{assumption}\label{assumption:linear-contam}
$\tilde{Y} \independent X \mid Y$.
\end{assumption}

Assumption~\ref{assumption:linear-contam} gives rise to a convenient relation between the distribution of $X \mid \tilde{Y}=k$, namely $\tilde{P}_k$, and the distributions of $X \mid Y=l$, namely $P_l$, for all $k,l \in [K]$.
In plain words, for any $k \in [K]$, the distribution $\tilde{P}_k$ is a linear mixture of the distributions $P_l$ for all $l \in [K]$.

\begin{proposition}\label{prop:indep-linear}
  Let $M$ be a $K \times K$ matrix such that $M_{kl} = \mathbb{P}[Y=l \mid \tilde{Y}=k]$ for any $k,l \in [K]$.
  Then, Assumption~\ref{assumption:linear-contam} implies that 
\begin{align}       \label{eq:contam_model}
  \tilde{P}_k = \sum_{l=1}^{K} M_{kl} P_l.
\end{align}
\end{proposition}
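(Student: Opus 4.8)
The plan is to verify the identity~\eqref{eq:contam_model} as an equality of probability measures by evaluating both sides on an arbitrary measurable set $A \subseteq \mathbb{R}^d$ and invoking the law of total probability together with the conditional independence in Assumption~\ref{assumption:linear-contam}. This is a short computation rather than a deep argument.

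First I would fix $k \in [K]$ with $\P{\tilde{Y}=k}>0$ (the statement is vacuous otherwise) and a Borel set $A$, and condition on the latent label $Y$ to write
\begin{align*}
  \tilde{P}_k(A) = \P{X \in A \mid \tilde{Y}=k} = \sum_{l=1}^{K} \P{X \in A \mid \tilde{Y}=k,\, Y=l}\, \P{Y=l \mid \tilde{Y}=k}.
\end{align*}
Next I would apply Assumption~\ref{assumption:linear-contam}: since $\tilde{Y} \independent X \mid Y$, additionally conditioning on the event $\{\tilde{Y}=k\}$ does not change the law of $X$ given $\{Y=l\}$, so that $\P{X \in A \mid \tilde{Y}=k,\, Y=l} = \P{X \in A \mid Y=l} = P_l(A)$ for every $l$ with $\P{Y=l,\, \tilde{Y}=k}>0$; for the remaining indices the coefficient $\P{Y=l \mid \tilde{Y}=k}$ vanishes and the corresponding term may be dropped. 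Substituting the definition $M_{kl} = \P{Y=l \mid \tilde{Y}=k}$ then yields $\tilde{P}_k(A) = \sum_{l=1}^{K} M_{kl} P_l(A)$.

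Finally, since $A$ was an arbitrary measurable set, this establishes the equality of measures $\tilde{P}_k = \sum_{l=1}^{K} M_{kl} P_l$, which is~\eqref{eq:contam_model}. There is no genuine obstacle here; the only points requiring a line of care are the measure-theoretic bookkeeping around null events (indices $l$ with $\P{Y=l,\, \tilde{Y}=k}=0$, or values $k$ with $\P{\tilde{Y}=k}=0$), which are dispatched by the convention above, and the observation that each row of $M$ sums to one, so that~\eqref{eq:contam_model} is indeed a bona fide mixture representation of $\tilde{P}_k$ in terms of the clean class-conditional distributions $P_1,\ldots,P_K$.
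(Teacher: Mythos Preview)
Your proof is correct and follows essentially the same approach as the paper: both condition on the latent label $Y$ via the law of total probability and then invoke Assumption~\ref{assumption:linear-contam} to simplify. The only cosmetic difference is that you apply the conditional independence directly to drop $\tilde{Y}$ from the conditioning in $\P{X \in A \mid \tilde{Y}=k, Y=l}$, whereas the paper routes through Bayes' theorem and the factorization $\P{X, \tilde{Y}=k \mid Y=l} = \P{X \mid Y=l}\,\P{\tilde{Y}=k \mid Y=l}$ before recombining into $M_{kl}$; your version is arguably cleaner and also handles the null-event bookkeeping more carefully.
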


The mixture model in~\eqref{eq:contam_model} extends the classical binary class-dependent noise model in \citet{scott2013classification} to the multi-class setting. 
Further, combined with Bayes' theorem, Assumption~\ref{assumption:linear-contam} leads to: $\mathbb{P}[\tilde{Y}=k \mid  X, Y=l] = M_{kl}\tilde{\rho}_k/\rho_l$, where $\rho_k := \mathbb{P}[Y = k]$ and $\tilde{\rho}_k = \mathbb{P}[\tilde{Y} = k]$ denote the marginal distributions of the clean and corrupted labels, respectively.
If $M_{kl}=\eta / (K-1)$ for all $k\neq l$, where $0\leq\eta\leq1$, and $\rho_k=\tilde{\rho}_k= 1/K$ for all $k\in[K]$, this model reduces to the simple homogeneous noise setting \citep{ghosh2017robust}. 

We remark that it should now be clear how our problem is distinct from that of label shift \citep{podkopaev2021distribution,si2023pac}.
The latter refers to situations in which the marginal distribution of $\tilde{Y}$ may differ from that of $Y$ but $\tilde{P}_k = P_k$ for all $k \in [K]$, while label contamination generally leads to $\tilde{P}_k \neq P_k$. 
In this sense, label contamination is more challenging because the distribution of $X \mid \tilde{Y}$ used in training and calibration differs from that of  $X \mid Y$ used in testing. 
At the same time, our problem is also distinct from {\em covariate shift} \citep{tibshirani2019conformal} because in our case the {\em marginal} distribution of $X$ does not vary.

The first useful implication of Proposition~\ref{prop:indep-linear} is that standard conformal inferences under label contamination are often conservative.


\begin{corollary} \label{cor:coverage-cond}
Consider the same setting of Theorem~\ref{thm:coverage-lab-cond} and assume Assumption~\ref{assumption:linear-contam} holds.
Suppose also that the cumulative distribution functions of the scores \eqref{eq:cdf-scores} satisfy
    \begin{equation} \label{eq:assump_scores-cond}
        \max_{l \neq k} F_l^{k}(t)  \leq F_k^{k}(t),
      \end{equation}
      for all $t\in\mathbb{R}$ and $k\in[K]$.
Then, $\Delta_k(\hat{\tau}_k) \geq 0$ almost-surely for any $k \in [K]$, and hence the predictions sets $\hat{C}(X_{n+1})$ output by Algorithm~\ref{alg:standard-lab-cond} satisfy~\eqref{eq:def-lab-cond-coverage}.
\end{corollary}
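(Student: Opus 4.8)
The plan is to combine Theorem~\ref{thm:coverage-lab-cond} with a pointwise nonnegativity argument for the inflation factor $\Delta_k$. By the lower bound~\eqref{eq:prop-label-cond-coverage-lower}, it suffices to show that $\Delta_k(\htau_k) \geq 0$ almost surely, since then $\EV{\Delta_k(\htau_k)} \geq 0$ and the label-conditional coverage guarantee~\eqref{eq:def-lab-cond-coverage} follows at once. Because $\htau_k \in [0,1]$ is merely a (random) value determined by the calibration data, it is in fact enough to prove the stronger deterministic claim that $\Delta_k(t) \geq 0$ for every $t \in \mathbb{R}$ and every $k \in [K]$, conditionally on $\mathcal{D}^{\text{train}}$.

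First I would invoke Proposition~\ref{prop:indep-linear}: Assumption~\ref{assumption:linear-contam} yields the mixture identity~\eqref{eq:contam_model}, $\tilde{P}_k = \sum_{l=1}^{K} M_{kl} P_l$, with weights $M_{kl} = \mathbb{P}[Y = l \mid \tilde{Y} = k] \geq 0$ summing to $1$ over $l$. Since (conditionally on $\mathcal{D}^{\text{train}}$, so that $\hat{s}$, $M$, and the $P_l$ are all held fixed) $\tF_k^{k}$ is the CDF of $\hat{s}(X,k)$ for $X \sim \tilde{P}_k$ while $F_l^{k}$ is its CDF for $X \sim P_l$, linearity of the mixture gives
\[
  \tF_k^{k}(t) = \sum_{l=1}^{K} M_{kl}\, F_l^{k}(t).
\]

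Next, using $\sum_{l} M_{kl} = 1$ to write $F_k^{k}(t) = \sum_{l} M_{kl} F_k^{k}(t)$, I would express the inflation factor defined in~\eqref{eq:delta} as a convex combination of score-CDF gaps,
\[
  \Delta_k(t) = F_k^{k}(t) - \tF_k^{k}(t) = \sum_{l=1}^{K} M_{kl}\bigl(F_k^{k}(t) - F_l^{k}(t)\bigr) = \sum_{l \neq k} M_{kl}\bigl(F_k^{k}(t) - F_l^{k}(t)\bigr),
\]
where the $l = k$ term vanishes. Each remaining summand is nonnegative: $M_{kl} \geq 0$, and $F_k^{k}(t) - F_l^{k}(t) \geq 0$ by hypothesis~\eqref{eq:assump_scores-cond}. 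Hence $\Delta_k(t) \geq 0$ for all $t$ and all $k$; in particular $\Delta_k(\htau_k) \geq 0$ almost surely, and substituting into~\eqref{eq:prop-label-cond-coverage-lower} gives~\eqref{eq:def-lab-cond-coverage}.

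There is no serious obstacle here; the only points requiring a little care are the bookkeeping of the conditioning on $\mathcal{D}^{\text{train}}$ (so that the mixture identity is applied with a consistently fixed fitted score function and fixed class-conditional laws) and the observation that nonnegativity of $\Delta_k$ holds \emph{pointwise} in $t$, which is precisely what lets us evaluate it at the data-dependent threshold $\htau_k$ and then take the expectation without imposing any further structure on the distribution of $\htau_k$.
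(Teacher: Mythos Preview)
Your proof is correct and follows essentially the same approach as the paper: both invoke Proposition~\ref{prop:indep-linear} to obtain the mixture representation $\tF_k^{k}(t) = \sum_{l} M_{kl} F_l^{k}(t)$, use that the rows of $M$ sum to one, and apply the stochastic dominance hypothesis~\eqref{eq:assump_scores-cond} termwise. Your decomposition $\Delta_k(t) = \sum_{l \neq k} M_{kl}\bigl(F_k^{k}(t) - F_l^{k}(t)\bigr)$ is in fact slightly more direct than the paper's, which first bounds $\sum_{j\neq k} M_{kj} F_j^{k}(t)$ by $(\sum_{j\neq k} M_{kj})\max_{j\neq k} F_j^{k}(t)$ before invoking~\eqref{eq:assump_scores-cond}, but the two arguments are equivalent in substance.
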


The ``stochastic dominance'' condition in \eqref{eq:assump_scores-cond} states that the trained model tends to assign smaller scores $\hat{s}(X,k)$ when $Y=k$.
If $\hat{s}(X,k) = 1- \hat{\pi}(X,k)$, 
this is intuitively equivalent to requiring the model to estimate the distribution of $Y \mid X$ sufficiently accurately as to at least preserve the relative ranking of the most likely labels, on average.

In summary, Corollary~\ref{cor:coverage-cond}, together with its extension to marginal coverage presented in Section~\ref{app:extensions-theory}, provides a lower bound that highlights a certain robustness of standard conformal inferences to label contamination, consistently with \citet{einbinder2022conformal}. 
However, these results are not yet fully satisfactory for at least two reasons. Firstly, it is unclear how to check whether the condition in~\eqref{eq:assump_scores-cond} holds in practice.
Secondly, even if~\eqref{eq:assump_scores-cond} is satisfied, one may be concerned that standard conformal inferences can be too conservative under label contamination, leading to unnecessarily large prediction sets.
This is why we develop in the next section novel methods that can automatically adapt to label contamination, producing more informative prediction sets that rigorously guarantee coverage at the desired level.

We conclude this section by noting that Theorem~\ref{thm:coverage-lab-cond} could also be applied to derive theoretical {\em worst-case} coverage bounds for standard conformal prediction sets calibrated with contaminated data, even without any (empirical) information about the conformity score distribution. 
Section~\ref{app:worst-case} elaborates on this subject and discusses its connection to the elegant theoretical insights of \citet{barber2022conformal}.
However, it is worth emphasizing that such worst-case bounds have limited practical relevance in the context of this paper.
This is because worst-case bounds, while theoretically interesting, do not offer actionable guidance on how to enhance the informativeness of standard conformal prediction sets in the face of random label contamination.

\section{General methodology} \label{sec:methods}

\subsection{Adaptive coverage under a known label contamination model} \label{sec:method-known-noise}

We present a method for constructing prediction sets that automatically adapt to label contamination. 
For simplicity, we begin by focusing on label-conditional coverage assuming that the contamination model is known. 
Subsequently, we will extend similar ideas to accommodate unknown contamination models, and to provide other types of coverage guarantees.
The assumption of a known contamination model is convenient and provides a useful stepping stone for our next developments.
Further, this assumption  is well justified in several interesting applications, such as those involving controlled label randomization designed to ensure label-differential privacy \citep{evfimievski2003limiting,kasiviswanathan2011can,ghazi2021deep}.

\subsubsection{A plug-in estimate for the coverage inflation factor}

Our method leverages Assumption~\ref{assumption:linear-contam} through Proposition~\ref{prop:indep-linear}, which makes it possible to write the inflation factor $\Delta_k(t)$ in~\eqref{eq:delta} in terms of quantities that are either known or estimable.
 In fact, if $M$ in~\eqref{eq:contam_model} admits a matrix inverse $V = M^{-1}$, the factor $\Delta_k(t)$ can be expressed as:
\begin{align} \label{eq:delta-2}
  \Delta_k(t)
  & = (V_{kk}-1) \tF_k^{k}(t) + \sum_{l \neq k} V_{kl} \tF_l^{k}(t).
\end{align}
This expression only depends on $V$, which is assumed to be known, and on the distributions of the scores computed from $\tilde{Y}$, which are observable. 
This suggests it may be possible to estimate $\Delta_k(t)$ from the data and then leverage Theorem~\ref{thm:coverage-lab-cond} to obtain an adaptive prediction method with tighter coverage guarantees compared to the standard approach studied in Section~\ref{sec:preliminaries}. 
In the following, we develop such a method and establish both upper and lower bounds for its coverage, assuming that $V$ is known. The problem of estimating $V$ will be addressed later.

Our method begins by randomly splitting the labeled data into two disjoint subsets, $\mathcal{D}^{\text{train}}$ and $\mathcal{D}^{\mathrm{cal}}$, similarly to standard conformal inference. 
The observations in $\mathcal{D}^{\text{train}}$ are used to train the model $\hat{\pi}$, while those in $\mathcal{D}^{\mathrm{cal}}$ are used to compute conformity scores $\hat{s}(X_i, k)$ via~\eqref{eq:conf-scores}, for all $k \in [K]$ and $i \in \mathcal{D}^{\mathrm{cal}}$, according to the desired prediction function $\mathcal{C}$.
For any $k,l \in [K]$, let $\hat{F}_l^{k}$ denote the empirical cumulative distribution function of $\hat{s}(X_i,k)$ for $i \in \mathcal{D}_l^{\mathrm{cal}} = \{ i \in \mathcal{D}^{\mathrm{cal}} : \tilde{Y}_i = l\}$; i.e.,
\begin{align} \label{eq:F-hat}
  \hat{F}_l^{k}(t) := \frac{1}{n_l} \sum_{i \in \mathcal{D}_l^{\mathrm{cal}}} \I{\hat{s}(X_i,k) \leq t},
\end{align}
where $n_l = |\mathcal{D}_l^{\mathrm{cal}}|$. In other words, $\hat{F}_l^{k}(t)$ intuitively estimates $\tilde{F}_l^{k}$.
If the matrix $V$ is known, one can leverage the $\hat{F}_l^{k}$ functions to compute a plug-in estimate of $\Delta_k(t)$:
\begin{align} \label{eq:delta-hat}
  \hat{\Delta}_k(t)
  & := (V_{kk}-1) \hat{F}_k^{k}(t) + \sum_{l \neq k} V_{kl} \hat{F}_l^{k}(t).
\end{align}

If $\hat{\Delta}_k$ could estimate $\Delta_k$ accurately, one would guess from Theorem~\ref{thm:coverage-lab-cond} that Algorithm~\ref{alg:standard-lab-cond}---the standard conformal inference method that ignores label contamination---leads to an effective coverage close to $1 - \alpha + \hat{\Delta}_k(\hat{\tau}_k)$, where $\hat{\tau}_k$ is the data-driven calibration parameter computed via~\eqref{eq:def-tau-hat}.
This suggests adjusting the nominal significance level to something close to $\alpha - \hat{\Delta}_k(\hat{\tau}_k)$ to achieve $1-\alpha$ coverage. We will now translate this intuition into a rigorous method.

\subsubsection{The adaptive calibration algorithm}

For any $k \in [K]$, define the set $\hat{\mathcal{I}}_k \subseteq [n_k]$ as
\begin{align} \label{eq:Ik-set}
  \hat{\mathcal{I}}_k := \left\{i \in [n_k] : \frac{i}{n_k} \geq 1 - \alpha - \hDelta_k(S^k_{(i)}) + \delta(n_k,n_{*})  \right\},
\end{align}
where $S^k_{(i)}$, for $i \in [n_k]$, are the ascending order statistics of $\{\hat{s}(X_j, k)\}_{j \in \mathcal{D}_j^{\mathrm{cal}}}$, while $n_{*} := \min_{k \in [K]} n_k$, and $\delta(n_k,n_{*})$ is a correction factor specified later. 
Then, our threshold $\hat{\tau}_k$ is:
\begin{align} \label{eq:Ik-set-tauk}
  \hat{\tau}_k
  & = \begin{cases}
    S^k_{(\hat{i}_k)} \text{ where } \hat{i}_k = \min\{i \in \hat{\mathcal{I}_k}\}, & \text{if } \hat{\mathcal{I}}_k \neq \emptyset, \\
    1, & \text{if } \hat{\mathcal{I}}_k = \emptyset.
  \end{cases}
\end{align}
Finally, the adaptive prediction set output by our proposed method is $\hat{C}(X_{n+1}) = \mathcal{C}(X, \hat{\tau})$, where $\hat{\tau}=(\hat{\tau}_1, \ldots,\hat{\tau}_K)$.
This procedure is outlined by Algorithm~\ref{alg:correction}. 

\begin{algorithm}[!ht]
\DontPrintSemicolon

\KwIn{Data set $\{(X_i, \tilde{Y}_i)\}_{i=1}^{n}$ with corrupted labels $\tilde{Y}_i \in [K]$.}
\myinput{
The inverse $V$ of the matrix $M$ in \eqref{eq:contam_model}.}
\myinput{Unlabeled test point with features $X_{n+1}$.}
\myinput{Machine learning algorithm $\mathcal{A}$ for training a $K$-class classifier.}
\myinput{Prediction function $\mathcal{C}$ satisfying Definition~\ref{def:pred-function}; e.g., \eqref{eq:pred-function-hps}.}
\myinput{Desired coverage level $1-\alpha \in (0,1)$.}

Randomly split $[n]$ into two disjoint subsets, $\mathcal{D}^{\text{train}}$ and $\mathcal{D}^{\mathrm{cal}}$.\;
Train the classifier $\mathcal{A}$ on the data in $\mathcal{D}^{\text{train}}$. \;
Compute conformity scores $\hat{s}(X_i,k)$ using \eqref{eq:conf-scores} for all $i \in \mathcal{D}^{\mathrm{cal}}$ and all $k \in [K]$.\;
Define the empirical CDF $\hat{F}_l^{k}$ of $\{\hat{s}(X_i,k) : i \in \mathcal{D}_l^{\mathrm{cal}} \}$ for all $l \in [K]$,  as in \eqref{eq:F-hat}. \;
\For{$k=1, \dots, K$}{
  Define $\mathcal{D}_k^{\mathrm{cal}} = \{ i \in \mathcal{D}^{\mathrm{cal}} : \tilde{Y}_i = k\}$ and $n_k = |\mathcal{D}_k^{\mathrm{cal}}|$.\;
  Sort $\{\hat{s}(X_i,k) : i \in \mathcal{D}_k^{\mathrm{cal}} \}$ into $(S^k_{(1)}, S^k_{(2)}, \dots, S^k_{(n_k)})$, in ascending order. \;
  Compute $\hat{F}_l^{k}(S^k_{(i)})$ for all $i \in [n_k]$ and $l \in [K]$. \;
  Compute $\hDelta_k(S^k_{(i)})$ for all $i \in [n_k]$, as in \eqref{eq:delta-hat}. \;
  Compute $\delta(n_k,n_{*})$ using \eqref{eq:delta-constant}, based on a Monte Carlo estimate of $c(n_k)$ in~\eqref{eq:define-cn}.\;
  Construct the set $\hat{\mathcal{I}}_k \subseteq [K]$ as in \eqref{eq:Ik-set}.\;
  Evaluate $\hat{\tau}_k$ based on $\hat{\mathcal{I}}_k$ as in \eqref{eq:Ik-set-tauk}.

}
Evaluate $\hat{C}(X_{n+1}) = \mathcal{C}(X_{n+1}, \hat{\tau}; \hat{\pi})$, where $\hat{\tau} = (\hat{\tau}_1, \ldots, \hat{\tau}_K)$.

\nonl
\textbf{Output: } Conformal prediction set $\hat{C}(X_{n+1})$ for $Y_{n+1}$.

\caption{Adaptive classification under a known label contamination model}
\label{alg:correction}
\end{algorithm}

If we ignored the noise and finite-sample correction terms (i.e., imagining that $\hDelta_k(S^k_{(i)})=0$ and $\delta(n_k,n_{*})=0$), the threshold $\hat{\tau}_k$ in~\eqref{eq:Ik-set-tauk} would intuitively reduce to the $\lceil (1-\alpha)n_k \rceil $-th smallest value among the conformity scores for the calibration points with label $k$, similarly to the standard method reviewed in Section~\ref{sec:background}.
In general, though, the more complicated form of $\hat{\tau}_k$ in~\eqref{eq:Ik-set-tauk} is designed to approximately cancel the unknown coverage inflation factor $\Delta_k(\htau_k)$ arising when the standard conformal inference method is applied to contaminated data, as described by Theorem~\ref{thm:coverage-lab-cond}.
The main purpose of $\delta(n_k,n_{*})$ in~\eqref{eq:Ik-set} is to account for possible random errors in the estimation of the unknown function $\Delta_k(\cdot)$ through $\hDelta_k(\cdot)$, allowing us to obtain finite-sample guarantees.
The exact form of this correction term is discussed next.

For any $k \in [K]$, let $U_1, \ldots, U_{n_k}$ be i.i.d.~uniform random variables on $[0,1]$, and denote their order statistics as $U_{(1)}, \ldots, U_{(n_k)}$.
Then, define
\begin{align} \label{eq:define-cn}
  c(n_k)
  & := \EV{ \sup_{i \in [n_k]} \left\{ \frac{i}{n_k} - U_{(i)} \right\} },
\end{align}
and 
\begin{align} \label{eq:delta-constant}
  \delta(n_k,n_{*})
  := c(n_k) + \frac{2 \sum_{l\neq k}|V_{kl}| }{\sqrt{n_{*}}} \min \left\{ K \sqrt{\frac{\pi}{2}} , \frac{1}{\sqrt{n_{*}}} + \sqrt{\frac{\log(2K) + \log(n_{*})}{2}} \right\}.
\end{align}
We know from classical results in empirical process theory that $c(n_k)$ in~\eqref{eq:define-cn} scales as $1/\sqrt{n_{k}}$ if $n_{k}$ is large, and thus the overall correction term $\delta(n_k,n_{*})$ tends to vanish as $1/\sqrt{n_{*}}$ in the large-sample limit.
Note that the constant $c(n_k)$ in \eqref{eq:define-cn} will be assumed henceforth to be known because it can be easily estimated up to arbitrary precision via a Monte Carlo simulation of $n_k$ independent standard uniform random variables.
Combined with the adaptive nature of our threshold $\hat{\tau}_k$ in~\eqref{eq:Ik-set-tauk}, this finite-sample correction allows Algorithm~\ref{alg:correction} to enjoy a stronger coverage guarantee under label contamination compared to standard conformal prediction.

\begin{theorem} \label{thm:algorithm-correction}
Suppose $(X_i,Y_i,\tilde{Y}_i)$ are i.i.d.~for all $i \in [n+1]$, and that Assumption~\ref{assumption:linear-contam} holds.
Fix any prediction function $\mathcal{C}$ satisfying Definition~\ref{def:pred-function}, and let $\hat{C}(X_{n+1})$ indicate the prediction set output by Algorithm~\ref{alg:correction} based on the inverse $V$ of the model matrix $M$ in the label contamination model~\eqref{eq:contam_model}.
Then, $\mathbb{P}[Y_{n+1} \in \hat{C}(X_{n+1}) \mid Y = k] \geq 1 - \alpha$ for all $k \in [K]$.
\end{theorem}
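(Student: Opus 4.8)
The plan is to condition throughout on the training fold $\mathcal{D}^{\mathrm{train}}$, so that $\hat{s}$ and hence all the CDFs $F_l^k,\tilde F_l^k$ are fixed, and to take a final expectation over $\mathcal{D}^{\mathrm{train}}$ at the end. Fix $k\in[K]$. Exactly as in the proof of Theorem~\ref{thm:coverage-lab-cond}, on $\{Y_{n+1}=k\}$ the event $Y_{n+1}\in\hat C(X_{n+1})$ is equivalent to $\hat s(X_{n+1},k)\le\hat\tau_k$, and since $\hat\tau_k$ is a function of $\mathcal{D}$ while $X_{n+1}\mid\{Y_{n+1}=k\}\sim P_k$ is independent of $\mathcal{D}$, one gets
\[
  \P{Y_{n+1}\in\hat C(X_{n+1})\mid Y_{n+1}=k}=\EV{F_k^k(\hat\tau_k)},
\]
so it suffices to prove $\EV{F_k^k(\hat\tau_k)}\ge 1-\alpha$. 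When $\hat{\mathcal I}_k=\emptyset$ we have $\hat\tau_k=1$ and $F_k^k(1)=1$, so from now on work on the event $\hat{\mathcal I}_k\neq\emptyset$, where $\hat\tau_k=S^k_{(\hat i_k)}$.

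The next ingredient is the ``deconvolution'' identity implied by Proposition~\ref{prop:indep-linear}: combining $F_k^k=\tilde F_k^k+\Delta_k$ with~\eqref{eq:delta-2} gives $F_k^k(t)=\sum_{l=1}^K V_{kl}\tilde F_l^k(t)$ for every $t$. On the calibration side, $\hat\Delta_k(t)+\hat F_k^k(t)=\sum_l V_{kl}\hat F_l^k(t)$ by~\eqref{eq:delta-hat}; using that the scores may be assumed distinct, $\hat F_k^k(S^k_{(\hat i_k)})=\hat i_k/n_k$ (with ties one has $\ge$, the needed direction), so the fact that $\hat i_k=\min\hat{\mathcal I}_k$ and the defining inequality in~\eqref{eq:Ik-set} yield $\sum_{l=1}^K V_{kl}\hat F_l^k(\hat\tau_k)\ge 1-\alpha+\delta(n_k,n_{*})$. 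Subtracting the deconvolution identity evaluated at $\hat\tau_k$,
\[
  F_k^k(\hat\tau_k)-(1-\alpha)\;\ge\;\delta(n_k,n_{*})-\sum_{l=1}^K V_{kl}\Big(\hat F_l^k(\hat\tau_k)-\tilde F_l^k(\hat\tau_k)\Big),
\]
so it remains to show $\EV{\sum_{l} V_{kl}\big(\hat F_l^k(\hat\tau_k)-\tilde F_l^k(\hat\tau_k)\big)}\le\delta(n_k,n_{*})$.

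I would then split this sum into the diagonal term $l=k$ and the off-diagonal terms. For $l=k$, $\hat F_k^k(\hat\tau_k)-\tilde F_k^k(\hat\tau_k)\le\sup_t\big(\hat F_k^k(t)-\tilde F_k^k(t)\big)$ pointwise; by the probability integral transform this one-sided uniform deviation has the distribution of $\sup_{i\in[n_k]}\{i/n_k-U_{(i)}\}$ for $U_{(1)}\le\dots\le U_{(n_k)}$ the order statistics of i.i.d.\ uniforms, so its expectation is exactly $c(n_k)$ (at most $c(n_k)$ with atoms), and the diagonal term contributes at most $|V_{kk}|\,c(n_k)$ in expectation. Since $V=M^{-1}$ has rows summing to one ($M$ is row-stochastic, so $V\mathbf 1=\mathbf 1$), $|V_{kk}|=|1-\sum_{l\neq k}V_{kl}|\le 1+\sum_{l\neq k}|V_{kl}|$, which splits $|V_{kk}|c(n_k)$ into $c(n_k)$ plus $\big(\sum_{l\neq k}|V_{kl}|\big)c(n_k)$. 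For each off-diagonal $l$, $|V_{kl}(\hat F_l^k(\hat\tau_k)-\tilde F_l^k(\hat\tau_k))|\le|V_{kl}|\sup_t|\hat F_l^k(t)-\tilde F_l^k(t)|$, and since $\hat F_l^k$ is the empirical CDF of $n_l\ge n_{*}$ i.i.d.\ draws from $\tilde F_l^k$, the Dvoretzky--Kiefer--Wolfowitz / Massart inequality bounds $\EV{\sup_t|\hat F_l^k(t)-\tilde F_l^k(t)|}$ either by $\sqrt{\pi/(2n_{*})}$, or---via a union bound over the $K$ classes together with the DKW tail at a high-probability threshold---by $\tfrac1{\sqrt{n_{*}}}\big(\tfrac1{\sqrt{n_{*}}}+\sqrt{(\log 2K+\log n_{*})/2}\big)$. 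Using also $c(n_k)\le\sqrt{\pi/(2n_k)}\le\sqrt{\pi/(2n_{*})}$ to absorb the leftover $\big(\sum_{l\neq k}|V_{kl}|\big)c(n_k)$ into the off-diagonal budget, the total is at most $c(n_k)+\tfrac{2\sum_{l\neq k}|V_{kl}|}{\sqrt{n_{*}}}\min\{K\sqrt{\pi/2},\ \tfrac1{\sqrt{n_{*}}}+\sqrt{(\log 2K+\log n_{*})/2}\}=\delta(n_k,n_{*})$, as required. Taking expectations over $\mathcal{D}^{\mathrm{train}}$ finishes the proof.

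The step I expect to be most delicate is the last one. Because $\hat\tau_k=S^k_{(\hat i_k)}$ is chosen adaptively---$\hat i_k$ depends on all calibration scores, both through the order statistics $S^k_{(i)}$ and through the plug-in $\hat\Delta_k$, which itself involves the empirical CDFs $\hat F_l^k$---one cannot invoke the clean exchangeability argument behind standard conformal coverage and must instead control $\hat F_l^k-\tilde F_l^k$ and the empirical-process fluctuation $\hat F_k^k-\tilde F_k^k$ \emph{uniformly} in $t$. Making the constants in $\delta(n_k,n_{*})$ line up hinges on the two observations above: that the rows of $V$ sum to one, so the possibly large diagonal weight $V_{kk}$ is controlled by $1+\sum_{l\neq k}|V_{kl}|$; and that the one-sided uniform deviation of $\hat F_k^k$ is of the same $1/\sqrt{n}$ order as $c(n_k)$, so the leftover term merges into the off-diagonal correction. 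Everything else---the reduction to $\EV{F_k^k(\hat\tau_k)}$, the deconvolution identity, and the probability integral transform---is routine.
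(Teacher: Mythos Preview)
Your argument is correct and follows essentially the same route as the paper: reduce to $\EV{F_k^k(\hat\tau_k)}$, use $\hat i_k\in\hat{\mathcal I}_k$ to get the inequality, and control the residual $\sum_l V_{kl}\big(\hat F_l^k(\hat\tau_k)-\tilde F_l^k(\hat\tau_k)\big)$ with DKW plus the one-sided order-statistics bound that defines $c(n_k)$. The one tactical difference is how the row-sum property of $V$ is used. The paper does not split into diagonal and off-diagonal; instead it writes the residual as $(\hat F_k^k-\tilde F_k^k)+(\hat\Delta_k-\Delta_k)$ and, in bounding the second piece, applies the \emph{identity} $V_{kk}-1=-\sum_{l\ne k}V_{kl}$ to obtain
\[
\hat\Delta_k(t)-\Delta_k(t)=\sum_{l\ne k}V_{kl}\Big\{\big(\hat F_l^k-\tilde F_l^k\big)-\big(\hat F_k^k-\tilde F_k^k\big)\Big\},
\]
so that $|\hat\Delta_k-\Delta_k|\le 2\big(\sum_{l\ne k}|V_{kl}|\big)\max_{l}|\hat F_l^k-\tilde F_l^k|$. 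This is precisely why the correction $\delta(n_k,n_*)$ carries the factor $2\sum_{l\ne k}|V_{kl}|$ and why the union-bound branch involves $\log(2K)$: the $K$ enters through the $\max_l$, not through a per-$l$ DKW bound as you suggest. Your route---keeping $V_{kk}$ and using the \emph{inequality} $|V_{kk}|\le 1+\sum_{l\ne k}|V_{kl}|$ followed by the absorption $c(n_k)\le\sqrt{\pi/(2n_*)}$---still lands below $\delta(n_k,n_*)$ (one can check $\sqrt{\pi/2}\le \min\{K\sqrt{\pi/2},\,n_*^{-1/2}+\sqrt{(\log 2K+\log n_*)/2}\}$ for $K\ge 2$), but the paper's pairing-into-differences is cleaner and is what dictates the exact form of $\delta$.
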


Intuitively, this says that Algorithm~\ref{alg:correction} provides valid prediction sets at level $1-\alpha$ despite the label noise.
Crucially, this does not require any assumptions about the classifier's accuracy, in contrast with the potentially more delicate behavior of standard conformal inferences (Algorithm~\ref{alg:standard-lab-cond}); i.e., see Theorem~\ref{thm:coverage-lab-cond} and Corollary~\ref{cor:coverage-cond}.
Further, under some additional regularity conditions, it can be proved that Algorithm~\ref{alg:correction} is not overly conservative, as discussed next.

\subsubsection{A coverage upper bound}

The assumptions needed for our coverage upper bound are stated here and explained below.
\begin{assumption} \label{assumption:regularity-dist}
  For all $k,l \in [K]$, the cumulative distribution functions $\tF^{k}_{l}$ are differentiable on the interval $(0,1)$, and the corresponding densities $\tf^{k}_{l}$ are uniformly bounded with $\|\tf^{k}_{l}\|_{\infty} \leq f_{\max}$, for some $f_{\max} > 0$. Further, $f_{\min} := \min_{k \in [K]} \inf_{t\in(0,1)} \tf^k_k(t) > 0$.
\end{assumption}

\begin{assumption} \label{assumption:consitency-scores}
For any $k \in [K]$, the cumulative distribution function $\tF^{k}_{k}$ satisfies
\begin{align*}
  \max_{l \neq k} \tilde{F}^k_l(t) \leq \tF^{k}_{k}(t), \qquad \forall t \in \mathbb{R}.
\end{align*}
\end{assumption}

\begin{assumption} \label{assumption:regularity-dist-delta}
The coverage inflation factor $\Delta_k(t)$ is bounded from below by:
\begin{align*}
  \inf_{t \in (0,1)} \Delta_k(t) \geq - \alpha + c(n_k) + \frac{2 \sum_{l\neq k}|V_{kl}|}{\sqrt{n_{*}}} \left( \frac{1}{\sqrt{n_{*}}} + 2 \sqrt{\frac{\log(2K) + \log(n_{*})}{2}} \right).
\end{align*}
\end{assumption}

Assumption~\ref{assumption:regularity-dist} merely requires that the distribution of the conformity scores should be continuous with bounded density; this can be ensured in practice by adding a small amount of random noise to the scores computed by any classifier.
Assumption~\ref{assumption:consitency-scores} simply states that the classifier tends to assign smaller scores $\hat{s}(X,k)$ to data points with $\tilde{Y}=k$.
This may be reminiscent of the stochastic dominance condition in Corollary~\ref{cor:coverage-cond}, although it is different and arguably weaker.
In fact, the classifier is trained on data with corrupted labels. Therefore, as long as it can achieve non-trivial prediction accuracy, it should assign smaller scores $\hat{s}(X,k)$ when $\tilde{Y}=k$.

Assumption~\ref{assumption:regularity-dist-delta} looks slightly more involved, but it is also quite realistic. For example, it is always satisfied in the large-sample limit, $n_{*} \to \infty$, if the stochastic dominance condition defined in~\eqref{eq:assump_scores-cond} holds, because in that case $\inf_{t \in (0,1)} \Delta_k(t) \geq 0$.
Further, as discussed in more detail in Section~\ref{app:simplified-methods}, Assumption~\ref{assumption:regularity-dist-delta} can also be satisfied if the stochastic dominance condition in~\eqref{eq:assump_scores-cond} does not hold, as long as some additional assumptions are imposed on the label contamination model.
Under this setup, a finite-sample upper bound for the coverage of the conformal prediction sets output by Algorithm~\ref{alg:correction} is established below.

\begin{theorem} \label{thm:algorithm-correction-upper}
Under the setup of Theorem~\ref{thm:algorithm-correction}, let $\hat{C}(X_{n+1})$ be the prediction set output by Algorithm~\ref{alg:correction} based on the 
inverse $V$ of the matrix $M$ in~\eqref{eq:contam_model}. Suppose also that Assumptions~\ref{assumption:regularity-dist}--\ref{assumption:regularity-dist-delta} hold. Then, $\mathbb{P}[Y_{n+1} \in \hat{C}(X_{n+1}) \mid Y = k] \leq  1 - \alpha + \varphi_k(n_k,n_{*})$ for all $k \in [K]$,
where
  \begin{align*}
    \varphi_k(n_k,n_{*})
    & = 2 \delta(n_k,n_{*}) + \frac{1}{n_{*}} + \frac{1 + 2 \sum_{l\neq k} |V_{kl}| \frac{f_{\max}}{f_{\min}} \sum_{j=1}^{n_k+1} \frac{1}{j}}{n_k} + \frac{V_{kk} + \sum_{l \neq k} |V_{kl}|}{n_k+1}.
  \end{align*}
\end{theorem}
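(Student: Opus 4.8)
The plan is to reuse the initial reduction behind the lower bound of Theorem~\ref{thm:algorithm-correction} and then argue in the reverse direction, bounding how large $\hat{\tau}_k$ -- and hence the effective coverage -- can be. Since $(X_{n+1},Y_{n+1})$ is independent of $\mathcal{D}$ and, by Definition~\ref{def:pred-function} together with the continuity from Assumption~\ref{assumption:regularity-dist}, the event $\{Y_{n+1}\in\hat{C}(X_{n+1})\}$ agrees almost surely with $\{\hat{s}(X_{n+1},k)\le\hat{\tau}_k\}$ on $\{Y_{n+1}=k\}$, conditioning on $\mathcal{D}$ gives
\[
  \mathbb{P}\!\left[Y_{n+1}\in\hat{C}(X_{n+1}) \mid Y_{n+1}=k\right] = \mathbb{E}\!\left[F_k^k(\hat{\tau}_k)\right],
\]
exactly as in the proof of Theorem~\ref{thm:algorithm-correction}. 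Using Proposition~\ref{prop:indep-linear} and the identity underlying~\eqref{eq:delta-2}, I would write $F_k^k = \sum_{l} V_{kl}\tilde{F}_l^{k} =: G_k$ and introduce its plug-in analogue $\hat{G}_k := \sum_{l} V_{kl}\hat{F}_l^{k} = \hat{F}_k^{k}+\hat{\Delta}_k$ from~\eqref{eq:F-hat}--\eqref{eq:delta-hat}. Because the $k$-labeled calibration scores are almost surely distinct, $\hat{F}_k^{k}(S^k_{(i)})=i/n_k$, so $i\in\hat{\mathcal{I}}_k$ in~\eqref{eq:Ik-set} is equivalent to $\hat{G}_k(S^k_{(i)})\ge 1-\alpha+\delta(n_k,n_*)$; thus $\hat{\tau}_k$ is the smallest $k$-labeled calibration score at which the estimated CDF $\hat{G}_k$ reaches the level $1-\alpha+\delta(n_k,n_*)$, and equals $1$ when no such score exists. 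The target becomes $\mathbb{E}[G_k(\hat{\tau}_k)]\le 1-\alpha+\varphi_k(n_k,n_*)$.

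Next I would dispose of the clipping event $\{\hat{\mathcal{I}}_k=\emptyset\}$, on which $\hat{\tau}_k=1$ and $G_k(\hat{\tau}_k)=1$. Since $V$ has unit row sums, $|V_{kk}-1|=|\sum_{l\ne k}V_{kl}|\le\sum_{l\ne k}|V_{kl}|$, which turns~\eqref{eq:delta-hat} into the deterministic bound $|\hat{\Delta}_k(t)-\Delta_k(t)|\le 2\big(\sum_{l\ne k}|V_{kl}|\big)\max_{l}\|\hat{F}_l^{k}-\tilde{F}_l^{k}\|_{\infty}$ for all $t$. Combining this with the lower bound on $\inf_t\Delta_k(t)$ in Assumption~\ref{assumption:regularity-dist-delta} and with the explicit form of $\delta(n_k,n_*)$ in~\eqref{eq:delta-constant}, the event $\{n_k\notin\hat{\mathcal{I}}_k\}$ (which contains $\{\hat{\mathcal{I}}_k=\emptyset\}$) forces $\max_{l}\|\hat{F}_l^{k}-\tilde{F}_l^{k}\|_{\infty}$ to exceed $n_*^{-1/2}\sqrt{(\log(2K)+\log n_*)/2}$; by the Dvoretzky--Kiefer--Wolfowitz inequality and a union bound over $l\in[K]$, this has probability at most $1/n_*$. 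Hence the contribution of this event to $\mathbb{E}[G_k(\hat{\tau}_k)]$ is at most $1/n_*$, which is the $1/n_*$ term of $\varphi_k$.

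On the complementary event I would set $\hat{i}_k=\min\hat{\mathcal{I}}_k$ and $j_k=\hat{i}_k-1$ (with the convention $S^k_{(0)}=0$, $\hat{F}_l^{k}(S^k_{(0)})=0$), so that minimality of $\hat{i}_k$ -- or triviality when $j_k=0$, since $\hat{G}_k(0)=0<1-\alpha+\delta(n_k,n_*)$ -- yields $\hat{G}_k(S^k_{(j_k)})<1-\alpha+\delta(n_k,n_*)$. Then I would split
\[
  G_k(\hat{\tau}_k) = G_k\!\big(S^k_{(j_k)}\big) + \Big(G_k\!\big(S^k_{(\hat{i}_k)}\big) - G_k\!\big(S^k_{(j_k)}\big)\Big).
\]
For the first term, $G_k(S^k_{(j_k)})\le \hat{G}_k(S^k_{(j_k)}) + \|G_k-\hat{G}_k\|_{\infty} < 1-\alpha+\delta(n_k,n_*) + \sum_{l}|V_{kl}|\,\|\hat{F}_l^{k}-\tilde{F}_l^{k}\|_{\infty}$, and taking expectations while bounding $\mathbb{E}\|\hat{F}_l^{k}-\tilde{F}_l^{k}\|_{\infty}$ by the same empirical-process estimates used to construct~\eqref{eq:delta-constant} (using $n_l\ge n_*$, a union bound over $l$, and $|V_{kk}|\le 1+\sum_{l\ne k}|V_{kl}|$) absorbs the remainder into a second copy of $\delta(n_k,n_*)$, giving the $2\delta(n_k,n_*)$ term. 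For the second term I would write it as $V_{kk}\big(\tilde{F}_k^{k}(S^k_{(\hat{i}_k)})-\tilde{F}_k^{k}(S^k_{(j_k)})\big)+\sum_{l\ne k}V_{kl}\big(\tilde{F}_l^{k}(S^k_{(\hat{i}_k)})-\tilde{F}_l^{k}(S^k_{(j_k)})\big)$: the $l=k$ increment is a single spacing of the uniform order statistics $\tilde{F}_k^{k}(S^k_{(i)})$ (valid since Assumption~\ref{assumption:regularity-dist} makes $\tilde{F}_k^{k}$ continuous and strictly increasing), with expectation at most $1/(n_k+1)$, producing the $V_{kk}/(n_k+1)$ piece; while each $l\ne k$ increment is controlled by Lipschitz continuity, $\tilde{F}_l^{k}(S^k_{(\hat{i}_k)})-\tilde{F}_l^{k}(S^k_{(j_k)})\le f_{\max}\big(S^k_{(\hat{i}_k)}-S^k_{(j_k)}\big)\le (f_{\max}/f_{\min})\big(\tilde{F}_k^{k}(S^k_{(\hat{i}_k)})-\tilde{F}_k^{k}(S^k_{(j_k)})\big)$, and since $\hat{i}_k$ depends on the data this is bounded by the \emph{largest} gap among the $n_k$ uniform order statistics, whose expectation is $\tfrac{1}{n_k+1}\sum_{j=1}^{n_k+1}\tfrac1j$; this gives the $\tfrac{2\sum_{l\ne k}|V_{kl}|(f_{\max}/f_{\min})\sum_{j=1}^{n_k+1}1/j}{n_k}$ term, with the remaining $1/n_k$ and $\sum_{l\ne k}|V_{kl}|/(n_k+1)$ pieces coming from the slack in passing from $\hat{F}_k^{k}(S^k_{(j_k)})=j_k/n_k$ to $\tilde{F}_k^{k}(S^k_{(j_k)})$ and from the grid-boundary contributions of the $l\ne k$ terms. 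Summing over the two events and collecting pieces yields $\mathbb{E}[G_k(\hat{\tau}_k)]\le 1-\alpha+\varphi_k(n_k,n_*)$.

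The hard part will be the second term of that decomposition: bounding how much the \emph{signed} mixture CDF $G_k=\sum_l V_{kl}\tilde{F}_l^{k}$ can grow across a single step $S^k_{(j_k)}\to S^k_{(\hat{i}_k)}$ of the $k$-labeled calibration grid. Unlike in the noiseless case this is not a clean uniform spacing, because the other classes ($l\ne k$) contribute jumps at their own, generally different, score locations, and $\hat{i}_k$ is itself an intricate function of all the data, so the step must be controlled \emph{uniformly} over the grid -- which is precisely where Assumptions~\ref{assumption:regularity-dist}--\ref{assumption:regularity-dist-delta} and the maximal-spacing bound for uniform order statistics enter, and where the harmonic-number ($\asymp\log n_k$) factor in $\varphi_k$ originates. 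The remaining effort is bookkeeping: verifying that all empirical-process fluctuations are dominated by the engineered correction $\delta(n_k,n_*)$ from~\eqref{eq:delta-constant}, so that only $2\delta(n_k,n_*)$ appears in the final bound.
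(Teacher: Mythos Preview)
Your overall strategy is correct and essentially mirrors the paper's: bound the clipping event $\{\hat{\mathcal{I}}_k=\emptyset\}$ via Assumption~\ref{assumption:regularity-dist-delta} and DKW, then on its complement use the minimality of $\hat{i}_k$ to say $\hat{G}_k$ falls below threshold at the previous grid point, and finally control the one-step increment of the true mixture CDF via the $f_{\max}/f_{\min}$ Lipschitz comparison and the expected maximal uniform spacing. The paper carries out exactly this program (its ``$\mathcal{A}_1$'' is your clipping event, and its ``$\mathcal{A}_1^c\cap\mathcal{A}_2^c$'' argument is your split $G_k(S^k_{(j_k)})+[G_k(S^k_{(\hat i_k)})-G_k(S^k_{(j_k)})]$).

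There is, however, a genuine gap in your accounting of the $l=k$ piece of the increment. You assert that $V_{kk}\big(\tilde F_k^k(S^k_{(\hat i_k)})-\tilde F_k^k(S^k_{(j_k)})\big)$ is ``a single spacing of the uniform order statistics with expectation at most $1/(n_k+1)$,'' hence contributes $V_{kk}/(n_k+1)$. But $\hat i_k$ depends on the values $S^k_{(i)}$ themselves (through $\hat F_l^k(S^k_{(i)})$ for $l\ne k$), so the spacing index is not independent of the uniforms and the $1/(n_k+1)$ bound is unjustified; all you can claim is the max-spacing bound, which carries the harmonic factor. The paper circumvents this by a slightly different decomposition: it writes $F_k^k(S^k_{(\hat i_k)})=\hat i_k/n_k+[\tilde F_k^k(S^k_{(\hat i_k)})-\hat F_k^k(S^k_{(\hat i_k)})]+\Delta_k(S^k_{(\hat i_k)})$, so the $k$-labeled part enters only through the deterministic grid increment $1/n_k$ and the one-sided empirical-process error (giving $c(n_k)$, absorbed into the second $\delta$); the random spacing then appears only inside the $\Delta_k$ increment, where it is weighted by $\sum_{l\ne k}|V_{kl}|$ rather than $V_{kk}$. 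Relatedly, the term $(V_{kk}+\sum_{l\ne k}|V_{kl}|)/(n_k+1)$ in $\varphi_k$ does \emph{not} come from a grid increment at all: the paper isolates the boundary event $\{\hat i_k=1\}$ as a separate case (``$\mathcal{A}_2$'') and bounds $F_k^k(S^k_{(1)})$ directly via Assumption~\ref{assumption:consitency-scores}, which you never invoke. Your unified treatment with $S^k_{(0)}=0$ still yields a valid (vanishing) upper bound, but it will not reproduce the stated $\varphi_k$ without these two adjustments.
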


Thus, the sets output by Algorithm~\ref{alg:correction} are {\em asymptotically tight} because $\varphi_k(n_k,n_{*}) \to 0$ as $n_{*} \to \infty$.
While this is already encouraging about the efficiency of Algorithm~\ref{alg:correction}, our method can be further refined to produce even more informative prediction sets that remain valid in those (rather common) scenarios in which standard conformal inferences are too conservative.

\subsubsection{Boosting power with more optimistic calibration} \label{sec:boosting-optimist}

We know from Corollary~\ref{cor:coverage-cond} that even the standard conformal inference approach of Algorithm~\ref{alg:standard-lab-cond} is conservative under the (relatively mild) stochastic dominance condition in \eqref{eq:assump_scores-cond}.
This motivates us to devise a hybrid method that can outperform both Algorithm~\ref{alg:correction} and Algorithm~\ref{alg:standard-lab-cond}, while retaining guaranteed coverage under a slightly stronger version of \eqref{eq:assump_scores-cond}. 
Intuitively, the idea is to adaptively choose between Algorithm~\ref{alg:correction} and Algorithm~\ref{alg:standard-lab-cond} depending on which approach leads to a lower (less conservative) calibrated threshold.
In other words, we propose to apply Algorithm~\ref{alg:correction} with the set $\hat{\mathcal{I}}_k$ in~\eqref{eq:Ik-set} replaced by
\begin{align} \label{eq:Ik-set-optimist}
  \hat{\mathcal{I}}_k := \left\{i \in [n_k] : \frac{i}{n_k} \geq 1 - \alpha - \max\left\{\hDelta_k(S^k_{(i)}) - \delta(n_k,n_{*}), -\frac{1-\alpha}{n_k} \right\}  \right\}.
\end{align}
Perhaps surprisingly, this somewhat greedy approach typically produces valid predictions.

\begin{proposition} \label{thm:algorithm-correction-optimistic}
Under the setup of Theorem~\ref{thm:algorithm-correction}, assume also that $\inf_{t\in \R}\Delta_k(t) \geq \delta(n_k, n_{*}) - (1-\alpha)/n_k$.
If $\hat{C}(X_{n+1})$ is the prediction set output by Algorithm~\ref{alg:correction} applied with the set $\hat{\mathcal{I}}_k$ defined in~\eqref{eq:Ik-set-optimist} instead of~\eqref{eq:Ik-set}, then $\mathbb{P}[ Y_{n+1} \in \hat{C}(X_{n+1}) \mid Y_{n+1} = k] \geq 1-\alpha$ for all $k \in [K]$.
\end{proposition}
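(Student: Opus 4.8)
The plan is to reduce the optimistic variant to the already-established Theorem~\ref{thm:algorithm-correction} by showing that, under the extra hypothesis $\inf_{t} \Delta_k(t) \geq \delta(n_k,n_{*}) - (1-\alpha)/n_k$, the new index set $\hat{\mathcal{I}}_k$ in~\eqref{eq:Ik-set-optimist} coincides with (or is contained in a set producing the same threshold as) the original one in~\eqref{eq:Ik-set}, or else produces a threshold that can only be \emph{larger}, hence only helps coverage. Concretely, compare the two defining inequalities for $i \in [n_k]$: in~\eqref{eq:Ik-set} the cutoff on $i/n_k$ is $1-\alpha-\hDelta_k(S^k_{(i)})+\delta(n_k,n_{*})$, whereas in~\eqref{eq:Ik-set-optimist} it is $1-\alpha-\max\{\hDelta_k(S^k_{(i)})-\delta(n_k,n_{*}),\,-(1-\alpha)/n_k\}$. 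The first branch of the max gives cutoff $1-\alpha-\hDelta_k+\delta$, which is \emph{identical} to~\eqref{eq:Ik-set}; the second branch gives cutoff $1-\alpha+(1-\alpha)/n_k$. So the only way a point $i$ can be admitted by~\eqref{eq:Ik-set-optimist} but excluded from~\eqref{eq:Ik-set} is when the max is attained by the second branch, i.e.\ when $\hDelta_k(S^k_{(i)})-\delta(n_k,n_{*}) < -(1-\alpha)/n_k$.

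First I would handle the case $\hat{\mathcal{I}}_k^{\text{orig}} \neq \emptyset$ and show $\hat{\tau}_k^{\text{opt}} \geq$ (the threshold that would make Theorem~\ref{thm:algorithm-correction} go through). The point is that whenever the second branch is active the cutoff is $\geq 1-\alpha$, which for $\alpha<1$ forces essentially all $i$ (those with $i/n_k \geq 1-\alpha + (1-\alpha)/n_k$, i.e.\ $i \geq \lceil(1-\alpha)(n_k+1)\rceil$ roughly) into $\hat{\mathcal{I}}_k^{\text{opt}}$; so in that regime the optimistic threshold is at least the \emph{standard} conformal threshold $S^k_{(\lceil(1-\alpha)(n_k+1)\rceil)}$ of Algorithm~\ref{alg:standard-lab-cond}. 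Then I would invoke Corollary~\ref{cor:coverage-cond}-type reasoning: under the hypothesis $\inf_t\Delta_k(t)\geq \delta(n_k,n_{*})-(1-\alpha)/n_k$ one checks (via the same monotone-coupling / order-statistics argument used to prove Theorem~\ref{thm:algorithm-correction}) that the standard threshold already delivers $1-\alpha$ label-conditional coverage. The cleanest route is to run the exact argument of the proof of Theorem~\ref{thm:algorithm-correction} but with $\hDelta_k$ replaced throughout by $\max\{\hDelta_k-\delta, -(1-\alpha)/n_k\}+\delta$: the key inequality that drives that proof is a lower bound of the form $\hat i_k/n_k \geq 1-\alpha-\Delta_k(\hat\tau_k)+(\text{empirical-process slack controlled by }\delta)$, and one verifies that replacing $\hDelta_k$ by the pointwise-larger-or-equal quantity $\max\{\hDelta_k-\delta,-(1-\alpha)/n_k\}+\delta \geq \hDelta_k$ (trivially, since the max is $\geq$ its first argument) only \emph{decreases} $\hat\tau_k$, so I instead want the reverse comparison — hence the subtlety below.

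The main obstacle, and the step deserving the most care, is precisely that~\eqref{eq:Ik-set-optimist} can yield a \emph{smaller} threshold than~\eqref{eq:Ik-set}, so monotonicity alone does not transfer the lower bound; one genuinely needs the new hypothesis $\inf_t\Delta_k(t)\geq \delta(n_k,n_{*})-(1-\alpha)/n_k$ to absorb the $-(1-\alpha)/n_k$ floor. I expect the right bookkeeping is: on the event where the floor is never active, $\hat{\mathcal{I}}_k^{\text{opt}}=\hat{\mathcal{I}}_k^{\text{orig}}$ and Theorem~\ref{thm:algorithm-correction} applies verbatim; on the event where the floor is active at the selected index $\hat i_k^{\text{opt}}$, one has $\hDelta_k(S^k_{(\hat i_k^{\text{opt}})}) < \delta(n_k,n_{*})-(1-\alpha)/n_k \leq \inf_t\Delta_k(t) \leq \Delta_k(S^k_{(\hat i_k^{\text{opt}})})$, so the \emph{true} inflation factor exceeds its plug-in estimate by at least $\delta-(1-\alpha)/n_k+(1-\alpha)/n_k=\delta$ there, i.e.\ the estimation slack bound we paid for $\delta$ is a wash and the true-coverage expansion $\EV{\Delta_k(\hat\tau_k)}$ from Theorem~\ref{thm:coverage-lab-cond} is large enough to cover the extra $(1-\alpha)/n_k$ discount built into the floor. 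Gluing the two events and re-running the Theorem~\ref{thm:coverage-lab-cond} coverage identity with $\hat\tau_k=\hat\tau_k^{\text{opt}}$ then gives $\P{Y_{n+1}\in\hat C(X_{n+1})\mid Y_{n+1}=k}\geq 1-\alpha+\EV{\Delta_k(\hat\tau_k^{\text{opt}})-\hDelta_k(\hat\tau_k^{\text{opt}})}\geq 1-\alpha$, where the last step uses the DKW/empirical-process control of $\|\hDelta_k-\Delta_k\|_\infty$ by $\delta(n_k,n_{*})$ that is already assembled in the proof of Theorem~\ref{thm:algorithm-correction} together with the new lower bound on $\inf_t\Delta_k$. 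The routine parts — the empirical-process deviation bound, the order-statistic manipulations, the exchangeability/monotone-coupling lemma — are all reusable from the earlier proofs, so the write-up reduces to the two-event case analysis above.
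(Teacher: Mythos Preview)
Your two-event decomposition is a viable route, and Case~A (floor inactive at $\hat i_k^{\text{opt}}$, hence $\hat i_k^{\text{opt}}\in\hat{\mathcal{I}}_k^{\text{orig}}$ and so $\hat i_k^{\text{opt}}=\hat i_k^{\text{orig}}$) is correct. But Case~B contains a genuine gap. From ``floor active at $\hat i_k^{\text{opt}}$'' you get $\hDelta_k(S^k_{(\hat i_k^{\text{opt}})}) < \delta(n_k,n_*) - (1-\alpha)/n_k$, and the hypothesis gives $\Delta_k(S^k_{(\hat i_k^{\text{opt}})}) \geq \delta(n_k,n_*) - (1-\alpha)/n_k$; together these yield only $\Delta_k - \hDelta_k > 0$ at that point, \emph{not} $\Delta_k - \hDelta_k \geq \delta$. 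The arithmetic ``$\delta-(1-\alpha)/n_k+(1-\alpha)/n_k=\delta$'' does not bound the gap from below. Consequently your final chain $\geq 1-\alpha+\EV{\Delta_k(\hat\tau_k^{\text{opt}})-\hDelta_k(\hat\tau_k^{\text{opt}})}\geq 1-\alpha$ fails on both inequalities: the first does not follow from the Theorem~\ref{thm:coverage-lab-cond} identity once the floor replaces $\hDelta_k-\delta$, and the second would require $\EV{\hDelta_k(\hat\tau_k^{\text{opt}})}\leq \EV{\Delta_k(\hat\tau_k^{\text{opt}})}$, which has no reason to hold since $\hat\tau_k^{\text{opt}}$ is data-dependent.

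Case~B can be repaired, but differently: the floor being active forces $\hat i_k^{\text{opt}}/n_k \geq 1-\alpha+(1-\alpha)/n_k$, and plugging this together with the hypothesis $\Delta_k(\hat\tau_k)\geq \delta-(1-\alpha)/n_k$ into the conditional miscoverage identity $1-\hat F^k_k(\hat\tau_k)+[\hat F^k_k-\tilde F^k_k](\hat\tau_k)-\Delta_k(\hat\tau_k)$ gives the bound $\alpha-\delta+[\hat F^k_k-\tilde F^k_k](\hat\tau_k)$, after which the expectation step from Theorem~\ref{thm:algorithm-correction} closes. The paper avoids the case split altogether via a single Lipschitz observation: the hypothesis makes $\Delta_k(t)-\delta+(1-\alpha)/n_k\geq 0$ for every $t$, so the $1$-Lipschitz map $x\mapsto\max\{x,0\}$ gives
\[
\max\{\hDelta_k(t)-\delta+(1-\alpha)/n_k,\,0\}-\bigl(\Delta_k(t)-\delta+(1-\alpha)/n_k\bigr)\;\leq\;|\hDelta_k(t)-\Delta_k(t)|
\]
uniformly in $t$, and the proof of Theorem~\ref{thm:algorithm-correction} then goes through verbatim with $\hDelta_k(\cdot)-\delta$ replaced by $\max\{\hDelta_k(\cdot)-\delta,\,-(1-\alpha)/n_k\}$.
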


The additional assumption of Proposition~\ref{thm:algorithm-correction-optimistic}, $\inf_{t\in \R}\Delta_k(t) \geq \delta(n_k, n_{*}) - (1-\alpha)/n_k$, is stronger than the stochastic dominance condition in~\eqref{eq:assump_scores-cond}, but it is not unrealistic. When the calibration set size $n_k$ is sufficiently large to make $\delta(n_k, n_{*})$ small, this assumption is closely related to~\eqref{eq:assump_scores-cond}, which implies $\inf_{t\in \R}\Delta_k(t) \geq 0$; see the proof of Corollary~\ref{cor:coverage-cond}.
In fact, Section~\ref{sec:empirical} will show that the hybrid method described in this section tends to work very well in practice.

\subsection{Adaptive coverage under a bounded label contamination model} \label{sec:method-bounded-noise}

We now extend Algorithm~\ref{alg:correction} by relaxing the assumption that the matrix $M$ in~\eqref{eq:contam_model} is fully known.
In particular, we assume only that $M$ is invertible and a joint confidence region $[\hat{V}^{\mathrm{low}}, \hat{V}^{\mathrm{upp}}]$ is available for the off-diagonal entries of $V = M^{-1}$, with $\hat{V}^{\mathrm{low}}, \hat{V}^{\mathrm{upp}} \in \mathbb{R}^{K \times K}$, such that $V_{kl} \in [\hat{V}_{kl}^{\mathrm{low}}, \hat{V}_{kl}^{\mathrm{upp}}]$ with probability at least $1-\alpha_V$ simultaneously for all $l \neq k$, at some significance level $\alpha_V \in (0,1)$.
Here, it is understood that the matrices $\hat{V}^{\mathrm{low}}$ and $\hat{V}^{\mathrm{upp}}$ are independent of the data utilized to calibrate our conformal inferences.
It should be anticipated that there will be some trade-offs involved in the choice of $\alpha_V$, which should generally not exceed the desired level $\alpha$ of the output conformal prediction sets, but this matter will become clearer later.

To simplify the notation in the following, it is helpful to define $\hat{\delta}^{(V)}_{kl} := \hat{V}^{\mathrm{upp}}_{kl} - \hat{V}^{\mathrm{low}}_{kl}$ for all $l \neq k$.
Further, it is useful to imagine that a (possibly very conservative) deterministic upper bound $\bar{V}^{\mathrm{upp}} \in \mathbb{R}^{K \times K}$ for the off-diagonal entries of $V$ is also known a priori, such that  $\max\{|\hat{V}_{kl}^{\mathrm{low}}|, |\hat{V}_{kl}^{\mathrm{upp}}|,|V_{kl}|\} \leq |\bar{V}_{kl}^{\mathrm{upp}}|$ almost-surely for all $l \neq k$.
We refer to Section~\ref{app:simplified-methods} for concrete examples of $\hat{V}^{\mathrm{low}}, \hat{V}^{\mathrm{upp}}$ and $\bar{V}^{\mathrm{upp}}$ corresponding to two special cases in which more specific knowledge about the structure of the matrix $V$ is also available.
In the meantime, here we continue describing our method in generality, without additional constraints on $V$.

For ease of notation, let us define also
\begin{align} \label{eq:def-zeta-k}
& \hat{\delta}^{(V)}_{k*} := \max_{l \neq k} \hat{\delta}^{(V)}_{kl},
& \hat{\zeta}_k := \max_{l \neq k} \left( \hat{V}^{\mathrm{upp}}_{kl} - V_{kl} \right) - \min_{l \neq k} \left( \hat{V}^{\mathrm{upp}}_{kl} - V_{kl} \right).
\end{align}
Intuitively, $\hat{\delta}^{(V)}_{k*}$ represents the width of the simultaneous confidence band for the off-diagonal entries of $V$ at its widest point, while $\hat{\zeta}_k$ quantifies the uniformity of that confidence band. In particular, $\hat{\zeta}_k=0$ if $\hat{V}^{\mathrm{low}}_{kl}$, $\hat{V}^{\mathrm{upp}}_{kl}$, and $V_{kl}$ are constant for all $l \neq k$, as one should expect to be true under the special label contamination model discussed in Section~\ref{app:RR-model}, for example.
Further, let $\hat{\zeta}^{\mathrm{upp}}_k$ denote a $1-\alpha_V$ upper confidence bound for $\hat{\zeta}_k$. This can generally be extracted directly from $[\hat{V}^{\mathrm{low}}, \hat{V}^{\mathrm{upp}}]$, since $\hat{\zeta}_k$ is a known functional of $\hat{V}_{kl}$ and $V_{kl}$ for $l \neq k$, but it could also be informed by additional prior knowledge about the structure of the matrix $V$. We refer to Section~\ref{app:BRR-model} for some concrete examples on how to compute $\hat{\zeta}_k$ in practice.

Then, our solution consists of applying Algorithm~\ref{alg:correction} after replacing $\hat{\Delta}_k(t)$ in~\eqref{eq:delta-hat} with
\begin{align} \label{eq:delta-hat-k-epsilon-bound}
\begin{split}
  \hat{\Delta}_k^{\mathrm{ci}}(t)
  & := \sum_{l \neq k} \hat{V}^{\mathrm{upp}}_{kl} \left( \hat{F}_l^{k}(t) - \hat{F}_k^{k}(t) \right) -  \hat{\delta}^{(V)}_{k*} (K-1) \left| \hat{F}_k^{k}(t) - \frac{\sum_{l \neq k}  \hat{F}_l^{k}(t)}{K-1} \right| \\
  & \qquad - |\hat{\zeta}^{\mathrm{upp}}_k| \sum_{l \neq k} \left| \hat{F}_l^{k}(t) - \hat{F}_k^{k}(t) \right|,
\end{split}
\end{align}
the correction factor $\delta(n_k,n_{*})$ in \eqref{eq:delta-constant} with 
\begin{align} \label{eq:delta-constant-epsilon-ci}
\begin{split}
  \delta^{\mathrm{ci}}(n_k, n_*)
  & := c(n_k) + \frac{2 \sum_{l \neq k} \left(|\hat{V}^{\mathrm{upp}}_{kl}| + \hat{\delta}^{(V)}_{kl}\right)}{\sqrt{n_{*}}} \min \left\{ K \sqrt{\frac{\pi}{2}} , \frac{1}{\sqrt{n_{*}}} + \sqrt{\frac{\log(2K) + \log(n_{*})}{2}} \right\} \\
  & \qquad + 2 \alpha_V \sum_{l \neq k} |\bar{V}_{kl}^{\mathrm{upp}}|,
\end{split}
\end{align}
and the set $\hat{\mathcal{I}}_k \subseteq [K]$ in \eqref{eq:Ik-set} with
\begin{align} \label{eq:Ik-set-ci}
  \hat{\mathcal{I}}^{\mathrm{ci}}_k := \left\{i \in [n_{k}] : \frac{i}{n_{k}} \geq 1 - \alpha - \hDelta^{\mathrm{ci}}_k(S_{(i)}) + \delta^{\mathrm{ci}}(n_k,n_{*})  \right\}.
\end{align}
This method, outlined by Algorithm~\ref{alg:correction-ci}, provably achieves label-conditional coverage.

\begin{algorithm}[!ht]
\DontPrintSemicolon

\KwIn{Data set $\{(X_i, \tilde{Y}_i)\}_{i=1}^{n}$ with corrupted labels $\tilde{Y}_i \in [K]$.}
\myinput{A $1-\alpha_V$ joint confidence region $[\hat{V}^{\mathrm{low}}, \hat{V}^{\mathrm{upp}}]$ for the off-diagonal elements of $V$.}
\myinput{Constants $\bar{V}_{kl}^{\mathrm{upp}}$: $\max\{|\hat{V}_{kl}^{\mathrm{upp}}|, |V_{kl}|\} \leq |\bar{V}_{kl}^{\mathrm{upp}}|$ almost-surely for all $l \neq k$.}
\myinput{Unlabeled test point with features $X_{n+1}$.}
\myinput{Machine learning algorithm $\mathcal{A}$ for training a $K$-class classifier.}
\myinput{Prediction function $\mathcal{C}$ satisfying Definition~\ref{def:pred-function}; e.g., \eqref{eq:pred-function-hps}.}
\myinput{Desired significance level $\alpha \in (0,1)$.}

Randomly split $[n]$ into two disjoint subsets, $\mathcal{D}^{\text{train}}$ and $\mathcal{D}^{\mathrm{cal}}$.\;
Train the classifier $\mathcal{A}$ on the data in $\mathcal{D}^{\text{train}}$. \;
Compute conformity scores $\hat{s}(X_i,k)$ using \eqref{eq:conf-scores} for all $i \in \mathcal{D}^{\mathrm{cal}}$ and all $k \in [K]$.\;
Define the empirical CDF $\hat{F}_l^{k}$ of $\{\hat{s}(X_i,k) : i \in \mathcal{D}_l^{\mathrm{cal}} \}$ for all $l \in [K]$,  as in \eqref{eq:F-hat}. \;
\For{$k=1, \dots, K$}{
  Define $\mathcal{D}_k^{\mathrm{cal}} = \{ i \in \mathcal{D}^{\mathrm{cal}} : \tilde{Y}_i = k\}$ and $n_k = |\mathcal{D}_k^{\mathrm{cal}}|$.\;
  Sort $\{\hat{s}(X_i,k) : i \in \mathcal{D}_k^{\mathrm{cal}} \}$ into $(S^k_{(1)}, S^k_{(2)}, \dots, S^k_{(n_k)})$, in ascending order. \;
  Compute $\hat{F}_l^{k}(S^k_{(i)})$ for all $i \in [n_k]$ and $l \in [K]$. \;
  Compute $\hDelta_k^{\mathrm{ci}}(S^k_{(i)})$ for all $i \in [n_k]$, as in \eqref{eq:delta-hat-k-epsilon-bound}. \;
  Compute $\delta^{\mathrm{ci}}(n_k, n_*)$ using \eqref{eq:delta-constant-epsilon-ci}, based on a Monte Carlo estimate of $c(n_k)$ in~\eqref{eq:define-cn}.\;
  Construct the set $\hat{\mathcal{I}}^{\mathrm{ci}}_k \subseteq [K]$ as in \eqref{eq:Ik-set-ci}, based on $\hDelta_k^{\mathrm{ci}}$ and $\delta^{\mathrm{ci}}$.\;
  Evaluate $\hat{\tau}_k$ by applying \eqref{eq:Ik-set-tauk} with $\hat{\mathcal{I}}^{\mathrm{ci}}_k$ instead of $\hat{\mathcal{I}}_k$.

}
Evaluate $\hat{C}^{\mathrm{ci}}(X_{n+1}) = \mathcal{C}(X_{n+1}, \hat{\tau}; \hat{\pi})$, where $\hat{\tau} = (\hat{\tau}_1, \ldots, \hat{\tau}_K)$.

\nonl
\textbf{Output: } Conformal prediction set $\hat{C}^{\mathrm{ci}}(X_{n+1})$ for $Y_{n+1}$.

\caption{Adaptive classification under a bounded label contamination model}
\label{alg:correction-ci}
\end{algorithm}

\begin{theorem} \label{thm:algorithm-correction-ci}
Suppose $(X_i,Y_i,\tilde{Y}_i)$ are i.i.d.~for all $i \in [n+1]$.
Assume the general linear mixture contamination model described in Section~\ref{subsec:linear-contam-model} holds, with $V = M^{-1}$.
Fix any prediction function $\mathcal{C}$ satisfying Definition~\ref{def:pred-function}, and let $\hat{C}^{\mathrm{ci}}(X_{n+1})$ indicate the prediction set output by Algorithm~\ref{alg:correction-ci} based on an independent simultaneous confidence region $[\hat{V}^{\mathrm{low}}, \hat{V}^{\mathrm{upp}}]$ such that $V_{kl} \in [\hat{V}_{kl}^{\mathrm{low}}, \hat{V}_{kl}^{\mathrm{upp}}]$ for all $l \neq k$ with probability at least $1-\alpha_V$.
Assume also that $\max\{|\hat{V}_{kl}^{\mathrm{upp}}|, |V_{kl}|\} \leq |\bar{V}_{kl}^{\mathrm{upp}}|$ almost-surely for all $l \neq k$, for some known constants $|\bar{V}_{kl}^{\mathrm{upp}}| > 0$.
Then, $\mathbb{P}[Y_{n+1} \in \hat{C}^{\mathrm{ci}}(X_{n+1}) \mid Y = k] \geq 1 - \alpha$ for all $k \in [K]$.
\end{theorem}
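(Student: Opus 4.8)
The plan is to reduce Algorithm~\ref{alg:correction-ci} to Algorithm~\ref{alg:correction} on the event that the confidence region $[\hat V^{\mathrm{low}},\hat V^{\mathrm{upp}}]$ contains the true off-diagonal entries of $V$, and then pay an additive $\alpha_V$ for the complementary (bad) event. First I would condition on the data used to build $[\hat V^{\mathrm{low}},\hat V^{\mathrm{upp}}]$, which is independent of $\mathcal D$, so that on the good event $\mathcal E := \{V_{kl}\in[\hat V^{\mathrm{low}}_{kl},\hat V^{\mathrm{upp}}_{kl}]\ \forall l\neq k\}$ the matrices $\hat V^{\mathrm{upp}},\hat V^{\mathrm{low}}$, and the derived quantities $\hat\delta^{(V)}_{kl}$, $\hat\delta^{(V)}_{k*}$, $\hat\zeta^{\mathrm{upp}}_k$ are deterministic constants satisfying the stated inequalities relative to the true $V$. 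The crux is a deterministic, pointwise-in-$t$ comparison: I must show that on $\mathcal E$,
\begin{align*}
  \hat\Delta_k^{\mathrm{ci}}(t) - \delta^{\mathrm{ci}}(n_k,n_*) \ \le\ \hat\Delta_k(t) - \delta(n_k,n_*),
\end{align*}
where $\hat\Delta_k$, $\delta$ are the \emph{oracle} quantities from Algorithm~\ref{alg:correction} built with the true $V$ (evaluated at the observed empirical CDFs $\hat F_l^k$). Granting this, the set $\hat{\mathcal I}^{\mathrm{ci}}_k$ in~\eqref{eq:Ik-set-ci} is a superset of the oracle $\hat{\mathcal I}_k$ in~\eqref{eq:Ik-set}, hence the threshold $\hat\tau_k$ produced by Algorithm~\ref{alg:correction-ci} is at least as large as the oracle threshold, hence $\hat C^{\mathrm{ci}}(X_{n+1})\supseteq \hat C(X_{n+1})$ pointwise on $\mathcal E$ (monotonicity of $\mathcal C$ in $\tau$, Definition~\ref{def:pred-function}). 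Theorem~\ref{thm:algorithm-correction} then gives $\mathbb P[Y_{n+1}\in\hat C(X_{n+1})\mid Y=k]\ge 1-\alpha$, and restricting to $\mathcal E$ and adding back $\mathbb P[\mathcal E^c]\le\alpha_V$ would naively cost $\alpha_V$; but that cost has already been absorbed into $\delta^{\mathrm{ci}}$ via the extra term $2\alpha_V\sum_{l\neq k}|\bar V^{\mathrm{upp}}_{kl}|$, so one must instead argue directly (without a union bound over $\mathcal E$) that the \emph{enlarged} correction makes the coverage bound go through unconditionally — essentially by bounding the contribution of $\mathcal E^c$ to $\mathbb E[\Delta_k(\hat\tau_k)]$ using $|\Delta_k(\cdot)|\le \sum_{l}|V_{kl}|$-type crude bounds and the a priori cap $\bar V^{\mathrm{upp}}$.

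The main work, and the step I expect to be the principal obstacle, is establishing the pointwise inequality $\hat\Delta_k^{\mathrm{ci}}(t)\le\hat\Delta_k(t)$ on $\mathcal E$. The oracle is $\hat\Delta_k(t) = (V_{kk}-1)\hat F_k^k(t)+\sum_{l\neq k}V_{kl}\hat F_l^k(t) = \sum_{l\neq k}V_{kl}\big(\hat F_l^k(t)-\hat F_k^k(t)\big)$ using $\sum_l V_{kl}=1$ (a consequence of $M\mathbf 1=\mathbf 1$ from the definition of $M$ as a stochastic-type matrix of reverse conditionals — I should verify row sums equal one so that $V$'s rows sum to one). Then $\hat\Delta_k(t)-\hat\Delta_k^{\mathrm{ci}}(t)$ equals $\sum_{l\neq k}(V_{kl}-\hat V^{\mathrm{upp}}_{kl})(\hat F_l^k-\hat F_k^k)$ plus the two nonnegative penalty terms from~\eqref{eq:delta-hat-k-epsilon-bound}. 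Since $V_{kl}-\hat V^{\mathrm{upp}}_{kl}\le 0$ on $\mathcal E$ but can be as negative as $-\hat\delta^{(V)}_{kl}$, I need the penalty terms to dominate $\sum_{l\neq k}\hat\delta^{(V)}_{kl}\,|\hat F_l^k-\hat F_k^k|$ in the worst case; the decomposition into a "mean" part (controlled by $\hat\delta^{(V)}_{k*}(K-1)|\hat F_k^k - \frac{1}{K-1}\sum_{l\neq k}\hat F_l^k|$) and a "spread" part (controlled by $|\hat\zeta^{\mathrm{upp}}_k|\sum_{l\neq k}|\hat F_l^k-\hat F_k^k|$) is precisely engineered for this, and I would prove it by writing $V_{kl}-\hat V^{\mathrm{upp}}_{kl} = \bar c + (\,\cdot\,)$ where $\bar c$ is the average of $(V_{kl}-\hat V^{\mathrm{upp}}_{kl})$ over $l\neq k$ and the residual has range at most $\hat\zeta_k\le\hat\zeta^{\mathrm{upp}}_k$, then bounding the two pieces separately via $|\bar c|\le\hat\delta^{(V)}_{k*}$ and the triangle inequality. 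The analogous inequality $\delta^{\mathrm{ci}}(n_k,n_*)\ge\delta(n_k,n_*)$ on $\mathcal E$ is more routine: the coefficient $\sum_{l\neq k}(|\hat V^{\mathrm{upp}}_{kl}|+\hat\delta^{(V)}_{kl})$ in~\eqref{eq:delta-constant-epsilon-ci} bounds $\sum_{l\neq k}|V_{kl}|$ because $|V_{kl}|\le|\hat V^{\mathrm{upp}}_{kl}|+\hat\delta^{(V)}_{kl}$ on $\mathcal E$, and the remaining $2\alpha_V\sum_{l\neq k}|\bar V^{\mathrm{upp}}_{kl}|$ term only makes $\delta^{\mathrm{ci}}$ larger.

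Finally I would assemble the pieces: define $\hat{\mathcal I}_k$ and $\hat\tau_k^{\mathrm{oracle}}$ exactly as Algorithm~\ref{alg:correction} would with the true $V$; observe that on $\mathcal E$ the defining inequality of $\hat{\mathcal I}^{\mathrm{ci}}_k$ is implied by that of $\hat{\mathcal I}_k$, so $\hat{\mathcal I}_k\subseteq\hat{\mathcal I}^{\mathrm{ci}}_k$ and thus $\hat\tau_k^{\mathrm{ci}}\ge\hat\tau_k^{\mathrm{oracle}}$ and $\hat C^{\mathrm{ci}}(X_{n+1})\supseteq\hat C^{\mathrm{oracle}}(X_{n+1})$; off $\mathcal E$, handle the coverage deficit via the crude bound $\mathbb E[\Delta_k(\hat\tau_k)\mathbf 1_{\mathcal E^c}]\ge -\big(V_{kk}+\sum_{l\neq k}|V_{kl}|\big)\,\mathbb P[\mathcal E^c]$ or, more cleanly, note that enlarging $\delta$ by $2\alpha_V\sum_{l\neq k}|\bar V^{\mathrm{upp}}_{kl}|$ compensates exactly for the worst-case error incurred when $\mathcal E$ fails (this is where the constant in~\eqref{eq:delta-constant-epsilon-ci} is calibrated). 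Combining with Theorem~\ref{thm:algorithm-correction} applied to the oracle algorithm yields $\mathbb P[Y_{n+1}\in\hat C^{\mathrm{ci}}(X_{n+1})\mid Y=k]\ge\mathbb P[Y_{n+1}\in\hat C^{\mathrm{oracle}}(X_{n+1})\mid Y=k]-(\text{correction already absorbed})\ge 1-\alpha$, completing the proof. The one subtlety to get exactly right is whether the $\alpha_V$ charge is paid through the enlarged $\delta^{\mathrm{ci}}$ or through a separate union bound; I expect the paper's bookkeeping routes it through $\delta^{\mathrm{ci}}$, so the argument on $\mathcal E^c$ should be a short worst-case estimate rather than a naive $\mathbb P[\mathcal E^c]\le\alpha_V$ subtraction at the end.
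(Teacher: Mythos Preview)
Your approach differs from the paper's, and the difference matters for closing the argument. The paper does \emph{not} reduce to Theorem~\ref{thm:algorithm-correction} as a black box; instead it re-runs the decomposition from the proof of Theorem~\ref{thm:algorithm-correction} with $\hat\Delta_k^{\mathrm{ci}}$ in place of $\hat\Delta_k$, obtaining
\[
\P{Y\notin\hat C^{\mathrm{ci}}(X_{n+1})\mid Y=k,\hat V^{\mathrm{low}},\hat V^{\mathrm{upp}}}
\le \alpha + \EV{\sup_t[\hat\Delta_k^{\mathrm{ci}}(t)-\Delta_k(t)]\mid\hat V^{\mathrm{low}},\hat V^{\mathrm{upp}}} + c(n_k) - \delta^{\mathrm{ci}}(n_k,n_*),
\]
and then proves a dedicated lemma bounding the expectation on the right by the DKW part (with coefficient $\sum_{l\neq k}(|\hat V^{\mathrm{upp}}_{kl}|+\hat\delta^{(V)}_{kl})$) plus $2\,\I{V\notin[\hat V^{\mathrm{low}},\hat V^{\mathrm{upp}}]}\sum_{l\neq k}|\bar V^{\mathrm{upp}}_{kl}|$. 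Subtracting $\delta^{\mathrm{ci}}$ leaves $\alpha + 2(\I{\mathcal E^c}-\alpha_V)\sum_{l\neq k}|\bar V^{\mathrm{upp}}_{kl}|$, and a final expectation over the confidence-region data cancels the residual term exactly. The key algebraic step---that the penalty terms in~\eqref{eq:delta-hat-k-epsilon-bound} absorb $\sum_{l\neq k}(\hat V^{\mathrm{upp}}_{kl}-V_{kl})(\hat F_l^k-\hat F_k^k)$ on $\mathcal E$ via the $\hat\psi_k$/$\hat\zeta_k$ decomposition---is exactly what you identified, so your main algebraic insight is correct and matches the paper.

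The gap in your plan is the $\mathcal E^c$ bookkeeping. A pure containment argument $\hat C^{\mathrm{ci}}\supseteq\hat C^{\mathrm{oracle}}$ on $\mathcal E$ plus Theorem~\ref{thm:algorithm-correction} yields only $\P{Y\in\hat C^{\mathrm{ci}}\mid Y=k}\ge (1-\alpha)(1-\alpha_V)$, not $1-\alpha$. You correctly sense that the $2\alpha_V\sum|\bar V^{\mathrm{upp}}_{kl}|$ term in $\delta^{\mathrm{ci}}$ should absorb the shortfall, but within a black-box reduction there is no mechanism for it to do so: on $\mathcal E$ the extra slack merely makes the oracle containment looser (which buys nothing, since the oracle already covers at $1-\alpha$), and on $\mathcal E^c$ you have no containment at all, so the slack sits in the wrong branch. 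Your suggested patch, bounding $\mathbb E[\Delta_k(\hat\tau_k)\mathbf 1_{\mathcal E^c}]$, requires reopening the coverage decomposition rather than invoking Theorem~\ref{thm:algorithm-correction}---and once you do that you are effectively on the paper's path, bounding $\hat\Delta_k^{\mathrm{ci}}-\Delta_k$ directly (with a crude $\le 2\sum|\bar V^{\mathrm{upp}}_{kl}|$ on $\mathcal E^c$). One minor correction: your pointwise inequality gives $\hat{\mathcal I}^{\mathrm{ci}}_k\subseteq\hat{\mathcal I}_k$, not the reverse; your threshold conclusion $\hat\tau_k^{\mathrm{ci}}\ge\hat\tau_k^{\mathrm{oracle}}$ is nonetheless correct because taking the minimum over a smaller set can only increase it.
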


It is now clear that the level $\alpha_V$ of the confidence region for $V$ affects the magnitude of the finite-sample correction term in~\eqref{eq:delta-constant-epsilon-ci} through the product $\alpha_V \sum_{l \neq k} |\bar{V}_{kl}^{\mathrm{upp}}|$. Therefore, there is an important trade-off in the choice of $\alpha_V$, because smaller values of the latter tend to lead to larger upper bounds $|\bar{V}_{kl}^{\mathrm{upp}}|$ and $|\hat{V}_{kl}^{\mathrm{upp}}|$.
In practice, we have observed that choosing a relatively small value such as $\alpha_V=0.01$ often works well in practice, although it may not be optimal.

The following theorem formalizes the intuition that the prediction sets computed by Algorithm~\ref{alg:correction-ci} tend to be more informative if the confidence bands for $V_{kl}$ are tighter.
This result naturally extends Theorem~\ref{thm:algorithm-correction-upper} to provide a coverage upper bound for the conformal prediction sets output by Algorithm~\ref{alg:correction-ci}. Similarly to Theorem~\ref{thm:algorithm-correction-upper}, three technical conditions are needed: Assumptions \ref{assumption:regularity-dist}, \ref{assumption:consitency-scores} and~\ref{assumption:regularity-dist-delta-ci}, with the latter being a suitable variation of Assumption~\ref{assumption:regularity-dist-delta}.

\begin{assumption} \label{assumption:regularity-dist-delta-ci}
The coverage inflation factor $\Delta_k(t)$ is almost-surely bounded by:
\begin{align} \label{eq:regularity-dist-delta-ci}
\begin{split}
  \inf_{t \in (0,1)} \Delta_k(t) 
  & \geq - \alpha + c(n_k) + \frac{2 \sum_{l \neq k} \left( |\hat{V}^{\mathrm{upp}}_{kl}| + \hat{\delta}^{(V)}_{kl} \right) }{\sqrt{n_{*}}} \left( \frac{1}{\sqrt{n_{*}}} + \sqrt{\frac{\log(2K) + \log(n_{*})}{2}} \right) \\
  & \qquad + 2 \alpha_V \sum_{l \neq k} |\bar{V}_{kl}^{\mathrm{upp}}| + (K-1) \left( \hat{\delta}^{(V)}_{k*}  + |\hat{\zeta}^{\mathrm{upp}}_k| \right).
\end{split}
\end{align}
\end{assumption}

The difference between Assumption~\ref{assumption:regularity-dist-delta-ci} and Assumption~\ref{assumption:regularity-dist-delta} is that the upper bound for $\Delta_k(t) $ imposed by the latter is a random variable that depends on the confidence region $[\hat{V}^{\mathrm{low}}, \hat{V}^{\mathrm{upp}}]$.
In the limit of $n_{*} \to \infty$, Assumption~\ref{assumption:regularity-dist-delta-ci} becomes approximately equivalent to 
\begin{align} \label{eq:epsilon-bound-asymptotically-tight-ci}
  \sum_{l \neq k} |V_{kl}| + (K-1) \left( \hat{\delta}^{(V)}_{k*}  + |\hat{\zeta}^{\mathrm{upp}}_k| \right)
  \leq \alpha - 2 \alpha_V \sum_{l \neq k} |\bar{V}_{kl}^{\mathrm{upp}}|.
\end{align}
This means that Assumption~\ref{assumption:regularity-dist-delta-ci} is often realistic, similarly to Assumption~\ref{assumption:regularity-dist-delta}, as long as $\alpha_V \leq \alpha$, the contaminated labels are not too different from the true labels, and the confidence region $[\hat{V}^{\mathrm{low}}, \hat{V}^{\mathrm{upp}}]$ is not too wide. 
We refer to Section~\ref{app:RR-model} for further details on the interpretation of~\eqref{eq:epsilon-bound-asymptotically-tight-ci}, which can be simplified under more specific label contamination models.
Moreover, we remark that Assumption~\ref{assumption:regularity-dist-delta-ci} is always satisfied in the large-sample limit if the stochastic dominance condition in~\eqref{eq:assump_scores-cond} holds (i.e., $\inf_{t \in (0,1)} \Delta_k(t) \geq 0$) and sufficiently tight confidence bounds $[\hat{V}^{\mathrm{low}}, \hat{V}^{\mathrm{upp}}]$ for a small enough $\alpha_V$ are available. 
Then, a finite-sample upper bound for the coverage of the conformal prediction sets output by Algorithm~\ref{alg:correction-ci} is established below.

\begin{theorem} \label{thm:algorithm-correction-upper-ci}
Under the setup of Theorem~\ref{thm:algorithm-correction-ci}, assume the general contamination model from Section~\ref{subsec:linear-contam-model} holds. Let $\hat{C}^{\mathrm{ci}}(X_{n+1})$ indicate the prediction set output by Algorithm~\ref{alg:correction-ci} based on an independent simultaneous confidence region $[\hat{V}^{\mathrm{low}}, \hat{V}^{\mathrm{upp}}]$ such that $V_{kl} \in [\hat{V}_{kl}^{\mathrm{low}}, \hat{V}_{kl}^{\mathrm{upp}}]$ for all $l \neq k$ with probability at least $1-\alpha_V$.
Assume also that $\max\{|\hat{V}_{kl}^{\mathrm{upp}}|, |V_{kl}|\} \leq |\bar{V}_{kl}^{\mathrm{upp}}|$ almost-surely for all $l \neq k$, for some known constants $|\bar{V}_{kl}^{\mathrm{upp}}| > 0$.
Suppose also that Assumptions~\ref{assumption:regularity-dist}, \ref{assumption:consitency-scores}, and~\ref{assumption:regularity-dist-delta-ci} hold. Then, $\mathbb{P}[Y_{n+1} \in \hat{C}^{\mathrm{ci}}(X_{n+1}) \mid Y = k] \leq  1 - \alpha + \varphi_k^{\mathrm{ci}}(n_k,n_{*}),$ for all $k \in [K]$, where
  \begin{align*}
    \varphi_k^{\mathrm{ci}}(n_k,n_{*})
  & = \frac{1}{n_{*}} + \left( 1 + 4 \sum_{l \neq k} |\bar{V}_{kl}^{\mathrm{upp}}| \right) \alpha_V + 2c(n_k) + (K-1) \left( 2 \EV{\hat{\delta}^{(V)}_{k*}}  + \EV{|\hat{\zeta}^{\mathrm{upp}}_k|} \right)  \\
    & \quad + \frac{4 \sum_{l \neq k} \EV{ |\hat{V}^{\mathrm{upp}}_{kl}| + \hat{\delta}^{(V)}_{kl} } }{\sqrt{n_{*}}} \min \left\{ K \sqrt{\frac{\pi}{2}} , \frac{1}{\sqrt{n_{*}}} + \sqrt{\frac{\log(2K) + \log(n_{*})}{2}} \right\}\\
  & \quad + \frac{2}{n_k} \left[ 1 + \sum_{l \neq k} |V_{kl}| + \sum_{l\neq k}|V_{kl}| \cdot \frac{f_{\max}}{f_{\min}} \cdot \sum_{j=1}^{n_k+1} \frac{1}{j} \right].
  \end{align*}

\end{theorem}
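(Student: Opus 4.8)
The plan is to extend the proof of Theorem~\ref{thm:algorithm-correction-upper}, treating the confidence region $[\hat{V}^{\mathrm{low}},\hat{V}^{\mathrm{upp}}]$ as a controlled perturbation of the unknown matrix $V$. Fix $k\in[K]$. Conditioning on all of $\mathcal{D}$ (so that $\hat{\tau}_k$ is deterministic), Definition~\ref{def:pred-function} and \eqref{eq:conf-scores} give $\P{Y_{n+1}\in\hat{C}^{\mathrm{ci}}(X_{n+1})\mid Y_{n+1}=k,\mathcal{D}}=F_k^{k}(\hat{\tau}_k)=\tF_k^{k}(\hat{\tau}_k)+\Delta_k(\hat{\tau}_k)$, so it suffices to prove $\EV{\tF_k^{k}(\hat{\tau}_k)}+\EV{\Delta_k(\hat{\tau}_k)}\le 1-\alpha+\varphi_k^{\mathrm{ci}}(n_k,n_{*})$. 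I would run the main argument on a ``good event'' $\mathcal{E}$ on which, simultaneously for all $k$, (i) $V_{kl}\in[\hat{V}^{\mathrm{low}}_{kl},\hat{V}^{\mathrm{upp}}_{kl}]$ for $l\ne k$ and $\hat{\zeta}_k\le\hat{\zeta}^{\mathrm{upp}}_k$, and (ii) $\max_{l\in[K]}\|\hat{F}_l^{k}-\tF_l^{k}\|_\infty$ is below the deterministic level implicit in the middle term of $\delta^{\mathrm{ci}}(n_k,n_{*})$; by the stated coverage of the confidence region and by DKW (with a union bound over $l$ producing $\log(2K)$ and over $k$ producing the extra $\log n_{*}$), $\P{\mathcal{E}^{c}}$ is of order $\alpha_V+n_{*}^{-1}$, and on $\mathcal{E}^{c}$ one only uses $F_k^{k}(\hat{\tau}_k)\le 1$ together with the a priori bounds $|\hat{V}^{\mathrm{upp}}_{kl}|,|V_{kl}|\le|\bar{V}^{\mathrm{upp}}_{kl}|$; this produces the $(1+4\sum_{l\ne k}|\bar{V}^{\mathrm{upp}}_{kl}|)\alpha_V$ and $n_{*}^{-1}$ summands of $\varphi_k^{\mathrm{ci}}$.

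On $\mathcal{E}$, Assumption~\ref{assumption:regularity-dist-delta-ci} is calibrated so that $\hat{\Delta}_k^{\mathrm{ci}}(S^k_{(n_k)})\ge\delta^{\mathrm{ci}}(n_k,n_{*})-\alpha$, i.e.\ $n_k\in\hat{\mathcal{I}}^{\mathrm{ci}}_k$; hence $\hat{\mathcal{I}}^{\mathrm{ci}}_k\ne\emptyset$ and $\hat{\tau}_k=S^k_{(\hat{i}_k)}$ with $\hat{i}_k=\min\hat{\mathcal{I}}^{\mathrm{ci}}_k$, and its minimality, $\hat{i}_k-1\notin\hat{\mathcal{I}}^{\mathrm{ci}}_k$, gives $\hat{i}_k/n_k<1-\alpha-\hat{\Delta}_k^{\mathrm{ci}}(S^k_{(\hat{i}_k-1)})+\delta^{\mathrm{ci}}(n_k,n_{*})+1/n_k$ (the case $\hat{i}_k=1$ being easier and handled directly, using Assumption~\ref{assumption:consitency-scores} to control $\Delta_k(S^k_{(1)})$). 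For the term $\EV{\tF_k^{k}(\hat{\tau}_k)}$ I would use the probability integral transform: since $\tF_k^{k}$ is continuous (Assumption~\ref{assumption:regularity-dist}), $\{\tF_k^{k}(S^k_{(i)})\}_{i\in[n_k]}$ has the law of the order statistics $\{U_{(i)}\}$ of $n_k$ i.i.d.\ uniforms, so $\tF_k^{k}(\hat{\tau}_k)-\hat{i}_k/n_k\le\sup_{i}\{\tF_k^{k}(S^k_{(i)})-i/n_k\}$ pointwise, and the reflection identity $\sup_{i}\{U_{(i)}-i/n_k\}\stackrel{d}{=}\sup_{i}\{i/n_k-U_{(i)}\}-1/n_k$ yields $\EV{\tF_k^{k}(\hat{\tau}_k)}\le\EV{\hat{i}_k/n_k}+c(n_k)-1/n_k$.

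The crux is a deterministic inequality, valid on $\mathcal{E}$ for every $t$:
\begin{equation*}
\Delta_k(t)\le\hat{\Delta}_k^{\mathrm{ci}}(t)+\sum_{l\ne k}|V_{kl}|\bigl(\|\hat{F}_l^{k}-\tF_l^{k}\|_\infty+\|\hat{F}_k^{k}-\tF_k^{k}\|_\infty\bigr)+2(K-1)\hat{\delta}^{(V)}_{k*}+(K-1)|\hat{\zeta}^{\mathrm{upp}}_k|.
\end{equation*}
The identity $M\mathbf{1}=\mathbf{1}$ with $V=M^{-1}$ gives $\sum_l V_{kl}=1$, which turns \eqref{eq:delta-2} into $\Delta_k(t)=\sum_{l\ne k}V_{kl}(\tF_l^{k}-\tF_k^{k})(t)$, while \eqref{eq:delta-hat-k-epsilon-bound} reads $\hat{\Delta}_k^{\mathrm{ci}}(t)=\sum_{l\ne k}\hat{V}^{\mathrm{upp}}_{kl}\hat{a}_l-\hat{\delta}^{(V)}_{k*}|\sum_{l\ne k}\hat{a}_l|-|\hat{\zeta}^{\mathrm{upp}}_k|\sum_{l\ne k}|\hat{a}_l|$, with $\hat{a}_l:=\hat{F}_l^{k}(t)-\hat{F}_k^{k}(t)$ and $|\sum\hat{a}_l|,\sum|\hat{a}_l|\le K-1$. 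Writing $\hat{V}^{\mathrm{upp}}_{kl}=V_{kl}+e_l$ with $0\le e_l\le\hat{V}^{\mathrm{upp}}_{kl}-\hat{V}^{\mathrm{low}}_{kl}=\hat{\delta}^{(V)}_{kl}\le\hat{\delta}^{(V)}_{k*}$ (using $\hat{V}^{\mathrm{low}}_{kl}\le V_{kl}\le\hat{V}^{\mathrm{upp}}_{kl}$ on $\mathcal{E}$), one gets $\sum_{l\ne k}V_{kl}\hat{a}_l-\hat{\Delta}_k^{\mathrm{ci}}(t)=-\sum_{l\ne k}e_l\hat{a}_l+\hat{\delta}^{(V)}_{k*}|\sum\hat{a}_l|+|\hat{\zeta}^{\mathrm{upp}}_k|\sum|\hat{a}_l|\le 2(K-1)\hat{\delta}^{(V)}_{k*}+(K-1)|\hat{\zeta}^{\mathrm{upp}}_k|$, since $-\sum e_l\hat{a}_l\le(\max_l e_l)\sum|\hat{a}_l|\le(K-1)\hat{\delta}^{(V)}_{k*}$; finally, replacing $\tF$ by $\hat{F}$ costs $\sum_{l\ne k}|V_{kl}|(\|\hat{F}_l^{k}-\tF_l^{k}\|_\infty+\|\hat{F}_k^{k}-\tF_k^{k}\|_\infty)$, and $|V_{kl}|\le|\hat{V}^{\mathrm{upp}}_{kl}|+\hat{\delta}^{(V)}_{kl}$ makes this remainder fit, after taking expectations and applying DKW, the $\sqrt{n_{*}}$-factor of $\delta^{\mathrm{ci}}$ (the $\min$ therein trading a crude per-coordinate bound against a union-bound one).

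To finish, I would evaluate the crux inequality at $t=S^k_{(\hat{i}_k-1)}$, combine it with the minimality inequality to bound $\Delta_k(S^k_{(\hat{i}_k-1)})$, and then move to $\hat{\tau}_k=S^k_{(\hat{i}_k)}$ via the continuity correction $\Delta_k(\hat{\tau}_k)-\Delta_k(S^k_{(\hat{i}_k-1)})\le 2f_{\max}\sum_{l\ne k}|V_{kl}|\,(S^k_{(\hat{i}_k)}-S^k_{(\hat{i}_k-1)})$; this uses Assumption~\ref{assumption:regularity-dist} to bound $|\Delta_k'|\le 2f_{\max}\sum_{l\ne k}|V_{kl}|$ and the inverse-Lipschitz property $S^k_{(i)}-S^k_{(i-1)}\le f_{\min}^{-1}(\tF_k^{k}(S^k_{(i)})-\tF_k^{k}(S^k_{(i-1)}))$, whose maximum over $i$ has expectation at most $f_{\min}^{-1}(n_k+1)^{-1}\sum_{j=1}^{n_k+1}\frac{1}{j}$ by the PIT and exchangeability of uniform spacings. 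Summing the two expectation bounds, the $\EV{\hat{i}_k/n_k}$ contributions cancel up to $O(1/n_k)$ and one is left, on $\mathcal{E}$, with $1-\alpha+c(n_k)+\delta^{\mathrm{ci}}(n_k,n_{*})+\EV{\,\text{empirical}+V\text{-uncertainty}+\text{spacing terms}\,}+O(1/n_k)$; combined with the $\mathcal{E}^{c}$ contribution, each term maps onto a summand of $\varphi_k^{\mathrm{ci}}(n_k,n_{*})$. The step I expect to be the main obstacle is this final accounting: $\delta^{\mathrm{ci}}$ and Assumption~\ref{assumption:regularity-dist-delta-ci} must be threaded so that the error budget simultaneously absorbs the empirical-process fluctuations in expectation, leaves enough slack to force $\hat{\mathcal{I}}^{\mathrm{ci}}_k\ne\emptyset$ on $\mathcal{E}$, and reproduces exactly the stated constants once the $\mathcal{E}/\mathcal{E}^{c}$ split, the worst-case bounds $|\bar{V}^{\mathrm{upp}}_{kl}|$, and the independence of $[\hat{V}^{\mathrm{low}},\hat{V}^{\mathrm{upp}}]$ from the calibration data are folded in --- delicate bookkeeping that mirrors, term by term, the proof of Theorem~\ref{thm:algorithm-correction-upper}.
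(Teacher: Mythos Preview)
Your proposal is correct and follows essentially the same approach as the paper's proof. Both arguments hinge on the decomposition $F_k^k(\hat\tau_k)=\tF_k^k(\hat\tau_k)+\Delta_k(\hat\tau_k)$, the minimality inequality $\hat i_k-1\notin\hat{\mathcal I}^{\mathrm{ci}}_k$, a deterministic comparison between $\Delta_k$ and $\hat\Delta_k^{\mathrm{ci}}$ on the event $\{V\in[\hat V^{\mathrm{low}},\hat V^{\mathrm{upp}}]\}$, the PIT/uniform-spacing bound for the continuity correction, and DKW with a union bound over $l\in[K]$.

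The organizational differences are minor. The paper conditions on $(\hat V^{\mathrm{low}},\hat V^{\mathrm{upp}})$ throughout and splits into the three events $\mathcal A_1=\{\hat{\mathcal I}_k^{\mathrm{ci}}=\emptyset\}$, $\mathcal A_2=\{\hat i_k=1\}$, $\mathcal A_1^{c}\cap\mathcal A_2^{c}$, taking the expectation over the confidence region only at the very end; you instead fold the confidence-region event and the DKW event into a single good event $\mathcal E$. Your ``crux inequality'' is exactly the content of the paper's Lemma~\ref{lemma:dkw-delta-expected-ci} (second part), derived the same way via $\sum_l V_{kl}=1$ and $0\le\hat V^{\mathrm{upp}}_{kl}-V_{kl}\le\hat\delta^{(V)}_{kl}$. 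One small bookkeeping point: in the paper the factor $4\sum_{l\ne k}|\bar V_{kl}^{\mathrm{upp}}|\,\alpha_V$ does not come from bounding coverage by $1$ on $\mathcal E^{c}$ but from the term $2\alpha_V\sum_{l\ne k}|\bar V_{kl}^{\mathrm{upp}}|$ built into $\delta^{\mathrm{ci}}$, which gets doubled because $\delta^{\mathrm{ci}}$ enters once via minimality and once (up to slack) via the Lemma~\ref{lemma:dkw-delta-expected-ci} bound; your route would naturally give a slightly smaller constant there, but otherwise the accounting lines up term by term.
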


The interpretation of Theorem~\ref{thm:algorithm-correction-upper-ci} is similar to that of Theorem~\ref{thm:algorithm-correction-upper}, although now the unknown nature of the contamination model necessarily introduces some slack in the coverage upper bound. In particular, note that $\varphi_k^{\mathrm{ci}}(n_k,n_{*})$ converges to a finite quantity as $n_{*} \to \infty$, but our prediction sets can still be (approximately) tight if $\bar{V}_{kl}^{\mathrm{upp}}$ is finite for all $l \neq k$, $\alpha_V$ is small, and the expected lengths of all confidence intervals $[\hat{V}_{kl}^{\mathrm{low}}, \hat{V}_{kl}^{\mathrm{upp}}]$ are also small for all $l \neq k$.

We conclude this section by noting that the power of Algorithm~\ref{alg:correction-ci} can be further boosted without losing the coverage guarantee, as long as a relatively mild ``optimistic'' condition on the coverage inflation factor in~\eqref{eq:delta} holds.
Concretely, we propose to apply Algorithm~\ref{alg:correction-ci} based on
\begin{align} \label{eq:Ik-set-ci-optimist}
  \hat{\mathcal{I}}^{\mathrm{ci}}_k := \left\{i \in [n_{k}] : \frac{i}{n_{k}} \geq 1 - \alpha - \max\left\{ \hDelta^{\mathrm{ci}}_k(S_{(i)}) - \delta^{\mathrm{ci}}(n_k,n_{*}) , - \frac{1-\alpha}{n_k} \right\} \right\},
\end{align}
instead of the more conservative option described above in~\eqref{eq:Ik-set-ci}.
This leads to an optimistic variation of Algorithm~\ref{alg:correction-ci}, analogous to the extension of Algorithm~\ref{alg:correction} presented earlier in Section~\ref{sec:boosting-optimist}, which still enjoys similar coverage properties.

\begin{proposition} \label{thm:algorithm-correction-ci-optimist}
Under the setup of Theorem~\ref{thm:algorithm-correction-ci}, assume also that $\inf_{t\in \R}\Delta_k(t) \geq \delta^{\mathrm{ci}}(n_k, n_{*}) - (1-\alpha)/n_k$.
If $\hat{C}(X_{n+1})$ is the set output by Algorithm~\ref{alg:correction-ci} applied with the definition of $\hat{\mathcal{I}}^{\mathrm{ci}}_k$ in~\eqref{eq:Ik-set-ci-optimist} instead of~\eqref{eq:Ik-set-ci}, then $\mathbb{P}[Y_{n+1} \in \hat{C}^{\mathrm{ci}}(X_{n+1}) \mid Y = k] \geq 1 - \alpha$ for any $k \in [K]$.
\end{proposition}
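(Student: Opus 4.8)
The plan is to run the proof of Theorem~\ref{thm:algorithm-correction-ci} essentially verbatim, inserting a single case distinction at the step where membership in $\hat{\mathcal I}^{\mathrm{ci}}_k$ is exploited. Fix $k$ and condition on $\mathcal{D}^{\text{train}}$ and on the confidence region $[\hat V^{\mathrm{low}},\hat V^{\mathrm{upp}}]$, both independent of $\mathcal{D}^{\mathrm{cal}}$. Since $X_{n+1}\mid\{Y_{n+1}=k\}\sim P_k$ is independent of $\mathcal{D}^{\mathrm{cal}}$ and, by Definition~\ref{def:pred-function} and~\eqref{eq:conf-scores}, $\{k\in\mathcal{C}(X_{n+1},\hat\tau)\}=\{\hat s(X_{n+1},k)\le\hat\tau_k\}$, the conditional coverage equals $\mathbb{E}[F^k_k(\hat\tau_k)]=\mathbb{E}[\tF^k_k(\hat\tau_k)]+\mathbb{E}[\Delta_k(\hat\tau_k)]$, the expectation being over $\mathcal{D}^{\mathrm{cal}}$; if $\hat{\mathcal I}^{\mathrm{ci}}_k=\emptyset$ then $\hat\tau_k=1$ and coverage is $1$, so assume $\hat{\mathcal I}^{\mathrm{ci}}_k\neq\emptyset$ and write $\hat\tau_k=S^k_{(\hat i_k)}$ with $\hat i_k=\min\hat{\mathcal I}^{\mathrm{ci}}_k$. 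As in the proof of Theorem~\ref{thm:algorithm-correction-ci}, couple the $n_k$ scores $\{\hat s(X_j,k):\tilde Y_j=k\}$ to i.i.d.\ uniforms so that $\tF^k_k(S^k_{(i)})=U_{(i)}$ and $\hat F^k_k(S^k_{(i)})=i/n_k$ (continuity of the scores handles ties a.s.), set $W:=\sup_{i\in[n_k]}(i/n_k-U_{(i)})$ so that $\mathbb{E}[W]=c(n_k)$ by~\eqref{eq:define-cn}, and recall that that proof produces an event $E$ — the intersection of $\{V_{kl}\in[\hat V^{\mathrm{low}}_{kl},\hat V^{\mathrm{upp}}_{kl}]\ \forall l\neq k\}$ with a DKW-type event — on which $\Delta_k(t)\ge\hDelta^{\mathrm{ci}}_k(t)-\rho_k$ for all $t$, where $\rho_k\ge 0$ and $\delta^{\mathrm{ci}}(n_k,n_{*})$ in~\eqref{eq:delta-constant-epsilon-ci} is calibrated precisely so that $c(n_k)+\mathbb{E}[\rho_k\mathbf 1_E]$ together with the residual bookkeeping on $E^c$ does not exceed $\delta^{\mathrm{ci}}(n_k,n_{*})$.

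The only new step is to lower bound $F^k_k(\hat\tau_k)$ on $E$ by splitting on which term of the maximum in~\eqref{eq:Ik-set-ci-optimist} is active at the realized index $\hat i_k$. If the active term is $\hDelta^{\mathrm{ci}}_k(S^k_{(\hat i_k)})-\delta^{\mathrm{ci}}(n_k,n_{*})$, then $\hat i_k\in\hat{\mathcal I}^{\mathrm{ci}}_k$ reads $\hat i_k/n_k\ge 1-\alpha-\hDelta^{\mathrm{ci}}_k(S^k_{(\hat i_k)})+\delta^{\mathrm{ci}}(n_k,n_{*})$, which is exactly the inequality exploited in Theorem~\ref{thm:algorithm-correction-ci}; combining it with $\tF^k_k(\hat\tau_k)=U_{(\hat i_k)}\ge \hat i_k/n_k-W$ and $\Delta_k(\hat\tau_k)\ge\hDelta^{\mathrm{ci}}_k(\hat\tau_k)-\rho_k$ makes the $\hDelta^{\mathrm{ci}}_k$ terms cancel and gives $F^k_k(\hat\tau_k)\ge 1-\alpha+\delta^{\mathrm{ci}}(n_k,n_{*})-W-\rho_k$. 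If instead the active term is $-(1-\alpha)/n_k$, then membership forces $\hat i_k/n_k\ge 1-\alpha+(1-\alpha)/n_k$, i.e.\ $\hat i_k\ge(1-\alpha)(n_k+1)$; here $\hDelta^{\mathrm{ci}}_k$ plays no role, and the extra hypothesis $\inf_t\Delta_k(t)\ge\delta^{\mathrm{ci}}(n_k,n_{*})-(1-\alpha)/n_k$ yields $F^k_k(\hat\tau_k)=U_{(\hat i_k)}+\Delta_k(\hat\tau_k)\ge(\hat i_k/n_k-W)+\bigl(\delta^{\mathrm{ci}}(n_k,n_{*})-(1-\alpha)/n_k\bigr)\ge 1-\alpha+\delta^{\mathrm{ci}}(n_k,n_{*})-W$, which is even stronger. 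Since $\rho_k\ge 0$, in both cases $F^k_k(\hat\tau_k)\ge 1-\alpha+\delta^{\mathrm{ci}}(n_k,n_{*})-W-\rho_k$ holds on $E$.

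From here the proof closes exactly as that of Theorem~\ref{thm:algorithm-correction-ci}: take expectations, use $\mathbb{E}[W]=c(n_k)$, and note that $\mathbb{E}[\rho_k\mathbf 1_E]$ plus the contribution of $E^c$ is absorbed by the remaining terms of $\delta^{\mathrm{ci}}(n_k,n_{*})$ — this is where the factor $2\sum_{l\neq k}(|\hat V^{\mathrm{upp}}_{kl}|+\hat\delta^{(V)}_{kl})/\sqrt{n_{*}}$ times the $\min\{\cdots\}$ bracket and the term $2\alpha_V\sum_{l\neq k}|\bar V^{\mathrm{upp}}_{kl}|$ enter — to conclude $\mathbb{E}[F^k_k(\hat\tau_k)]\ge 1-\alpha$ for every realization of $\mathcal{D}^{\text{train}}$, hence unconditionally. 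The non-CI statement (Proposition~\ref{thm:algorithm-correction-optimistic}) is the special case $\hat V^{\mathrm{low}}=\hat V^{\mathrm{upp}}=V$, $\alpha_V=0$, $\delta^{\mathrm{ci}}=\delta$, $\rho_k\equiv 0$, for which the second branch uses $\inf_t\Delta_k(t)\ge\delta(n_k,n_{*})-(1-\alpha)/n_k$ and the argument is verbatim the same. I expect the main obstacle to be organizational rather than technical: one must isolate from the (longer) proof of Theorem~\ref{thm:algorithm-correction-ci} the precise ``core inequality'' $F^k_k(\hat\tau_k)\ge 1-\alpha+\delta^{\mathrm{ci}}(n_k,n_{*})-W-\rho_k$ on $E$, confirm that the first branch reproduces it unchanged, and check that the second branch meets the same target; the boundary cases ($\hat{\mathcal I}^{\mathrm{ci}}_k=\emptyset$, or $n_k<(1-\alpha)/\alpha$ so that $(1-\alpha)(n_k+1)>n_k$, and ties among the scores) are disposed of exactly as in the preceding results.
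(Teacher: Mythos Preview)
Your plan is correct and lands on essentially the same argument as the paper. Both proofs decompose the conditional miscoverage as $1-\tF^k_k(\hat\tau_k)-\Delta_k(\hat\tau_k)$ and reduce to the bookkeeping of Theorem~\ref{thm:algorithm-correction-ci}; the only new step is handling the $\max$ in~\eqref{eq:Ik-set-ci-optimist}. Where you do an explicit case split on which branch of the $\max$ is active, the paper uses the one-line observation that, under the hypothesis $\Delta_k\ge\delta^{\mathrm{ci}}-(1-\alpha)/n_k$, one may write $\Delta_k=\max\{\Delta_k,\,\delta^{\mathrm{ci}}-(1-\alpha)/n_k\}$ and then apply $\max\{a,c\}-\max\{b,c\}\le\max\{a-b,0\}$ to obtain
\[
\max\bigl\{\hDelta^{\mathrm{ci}}_k-\delta^{\mathrm{ci}},\,-(1-\alpha)/n_k\bigr\}-\bigl(\Delta_k-\delta^{\mathrm{ci}}\bigr)\le\max\{\hDelta^{\mathrm{ci}}_k-\Delta_k,0\}.
\]
This is exactly the inequality your two cases establish separately, so the routes are equivalent.

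One organizational correction: the proof of Theorem~\ref{thm:algorithm-correction-ci} does not actually produce a high-probability event $E$ on which $\Delta_k\ge\hDelta^{\mathrm{ci}}_k-\rho_k$; it works entirely in expectation, bounding $\mathbb{E}\bigl[\sup_t\max\{\hDelta^{\mathrm{ci}}_k(t)-\Delta_k(t),0\}\mid\hat V^{\mathrm{low}},\hat V^{\mathrm{upp}}\bigr]$ directly via Lemma~\ref{lemma:dkw-delta-expected-ci}, with the indicator $\I{V\notin[\hat V^{\mathrm{low}},\hat V^{\mathrm{upp}}]}$ appearing only inside that lemma's bound and then averaged to produce the $2\alpha_V\sum_{l\neq k}|\bar V^{\mathrm{upp}}_{kl}|$ term. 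Your argument goes through cleanly if you drop the event $E$, set $\rho_k:=\sup_t\max\{\hDelta^{\mathrm{ci}}_k(t)-\Delta_k(t),0\}$ as a random variable (which is automatically $\ge 0$, as your Case~2 needs), and invoke Lemma~\ref{lemma:dkw-delta-expected-ci} to close exactly as Theorem~\ref{thm:algorithm-correction-ci} does.
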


\subsection{A general method for fitting the label contamination model} \label{sec:method-noise-estim}

We now shift our focus to the estimation of the contamination model.
On the one hand, practitioners sometimes have prior information about the matrix $M$ in \eqref{eq:contam_model}.
For example, they might have knowledge of the labeling processes \citep{sorokin2008utility} or previous experiences with related data.
Further, the contamination model can be known in applications involving differential privacy  \citep{warner1965randomized,evfimievski2003limiting,kasiviswanathan2011can,ghazi2021deep}.
On the other hand, it is interesting to also consider a data-driven approach that can estimate $M$ leveraging some observations of both contaminated and clean data.

Of course, if clean data are available, one may think of circumventing the problem studied in this paper by calibrating the conformal inferences without using the contaminated samples.
However, data with high-quality labels are often scarce, and they may not always be available in sufficient numbers to reliably calibrate the $K$ thresholds $\hat{\tau}_k$ needed to guarantee label-conditional coverage \citep{vovk2012conditional}, especially if $K$ is large. Further, even more abundant clean calibration data may be needed if one aims to achieve stronger guarantees such as equalized coverage over protected categories \citep{romano2019malice}.
By contrast, our label contamination model can easily incorporate well-justified structural constraints that reduce the number of parameters to be estimated, making our estimation task seem manageable even with a small clean data set.

With this premise, we begin to tackle the estimation of $V = M^{-1}$ from a general perspective, avoiding for the time being any additional assumptions about the structure of $M$.
Subsequently, in Section~\ref{app:simplified-methods}, our attention will turn to more specific contamination models that can be fitted quite accurately even with a very limited amount of clean data.

Let $\mathcal{D}^{1}$ denote a random contaminated data set, independent of $\mathcal{D}$ but identically distributed, with $n^{1}=|\mathcal{D}^{1}| < n$.
Similarly, let $\mathcal{D}^{0}$ denote a (smaller) clean data set containing i.i.d.~pairs of observations $(X_i^{0},Y_i^{0})$, for $i \in [n^{0}]$, independently drawn from the same distribution $P_{XY}$ corresponding to the data in $\mathcal{D}$.
First, randomly partition $\mathcal{D}^1$ into two disjoint subsets, $\mathcal{D}^1_{a}$ and $\mathcal{D}^1_{b}$.
The data in $\mathcal{D}^1_{a}$ are utilized to train a $K$-class classifier, possibly with the same machine learning algorithm utilized to compute the conformity scores.
 Let $\hat{f}(X) \in [K]$ denote the most likely label predicted by this classifier for a new sample with features $X$.
Then, define the matrices $\tilde{Q} \in [0,1]^{K \times K}$ and $Q \in [0,1]^{K \times K}$ such that, for any $l,k \in [K]$,
\begin{align} \label{eq:Q-def}
    & \tilde{Q}_{lk} := \P{ \hat{f}(X) = k \mid \tilde{Y}=l, \hat{f} }, 
    & Q_{lk} := \P{ \hat{f}(X) = k \mid Y=l, \hat{f} }.
\end{align}
As proved in Section~\ref{app:proofs}, the matrices $Q$ and $\tilde{Q}$ satisfy the following {\em estimating equation}:
\begin{align} \label{eq:Q-M}
  \tilde{Q} = M Q.
\end{align}
Thus, as long as $\tilde{Q}$ is invertible, $V := M^{-1}$ is given by
\begin{align} \label{eq:Q-V}
  V = Q \tilde{Q}^{-1}.
\end{align}

Equation~\eqref{eq:Q-V} suggests the following strategy for estimating $V$.
First, note that the data in $\mathcal{D}^{1}_{b}$ can be utilized to compute an intuitive point estimate of $\tilde{Q}$; that is, for each $l,k \in [K]$,
\begin{align} \label{eq:Q-tilde-estim}
    \tilde{Q}_{lk}
    & \approx \frac{\sum_{i \in \mathcal{D}^{1}_{b}} \I{\hat{f}(X_i)=k, \tilde{Y}=l}}{\sum_{i \in \mathcal{D}^{1}_{b}} \I{\tilde{Y}=l}} .
\end{align}
Similarly, the data in $\mathcal{D}^{0}$ can be utilized to obtain an intuitive point estimate of $Q$; i.e., 
\begin{align} \label{eq:Q-estim}
    Q_{lk}
    & \approx \frac{\sum_{i \in \mathcal{D}^{0}} \I{\hat{f}(X_i)=k, Y=l}}{\sum_{i \in \mathcal{D}^{0}} \I{Y=l}} .
\end{align}

If the contaminated observations are relatively abundant (i.e., $|\mathcal{D}^{1}| \gg |\mathcal{D}^{0}|$), Equation~\eqref{eq:Q-tilde-estim} provides an estimate of $\tilde{Q}$ with low variance compared to that of $Q$ in~\eqref{eq:Q-estim}.
Therefore, the leading source of uncertainty in $V$ comes from $Q$ and is due to the unknown joint distribution of $(\hat{f}(X), Y)$ conditional on $\hat{f}$.
This is a multinomial distribution with $K^2$ categories and event probabilities equal to $\lambda_{lk} = \P{ \hat{f}(X) = k, Y=l  \mid \hat{f} }$,
for all $l,k \in [K]$. Then, since $Q_{lk} = \lambda_{lk} / \sum_{s=1}^{K} \lambda_{ls}$ for any $l,k \in [K]$, each element of the matrix $V$ in \eqref{eq:Q-V} can be written as a function of the multinomial parameter vector $\lambda = (\lambda_{lk})_{l,k \in [K]}$ and of other quantities (i.e., $\tilde{Q}$) that are already known with relatively high accuracy:
\begin{align} \label{eq:multinomial-V}
  V_{lk} =  \frac{\sum_{s=1}^{K} \lambda_{ls} (\tilde{Q}^{-1})_{sk}}{\sum_{s=1}^{K} \lambda_{ls}}.
\end{align}

Thus, a simultaneous confidence region $[\hat{V}^{\mathrm{low}},\hat{V}^{\mathrm{upp}}]$ for the off-diagonal entries of $V$ can be obtained by applying standard parametric bootstrap techniques for multinomial parameters; e.g., see \citet{sison1995simultaneous}.
Finally, an adaptive prediction set $\hat{C}^{\mathrm{ci}}(X_{n+1})$ for $Y_{n+1}$ can be constructed by applying Algorithm~\ref{alg:correction-ci} based on $[\hat{V}^{\mathrm{low}},\hat{V}^{\mathrm{upp}}]$.
This two-step procedure is summarized by Algorithm~\ref{alg:correction-estimation} in Section~\ref{app:methods}.

Alternatively, one could seek only a point estimate $\hat{V}$ of $V$, by replacing the multinomial parameters in~\eqref{eq:multinomial-V} with their standard maximum-likelihood estimates. 
Then, it seems intuitive to construct prediction sets by applying Algorithm~\ref{alg:correction} with the {\em plug-in} estimate $\hat{V}$ instead of $V$. 
Section~\ref{sec:empirical} will demonstrate that this partly heuristic approach, outlined by Algorithm~\ref{alg:correction-estimation-plugin} in Section~\ref{app:methods}, tends to work well in practice and often leads to more informative prediction sets compared to Algorithm~\ref{alg:correction-estimation}. 
To conclude this section, we note that the estimation process for the label contamination model, as covered here, can be streamlined and made more practically feasible in cases with limited clean data. This simplification is achieved by introducing additional assumptions about the structure of the matrix $M$, as discussed in Section~\ref{app:simplified-methods}.

\section{Empirical demonstrations} \label{sec:empirical}

Sections~\ref{sec:empirical-known}--\ref{sec:empirical-ci} demonstrate the usage of our methods on simulated data.
Section~\ref{sec:empirical-known} applies methods from Section~\ref{sec:method-known-noise} and their extensions from Section~\ref{app:extensions}, assuming a known contamination model.
Section~\ref{sec:empirical-bounded} applies methods from Section~\ref{sec:method-bounded-noise}, which are useful when the contamination model is unknown.
Section~\ref{sec:empirical-ci} focuses on the estimation of the contamination model, applying methods from Section~\ref{sec:method-noise-estim}.
Finally, Section~\ref{sec:empirical-cifar} presents an application to image data.

\subsection{Simulations under a known label contamination model} \label{sec:empirical-known}

We begin by demonstrating the performance of our methods on synthetic data.
For this purpose, we simulate classification data with $K=4$ labels and $d=50$ features from a Gaussian mixture distribution using the standard {\em make\_classification} function from the Scikit-Learn Python package \citep{scikit-learn}.
This function creates $2K$ clusters of points normally distributed, with unit variance, about the vertices of a 25-dimensional hypercube with sides of length 2, and then randomly assigns an equal number of clusters to each of the $K$ classes.
Note that this leads to uniform label frequencies; i.e., $\rho_k := \P{Y=k} = 1/K$ for all $k \in [K]$.
We refer to \citet{scikit-learn} and \citet{guyondesign} for further details about the data-generating process.
The results of additional experiments based on different data distributions are in Section~\ref{app:figures}.
Conditional on the simulated data, the contaminated labels $\tilde{Y}$ are generated following a {\em randomized response}  model \citep{warner1965randomized}, an intuitive special case of the linear mixture model from Section~\ref{subsec:linear-contam-model}.
Specifically, $\mathbb{P}[\tilde{Y}=k \mid  X, Y=l] = (1-\epsilon) \I{k=l} + \epsilon/K$, which corresponds to $M_{kl} = \mathbb{P}[\tilde{Y}=k \mid  X, Y=l] \cdot \rho_l / \tilde{\rho}_k$, for all $l,k \in [K]$, considering a range of values for $\epsilon \in [0,0.2]$.
Additional experiments based on different contamination processes will be presented later.

A random forest classifier implemented by Scikit-Learn is trained on $10,000$ independent observations with contaminated labels generated as described above.
The classifier is then applied to an independent and identically distributed calibration data set, whose labels are also similarly contaminated, in order to construct generalized inverse quantile conformity scores with the recipe of \citet{romano2020classification}, reviewed in Section~\ref{app:adaptive-scores}.
These scores are transformed into prediction sets for 2000 independent unlabeled test points following three alternative approaches.
The first one is the standard conformal inference approach, which seeks 90\% label-conditional coverage while ignoring the presence of label contamination.
The second approach, called {\em Adaptive}, is the method outlined by Algorithm~\ref{alg:correction}, which we also apply with $\alpha=0.1$.
The third approach, called {\em Adaptive+}, is the optimistic variation of the {\em Adaptive} approach, as described in Section~\ref{sec:boosting-optimist}.
Both the {\em Adaptive} and {\em Adaptive+} methods are applied assuming perfect knowledge of $M$.

\begin{figure}[!htb]
\centering
\includegraphics[width=0.9\linewidth]{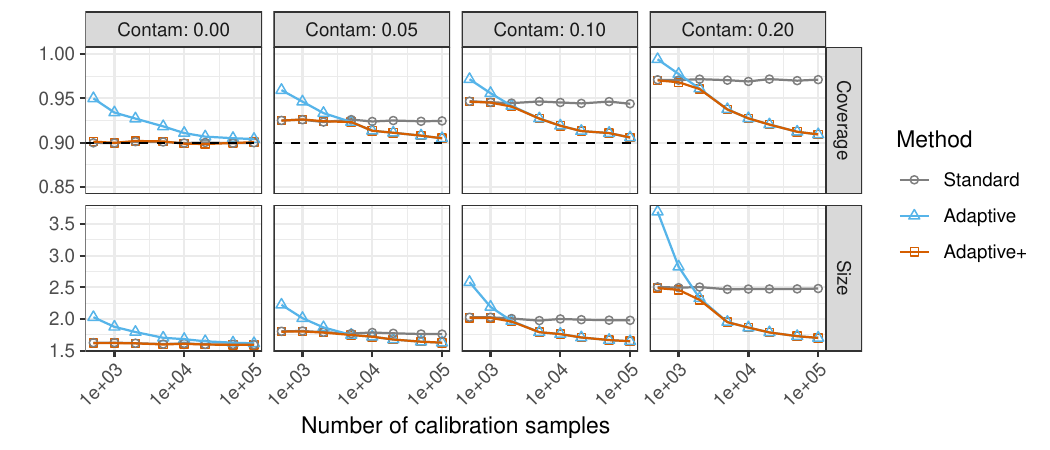}
\caption{Performances of different conformal prediction methods on simulated data with random label contamination of varying strength, as a function of the number of calibration samples. 
The dashed horizontal line indicates the 90\% nominal label-conditional coverage level.}
\label{fig:exp-synthetic-1-lab-cond-K4-ncal}
\end{figure}

Figure~\ref{fig:exp-synthetic-1-lab-cond-K4-ncal} compares the performances of the prediction sets obtained with the three alternative methods, measured in terms empirical coverage---the average proportion of test points for which the true label is contained in the prediction set---and average size.
The results are shown as a function of the number of calibration data points and of the contamination parameter $\epsilon$, averaging over 25 independent repetitions of each experiment.
Unsurprisingly, the standard conformal prediction sets are overly conservative if $\epsilon>0$ and their size does not change significantly as the number of calibration samples increases.
By contrast, the {\em Adaptive} and {\em Adaptive+} methods  tend to produce more informative prediction sets as the calibration sample grows.
Further, the {\em Adaptive+} sets can be smaller than both the standard and {\em Adaptive} sets, as long as the number of calibration samples is large enough.
Overall, these experiments demonstrate that our {\em Adaptive} and {\em Adaptive+} methods are effective at constructing more informative prediction sets with valid coverage in the presence of label contamination, and that the optimistic {\em Adaptive+} version is preferable in practice, even though its theoretical guarantee relies on slightly stronger technical assumptions on the coverage inflation factor (ref.~Section~\ref{sec:boosting-optimist}).

\textbf{Additional numerical results.}
Figures~\ref{fig:exp-synthetic-1-lab-cond-K4-ncal_lc}--\ref{fig:exp-synthetic-1-lab-cond-K4-ncal-cc} in Section~\ref{app:figures-known} present the results of further experiments with similar conclusions.
Figure~\ref{fig:exp-synthetic-1-lab-cond-K4-ncal_lc} reports additional performance metrics from the experiments of Figure~\ref{fig:exp-synthetic-1-lab-cond-K4-ncal}, stratifying the results based on the true label of the test point. 
This confirms that all methods under comparison achieve 90\% label-conditional coverage.
Figure~\ref{fig:exp-synthetic-1-lab-cond-K4-ncal-epsilon} gives an alternative view of these experiments, by plotting the results as a function of $\epsilon$ separately for different calibration sample sizes.
Figure~\ref{fig:exp-synthetic-1-lab-cond-ncal-epsilon} presents results from experiments analogous to those in Figure~\ref{fig:exp-synthetic-1-lab-cond-K4-ncal-epsilon}, but fixing $\epsilon=0.1$ and varying instead the number of labels $K$.
Figure~\ref{fig:exp-synthetic-1-lab-cond-K4-ncal-models} presents additional results from experiments analogous to those in Figure~\ref{fig:exp-synthetic-1-lab-cond-K4-ncal}, fixing $\epsilon=0.1$ but utilizing different types of classifiers, namely a support vector machine and a neural network.

\textbf{The effect of the data distribution.} 
The robustness of our results to different data distributions is demonstrated by Figures~\ref{fig:exp-synthetic-1-lab-cond-K4-ncal_synthetic2}--\ref{fig:exp-synthetic-1-lab-cond-K4-ncal_synthetic3}, which report on experiments similar to those of Figure~\ref{fig:exp-synthetic-1-lab-cond-K4-ncal}.
Figure~\ref{fig:exp-synthetic-1-lab-cond-K4-ncal_synthetic2} considers data with $K=4$ classes simulated from a logistic  model with random parameters, which is inspired by \citet{romano2020classification} and described as follows.
The features follow a standard multivariate Gaussian distribution of dimensions $d = 50$, and $Y \mid X$ is multinomial with weights proportional to $\exp[(X^T W)_k )]$, for all $k \in [K]$, where $W \in \mathbb{R}^{d \times K}$ is an independent standard normal random vector.
Figure~\ref{fig:exp-synthetic-1-lab-cond-K4-ncal_synthetic3} also involves synthetic data with $K=4$ classes, but those data are generated from the heteroscedastic decision-tree model defined in Section~\ref{app:details-experiments-tree}, which is borrowed from \citet{romano2020classification}.

\textbf{The effect of the label contamination process.}
The robustness of our results to different data distributions is demonstrated by Figures~\ref{fig:exp-synthetic-1-lab-cond-K4-ncal_block}--\ref{fig:exp-synthetic-1-lab-cond-K4-ncal_random}, which report on experiments based on data simulated from a logistic  model with random parameters, as in Figure~\ref{fig:exp-synthetic-1-lab-cond-K4-ncal_synthetic2}.
However, now the labels are contaminated differently.
In Figure~\ref{fig:exp-synthetic-1-lab-cond-K4-ncal_block}, the contamination process is described by a transition matrix $T \in [0,1]^{K \times K}$, with $T_{kl} = \mathbb{P}[\tilde{Y}=k \mid  X, Y=l]$ for all $l,k \in [K]$, given by $T = (1-\epsilon)I_{K} + \epsilon/K \cdot B_{K,2}$, where $B_{K,2}$ is a block-diagonal matrix with $K/2$ constant blocks equal to $J_2$---the $2 \times 2$ matrix of ones.
In Figure~\ref{fig:exp-synthetic-1-lab-cond-K4-ncal_random}, $T = (1-\epsilon)I_{K} + \epsilon/K \cdot U_K$, where $U_K$ is a matrix of i.i.d.~uniform random numbers on $[0,1]$, standardized to have its columns sum to one.

\textbf{Prediction sets targeting other notions of coverage.}
Figures~\ref{fig:exp-synthetic-1-marginal-K4-ncal}--\ref{fig:exp-synthetic-1-marginal-K4-ncal-epsilon} present results from experiments analogous to those in Figures~\ref{fig:exp-synthetic-1-lab-cond-K4-ncal} and~\ref{fig:exp-synthetic-1-lab-cond-K4-ncal-epsilon}, respectively, with the difference that all methods under comparison are applied to seek 90\% marginal coverage instead of 90\% label-conditional coverage. Within our adaptive framework, this is obtained by replacing Algorithm~\ref{alg:correction} with Algorithm~\ref{alg:correction-marg}, as explained in~Section~\ref{app:methods-marginal}.
Then, Figures~\ref{fig:exp-synthetic-1-marginal-K4-ncal_synthetic2_uniform_estim-rho}--\ref{fig:exp-synthetic-1-marginal-K4-ncal_synthetic2_random_estim-rho} demonstrate the robustness of Algorithm~\ref{alg:correction-marg} to the empirical estimation of the contaminated label frequencies $\tilde{\rho}$ from the available data.
Finally, Figure~\ref{fig:exp-synthetic-1-lab-cond-K4-ncal-cc} reports on experiments in which the goal is to achieve valid coverage conditional on the calibration data.
Within our adaptive framework, this is obtained by replacing Algorithm~\ref{alg:correction} with Algorithm~\ref{alg:correction-cc}, as explained in~Section~\ref{app:methods-cal-cond}.

\subsection{Simulations under a bounded label contamination model} \label{sec:empirical-bounded}

We now apply Algorithm~\ref{alg:correction-ci}, which does not require perfect knowledge of the model matrix $M$.
As a starting point, we focus on a contamination process described by the randomized response  model \citep{warner1965randomized} with an unknown parameter.
As explained in Section~\ref{app:RR-model}, this model streamlines the implementation of Algorithm~\ref{alg:correction-ci}, which in this case requires as input only a confidence interval for a scalar parameter $\epsilon \in [0,1)$; see Section~\ref{app:RR-model} for further details.

Figure~\ref{fig:exp-synthetic-1-bounded-ncal-eps0.2-lower} compares the performance of Algorithm~\ref{alg:correction-ci} ({\em Adaptive}) and its optimistic variation ({\em Adaptive+}), from Section~\ref{sec:method-bounded-noise}, to that of the standard conformal method that ignores label contamination, using synthetic data similar to those in Figure~\ref{fig:exp-synthetic-1-lab-cond-K4-ncal}.
The differences are that now the number of possible labels is varied, $K \in \{2,4,8\}$, the number of calibration samples is 10,000, and the true noise parameter is $\epsilon=0.2$.
The {\em Adaptive} (and {\em Adaptive+}) prediction sets are constructed by applying Algorithm~\ref{alg:correction-ci} (and its optimistic variation) based on a 99\% confidence interval for $\epsilon$ whose lower bound is varied, while the upper bound is fixed to $0.2$.
The results show that our prediction sets always achieve valid label-conditional coverage and become increasingly informative as the lower bound for $\epsilon$ increases, as anticipated by our theory.

\begin{figure}[!htb]
\centering
\includegraphics[width=0.9\linewidth]{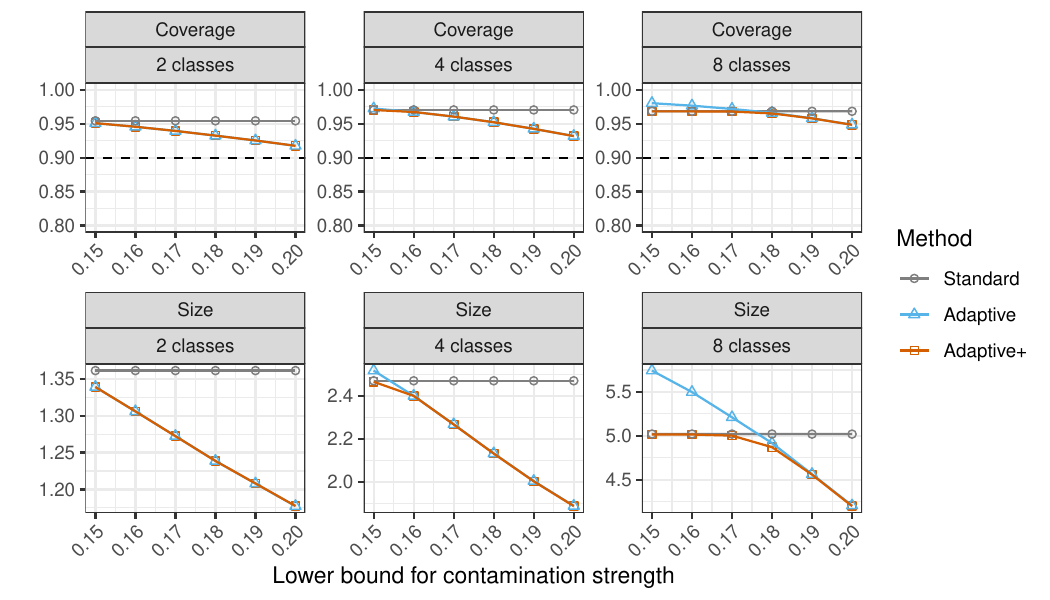}
\caption{Performances of different conformal prediction methods on simulated data with contaminated labels from a randomized response model. 
The results are shown as a function of the known lower bound for the label noise parameter $\epsilon=0.2$ and of the number of possible labels.
The number of calibration samples is 10,000.
Other details are as in Figure~\ref{fig:exp-synthetic-1-lab-cond-K4-ncal}.}
\label{fig:exp-synthetic-1-bounded-ncal-eps0.2-lower}
\end{figure}

Additional results with qualitatively consistent conclusions are presented by Figures~\ref{fig:exp-synthetic-1-lab-cond-K4-ncal-upper}--\ref{fig:exp-synthetic-1-bounded-K4-ncal-eps0.1} in Section~\ref{app:figures-bounded-RR}.
Figure~\ref{fig:exp-synthetic-1-lab-cond-K4-ncal-upper} reports on experiments that differ from those in Figure~\ref{fig:exp-synthetic-1-bounded-ncal-eps0.2-lower} in that the upper bound for $\epsilon$ is varied while the lower bound is fixed equal to the true parameter value.
Figures~\ref{fig:exp-synthetic-1-bounded-K4-ncal-eps0.2}--\ref{fig:exp-synthetic-1-bounded-K4-ncal-eps0.1} report on the results of experiments similar to those of Figure~\ref{fig:exp-synthetic-1-bounded-ncal-eps0.2-lower}, respectively fixing $\epsilon = 0.2$ and $\epsilon = 0.1$, while varying the corresponding lower confidence bound and the calibration set size.
Overall, these results confirm that tighter bounds for $\epsilon$ generally allow Algorithm~\ref{alg:correction-ci} to construct more informative prediction sets, and that the advantage of our adaptive method is more noticeable when the contaminated data are abundant.

Section~\ref{app:figures-bounded-BRR} presents the results of similar experiments in which the label contamination process is more complex, following a {\em two-level} extension of the randomized response model.
This model is discussed in Section~\ref{app:BRR-model} and can accommodate potential label hierarchies using two distinct parameters, $\epsilon \in [0,1)$ and $\nu \in [0,1]$. As explained in Section~\ref{app:BRR-model}, Algorithm~\ref{alg:correction-ci} also simplifies under a two-level randomized response model, and in this case it requires as input only a pair of simultaneously valid confidence intervals for $\epsilon$ and $\nu$.
Figures~\ref{fig:exp-synthetic-1-bounded-BRR-K4-eps0.2-ncal10000-nuse0}--\ref{fig:exp-synthetic-1-bounded-BRR-K4-eps0.2-ncal100000-nuse0.02} report on the performance of our method as a function of the width of the confidence interval for $\epsilon$, using different values of $\nu$, confidence intervals for $\nu$, and calibration set sizes.
Figures~\ref{fig:exp-synthetic-1-bounded-BRR-K4-eps0.2-ncal10000-epsne0}--\ref{fig:exp-synthetic-1-bounded-BRR-K4-eps0.2-ncal100000-epsne0.02} report on similar experiments in which the width of the confidence interval for $\nu$ is varied, using different values of $\nu$, confidence intervals for $\epsilon$, and calibration set sizes.
Overall, these results support the previous conclusions: Algorithm~\ref{alg:correction-ci} leads to more informative prediction sets compared to standard conformal methods even if the label contamination model is unknown. Further, its advantage tends to grow as the contaminated calibration set becomes larger.

\subsection{Robustness to model estimation and mis-specification} \label{sec:empirical-ci}

We now examine the performance of the methods described in Section~\ref{sec:method-noise-estim} for estimating the contamination model using an independent ``model-fitting'' data set containing both clean and contaminated labels.
Again, we begin by focusing on the randomized response  model.
More challenging estimation settings will be considered subsequently.

To begin, we focus on a randomized response model with an unknown scalar parameter $\epsilon$.
We compare the performance of three alternative implementations of our {\em Adaptive+} method to the standard conformal approach, on synthetic data with $K=2$ labels, similar to those utilized for Figure~\ref{fig:exp-synthetic-1-bounded-ncal-eps0.2-lower}.
The first implementation of our method ({\em Adaptive+}) consists of applying the optimistic version of Algorithm~\ref{alg:correction} using perfect {\em oracle} knowledge of the matrix $M$, as written explicitly in Section~\ref{app:RR-model}, based on the correct value $\epsilon=0.2$. This corresponds to the general method presented in Section~\ref{sec:empirical-known}.
The second implementation of our method, which we call {\em Adaptive+ (plug-in)}, consists of applying the optimistic version of Algorithm~\ref{alg:correction} using an approximate version of $M$ obtained by replacing the unknown noise parameter $\epsilon$ with an intuitive point estimate $\hat{\epsilon}$ calculated from the model-fitting data as explained in Section~\ref{app:RR-model}---the latter simplifies the more general procedure described in Section~\ref{sec:method-noise-estim} for the special case of the randomized response model.
The third implementation of our method, which we call {\em Adaptive+ (CI)}, consists of applying the optimistic version of Algorithm~\ref{alg:correction-ci} using a 99\% bootstrap confidence interval for $\epsilon$.
This confidence interval is produced by a specialized version of the general method from Section~\ref{sec:method-noise-estim}, as explained in Section~\ref{app:RR-model}.
For simplicity, Algorithm~\ref{alg:correction-ci} is applied assuming a known fixed upper bound for $\epsilon=0.2$ equal to $\bar{\varepsilon}=0.2$, so that the bootstrap is effectively only needed to estimate the lower confidence bound.

Figure~\ref{fig:exp-synthetic-1-lab-cond_ci_K2} reports on the performance of all methods as a function of the size and composition of the model-fitting data set. The results show that the heuristic {\em Adaptive+ (plug-in)} method performs very similarly to the ideal {\em Adaptive+} method based on oracle knowledge of the true contamination parameter $\epsilon$.
By contrast, the {\em Adaptive+ (CI)} tends to be more conservative and can lead to prediction sets that are significantly more informative compared to the standard conformal inference benchmark only if the number of clean samples in the model-fitting data set is large.
Figure~\ref{fig:exp-synthetic-1-lab-cond_ci_bounds_K2} in Section~\ref{app:figures-ci-RR} plots explicitly the average upper and lower bounds of the bootstrap confidence intervals estimated by the {\em Adaptive+ (CI)} in the experiments of Figure~\ref{fig:exp-synthetic-1-lab-cond_ci_K2}.

\begin{figure}[!htb]
\centering
\includegraphics[width=0.9\linewidth]{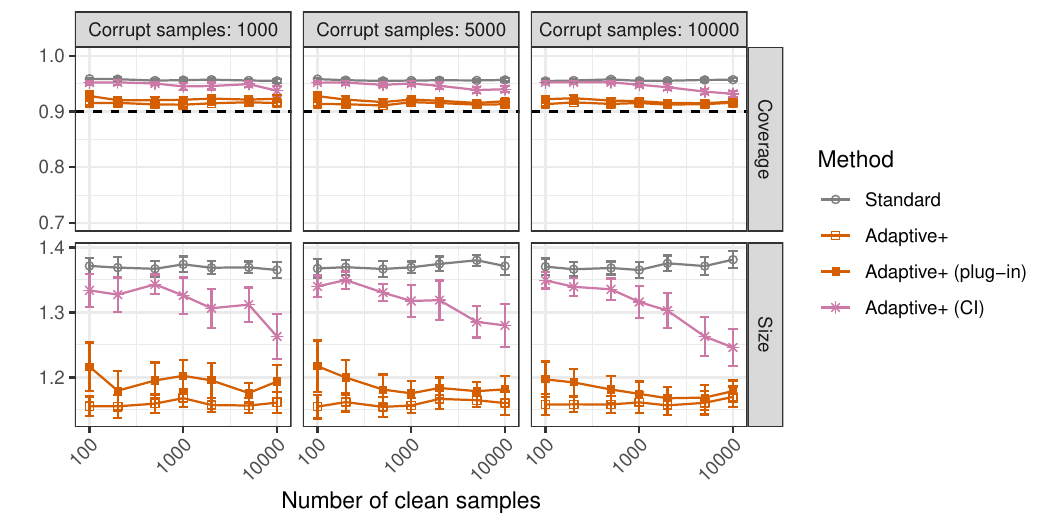}
\caption{Performances of different conformal prediction methods, as a function of the numbers of clean and corrupted samples used to fit the unknown parameter of a randomized response model for the contamination process.
The number of possible classes is $K=2$.
Other details are as in Figure~\ref{fig:exp-synthetic-1-bounded-ncal-eps0.2-lower}.}
\label{fig:exp-synthetic-1-lab-cond_ci_K2}
\end{figure}

Further results from related experiments, with qualitatively similar conclusions, are presented by Figures~\ref{fig:exp-synthetic-1-lab-cond_ci_K2_emax0.25}--\ref{fig:exp-synthetic-1-lab-cond_point_K8} in Section~\ref{app:figures-ci-RR}.
Figures~\ref{fig:exp-synthetic-1-lab-cond_ci_K2_emax0.25} and~\ref{fig:exp-synthetic-1-lab-cond_ci_bounds_K2_emax0.25} report on results analogous to those in Figures~\ref{fig:exp-synthetic-1-lab-cond_ci_K2} and~\ref{fig:exp-synthetic-1-lab-cond_ci_bounds_K2}, respectively, with the only difference that now the fixed upper bound for $\epsilon=0.2$ utilized by the {\em Adaptive+ (CI)} method is set equal to $\bar{\varepsilon}=0.25$.
Figures~\ref{fig:exp-synthetic-1-lab-cond_point_K2}--\ref{fig:exp-synthetic-1-lab-cond_point_K8} further demonstrate the power and robustness of the {\em Adaptive+ (plug-in)} method in experiments with different values of the true noise parameter $\epsilon$ and different numbers of possible labels.

Section~\ref{app:figures-ci-BRR} details further experiments conducted within a more general contamination process, described by a two-level randomized response model.
Specifically, Figure~\ref{fig:exp-synthetic-1-lab-cond_eps0.1_K4_BRR} focuses on synthetic data with $K=4$ classes and summarizes the performances of prediction sets obtained with different methods, while Figures~\ref{fig:exp-synthetic-1-lab-cond_ci_K4_emax0.25_BRR}--\ref{fig:exp-synthetic-1-lab-cond_ci_nu_K4_emax0.25_BRR} plot the corresponding estimated confidence intervals for the unknown parameters of the contamination model.
Figures~\ref{fig:exp-synthetic-1-lab-cond_eps0.1_K8_BRR}--\ref{fig:exp-synthetic-1-lab-cond_ci_nu_K8_emax0.25_BRR} report on similar results for data with $K=8$ classes.
Overall, our {\em Adaptive+ (plug-in)} method performs well across all scenarios considered, even though it ignores the true contamination model parameters. Notably, this partly heuristic implementation of Algorithm~\ref{alg:correction} attains valid coverage empirically and leads to more informative predictions compared to the more rigorous {\em Adaptive+ (CI)} approach of Algorithm~\ref{alg:correction-ci}.
Moreover, our estimators of the contamination model parameters (Section~\ref{sec:method-noise-estim}) are confirmed to be consistent, in the sense that their estimated confidence intervals become narrower as the number of samples available for estimation increases.

Finally, Figures~\ref{fig:exp-synthetic-1-lab-cond_ci_K4_block}--\ref{fig:exp-synthetic-1-lab-cond_ci_K4_random} in Section~\ref{app:figures-robust} describe experiments demonstrating the robustness of our methods to possible mis-specification of the label contamination process, going beyond the estimation of the parameters in a (generalized) randomized response model.

\FloatBarrier
\subsection{Demonstrations with CIFAR-10 image data} \label{sec:empirical-cifar}

This section demonstrates the use of our methods in a object classification application based on real-world 32x32 color images.
As anticipated in Section~\ref{sec:intro}, we focus on the CIFAR-10H data set \citep{peterson2019human}, a variation of the larger CIFAR-10 data set \citep{krizhevsky2020cifar} that includes imperfect labels assigned by approximately 50 independent human annotators via the Amazon Mechanical Turk, for a subset of 10,000 images.
Each image depicts an object belonging to one of 10 possible classes: airplane, car, bird, cat, deer, dog, frog, horse, ship, or truck.
Since the individual annotators do not always agree on the content of each image, we can think of their labels as being a randomly contaminated version of the corresponding ``true'' labels contained in the original CIFAR-10 data set.
Our goal is to construct informative prediction sets for the true labels, using a conformal predictor calibrated on the contaminated data.
For simplicity, we work with a slightly modified version of the CIFAR-10H data in which each image has a single corrupted label $\tilde{Y}$, randomly sampled from a multinomial distribution whose weights are equal to the relative label frequencies assigned to that image by different human annotators.
Note that these corrupted labels coincide with the true CIFAR-10 labels approximately 95.4\% of the time; see Figure~\ref{fig:cifar10-demo}~(a) for a visualization of some images for which the labels do not match.

A ResNet-18 convolutional neural network serves as base classifier; this was implemented by the PyTorch Python package \citep{paszke2019pytorch} and pre-trained using the 50,000 CIFAR-10 images excluded from the CIFAR-10H data set.
The output of the final soft-max layer of the deep neural network provides estimates of the class probabilities for any new given image, and from that we calculate conformity scores with the recipe reviewed in Section~\ref{app:adaptive-scores}.
The conformal predictor is then calibrated using three alternative methods, based on a random subset of the 10,000 CIFAR-10H images whose size is varied as a control parameter.

Due to the larger number of labels, $K=10$, and to help highlight the flexibility of our methods, we focus first on marginal~\eqref{eq:def-marg-coverage} instead of label-conditional~\eqref{eq:def-lab-cond-coverage} coverage.
The first method considered is Algorithm~\ref{alg:standard-marg}, the standard conformal approach that seeks marginal coverage while ignoring label contamination.
The second method, {\em Adaptive+}, is the optimistic variation of our Algorithm~\ref{alg:correction-marg} from Section~\ref{app:methods-marginal}, which we apply imagining that the label contamination process follows the randomized response  model described in Section~\ref{app:RR-model}. 
This model assumes that the observed labels are conditionally independent of the image features, given the true labels, and that an incorrect label is as likely as any other incorrect label. Despite the potential misalignment with the true data generating process, this model is useful for our purposes. 
Specifically, it simplifies the matrix $M$, allowing us to characterize it using a single noise parameter, $\epsilon \in [0,1)$, which is fixed here as $\epsilon = 0.051$. 
This $\epsilon$ value is selected to match the mean fraction of CIFAR-10H samples where $Y \neq \tilde{Y}$, roughly $0.046 = \epsilon(1-1/K)$. 
The third method, {\em Adaptive+ (plug-in)}, differs from {\em Adaptive+} in that it utilizes a plug-in estimate of $\epsilon$ obtained via maximum-likelihood.
This estimate is evaluated as explained in Section~\ref{app:RR-model}, by applying the same pre-trained ResNet-18 convolutional neural network to a smaller independent data set containing both clean and corrupted data in equal proportions.
In particular, the number of clean images used to estimate $\epsilon$ is 10\% of the total number of corrupted calibration images.

Figure~\ref{fig:exp-cifar10-marginal}, previewed in Section~\ref{sec:intro-preview}, reports on the prediction sets constructed by the three methods for a random test set 500 CIFAR-10H images, varying the size of the calibration sample between 500 and 9500. 
All experiments are independently repeated 50 times, using different random splits of the CIFAR-10H data into calibration, model-fitting, and test subsets.
The results show that the standard method is overly conservative, while our adaptive approaches are able to achieve valid coverage with increasingly more informative prediction sets as the size of the calibration sample grows.
See Figure~\ref{fig:cifar10-demo}~(b) for a visualization of some concrete examples in which our {\em Adaptive+} method leads to more informative prediction sets compared to the standard conformal inference benchmark.
Finally, Figure~\ref{fig:exp-cifar10-lab-cond} in Section~\ref{app:figures-cifar10} presents analogous results from similar experiments in which we target label-conditional coverage, using Algorithms~\ref{alg:correction} and~\ref{alg:standard-lab-cond} instead of Algorithms~\ref{alg:correction-marg} and \ref{alg:standard-marg}, respectively.
Note that Algorithm~\ref{alg:correction} requires a larger data set in this case, compared to the marginal coverage setting, in order to produce prediction sets that are significantly more informative compared to those of Algorithm~\ref{alg:standard-lab-cond}.
The reason for this is the stronger nature of the label-conditional guarantee, which effectively diminishes the usable sample size for both Algorithm~\ref{alg:correction} and Algorithm~\ref{alg:standard-lab-cond} by a factor of $K=10$.

\FloatBarrier

\section{Discussion} \label{sec:discussion}

This paper studied in-depth the problem of conformal classification in the presence of calibration data with contaminated labels.
Our research contributes to the growing literature on conformal inference beyond exchangeability \citep{barber2022conformal}, addressing an open practical problem.
A key innovation of our approach is its capacity to automatically adjust to random label contamination, resulting in prediction sets that offer more robust coverage guarantees and are often more informative compared to those given by state-of-the-art approaches. 
Additionally, our framework is highly flexible, enabling several variations of our methodology that target different coverage metrics and can accommodate varying degrees of understanding of the label contamination process. 
These adaptations hint at potential for further expansion in future research. Both theoretical and practical findings presented in this paper underscore the enhanced benefits of our adaptive approach when employed with larger sets of contaminated calibration data. This aspect renders our research particularly applicable to real-world situations where there are abundant data of modest quality, but precise labels are scarce.

This research opens several opportunities for future work.
For example, it may be interesting to study possible extensions of our methods that can be applied with regression data, or even with other types of more complex data for which conformal inference has already been utilized, including causal inference \citep{lei2021conformal}, survival analysis \citep{candes2021conformalized}, and matrix completion \citep{gui2023conformalized}.
Alternatively, it may be possible to account for label contamination in the context of more sophisticated conformal prediction frameworks such as full-conformal inference \citep{vovk2005algorithmic} and cross-validation+ \citep{barber2019predictive}, which are more computationally expensive but can make more efficient use of limited observations. 
Finally, future research might be able to uncover further methodological opportunities by combining the ideas presented in this paper with the theoretical analyses of \cite{barber2022conformal}.

A software implementation of the methods presented in this paper is available online at \url{https://github.com/msesia/conformal-label-noise}.

\section*{Acknowledgements}
M.~S.~was supported by NSF grant DMS 2210637 and by an Amazon Research Award.
We are grateful to two anonymous referees for their constructive feedback about an earlier version of this manuscript.
We also thank Adel Javanmard for helpful suggestions.

\clearpage
\printbibliography

\clearpage
\appendix

\renewcommand{\thesection}{A\arabic{section}}
\renewcommand{\theequation}{A\arabic{equation}}
\renewcommand{\thetheorem}{A\arabic{theorem}}
\renewcommand{\theproposition}{A\arabic{proposition}}
\renewcommand{\thelemma}{A\arabic{lemma}}
\renewcommand{\thetable}{A\arabic{table}}
\renewcommand{\thefigure}{A\arabic{figure}}
\renewcommand{\thealgocf}{A\arabic{algocf}}
\setcounter{figure}{0}
\setcounter{table}{0}
\setcounter{proposition}{0}
\setcounter{theorem}{0}
\setcounter{lemma}{0}
\setcounter{algocf}{0}

\section{Review of standard conformal classification methods} \label{app:review-aps}

\subsection{Conformity scores based on generalized inverse quantiles} \label{app:adaptive-scores}

We briefly review here the construction of the generalized inverse quantile conformity scores proposed by~\cite{romano2020classification}, upon which we rely in the empirical demonstrations of Section~\ref{sec:empirical}.
These conformity scores are more involved compared to the classical homogeneous scores defined in~\eqref{eq:pred-function-hps}, but they have the advantage of leading to more flexible prediction sets that can account for possible heteroscedasticity in the distribution of $Y \mid X$. We refer to \cite{romano2020classification} or \cite{cauchois2020knowing} for further information about the limitations of the scores in~\eqref{eq:pred-function-hps}.

For any $x \in \mathbb{R}^{d}$ and $t \in [0,1]$, define
\begin{align} \label{eq:oracle-threshold}
  \hat{Q}(x, \hat{\pi}, t) & = \min \{ k \in \{1,\ldots,K\} : \hat{\pi}_{(1)}(x) + \hat{\pi}_{(2)}(x) + \ldots + \hat{\pi}_{(k)}(x) \geq t \},
\end{align}
where $\hat{\pi}_{(1)}(x) \geq \ldots \geq \hat{\pi}_{(K)}(x)$ are the descending order statistics of $\hat{\pi}(x,1), \ldots, \hat{\pi}(x,K)$.
Intuitively, $\hat{Q}(x, \hat{\pi}, \cdot)$ may be seen as a generalized quantile function.
Similarly, let $\hat{r}(x, \hat{\pi}, k)$ denote the rank of $\hat{\pi}(x,k)$ among $\hat{\pi}(x,1), \ldots, \hat{\pi}(x,K)$.
With this notation, one can also define a corresponding generalized cumulative distribution function:
\begin{align*}
  \hat{\Pi}(x,\hat{\pi},k) = \hat{\pi}_{(1)}(x) + \hat{\pi}_{(2)}(x) + \ldots + \hat{\pi}_{(\hat{r}(x, \hat{\pi}, k))}(x).
\end{align*}
Then, the function $\mathcal{C}$ proposed by~\cite{romano2020classification} can be written as:
\begin{align}  \label{eq:pred-sets-aps}
  \mathcal{C}(x; \hat{\pi}, \tau)
  & = \{ k \in [K] : \hat{r}(x, \hat{\pi}, k) \leq \hat{Q}(x, \hat{\pi}, \tau_k)\},
\end{align}
and the corresponding conformity scores defined in~\eqref{eq:conf-scores} can be evaluated efficiently by noting that $\hat{s}(x,k) = \hat{\Pi}(x,k)$; see \cite{romano2020classification} for further details.

The prediction function defined in~\eqref{eq:pred-sets-aps} may be understood by noting that, if $\tau = (\tau_0, \ldots, \tau_0)$ for some $\tau_0 \in [0,1]$, the output of $\mathcal{C}(x; \hat{\pi}, \tau)$ is the list of most likely classes according to $\hat{\pi}(x)$ up until the first label $l$ for which $\hat{\Pi}(x,\hat{\pi},l) \geq \tau_0$.
Therefore, in the ideal case where $\hat{\pi}(x,k) = \P{Y = k \mid X=x}$, one can verify that $\mathcal{C}(x; \hat{\pi}, \tau)$ is the smallest possible (deterministic) prediction set for $Y$ with perfect object-conditional coverage at level $\tau_0$, i.e., satisfying $\P{Y \in \mathcal{C}(x; \hat{\pi}, \tau) \mid X=x} \geq \tau_0$.
Note that \cite{romano2020classification} also developed a more powerful randomized version of \eqref{eq:pred-sets-aps} that enjoys similar theoretical properties while being able to produce even more informative prediction sets.
The results of this paper can also seamlessly accommodate such additional randomness in $\mathcal{C}$, and indeed that is the practical approach followed in the empirical demonstrations of Section~\ref{sec:empirical}, but we choose not to review such extension explicitly here to avoid making the notation too cumbersome.

\clearpage
\subsection{Prediction sets with label-conditional coverage} \label{app:standard-lab-cond}

\begin{algorithm}[H]
\DontPrintSemicolon

\KwIn{Data set $\{(X_i, \tilde{Y}_i)\}_{i=1}^{n}$, with observable labels $\tilde{Y}_i \in [K]$.}
\myinput{Unlabeled test point with features $X_{n+1}$.}
\myinput{Machine learning algorithm $\mathcal{A}$ for training a $K$-class classifier.}
\myinput{Prediction function $\mathcal{C}$ satisfying Definition~\ref{def:pred-function}; e.g., \eqref{eq:pred-function-hps}.}
\myinput{Desired coverage parameter $\alpha$.}

Randomly split $[n]$ into two disjoint subsets, $\mathcal{D}^{\text{train}}$ and $\mathcal{D}^{\mathrm{cal}}$.\;
Train the classifier $\mathcal{A}$ on the data in $\mathcal{D}^{\text{train}}$. \;
\For{$k=1, \dots, K$}{
  Define $\mathcal{D}_k^{\mathrm{cal}} = \{ i \in \mathcal{D}^{\mathrm{cal}} : Y_i = k\}$ and $n_k = |\mathcal{D}_k^{\mathrm{cal}}|$.\;
  Compute $\hat{s}(X_i, k)$ using \eqref{eq:conf-scores}, for all $i \in \mathcal{D}_k^{\mathrm{cal}}$. \;
  Define $\hat{\tau}_k$ as the $\lceil (1+n_k)\cdot(1-\alpha) \rceil$ smallest value in $ \{\hat{s}(X_i,k)\}_{i \in \mathcal{D}_k^{\mathrm{cal}}}$.\;
}
Evaluate $\hat{C}(X_{n+1}) = \mathcal{C}(X_{n+1}, \hat{\tau}; \hat{\pi})$, where $\hat{\tau} = (\hat{\tau}_1, \ldots, \hat{\tau}_K)$.

\nonl
\textbf{Output: } Conformal prediction set $\hat{C}(X_{n+1})$ for $Y_{n+1}$, satisfying \eqref{eq:def-lab-cond-coverage}.\;

\caption{Standard conformal classification with label-conditional coverage.} \label{alg:standard-lab-cond}
\end{algorithm}

\begin{proposition}[e.g., from \cite{vovk2003mondrian} or \cite{romano2020classification}] \label{prop:standard-coverage-label}
If the data pairs $(X_i,\tilde{Y}_i)$, for all $i \in [n+1]$, are exchangeable random samples from some joint distribution, the prediction set $\hat{C}(X_{n+1})$ output by Algorithm~\ref{alg:standard-lab-cond} has label-conditional coverage \eqref{eq:def-lab-cond-coverage} for the observable labels $\tilde{Y}$; i.e.,
\begin{align*}
  \P{ \tilde{Y}_{n+1} \in \hat{C}(X_{n+1}) \mid \tilde{Y}_{n+1} = k} \geq 1-\alpha, \qquad \text{ for all } k \in [K].
\end{align*}
Further, if all scores $\hat{s}(X_i,\tilde{Y}_i)$ computed by Algorithm~\ref{alg:standard-lab-cond} are almost-surely distinct,
\begin{align*}
  \P{ \tilde{Y}_{n+1} \in \hat{C}(X_{n+1}) \mid \tilde{Y}_{n+1} = k} \leq 1-\alpha + \frac{1}{n_k+1}, \qquad \text{ for all } k \in [K],
\end{align*}
where $n_k = |\mathcal{D}_k^{\mathrm{cal}}|$ is the number of data points with label $k$ in the calibration set.
\end{proposition}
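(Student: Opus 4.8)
The plan is to follow the classical split-conformal argument, reducing the coverage guarantee to the exchangeability of the conformity scores within a single label class. Fix $k \in [K]$. First I would condition on $\mathcal{D}^{\text{train}}$, which freezes the trained model $\hat{\pi}$ and hence the score function $\hat{s}(\cdot,\cdot)$ and the prediction function $\mathcal{C}$; it suffices to establish both bounds conditionally on $\mathcal{D}^{\text{train}}$ and then average. Next I would further condition on the event $\{\tilde{Y}_{n+1} = k\}$ and on the value of $n_k = |\mathcal{D}_k^{\mathrm{cal}}|$ (equivalently, on the full vector of per-label calibration counts). Under this conditioning, the $n_k+1$ random variables $\{\hat{s}(X_i,k) : i \in \mathcal{D}_k^{\mathrm{cal}} \cup \{n+1\}\}$ are exchangeable --- in fact i.i.d.\ draws from the law of $\hat{s}(X,k)$ given $\tilde{Y}=k$ and $\mathcal{D}^{\text{train}}$ --- because the pairs $(X_i,\tilde{Y}_i)$ are i.i.d.\ and we have only conditioned on which indices carry label $k$.

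The second step is to translate the coverage event into a statement about ranks. Using Definition~\ref{def:pred-function} together with the defining relation~\eqref{eq:conf-scores} for the conformity score, I would argue that $k \in \mathcal{C}(X_{n+1},\hat{\tau})$ holds whenever $\hat{s}(X_{n+1},k) \le \hat{\tau}_k$: indeed $\mathcal{C}$ is monotone increasing in $\tau_k$ and the event $k \in \mathcal{C}(x,\tau)$ depends on $\tau$ only through $\tau_k$, so $k \in \mathcal{C}(x,\tau_k)$ for every $\tau_k > \hat{s}(x,k)$, and conversely $\hat{s}(X_{n+1},k) \le \hat{\tau}_k$ whenever $k \in \mathcal{C}(X_{n+1},\hat{\tau})$ except possibly at the boundary $\hat{s}(X_{n+1},k) = \hat{\tau}_k$, which has probability zero under the assumed continuity of the distribution of $\hat{\pi}(X,k)$ (and is harmless for the lower bound even without it). Since $\hat{\tau}_k$ is the $\lceil (1+n_k)(1-\alpha)\rceil$-th order statistic of the $n_k$ calibration scores, the event $\{\hat{s}(X_{n+1},k) \le \hat{\tau}_k\}$ coincides with the event that the rank of $\hat{s}(X_{n+1},k)$ among the $n_k+1$ scores is at most $\lceil (1+n_k)(1-\alpha)\rceil$.

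The third step is the uniform-rank computation. By exchangeability the rank of the test score among the $n_k+1$ scores is sub-uniform on $\{1,\ldots,n_k+1\}$, and is exactly uniform when the scores are almost surely distinct. Hence $\P{\hat{s}(X_{n+1},k) \le \hat{\tau}_k \mid \mathcal{D}^{\text{train}}, \tilde{Y}_{n+1}=k, n_k} \ge \lceil (1+n_k)(1-\alpha)\rceil/(n_k+1) \ge 1-\alpha$, and taking expectation over $n_k$ and $\mathcal{D}^{\text{train}}$ yields the lower bound. When the scores are a.s.\ distinct, the same conditional probability equals $\lceil (1+n_k)(1-\alpha)\rceil/(n_k+1) < (1-\alpha) + 1/(n_k+1)$, which gives the upper bound (read conditionally on $n_k$ exactly as in the statement, or in expected form after averaging).

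I expect the main obstacle to be bookkeeping rather than depth: one must check carefully that conditioning on the per-label calibration counts does not break exchangeability of the relevant scores --- which it does not for i.i.d.\ data --- and that the identification $\{k \in \mathcal{C}(X_{n+1},\hat{\tau})\} = \{\hat{s}(X_{n+1},k) \le \hat{\tau}_k\}$ is fully justified from Definition~\ref{def:pred-function}, including the boundary case. The continuity hypothesis on $\hat{\pi}(X,k)$ is precisely what makes this identification clean and what upgrades "sub-uniform" to "uniform" ranks, so it is also what delivers the almost-matching upper bound.
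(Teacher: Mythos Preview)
The paper does not actually prove this proposition; it is stated as a known result with explicit citations to \cite{vovk2003mondrian} and \cite{romano2020classification}, and no proof appears in Section~\ref{app:proofs}. Your argument is the standard split/Mondrian conformal argument from those references and is correct. One small point: the proposition is stated for \emph{exchangeable} data, whereas your conditioning-on-counts step is phrased for i.i.d.\ data; the usual fix is to condition instead on the unordered multiset of all $(X_i,\tilde{Y}_i)$ pairs (or equivalently on the vector of labels), after which exchangeability makes every assignment of the test index among the $n_k+1$ label-$k$ slots equally likely, and the rank computation goes through unchanged.
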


In words, this result tells us that Algorithm~\ref{alg:standard-lab-cond} is always guaranteed to achieve valid label-conditional coverage for the observable labels $\tilde{Y}$, regardless of which prediction function $\mathcal{C}$ is employed.
Note that in the standard setting without label contamination, the observable labels $\tilde{Y}$ are simply assumed to be always equal to $Y$.

\subsection{Prediction sets with marginal coverage} \label{app:standard-marg}

\begin{algorithm}[H]
\DontPrintSemicolon

\KwIn{Data set $\{(X_i, \tilde{Y}_i)\}_{i=1}^{n}$ with observable labels $\tilde{Y}_i \in [K]$.}
\myinput{Unlabeled test point with features $X_{n+1}$.}
\myinput{Machine learning algorithm $\mathcal{A}$ for training a $K$-class classifier.}
\myinput{Prediction function $\mathcal{C}$ satisfying Definition~\ref{def:pred-function}; e.g., \eqref{eq:pred-function-hps}.}
\myinput{Desired coverage parameter $\alpha$.}

Randomly split $[n]$ into two disjoint subsets, $\mathcal{D}^{\text{train}}$ and $\mathcal{D}^{\mathrm{cal}}$, defining $n_{\mathrm{cal}} = |\mathcal{D}^{\mathrm{cal}}|$.\;
Train the classifier $\mathcal{A}$ on the data in $\mathcal{D}^{\text{train}}$. \;
Compute $\hat{s}(X_i, \tilde{Y}_i)$ using \eqref{eq:conf-scores}, for all $i \in \mathcal{D}^{\mathrm{cal}}$.\;
Define $\hat{\tau}$ as the $\lceil (1+n_{\mathrm{cal}})\cdot(1-\alpha) \rceil$ smallest value in $ \{\hat{s}(X_i,\tilde{Y}_i)\}_{i \in \mathcal{D}^{\mathrm{cal}}}$.\;
Evaluate $\hat{C}(X_{n+1}) = \mathcal{C}(X_{n+1}, \hat{\tau};  \hat{\pi})$, where $\hat{\tau} = (\hat{\tau}_0, \ldots, \hat{\tau}_0)$.

\nonl
\textbf{Output: } Conformal prediction set $\hat{C}(X_{n+1})$ for $\tilde{Y}_{n+1}$, satisfying \eqref{eq:def-marg-coverage}.\;

\caption{Standard conformal classification with marginal coverage.} \label{alg:standard-marg}
\end{algorithm}

\begin{proposition}[e.g., from \cite{lei2013distribution} or \cite{romano2020classification}] \label{prop:standard-coverage-marginal}
If the data pairs $(X_i,\tilde{Y}_i)$, for all $i \in [n+1]$, are exchangeable random samples from some joint distribution, the prediction set $\hat{C}(X_{n+1})$ output by Algorithm~\ref{alg:standard-lab-cond} has marginal coverage \eqref{eq:def-marg-coverage} for the observable labels $\tilde{Y}$; i.e.,
\begin{align*}
  \P{ \tilde{Y}_{n+1} \in \hat{C}(X_{n+1}) } \geq 1-\alpha.
\end{align*}
Further, if all scores $\hat{s}(X_i,\tilde{Y}_i)$ computed by Algorithm~\ref{alg:standard-lab-cond} are almost-surely distinct,
\begin{align*}
  \P{ \tilde{Y}_{n+1} \in \hat{C}(X_{n+1})} \leq 1-\alpha + \frac{1}{n_{\mathrm{cal}}+1},
\end{align*}
where $n_{\mathrm{cal}} = |\mathcal{D}^{\mathrm{cal}}|$ is the number of data points in the calibration set.
\end{proposition}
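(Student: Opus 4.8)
The plan is to run the classical split-conformal argument, conditioning on the training fold and exploiting exchangeability of the calibration and test conformity scores; the statement should be read with Algorithm~\ref{alg:standard-marg} in place of the mislabeled cross-reference. First I would condition on $\mathcal{D}^{\text{train}}$, which freezes the trained model $\hat\pi$ and hence the conformity score function $\hat s(\cdot,\cdot)$ from~\eqref{eq:conf-scores}. Writing $n_{\mathrm{cal}} = |\mathcal{D}^{\mathrm{cal}}|$ and $s_i := \hat s(X_i,\tilde Y_i)$ for $i \in \mathcal{D}^{\mathrm{cal}} \cup \{n+1\}$, the pairs $(X_i,\tilde Y_i)$ over this index set are exchangeable, and applying the fixed measurable map $\hat s$ preserves exchangeability, so $\{s_i\}_{i \in \mathcal{D}^{\mathrm{cal}} \cup \{n+1\}}$ is an exchangeable collection of $n_{\mathrm{cal}}+1$ random variables.

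Next I would translate the coverage event into a statement about these scores. From Definition~\ref{def:pred-function} (monotonicity in each coordinate of $\tau$, dependence of $k \in \mathcal{C}(x,\tau)$ on $\tau$ only through $\tau_k$, and $k \in \mathcal{C}(x,\tau)$ whenever $\tau_k = 1$) together with the definition of $\hat s$ in~\eqref{eq:conf-scores}, one gets the equivalence $k \in \mathcal{C}(x,\tau) \iff \tau_k \geq \hat s(x,k)$. Hence $\tilde Y_{n+1} \in \hat C(X_{n+1}) = \mathcal{C}(X_{n+1},\hat\tau;\hat\pi)$ is precisely the event $s_{n+1} \leq \hat\tau$, where $\hat\tau$ is the $\lceil (1+n_{\mathrm{cal}})(1-\alpha) \rceil$-th smallest value of $\{s_i\}_{i \in \mathcal{D}^{\mathrm{cal}}}$.

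For the lower bound, by exchangeability the rank of $s_{n+1}$ among all $n_{\mathrm{cal}}+1$ scores is stochastically dominated by a uniform draw from $\{1,\dots,n_{\mathrm{cal}}+1\}$ (with equality when the scores are almost surely distinct, and with any deterministic tie-breaking only lowering the rank otherwise), so $\P{s_{n+1} \leq \hat\tau \mid \mathcal{D}^{\text{train}}} \geq \lceil (1+n_{\mathrm{cal}})(1-\alpha) \rceil / (n_{\mathrm{cal}}+1) \geq 1-\alpha$; the tower rule over $\mathcal{D}^{\text{train}}$ then gives~\eqref{eq:def-marg-coverage}. For the upper bound, when the $s_i$ are almost surely distinct the rank of $s_{n+1}$ is exactly uniform on $\{1,\dots,n_{\mathrm{cal}}+1\}$, so $\P{s_{n+1} \leq \hat\tau \mid \mathcal{D}^{\text{train}}} = \lceil (1+n_{\mathrm{cal}})(1-\alpha) \rceil/(n_{\mathrm{cal}}+1) \leq ((1+n_{\mathrm{cal}})(1-\alpha)+1)/(n_{\mathrm{cal}}+1) = 1-\alpha + 1/(n_{\mathrm{cal}}+1)$, and integrating over $\mathcal{D}^{\text{train}}$ finishes the proof.

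The argument is entirely standard and parallels the proof of Proposition~\ref{prop:standard-coverage-label}, with the single pooled threshold $\hat\tau$ replacing the $K$ label-specific thresholds; I do not anticipate any genuine obstacle. The only points needing care are reading off the equivalence $k \in \mathcal{C}(x,\tau) \iff \tau_k \geq \hat s(x,k)$ precisely from Definition~\ref{def:pred-function} and~\eqref{eq:conf-scores}, and checking in the lower bound that an arbitrary deterministic tie-breaking rule can only move the rank of $s_{n+1}$ downward, hence cannot hurt coverage.
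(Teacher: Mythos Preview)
Your proposal is correct and is exactly the standard split-conformal argument; the paper does not supply its own proof of this proposition but simply cites it as a known result from \cite{lei2013distribution} and \cite{romano2020classification}, so there is nothing to compare against beyond noting that your argument matches the literature. Your observation that the cross-reference should be to Algorithm~\ref{alg:standard-marg} rather than Algorithm~\ref{alg:standard-lab-cond} is also correct---this is a typo in the paper.
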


\clearpage

\section{Additional methodological details} \label{app:methods}

\begin{algorithm}[!ht]
\DontPrintSemicolon

\KwIn{Data set $\mathcal{D} = \{(X_i, \tilde{Y}_i)\}_{i=1}^{n}$ with corrupted labels $\tilde{Y}_i \in [K]$.}
\myinput{Independent data set $\mathcal{D}^1 = \{(X_i^{1}, \tilde{Y}^{1}_i)\}_{i=1}^{n^1}$ with corrupted labels $\tilde{Y}^{1}_i \in [K]$.}
\myinput{Clean data set $\mathcal{D}^0 = \{(X_i^{0}, Y^{0}_i)\}_{i=1}^{n^0}$ with labels $Y^{0}_i \in [K]$.}
\myinput{Unlabeled test point with features $X_{n+1}$.}
\myinput{Machine learning algorithm $\mathcal{A}$ for training a $K$-class classifier.}
\myinput{Prediction function $\mathcal{C}$ satisfying Definition~\ref{def:pred-function}; e.g., (\ref{eq:pred-function-hps}).}
\myinput{Desired significance level $\alpha \in (0,1)$.}
\myinput{Desired significance level $\alpha_V \in (0,\alpha)$ for the estimation of $V$.}

Randomly split the data in $\mathcal{D}^1$ into two disjoint subsets, $\mathcal{D}^{1}_{a}$ and $\mathcal{D}^{1}_{b}$.\;
Train the classifier $\mathcal{A}$ on the data in $\mathcal{D}^{1}_{a}$. \;
Assess the accuracy of $\mathcal{A}$ to predict $\tilde{Y}$ in $\mathcal{D}^{1}_{b}$ and $Y$ in $\mathcal{D}^{0}$, using (\ref{eq:Q-tilde-estim}) and (\ref{eq:Q-estim}). \;
Construct a $1-\alpha_V$ joint confidence region $[\hat{V}^{\mathrm{low}}, \hat{V}^{\mathrm{upp}}]$ for the off-diagonal entries of $V := M^{-1}$ using (\ref{eq:multinomial-V}) and the parametric bootstrap, as explained in Section~\ref{sec:method-noise-estim}.\;
Apply Algorithm~\ref{alg:correction-ci} to construct a conformal prediction set $\hat{C}^{\mathrm{ci}}(X_{n+1})$ for $Y_{n+1}$.
\nonl
\textbf{Output: } Conformal prediction set $\hat{C}^{\mathrm{ci}}(X_{n+1})$ for $Y_{n+1}$.
\caption{Adaptive classification with contamination model fitting (CI)}
\label{alg:correction-estimation}
\end{algorithm}

\begin{algorithm}[!ht]
\DontPrintSemicolon

\KwIn{Data set $\mathcal{D} = \{(X_i, \tilde{Y}_i)\}_{i=1}^{n}$ with corrupted labels $\tilde{Y}_i \in [K]$.}
\myinput{Independent data set $\mathcal{D}^1 = \{(X_i^{1}, \tilde{Y}^{1}_i)\}_{i=1}^{n^1}$ with corrupted labels $\tilde{Y}^{1}_i \in [K]$.}
\myinput{Clean data set $\mathcal{D}^0 = \{(X_i^{0}, Y^{0}_i)\}_{i=1}^{n^0}$ with labels $Y^{0}_i \in [K]$.}
\myinput{Unlabeled test point with features $X_{n+1}$.}
\myinput{Machine learning algorithm $\mathcal{A}$ for training a $K$-class classifier.}
\myinput{Prediction function $\mathcal{C}$ satisfying Definition~\ref{def:pred-function}; e.g., (\ref{eq:pred-function-hps}).}
\myinput{Desired significance level $\alpha \in (0,1)$.}

Randomly split the data in $\mathcal{D}^1$ into two disjoint subsets, $\mathcal{D}^{1}_{a}$ and $\mathcal{D}^{1}_{b}$.\;
Train the classifier $\mathcal{A}$ on the data in $\mathcal{D}^{1}_{a}$. \;
Assess the accuracy of $\mathcal{A}$ to predict $\tilde{Y}$ in $\mathcal{D}^{1}_{b}$ and $Y$ in $\mathcal{D}^{0}$, using (\ref{eq:Q-tilde-estim}) and (\ref{eq:Q-estim}). \;
Calculate a point-estimate for the off-diagonal entries of $V := M^{-1}$ using (\ref{eq:multinomial-V}), as explained in Section~\ref{sec:method-noise-estim}.\;
Apply Algorithm~\ref{alg:correction} to construct a conformal prediction set $\hat{C}(X_{n+1})$ for $Y_{n+1}$.
\nonl
\textbf{Output: } Conformal prediction set $\hat{C}(X_{n+1})$ for $Y_{n+1}$.
\caption{Adaptive classification with contamination model fitting (plug-in)}
\label{alg:correction-estimation-plugin}
\end{algorithm}

\clearpage

\section{Simplified methods for special contamination models} \label{app:simplified-methods}

This section illustrates two specific instances of the general label contamination model introduced in Section~\ref{subsec:linear-contam-model}. 
These are particular cases in which the implementation of the adaptive conformal prediction methods developed in this paper can be significantly streamlined, and the estimation of their parameters, as discussed in Section~\ref{sec:method-noise-estim}, is also simplified.
We begin in Section~\ref{app:RR-model} with an exploration of label contamination processes described by the one-parameter randomized response model \citep{warner1965randomized}.
Following this, Section~\ref{app:BRR-model} expands to consider more complex processes described by a two-parameter extension of the randomized response model, which is specifically designed to accommodate potential label hierarchies.

\subsection{The randomized response model} \label{app:RR-model}

\subsubsection{Model description} \label{app:simplified-model}

Suppose that the relation between $X,Y$ and $\tilde{Y}$ satisfies
\begin{align}  \label{eq:def-RR}
  \P{\tilde{Y} = k \mid X, Y=l}  = (1-\epsilon) \I{k=l} + \frac{\epsilon}{K},
\end{align}
for all $l,k \in [K]$, where $\epsilon \in [0,1)$ is a scalar parameter controlling the amount of random label {\em noise}.
This setup corresponds to the classical {\em randomized response}  model of \citet{warner1965randomized}, and it has recently found many relevant applications in the context of differential privacy \citep{duchi2013local,kairouz2016discrete,ghazi2021deep}.
In particular, a well-known technique for achieving $(\varepsilon,0)$-{\em label differential privacy} consists of replacing each individual observation of $Y$ with a noisy label $\tilde{Y}$ according  to the randomized response model defined in~\eqref{eq:def-RR}, with the parameter $\epsilon$ given by
\begin{align*}
  \epsilon = \frac{K}{e^{\varepsilon} + K -1}.
\end{align*}
We refer to \cite{ghazi2021deep} for a formal definition of label differential privacy and a proof of this result.

In the notation of Section~\ref{subsec:linear-contam-model}, the randomized response model described above corresponds to a matrix $M$ (\ref{eq:contam_model}) with the form 
\begin{align} \label{eq:M-rr}
  M_{kl} =  \frac{(1-\epsilon) \rho_k}{(1-\epsilon)\rho_k + \epsilon/K} \I{k=l} + \frac{\rho_l \cdot \epsilon/K }{(1-\epsilon)\rho_k + \epsilon/K},
\end{align}
for any $k,l \in [K]$, while the contaminated label frequencies become
\begin{align} \label{eq:tilde-rho-rr}
  \tilde{\rho}_k = (1-\epsilon) \rho_k + \frac{\epsilon}{K}.
\end{align}
Note that Equation~\eqref{eq:M-rr} follows from~\eqref{eq:def-RR} with a straightforward application of Bayes' rule, and then Equation~\eqref{eq:tilde-rho-rr} is easily obtained by recalling that $\tilde{\rho}_k := \mathbb{P}[\tilde{Y} = k]$ for all $k \in [K]$.

As long as $\epsilon < 1$ and $\rho_k > 0$ for all $k \in [K]$, this matrix $M$ can be inverted analytically by applying the Sherman-Morrison formula, which leads to $V=M^{-1}$ with
\begin{align} \label{eq:V-rr}
\begin{split}
  V_{kl} 
  & = \left( 1 + \frac{\epsilon}{1-\epsilon} \cdot \frac{1/K}{\rho_k} \right) \I{k=l} - \frac{\epsilon}{(1-\epsilon)K} \cdot \frac{ (1-\epsilon) \rho_l + \epsilon/K }{\rho_k} \\
  & = \frac{\tilde{\rho}_k}{\tilde{\rho}_k-\epsilon/K} \I{k=l} - \frac{\epsilon}{K} \cdot \frac{ \tilde{\rho}_l}{\tilde{\rho}_k-\epsilon/K}.
\end{split}
\end{align}

The relatively simple structure of this matrix $V$ makes the randomized response model particularly interesting to focus on.
In particular, we will see below how this model leads to an easier-to-interpret version of the general methodology presented in Section~\ref{sec:methods}.
Further, the scalar nature of the unknown parameter $\epsilon \in [0,1)$ simplifies the task of empirically fitting the model given a limited amount of clean data, thus overcoming the main practical limitation of the general estimation approach described in Section~\ref{sec:method-noise-estim}.

\subsubsection{Adaptive coverage under known label noise}

We begin by showing how the adaptive conformal prediction methodology from Section~\ref{sec:method-known-noise} simplifies under a randomized response model for the label contamination process.
First, note that the plug-in estimate $\hat{\Delta}_k(t)$ of the coverage inflation factor $\Delta_k(t)$ utilized by Algorithm~\ref{alg:correction}, originally defined in~\eqref{eq:delta-hat}, simplifies to:
\begin{align} \label{eq:Delta-hat-RR}
\begin{split}
  \hat{\Delta}_k(t)
    & = \frac{\epsilon(1-\tilde{\rho}_k)}{K \tilde{\rho}_k-\epsilon}  \left[ \hat{F}_k^{k}(t) 
    - \frac{\sum_{l \neq k} \tilde{\rho}_l \hat{F}_l^{k}(t)}{\sum_{l \neq k} \tilde{\rho}_l}  \right],
  \end{split}
\end{align}
while the finite-sample correction term $\delta(n_k,n_{*})$ defined in~\eqref{eq:delta-constant} simplifies to
\begin{align}
  \delta(n_k,n_{*})
  & = c(n_k)
    + \frac{2\epsilon(1-\tilde{\rho}_k)}{(K \tilde{\rho}_k-\epsilon) \sqrt{n_{*}} } 
    \min \left\{ K \sqrt{\frac{\pi}{2}} , \frac{1}{\sqrt{n_{*}}} + \sqrt{\frac{\log(2K) + \log(n_{*})}{2}} \right\}.
\end{align}
Similarly, under the randomized response model, Assumption~\ref{assumption:regularity-dist-delta} becomes
\begin{align*}
  \inf_{t \in (0,1)} \Delta_k(t) \geq - \alpha + c(n_k) 
  + \frac{2 \epsilon (1-\tilde{\rho}_k)}{(K \tilde{\rho}_k-\epsilon) \sqrt{n_{*}} } 
  \left( \frac{1}{\sqrt{n_{*}}} + 2 \sqrt{\frac{\log(2K) + \log(n_{*})}{2}} \right),
\end{align*}
which is always satisfied in the large-sample limit, $n_{*} \to \infty$, as long as
\begin{align} \label{eq:epsilon-bound-asymptotically-tight}
  \epsilon \leq \min \left\{K\tilde{\rho}_k,  \frac{\alpha K \tilde{\rho}_k}{1+\alpha - \tilde{\rho}_k}\right\},
\end{align}
regardless of whether the stochastic dominance condition in~\eqref{eq:assump_scores-cond} holds.

Combined with Theorem~\ref{thm:algorithm-correction-upper}, these expressions tell us that the prediction sets output by Algorithm~\ref{alg:correction} have asymptotically tight coverage if the noise parameter $\epsilon$ is not too large and the regularity conditions of Assumptions~\ref{assumption:regularity-dist}--\ref{assumption:consitency-scores} hold.
For example, if $\alpha=0.1$ and $\tilde{\rho}_k=1/K$ for all $k \in [K]$, the upper bound in~\eqref{eq:epsilon-bound-asymptotically-tight} becomes $\epsilon \leq 0.167$ if $K=2$, and $\epsilon \leq 0.111$ if $K=5$.

Further, this model makes it easy to bound from above the term $\varphi_k(n_k,n_{*})$ in Theorem~\ref{thm:algorithm-correction-upper}, with a bound that only increases with $K$ at rate $\sqrt{\log K}$. Intuitively, this means that the asymptotic tightness of the prediction sets output by Algorithm~\ref{alg:correction} also holds for classification problems with many possible classes.

\subsubsection{Adaptive coverage under a bounded label contamination model}

The randomized response model also allows simplifying our general method for constructing adaptive prediction sets under imperfect knowledge of the label contamination process.
Since the label frequencies $\tilde{\rho}_k$ are easy to estimate accurately from the available contaminated data for all $k \in [K]$, the expression for the matrix $V$ in~\eqref{eq:V-rr} involves only one possibly unknown quantity, the scalar noise parameter $\epsilon \in [0,1)$.
In fact, it is easy to verify that, for any $l,k \in [K]$, the matrix entry $V_{kl}$ can be equivalently rewritten as
\begin{align}  \label{eq:V-RR-simplified}
  V_{kl} 
  & = \frac{\tilde{\rho}_k}{\tilde{\rho}_k-\epsilon/K} \I{k=l} - \frac{\xi \tilde{\rho}_l}{K \tilde{\rho}_k + \xi (K \tilde{\rho}_k - 1)},
\end{align}
where $\xi := \epsilon/(1-\epsilon) > 0$ is a monotone increasing transformation of $\epsilon$.

Therefore, in this special case, implementing the methods from Section~\ref{sec:method-bounded-noise} only requires a valid confidence interval for $\xi$, in lieu of a joint confidence region for all off-diagonal elements of $V$.
In particular, if a confidence interval $[\hat{\xi}^{\mathrm{low}}, \hat{\xi}^{\mathrm{upp}}]$ at level $1-\alpha_V$ is available for $\xi$, then it follows immediately from~\eqref{eq:V-RR-simplified} and~\eqref{eq:tilde-rho-rr} that a valid joint confidence region $[\hat{V}^{\mathrm{low}}, \hat{V}^{\mathrm{upp}}]$ for the off-diagonal elements of $V$ is given by:
\begin{align}  \label{eq:V-RR-CI-simplified}
  & \hat{V}^{\mathrm{upp}}_{kl} = - \frac{\hat{\xi}^{\mathrm{low}} \tilde{\rho}_l}{K \tilde{\rho}_k + \hat{\xi}^{\mathrm{upp}} (K \tilde{\rho}_k - 1)},
  & \hat{V}^{\mathrm{low}}_{kl} = - \frac{\hat{\xi}^{\mathrm{upp}} \tilde{\rho}_l}{K \tilde{\rho}_k + \hat{\xi}^{\mathrm{low}} (K \tilde{\rho}_k - 1)},
  && \forall l \neq k.
\end{align}

To simplify the following notation as much as possible, but without much loss of generality, let us make the additional assumption that the noisy label frequencies are uniform: $\tilde{\rho}_k = 1/K$ for all $k \in [K]$. Note that~\eqref{eq:tilde-rho-rr} tells us this is always the case if the true labels are uniform.
Then, Equation \eqref{eq:V-RR-CI-simplified} implies that, for all $l \neq k$,
\begin{align*} 
  & \hat{V}^{\mathrm{upp}}_{kl} = - \frac{\hat{\xi}^{\mathrm{low}}}{K},
  & \hat{V}^{\mathrm{low}}_{kl} = - \frac{\hat{\xi}^{\mathrm{upp}}}{K},
  &&   \hat{\delta}^{(V)}_{kl} := \hat{V}^{\mathrm{upp}}_{kl} - \hat{V}^{\mathrm{low}}_{kl}  = \frac{\hat{\xi}^{\mathrm{upp}}-\hat{\xi}^{\mathrm{low}}}{K},
\end{align*}
and therefore $\hat{\delta}^{(V)}_{k*} := \max_{l \neq k} \hat{\delta}^{(V)}_{kl} = \delta_{\hat{\xi}}/K$, where $\delta_{\hat{\xi}} := \hat{\xi}^{\mathrm{upp}}-\hat{\xi}^{\mathrm{low}}$.
Further, $\hat{\zeta}_k$ from~\eqref{eq:def-zeta-k} is equal to zero because $\hat{V}^{\mathrm{upp}}_{kl} = \hat{V}^{\mathrm{upp}}_{kl'}$ and $V_{kl} = V_{kl'}$ for all $l,l' \neq k$.
Thus, in this special case, our method from Section~\ref{sec:method-bounded-noise} can be implemented using the following simplified versions of the plug-in estimator $\hat{\Delta}_k^{\mathrm{ci}}(t)$ defined in~\eqref{eq:delta-hat-k-epsilon-bound}:
\begin{align} \label{eq:delta-hat-k-epsilon-bound-RR}
  \hat{\Delta}_k^{\mathrm{ci}}(t)
  & = \hat{\xi}^{\mathrm{low}} \left(1 - \frac{1}{K}\right) \left[ \hat{F}_k^{k}(t) - \frac{\sum_{l \neq k} \hat{F}_l^{k}(t)}{K-1}  \right] - \delta_{\hat{\xi}} \left(1-\frac{1}{K} \right) \left| \hat{F}_k^{k}(t) - \frac{\sum_{l \neq k}\hat{F}_l^{k}(t)}{K-1}  \right|,
\end{align}
and $\delta^{\mathrm{ci}}(n_k, n_*)$ in \eqref{eq:delta-constant-epsilon-ci},
\begin{align*}
  \delta^{\mathrm{ci}}(n_k, n_*)
  & = c(n_k) + \frac{2 \hat{\xi}^{\mathrm{upp}}(1-1/K)}{\sqrt{n_{*}}} \min \left\{ K \sqrt{\frac{\pi}{2}} , \frac{1}{\sqrt{n_{*}}} + \sqrt{\frac{\log(2K) + \log(n_{*})}{2}} \right\} \\
  & \qquad + 2 \alpha_V \bar{\xi}^{\mathrm{upp}} (1-1/K),
\end{align*}
where $\bar{\xi}^{\mathrm{upp}}$ is a (possibly very conservative) deterministic upper bound on $\xi$.

These simplified expressions highlight that, in principle, Algorithm~\ref{alg:correction-ci} only requires a one-sided confidence interval $[0,\hat{\xi}^{\mathrm{upp}}]$ for  $\xi$ in order to achieve valid coverage, because one could always evaluate $\hat{\Delta}_k^{\mathrm{ci}}(t)$ in \eqref{eq:delta-hat-k-epsilon-bound-RR} using $\smash{\hat{\xi}^{\mathrm{low}}=0}$ and $\smash{\delta_{\hat{\xi}}=\hat{\xi}^{\mathrm{upp}}}$. 
However, we already know from Section~\ref{sec:method-bounded-noise} that the prediction sets computed by Algorithm~\ref{alg:correction-ci} tend to be more informative if the confidence bounds are tighter.
In particular, the $1-\alpha + \varphi_k^{\mathrm{ci}}(n_k,n_{*})$ coverage upper bound given by Theorem~\ref{thm:algorithm-correction-upper-ci} can be interpreted even more intuitively in this special case, since
  \begin{align*}
    \varphi_k^{\mathrm{ci}}(n_k,n_{*})
    & \leq \frac{1}{n_{*}} + 2c(n_k) + \frac{2}{n_k} \cdot \frac{1}{1-\epsilon} \left[ 1 + \epsilon \cdot \frac{f_{\max}}{f_{\min}} \cdot \sum_{j=1}^{n_k+1} \frac{1}{j} \right] + \left( 1 + 4 \bar{\xi}^{\mathrm{upp}} \right) \alpha_V + 2 \EV{\delta_{\hat{\xi}}}  \\
    & \qquad\qquad + \frac{4 \EV{\hat{\xi}^{\mathrm{upp}}} }{\sqrt{n_{*}}} \min \left\{ K \sqrt{\frac{\pi}{2}} , \frac{1}{\sqrt{n_{*}}} + \sqrt{\frac{\log(2K) + \log(n_{*})}{2}} \right\}.
  \end{align*}
This highlights that the prediction sets output by Algorithm~\ref{alg:correction-ci} can be (approximately) asymptotically tight, under Assumption~\ref{assumption:regularity-dist-delta-ci},  if $\bar{\xi}^{\mathrm{upp}}$ is finite, $\alpha_V$ is small, and the expected $\mathbb{E}[\delta_{\hat{\xi}}]$ length of the confidence interval for $\xi$ is also small.
Further, under this model, Assumption~\ref{assumption:regularity-dist-delta-ci} becomes approximately equivalent, in the limit of $n_{*} \to \infty$, to 
\begin{align}
  \xi + 2 \delta_{\hat{\xi}}
  \leq \alpha - 2 \alpha_V \bar{\xi}^{\mathrm{upp}}
\end{align}
This means that Assumption~\ref{assumption:regularity-dist-delta-ci} is often realistic, similarly to Assumption~\ref{assumption:regularity-dist-delta}, as long as $\alpha_V \leq \alpha$, the noise parameter $\epsilon$ is not too large, and the confidence interval $[\hat{\xi}^{\mathrm{low}}, \hat{\xi}^{\mathrm{upp}}]$ is not too wide.
For example, suppose $K=2$, $\alpha=0.1$, $\bar{\xi}^{\mathrm{upp}}=0.125$, and the confidence interval $[\hat{\xi}^{\mathrm{low}}, \hat{\xi}^{\mathrm{upp}}]=[0.075,0.125]$ has level $\alpha_V=0.01$ and width $\delta_{\hat{\xi}} = 0.05$. 
Then, the upper bound for the true noise parameter $\xi$ in~\eqref{eq:epsilon-bound-asymptotically-tight-ci} is $\xi \leq 0.0975$, which corresponds to $\epsilon \leq 0.089$.
Additionally, Assumption~\ref{assumption:regularity-dist-delta-ci} is also guaranteed to be satisfied in the large-sample limit if the stochastic dominance condition in~\eqref{eq:assump_scores-cond} holds (i.e., $\inf_{t \in (0,1)} \Delta_k(t) \geq 0$) and a sufficiently tight confidence interval $[\hat{\xi}^{\mathrm{low}}, \hat{\xi}^{\mathrm{upp}}]$ for a small enough $\alpha_V$ is available. 

\subsubsection{Estimating the parameter $\epsilon$} \label{app:RR-model-estim}

The scalar noise parameter $\epsilon \in [0,1)$ in the randomized response model described in Section~\ref{app:simplified-model} can be empirically estimated, from a small amount of clean data, by applying a simplified version of the general procedure described in Section~\ref{sec:method-noise-estim}.

Recall that the matrices $\tilde{Q}$ and $Q$, defined in (\ref{eq:Q-def}), are such that, for any $l,k \in [K]$,
\begin{align*}
    & \tilde{Q}_{kl} := \P{ \hat{f}(X) = l \mid \tilde{Y}=k, \hat{f} }, 
    & Q_{kl} := \P{ \hat{f}(X) = l \mid Y=k, \hat{f} }.
\end{align*}
We prove in Section~\ref{app:proofs-model-fitting} that, under the randomized response model, the estimating equation $\tilde{Q} = M Q$ stated in (\ref{eq:Q-M}) implies
\begin{align} \label{eq:epsilon-estim-RR}
  \epsilon  = \frac{\psi - \tilde{\psi}}{\psi - 1/K},
\end{align}
where $\tilde{\psi}$ and $\psi$ are the probabilities that the classifier guesses correctly the corrupted and true label, respectively, of a new independent data point,
\begin{align} \label{eq:psi-psi-tilde}
    & \tilde{\psi} := \P{ \hat{f}(X) = \tilde{Y} \mid \hat{f} },
    & \psi := \P{ \hat{f}(X) = Y \mid \hat{f} }.      
\end{align}

This result suggests the following simplified method for estimating $\epsilon$.
The clean data in $\mathcal{D}^{0}$ can be used to compute an intuitive empirical estimate of $\psi$:
\begin{align} \label{eq:psi-estim}
    \psi
    & \approx \frac{1}{|\mathcal{D}^{0}|} \sum_{i \in \mathcal{D}^{0}} \I{Y_i = \hat{f}(X_i)}.
\end{align}
Similarly, $\tilde{\psi}$ can be estimated using the held-out contaminated data in $\mathcal{D}^{1}_b$:
\begin{align} \label{eq:psi-tilde-estim}
    \tilde{\psi}
    & \approx \frac{1}{|\mathcal{D}^{1}_b|} \sum_{i \in \mathcal{D}^{1}_b} \I{\tilde{Y}_i = \hat{f}(X_i)}.
\end{align}
If the contaminated observations are relatively abundant (i.e., $|\mathcal{D}^{1}| \gg |\mathcal{D}^{0}|$), one should expect~\eqref{eq:psi-tilde-estim} to provide an empirical estimate of $\tilde{\psi}$ with low variance compared to that of $\psi$ in~\eqref{eq:psi-estim}.
Therefore, \eqref{eq:epsilon-estim-RR} tells us that the leading source of uncertainty in $\epsilon$ is due to $\psi$, which depends on the unknown joint distribution of $(\hat{f}(X), Y)$ conditional on the trained classifier $\hat{f}$.
The latter is a multinomial distribution with $K^2$ categories and event probabilities equal to 
$$
\lambda_{lk} = \P{ \hat{f}(X) = k, Y=l  \mid \hat{f} }, \qquad \forall l,k \in [K].
$$
Then, since $\psi = \sum_{l=1}^{K} \lambda_{ll}$, it is easy to see that $\epsilon$ in~\eqref{eq:epsilon-estim-RR} can be written as a function of the multinomial parameter vector $\lambda = (\lambda_{lk})_{l,k \in [K]}$, as well as of other quantities ($\tilde{\psi}$ and $\tilde{\rho}_k$) that are already known with relatively high accuracy:
\begin{align} \label{eq:epsilon-estim-multin-RR}
  \epsilon = \frac{\sum_{l=1}^{K} \lambda_{ll} - \tilde{\psi}}{\sum_{l=1}^{K} \lambda_{ll} - 1/K}.
\end{align}
Thus, a confidence interval for $\epsilon$ can be directly obtained by applying standard parametric bootstrap techniques for multinomial parameters \citep{sison1995simultaneous}, similarly to the more general approach presented in Section~\ref{sec:method-noise-estim}.
In turn, this immediately translates into a confidence interval $[\hat{\xi}^{\mathrm{low}},\hat{\xi}^{\mathrm{upp}}  ]$ for $\xi$, which is a monotone increasing function of $\epsilon$, at any desired significance level $\alpha_V \in (0,1)$.
Alternatively, one could consider seeking only a point estimate $\hat{\epsilon}$ of $\epsilon$, by replacing the multinomial parameters in~\eqref{eq:epsilon-estim-multin-RR} with their standard maximum-likelihood estimates.

\subsection{The two-level randomized response model} \label{app:BRR-model}

\subsubsection{Model description} \label{app:simplified-model-BRR}

This section delves into a different special case of the general label contamination model introduced in Section~\ref{subsec:linear-contam-model}, extending the streamlined methods presented in Section~\ref{app:RR-model} for the randomized response model \citep{warner1965randomized} to a two-level hierarchical setting.
The model considered here describes a natural label contamination process involving two clearly defined groups of labels.
To illustrate, consider an animal image recognition task: one label group could represent various dog breeds, while the other could denote different cat breeds. The objective is to accurately identify not only the species but also the specific breed for each new image. 
In this context, one would often expect that a realistic data annotation process may lead to more frequent mislabeling of dog (or cat) breeds rather than mistakenly identifying a dog as a cat or vice-versa.

The two-level label contamination scenario mentioned above can be formalized using an intuitive model with two scalar parameters: $\epsilon \in [0,1)$ and $\nu \in [0,1]$. The parameter $\epsilon$ influences the likelihood that the observed label $\tilde{Y}$ deviates from the true label $Y$. This mirrors the role of $\epsilon$ in the randomized response model outlined in Section~\ref{app:RR-model}. Conversely, the parameter $\nu$ governs the interaction between the two distinct label groups.
In the special case of $\nu=0$, this model will reduce to the randomized response model \citep{warner1965randomized}, indicating an equal probability of mislabeling across breeds or species. 
On the other hand, in the special case of $\nu=1$, this model will describe a scenario in which two separate species-specific randomized response models operate independently of one another, and accurate species labeling is always ensured. The most interesting cases will of course be those in between of these two extremes, for values of $\nu \in (0,1)$.
The specifics of this model are presented next.

For simplicity, let us assume that the total number of possible labels, $K$, is even.
We describe the relation between $X,Y$ and $\tilde{Y}$ with
\begin{align}  \label{eq:def-BRR}
  \P{\tilde{Y} = k \mid X, Y=l} = T_{kl},
\end{align}
for all $k,l \in [K]$, where $T \in [0,1]^{K \times K}$ is a ($2 \times 2$) block matrix defined as
\begin{align*}
  T
  & =  \left(\begin{array}{cc}
    D^{\mathrm{BRR}} + B^{\mathrm{BRR}} & C^{\mathrm{BRR}} \\
    C^{\mathrm{BRR}} & D^{\mathrm{BRR}} + B^{\mathrm{BRR}}
  \end{array}\right).
\end{align*}
Above, the matrix $D^{\mathrm{BRR}}$ is diagonal and such that, for any $k,l \in [K/2]$,
\begin{align*}
  D^{\mathrm{BRR}}_{kl} = (1-\epsilon) \I{k=l},
\end{align*}
while the matrices $B^{\mathrm{BRR}}$ and $C^{\mathrm{BRR}}$ are constant and such that, for any $k,l \in [K/2]$,
\begin{align*}
  & B^{\mathrm{BRR}}_{kl} = \frac{\epsilon}{K} \left( 1 + \nu \right),
  & C^{\mathrm{BRR}}_{kl} = \frac{\epsilon}{K} \left( 1 - \nu \right).
\end{align*}
For example, in the special case of $K=4$, the matrix $T$ would look like
\begin{align*}
  T^{(4)}  =
\left(
\begin{array}{cccc}
  1 -\epsilon + \frac{(1 + \nu) \epsilon}{4}  & \frac{(1 + \nu) \epsilon}{4}  & \frac{(1-\nu ) \epsilon}{4}  & \frac{(1-\nu ) \epsilon}{4}  \\ 
  \frac{(1 + \nu) \epsilon}{4}  & 1 -\epsilon + \frac{(1 + \nu) \epsilon}{4}  & \frac{(1-\nu ) \epsilon}{4}  & \frac{(1-\nu ) \epsilon}{4}  \\ 
  \frac{(1-\nu ) \epsilon}{4}  & \frac{(1-\nu ) \epsilon}{4}  & 1 -\epsilon + \frac{(1 + \nu) \epsilon}{4}  & \frac{(1 + \nu) \epsilon}{4}  \\ 
  \frac{(1-\nu ) \epsilon}{4}  & \frac{(1-\nu ) \epsilon}{4}  & \frac{(1 + \nu) \epsilon}{4}  & 1 -\epsilon + \frac{(1 + \nu) \epsilon}{4}  \\
\end{array}
\right).
\end{align*}

It is easy to verify that this two-level randomized response model generally leads to the following contaminated label frequencies:
\begin{align*}
\tilde{\rho}_k
  & = (1-\epsilon) \rho_k + \frac{\epsilon}{K} + \nu \frac{\epsilon}{K} \left( 2 \sum_{l=1}^{K/2} \rho_l - 1 \right), \qquad \forall k \in [K].
\end{align*}
However, we will assume henceforth that $\rho_k = 1/K$ for all $k \in [K]$, which implies $\tilde{\rho}_k = 1/K$ for all $k \in [K]$. 
This simplification is not crucial in principle, but it is useful to make the following computations less tedious.
In particular, under the assumption of uniform label frequencies, the matrix $M$ in (\ref{prop:indep-linear}) is simply equal to $T$:
\begin{align*}
  M = T
  & =  \left(\begin{array}{cc}
    D^{\mathrm{BRR}} + B^{\mathrm{BRR}} & C^{\mathrm{BRR}} \\
    C^{\mathrm{BRR}} & D^{\mathrm{BRR}} + B^{\mathrm{BRR}}
  \end{array}\right).
\end{align*}

Having a relatively simple expression for $M$ is useful because it simplifies the implementation of our method, which relies directly on the inverse matrix $V=M^{-1}$.
The latter can now be obtained analytically by combining the Sherman-Morrison formula with standard techniques for block-matrix inversion. 
This leads to the following close-formula expressions.
For any $k \in [K]$,
\begin{align*}
  V_{kk}
  & = \frac{1}{1-\epsilon} \left( 1 - \frac{\epsilon}{K} \right) - \frac{\epsilon \nu}{K(1-\epsilon)\left[ 1-\epsilon(1-\nu)\right]}.
\end{align*}
For any $l \in \mathcal{B}_k \setminus \{k\}$, where $\mathcal{B}_k$ indicates the block to which label $k$ belongs---that is, $\mathcal{B}_k=\{1,\ldots,K/2\}$ if $k \leq K/2$ and $\mathcal{B}_k=\{K/2+1,\ldots,K\}$ otherwise---the term $V_{kl}$ is
\begin{align*}
  V_{kl } = - \frac{\epsilon}{K(1-\epsilon)} \cdot \left( 1 + \frac{\nu}{1 - \epsilon(1-\nu)} \right).
\end{align*}
Finally, for any $l \in \mathcal{B}_k^{\mathrm{c}}$, where $\mathcal{B}_k^{\mathrm{c}} := [K] \setminus \mathcal{B}_k$, 
\begin{align*}
  V_{kl } = - \frac{\epsilon}{K(1-\epsilon)} \cdot \left( 1 - \frac{\nu}{1 - \epsilon(1-\nu)} \right).
\end{align*}
For example, in the special case of $K=4$, the matrix $V$ would look like
\begin{align*}
V^{(4)} = \left(
\begin{array}{cccc}
 \frac{\epsilon ^2 - \nu  \epsilon ^2 + 3 \nu  \epsilon - 5 \epsilon +4}{4 (1-\epsilon) (1 -\epsilon + \nu  \epsilon )} & -\frac{\epsilon  (1 -\epsilon + \nu + \nu  \epsilon )}{4 (1-\epsilon) (1 -\epsilon + \nu  \epsilon )} & -\frac{(1-\nu) \epsilon }{4 (1 -\epsilon + \nu  \epsilon )} & -\frac{(1-\nu) \epsilon }{4 (1 -\epsilon + \nu  \epsilon )} \\
 - \frac{\epsilon  (1 -\epsilon + \nu + \nu  \epsilon )}{4 (1-\epsilon) (\nu \epsilon -\epsilon +1)} & \frac{\epsilon ^2 - \nu  \epsilon ^2 + 3 \nu  \epsilon - 5 \epsilon +4}{4 (1-\epsilon) (1 -\epsilon + \nu  \epsilon )} & -\frac{(1-\nu) \epsilon }{4 (1 -\epsilon + \nu  \epsilon )} & -\frac{(1-\nu) \epsilon }{4 (1 -\epsilon + \nu  \epsilon )} \\
 -\frac{(1-\nu) \epsilon }{4 (1 -\epsilon + \nu  \epsilon )} & -\frac{(1-\nu) \epsilon }{4 (1 -\epsilon + \nu  \epsilon )} & \frac{\epsilon ^2 - \nu  \epsilon ^2 + 3 \nu  \epsilon - 5 \epsilon +4}{4 (1-\epsilon) (1 -\epsilon + \nu  \epsilon )} & -\frac{\epsilon  (1 -\epsilon + \nu + \nu  \epsilon )}{4 (1-\epsilon) (1 -\epsilon + \nu  \epsilon )} \\
 -\frac{(1-\nu) \epsilon }{4 (1 -\epsilon + \nu  \epsilon )} & -\frac{(1-\nu) \epsilon }{4 (1 -\epsilon + \nu  \epsilon )} & - \frac{\epsilon  (\nu +\nu \epsilon -\epsilon +1)}{4 (1-\epsilon) (1 -\epsilon + \nu  \epsilon )} & \frac{\epsilon ^2 - \nu  \epsilon ^2 + 3 \nu  \epsilon - 5 \epsilon +4}{4 (1-\epsilon) (1 -\epsilon + \nu  \epsilon )} \\
\end{array}
\right).
\end{align*}

As detailed below, the relatively tractable structure of the matrix $V$ under this two-level randomized response model leads to an easier-to-interpret version of the general methodology presented in Section~\ref{sec:methods}.
Further, the scalar nature of the unknown parameters $\epsilon \in [0,1)$ $\nu \in [0,1]$ simplifies the task of empirically fitting the model given a limited amount of clean data, thus overcoming the main practical limitation of the general estimation approach described in Section~\ref{sec:method-noise-estim}.

\subsubsection{Adaptive coverage under a known label contamination model}

 We begin by showing how the adaptive conformal prediction methodology from Section~\ref{sec:method-known-noise} simplifies under a two-level randomized response model for the label contamination process.
First, note that the plug-in estimate $\hat{\Delta}_k(t)$ of the coverage inflation factor $\Delta_k(t)$ utilized by Algorithm~\ref{alg:correction}, originally defined in~\eqref{eq:delta-hat}, simplifies to:
\begin{align} \label{eq:Delta-hat-BRR}
\begin{split}
  \hat{\Delta}_k(t)
    & = \frac{\epsilon}{1-\epsilon}  \left( 1 - \frac{1}{K} \right) \left[ \hat{F}_k^{k}(t) - \frac{1}{K-1} \sum_{l \neq k} \hat{F}_l^{k}(t) \right] \\
    & \qquad - \frac{\epsilon}{1-\epsilon} \cdot \frac{\nu}{1 - \epsilon(1-\nu)} \left[ \frac{1}{K} \sum_{l \in \mathcal{B}_k} \hat{F}_l^{k}(t) - \frac{1}{K} \sum_{l \in \mathcal{B}_k^C} \hat{F}_l^{k}(t) \right].
  \end{split}
\end{align}
In the special case of $\nu=0$, this recovers to same expression obtained under the standard randomized response model in~\eqref{eq:Delta-hat-RR}.
In the other extreme case, if $\nu=1$, the expression for $\hat{\Delta}_k(t)$ in~\eqref{eq:Delta-hat-BRR} reduces to
\begin{align}
\begin{split}
  \hat{\Delta}_k(t)
    & = \frac{\epsilon}{1-\epsilon}  \left( 1 - \frac{2}{K} \right) \left[ \hat{F}_k^{k}(t) - \frac{1}{K/2-1} \sum_{l \in \mathcal{B}_l \setminus \{k\}} \hat{F}_l^{k}(t) \right],
  \end{split}
\end{align}
consistently with two block-specific randomized response models operating independently of one another.

In the general case, Equation~\eqref{eq:Delta-hat-BRR} implies that
\begin{align}
  \sum_{l\neq k}|V_{kl}|
  & = \frac{\epsilon}{1-\epsilon}\left(1-\frac{1}{K} - \frac{\nu}{K[1-\epsilon(1-\nu)]} \right).
\end{align}
Therefore, under a two-level randomized response model, the finite-sample correction term $\delta(n_k,n_{*})$ defined in~\eqref{eq:delta-constant} simplifies to
\begin{align*}
\begin{split}
  \delta(n_k,n_{*})
  & = c(n_k)
    + \frac{2\epsilon}{(1-\epsilon)\sqrt{n_{*}}}\left(1-\frac{1}{K} - \frac{\nu}{K[1-\epsilon(1-\nu)]} \right)  \\
    & \qquad \cdot \min \left\{ K \sqrt{\frac{\pi}{2}} , \frac{1}{\sqrt{n_{*}}} + \sqrt{\frac{\log(2K) + \log(n_{*})}{2}} \right\}.
  \end{split}
\end{align*}

Similarly, under this two-level randomized response model, Assumption~\ref{assumption:regularity-dist-delta} becomes
\begin{align*}
  \inf_{t \in (0,1)} \Delta_k(t) \geq - \alpha + c(n_k) 
  + \frac{2 \epsilon \left(1-\frac{1}{K} - \frac{\nu}{K[1-\epsilon(1-\nu)]} \right)}{(1-\epsilon)\sqrt{n_*}} 
  \left( \frac{1}{\sqrt{n_{*}}} + 2 \sqrt{\frac{\log(2K) + \log(n_{*})}{2}} \right).
\end{align*}
This is always satisfied in the limit of $n_{*} \to \infty$, as long as the following inequality holds:
\begin{align*}
  \epsilon \leq \frac{\alpha}{1+\alpha - \frac{1}{K} + \frac{1}{2} \frac{\nu}{1-\epsilon(1-\nu)}}.
\end{align*}
Instead of solving this quadratic inequality, it suffices here to note that a stricter condition, for any value of $\nu \in [0,1]$, is
\begin{align} \label{eq:epsilon-bound-asymptotically-tight-BRR}
  \epsilon \leq \frac{\alpha}{2+\alpha - \frac{1}{K}}.
\end{align}
Combined with Theorem~\ref{thm:algorithm-correction-upper}, these expressions tell us that the prediction sets output by Algorithm~\ref{alg:correction} have asymptotically tight coverage if the noise parameter $\epsilon$ is not too large and the regularity conditions of Assumptions~\ref{assumption:regularity-dist}--\ref{assumption:consitency-scores} hold, consistently with the simpler special case of the standard randomized response model (Section~\ref{app:RR-model}).
For example, if $\alpha=0.1$, the upper bound in~\eqref{eq:epsilon-bound-asymptotically-tight-BRR} becomes $\epsilon \leq 0.0625$ if $K=2$, and $\epsilon \leq 0.0526$ if $K=5$.
Further, also consistently with Section~\ref{app:RR-model}, this model makes it easy to bound from above the term $\varphi_k(n_k,n_{*})$ in Theorem~\ref{thm:algorithm-correction-upper}, with a bound that only increases with $K$ at rate $\sqrt{\log K}$.

\subsubsection{Adaptive coverage under a bounded label contamination model}

We now turn our attention to the problem in which the parameters $\epsilon$ and $\nu$ of the two-level randomized response model are not known exactly.
Under this model, applying the adaptive conformal prediction methods presented in Section~\ref{sec:method-bounded-noise} only requires a simultaneously valid pair of confidence intervals for $\epsilon$ and $\nu$, in lieu of a joint confidence region for all off-diagonal elements of $V$.

In particular, if two simultaneously valid confidence intervals $[\hat{\xi}^{\mathrm{low}}, \hat{\xi}^{\mathrm{upp}}]$ and $[\hat{\nu}^{\mathrm{low}}, \hat{\nu}^{\mathrm{upp}}]$ at level $1-\alpha_V$ are available for $\xi := \epsilon/(1-\epsilon)$ and $\nu$, then it follows immediately that a valid joint confidence region $[\hat{V}^{\mathrm{low}}, \hat{V}^{\mathrm{upp}}]$ for the off-diagonal elements of $V$ is given as follows.

For any $k \in [K]$ and $l \in \mathcal{B}_k \setminus \{k\}$,
\begin{align*}
  V_{kl} = - \frac{\xi}{K} \cdot \frac{ 1 + \nu(1 + 2 \xi)}{1 + \nu \xi }.
\end{align*}
It is easy to verify by taking partial derivatives that this is a monotone decreasing function of $\nu$ for any fixed $\xi \geq 0$, and a monotone decreasing function of $\xi$ for any fixed $\nu \in [0,1]$.
Therefore, for any $k \in [K]$ and $l \in \mathcal{B}_k \setminus \{k\}$,
\begin{align*}
  & \hat{V}^{\mathrm{upp}}_{kl} = - \frac{\hat{\xi}^{\mathrm{low}}}{K} \cdot \frac{ 1 + \hat{\nu}^{\mathrm{low}} ( 1 + 2 \hat{\xi}^{\mathrm{low}})}{1 + \hat{\nu}^{\mathrm{low}} \hat{\xi}^{\mathrm{low}}}, 
  & \hat{V}^{\mathrm{low}}_{kl} = - \frac{\hat{\xi}^{\mathrm{upp}}}{K} \cdot \frac{ 1 + \hat{\nu}^{\mathrm{upp}} (1+ 2 \hat{\xi}^{\mathrm{upp}})}{1 + \hat{\nu}^{\mathrm{upp}} \hat{\xi}^{\mathrm{upp}}}.
\end{align*}

Similarly, for any $k \in [K]$ and $l \in \mathcal{B}_k^{\mathrm{c}}$, 
\begin{align*}
  V_{kl} = - \frac{\xi}{K} \cdot \frac{ 1 - \nu}{1 + \nu \xi }.
\end{align*}
It is easy to verify by taking partial derivatives that this is a monotone increasing function of $\nu$ for any fixed $\xi \geq 0$, and a monotone decreasing function of $\xi$ for any fixed $\nu \in [0,1]$.
Therefore, for any $k \in [K]$ and $l \in \mathcal{B}_k^{\mathrm{c}}$, 
\begin{align*}
  & \hat{V}^{\mathrm{upp}}_{kl} = - \frac{\hat{\xi}^{\mathrm{low}}}{K} \cdot \frac{ 1 - \hat{\nu}^{\mathrm{upp}}}{1 + \hat{\nu}^{\mathrm{upp}} \hat{\xi}^{\mathrm{low}}}, 
  & \hat{V}^{\mathrm{low}}_{kl} = - \frac{\hat{\xi}^{\mathrm{upp}}}{K} \cdot \frac{ 1 - \hat{\nu}^{\mathrm{low}}}{1 + \hat{\nu}^{\mathrm{low}} \hat{\xi}^{\mathrm{upp}}}.
\end{align*}

Next, we require an upper confidence bound for the parameter $\hat{\zeta}_k$ defined in~\eqref{eq:def-zeta-k}.
Leveraging the confidence bounds for $\xi$ and $\nu$ as well as amenable structure of our $V$ matrix, we prove in Section~\ref{app:proofs-model-fitting} that  a valid upper bound for $\hat{\zeta}_k$ is given by
\begin{align} \label{eq:zeta-upp-BRR}
\begin{split}
  \hat{\zeta}^{\mathrm{upp}}_k
  & = \frac{2}{K} \cdot \frac{\left( \hat{\nu}^{\mathrm{upp}} - \hat{\nu}^{\mathrm{low}} \right) + \left[ \hat{\nu}^{\mathrm{upp}} \left( \hat{\xi}^{\mathrm{upp}}\right)^2 - \hat{\nu}^{\mathrm{low}} \left( \hat{\xi}^{\mathrm{low}}\right)^2 \right] + \hat{\nu}^{\mathrm{low}} \hat{\nu}^{\mathrm{upp}} \hat{\xi}^{\mathrm{low}} \hat{\xi}^{\mathrm{upp}} \left( \hat{\xi}^{\mathrm{upp}} - \hat{\xi}^{\mathrm{low}} \right) }{\left( 1 + \hat{\nu}^{\mathrm{upp}} \hat{\xi}^{\mathrm{upp}} \right) \left( 1 + \hat{\nu}^{\mathrm{low}} \hat{\xi}^{\mathrm{low}} \right) }  \\
  & \qquad  + \frac{\hat{\xi}^{\mathrm{low}}}{K} \cdot \frac{\left( \hat{\nu}^{\mathrm{upp}} - \hat{\nu}^{\mathrm{low}} \right) \left( 1 + \hat{\xi}^{\mathrm{low}} \right) }{\left( 1 + \hat{\nu}^{\mathrm{low}} \hat{\xi}^{\mathrm{low}} \right) \left( 1 + \hat{\nu}^{\mathrm{upp}} \hat{\xi}^{\mathrm{low}} \right)}.
\end{split}
\end{align}
Note that, in the special case of $\hat{\nu}^{\mathrm{upp}}=0$ and $\hat{\nu}^{\mathrm{low}}=0$, which corresponds to a standard randomized response model with $\nu=0$, we recover that $\hat{\zeta}^{\mathrm{upp}}_k=0$, consistently with the results presented earlier in Section~\ref{app:RR-model}.

Leveraging the explicit expressions for $\hat{V}^{\mathrm{low}}$, $\hat{V}^{\mathrm{upp}}$ and $\hat{\zeta}^{\mathrm{upp}}_k$ derived above, one can then directly apply the general conformal prediction methodology detailed in Section~\ref{sec:method-bounded-noise}.
Note that further simplifications of our method's remaining components within this two-tier randomized response model, though possible in principle, are not explicitly shown here due to the increasingly tedious nature of such analytical calculations beyond this point.
Instead, we will later investigate the performance of our method as a function of the confidence intervals for the parameters $\xi$ and $\nu$ through numerical experiments.

\subsubsection{Estimating the parameters $\epsilon$ and $\nu$} \label{sec:app:simplified-model-BRR-estim}

Both parameters $\epsilon \in [0,1)$ and $\nu \in [0,1)$ in the two-level randomized response model described in Section~\ref{app:simplified-model-BRR} can be empirically estimated, from a small amount of clean data, by applying a simplified version of the general procedure described in Section~\ref{sec:method-noise-estim}.

Recall that the matrices $\tilde{Q}$ and $Q$, defined in (\ref{eq:Q-def}), are such that, for any $l,k \in [K]$,
\begin{align*}
    & \tilde{Q}_{kl} := \P{ \hat{f}(X) = l \mid \tilde{Y}=k, \hat{f} }, 
    & Q_{kl} := \P{ \hat{f}(X) = l \mid Y=k, \hat{f} }.
\end{align*}
Recall also that $\tilde{\psi}$ and $\psi$ are the probabilities that the classifier guesses correctly the corrupted and true label, respectively, of a new independent data point, as defined in~\eqref{eq:psi-psi-tilde}.
Then, let us define two new quantities, $\tilde{\phi}$ and $\phi$. These are the probabilities that the classifier guesses correctly the group to which the corrupted and true label, respectively, of a new independent data point belong. That is,
\begin{align*}
  & \tilde{\phi} := \sum_{k=1}^{K} \sum_{l \in \mathcal{B}_k} \P{ \hat{f}(X) = l, \tilde{Y}=k \mid  \hat{f} },
 & \phi := \sum_{k=1}^{K} \sum_{l \in \mathcal{B}_k} \P{ \hat{f}(X) = l, Y=k \mid  \hat{f} }.
\end{align*}

We prove in Section~\ref{app:proofs-model-fitting} that, under the two-level randomized response model, the estimating equation $\tilde{Q} = M Q$ stated in (\ref{eq:Q-M}) implies the following system of two equations:
\begin{align} \label{eq:BRR-estimating-system}
  \begin{split}
\begin{cases}
    \tilde{\psi} & = (1-\epsilon) \psi + \frac{\epsilon}{K}  + \frac{\epsilon \nu}{K} \left( 2 \phi - 1 \right), \\
    \tilde{\phi} & = \phi - \epsilon (1-\nu) \left( \phi - \frac{1}{2} \right).
  \end{cases}
\end{split}
\end{align}
It is easy to verify that the solution of this system is:
\begin{align} \label{eq:epsilon-estim-BRR}
  & \epsilon = \frac{\frac{K}{2} \left( \psi - \tilde{\psi} \right) - \left(\phi - \tilde{\phi} \right)}{\frac{K}{2} \psi - \phi},
  & \nu 
  = 1 - \frac{\left( \phi - \tilde{\phi} \right) \left( \frac{K}{2} \psi - \phi \right) }{\left( \phi-\frac{1}{2} \right) \left[ \frac{K}{2} \left( \psi - \tilde{\psi} \right) - \left( \phi - \tilde{\phi} \right)  \right] }.
\end{align}
This result suggests the following simplified method for estimating $\epsilon$ and $\nu$.

The clean data in $\mathcal{D}^{0}$ can be used to compute an intuitive empirical estimate of $\psi$ using~\eqref{eq:psi-estim}, and an estimate of $\phi$ as:
\begin{align} \label{eq:phi-estim}
    \phi
    & \approx \frac{1}{|\mathcal{D}^{0}|} \sum_{i \in \mathcal{D}^{0}} \sum_{k=1}^{K} \sum_{l \in \mathcal{B}_k} \I{ \hat{f}(X_i) = l, Y_i=k }.
\end{align}
Similarly, $\tilde{\psi}$ can be estimated using the held-out contaminated data in $\mathcal{D}^{1}_b$ using~\eqref{eq:psi-tilde-estim}, and $\tilde{\phi}$ can be estimated using
\begin{align} \label{eq:phi-tilde-estim}
    \tilde{\phi}
    & \approx \frac{1}{|\mathcal{D}^{1}_b|} \sum_{i \in \mathcal{D}^{1}_b} \sum_{k=1}^{K} \sum_{l \in \mathcal{B}_k} \I{\hat{f}(X_i) = l, \tilde{Y}_i=k}.
\end{align}
If the contaminated observations are relatively abundant (i.e., $|\mathcal{D}^{1}| \gg |\mathcal{D}^{0}|$), one should expect~\eqref{eq:psi-tilde-estim} and~\eqref{eq:phi-tilde-estim} to provide empirical estimates of $\tilde{\psi}$ and $\tilde{\phi}$, respectively, with low variance compared to those of $\psi$ and $\phi$.
Therefore, the leading source of uncertainty in $\epsilon$ and $\nu$ is due to $\psi$ and $\phi$, which depend on the unknown joint distribution of $(\hat{f}(X), Y)$ conditional on the trained classifier $\hat{f}$.
We already know that the latter is a multinomial distribution with $K^2$ categories and event probabilities equal to 
$$
\lambda_{lk} = \P{ \hat{f}(X) = k, Y=l  \mid \hat{f} }, \qquad \forall l,k \in [K].
$$
Further,
\begin{align}
  & \psi = \sum_{l=1}^{K} \lambda_{ll},
  &  \phi = \sum_{k=1}^{K} \sum_{l \in \mathcal{B}_k}  \lambda_{kl}.
\end{align}
Therefore, the parameters $\epsilon$ and $\nu$ in~\eqref{eq:epsilon-estim-BRR} can be written as a function of the multinomial parameter vector $\lambda = (\lambda_{lk})_{l,k \in [K]}$, as well as of other quantities ($\tilde{\psi}$ and $\tilde{\phi}$) that are already known with relatively high accuracy.

Thus, confidence intervals for $\epsilon$ (or, equivalently, $\xi$) and $\nu$ can be directly obtained by applying standard parametric bootstrap techniques for multinomial parameters \citep{sison1995simultaneous}, similarly to the case of the standard randomized response model discussed in Section~\ref{app:RR-model}.
Alternatively, one could consider seeking only a point estimate $\hat{\epsilon}$ and $\hat{\nu}$, by replacing the multinomial parameters in~\eqref{eq:epsilon-estim-BRR} with their standard maximum-likelihood estimates.

\section{Comparison to worst-case coverage bounds} \label{app:worst-case}

\subsection{Theoretical bounds under a general linear contamination model}

As previously mentioned in Section~\ref{subsec:linear-contam-model}, it is notable that Theorem~\ref{thm:coverage-lab-cond} could also be utilized to derive worst-case coverage bounds for standard conformal prediction sets calibrated by Algorithm~\ref{alg:standard-lab-cond} using contaminated data.
This short digression highlights an interesting link with the sophisticated theoretical results of \cite{barber2022conformal}, although it does not directly add to the methodological advancements presented in this paper.

\begin{corollary}\label{thm:coverage-lab-cond-worst-case}
Suppose $(X_i,Y_i,\tilde{Y}_i)$ are i.i.d.~for all $i \in [n+1]$, and assume also that Assumption~\ref{assumption:linear-contam} holds.
Fix any prediction function $\mathcal{C}$ satisfying Definition~\ref{def:pred-function}, and let $\hat{C}(X_{n+1})$ indicate the prediction set output by Algorithm~\ref{alg:standard-lab-cond} applied using the corrupted labels $\tilde{Y}_i$ instead of the clean labels $Y_i$, for all $i \in [n]$.
Then,
\begin{align} \label{eq:prop-label-cond-coverage-lower-worst-case}
  \P{Y_{n+1} \in \hat{C}(X_{n+1}) \mid Y_{n+1} = k} \geq 1 - \alpha - \sum_{l \neq k} |V_{kl}|.
\end{align}
Further, if the conformity scores $\hat{s}(X_i,\tilde{Y}_i)$ used by Algorithm~\ref{alg:standard-lab-cond} are almost-surely distinct,
\begin{align} \label{eq:prop-label-cond-coverage-upper-worst-case}
  \P{Y_{n+1} \in \hat{C}(X_{n+1}) \mid Y_{n+1} = k} \leq 1 - \alpha + \frac{1}{n_k+1} + \sum_{l \neq k} |V_{kl}|.
\end{align}
\end{corollary}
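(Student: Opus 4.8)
The plan is to derive Corollary~\ref{thm:coverage-lab-cond-worst-case} as a worst-case consequence of Theorem~\ref{thm:coverage-lab-cond}, by bounding the random quantity $\Delta_k(\htau_k)$ uniformly over all possible values of its argument. Recall that Theorem~\ref{thm:coverage-lab-cond} already gives
\[
  1 - \alpha + \EV{\Delta_k(\htau_k)} \;\leq\; \P{Y_{n+1} \in \hat{C}(X_{n+1}) \mid Y_{n+1} = k} \;\leq\; 1 - \alpha + \frac{1}{n_k+1} + \EV{\Delta_k(\htau_k)},
\]
so it suffices to show that $-\sum_{l\neq k}|V_{kl}| \le \Delta_k(t) \le \sum_{l\neq k}|V_{kl}|$ for every $t$, and then take expectations. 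In fact, as the remark after Theorem~\ref{thm:coverage-lab-cond} notes, one can condition on $\mathcal{D}$ and replace $\EV{\Delta_k(\htau_k)}$ by $\Delta_k(\htau_k)$; a deterministic pointwise bound on $\Delta_k$ then immediately yields the stated inequalities whether or not one takes expectations.

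The first step is to invoke Proposition~\ref{prop:indep-linear}: under Assumption~\ref{assumption:linear-contam}, the matrix $M$ with $M_{kl}=\mathbb{P}[Y=l\mid \tilde Y=k]$ is invertible with inverse $V$, and formula~\eqref{eq:delta-2} gives
\[
  \Delta_k(t) = (V_{kk}-1)\tF_k^{k}(t) + \sum_{l\neq k} V_{kl}\,\tF_l^{k}(t).
\]
The second step is to control $V_{kk}-1$. Since the rows of $M$ sum to one and $VM=I$, the $k$-th row of $V$ satisfies $\sum_{l} V_{kl} = 1$ (this is the identity $V\mathbf{1}=\mathbf{1}$, because $M\mathbf{1}=\mathbf{1}$ implies $\mathbf{1}=V M\mathbf{1}=V\mathbf{1}$). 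Hence $V_{kk}-1 = -\sum_{l\neq k}V_{kl}$, and substituting into the display above,
\[
  \Delta_k(t) = \sum_{l\neq k} V_{kl}\left(\tF_l^{k}(t) - \tF_k^{k}(t)\right).
\]
The third step is the elementary bound: each CDF difference $\tF_l^{k}(t)-\tF_k^{k}(t)$ lies in $[-1,1]$, so $|\Delta_k(t)| \le \sum_{l\neq k}|V_{kl}|\cdot|\tF_l^{k}(t)-\tF_k^{k}(t)| \le \sum_{l\neq k}|V_{kl}|$. Combining this with Theorem~\ref{thm:coverage-lab-cond} (after taking $\EV{\cdot}$, or working conditionally on $\mathcal{D}$) gives both~\eqref{eq:prop-label-cond-coverage-lower-worst-case} and~\eqref{eq:prop-label-cond-coverage-upper-worst-case}.

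There is no serious obstacle here; the argument is short once the row-sum identity $V\mathbf{1}=\mathbf{1}$ is recognized, which is the one mildly non-obvious ingredient. The only point requiring a little care is the logical dependence: Corollary~\ref{thm:coverage-lab-cond-worst-case} should be stated as a consequence of Theorem~\ref{thm:coverage-lab-cond} and Proposition~\ref{prop:indep-linear} rather than re-deriving the exchangeability/coverage machinery, and one must make sure the conditioning on $\mathcal{D}^{\text{train}}$ implicit in the definitions~\eqref{eq:cdf-scores} of $\tF_l^k$ is handled consistently with the expectation in Theorem~\ref{thm:coverage-lab-cond} — but this is exactly the bookkeeping already carried out in the proof of that theorem, so it can be cited rather than repeated.
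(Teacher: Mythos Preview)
Your proposal is correct and follows essentially the same route as the paper's proof: rewrite $\Delta_k(t)$ via~\eqref{eq:delta-2} as $\sum_{l\neq k}V_{kl}(\tF_l^k(t)-\tF_k^k(t))$ using the row-sum identity $V\mathbf{1}=\mathbf{1}$, bound each CDF difference by $1$ in absolute value, and invoke Theorem~\ref{thm:coverage-lab-cond}. The only difference is that you spell out the justification for $V\mathbf{1}=\mathbf{1}$ explicitly, which the paper treats as already established (it appears in the proof of Lemma~\ref{lemma:dkw-delta-expected}).
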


The upper and lower coverage bounds established by Corollary~\ref{thm:coverage-lab-cond-worst-case} are independent of the conformity scores employed by Algorithm~\ref{alg:standard-lab-cond}, consistently with a worst-case type of analysis.
However, a practical drawback of this generality is the presence of a constant term in the gap between the upper \eqref{eq:prop-label-cond-coverage-upper-worst-case} and lower \eqref{eq:prop-label-cond-coverage-lower-worst-case} bounds. Therefore, this theoretical worst-case gap cannot vanish asymptotically as the sample size increases:
\begin{align}
\begin{split}
\text{Worst-Case (WC) Gap} 
  & = \text{WC-UB} - \text{WC-LB} = \frac{1}{n_k+1} + 2\sum_{l \neq k} |V_{kl}|.
\end{split}
\end{align}
This is why worst-case coverage bounds such as those provided by Corollary~\ref{thm:coverage-lab-cond-worst-case} are not as practically relevant for our objectives as the novel methodologies developed in Section~\ref{sec:methods}.
Indeed, our methods are carefully designed to adaptively learn the unknown distributions of the conformity scores from the existing data.
This allows us to obtain more informative prediction sets with asymptotically tight coverage guarantees, under the assumption that the label contamination process is random as opposed to worst-case.
For example, Theorems~\ref{thm:algorithm-correction} and~\ref{thm:algorithm-correction-upper} provide coverage lower and upper bounds equal to $1-\alpha$ and $1 - \alpha + \varphi_k(n_k,n_{*})$, respectively, for the prediction sets output by our Algorithm~\ref{alg:correction}, where $\varphi_k(n_k,n_{*})$ is a known sequence converging to zero  as $n_k \to \infty$ and $n_{*} \to \infty$.

It is important to highlight that the presence of a constant theoretical gap impacts not only the practical utility of our Corollary~\ref{thm:coverage-lab-cond-worst-case} but also that of the more nuanced theoretical worst-case bounds previously derived by \cite{barber2022conformal}.
This observation sets the stage for a detailed comparison of these worst-case bounds, which we present in the following subsection along with more detailed evidence of their practical limitations in our context.
For clarity but without much loss of generality, the comparison will be undertaken under the simpler label contamination model previously outlined in Section~\ref{app:RR-model}.

\subsection{Theoretical bounds under a randomized-response model} \label{app:worst-case-RR}

Assume for simplicity that the label contamination process follows the randomized response model defined in Section~\ref{app:simplified-model}, with noise parameter $\epsilon \in [0,1)$, and uniform label frequencies $\rho_k = 1/K = \tilde{\rho}_k$ for all $k \in [K]$.
Then, it is not difficult to see that the worst-case coverage lower and upper bounds from Corollary~\ref{thm:coverage-lab-cond-worst-case} become:
\begin{align} \label{eq:worst-case-ours}
\begin{split}
  \P{Y_{n+1} \in \hat{C}(X_{n+1}) \mid Y_{n+1} = k} & \geq 1 - \alpha - \frac{\epsilon}{1-\epsilon} \left( 1 - \frac{1}{K} \right), \\
  \P{Y_{n+1} \in \hat{C}(X_{n+1}) \mid Y_{n+1} = k} & \leq 1 - \alpha + \frac{1}{n_k+1} + \frac{\epsilon}{1-\epsilon} \left( 1 - \frac{1}{K} \right).
\end{split}
\end{align}

Let us now compare these theoretical bounds with those found by \cite{barber2022conformal}. 
We refer in particular to Appendix C of \cite{barber2022conformal}, which studies the behavior of standard conformal predictions under a {\em Huber contamination} model.
Recall that a $\epsilon'$-Huber contamination model is a standard model for describing situations in which a fraction $\epsilon' \in [0,1)$ of data points are contaminated by outliers from an unknown distribution. Thus, our randomized response model with noise parameter $\epsilon$ intuitively corresponds to a special case of a Huber contamination model with parameter $\epsilon' = \epsilon (1-1/K)$, because the latter is the expected proportion of contaminated labels $\tilde{Y}$ that differ from the true $Y$.
Then, Theorems 2 and 3 in \cite{barber2022conformal} provide the following worst-case coverage bounds:
\begin{align} \label{eq:worst-case-barber-add}
\begin{split}
  \P{Y_{n+1} \in \hat{C}(X_{n+1}) \mid Y_{n+1} = k} & \geq 1 - \alpha - \epsilon \left( 1 - \frac{1}{K} \right), \\
  \P{Y_{n+1} \in \hat{C}(X_{n+1}) \mid Y_{n+1} = k} & \leq 1 - \alpha + \frac{1}{n_k+1} + \epsilon \left( 1 - \frac{1}{K} \right).
\end{split}
\end{align}
These theoretical bounds are similar to ours but tend to be relatively tighter, especially if the label noise parameter is large, because $\epsilon/(1-\epsilon) > \epsilon$ for all $\epsilon>0$.
Further, Theorem 6 in \cite{barber2022conformal} also provides an alternative multiplicative lower bound for the worst-case coverage that can be even tighter than that in \eqref{eq:worst-case-barber-add} if the significance level $\alpha$ is small:
\begin{align} \label{eq:worst-case-barber-mult}
  \P{Y_{n+1} \in \hat{C}(X_{n+1}) \mid Y_{n+1} = k} & \geq 1 - \frac{\alpha}{1- \epsilon \left( 1 - \frac{1}{K} \right)} 
                                                      \approx 1 - \alpha - \alpha \epsilon \left( 1 - \frac{1}{K} \right),
\end{align}
where the approximation above holds for small values of $\epsilon$.

In summary, our comparison of \eqref{eq:worst-case-ours} with \eqref{eq:worst-case-barber-add} and \eqref{eq:worst-case-barber-mult} indicates that a byproduct of Theorem~\ref{thm:coverage-lab-cond} is a worst-case coverage analysis of standard conformal prediction methods in scenarios with label contamination. 
This analysis aligns with the more general findings of \cite{barber2022conformal}, though certainly with less elegance and precision. 
It is however worth repeating that the intended function of Theorem~\ref{thm:coverage-lab-cond} is not to conduct a worst-case theoretical analysis of standard conformal prediction methods.
On the contrary, our objective is to develop a new method for adaptively correcting the potential over-coverage or under-coverage of standard conformal predictions, thereby generating more insightful prediction sets. 
Theorem~\ref{thm:coverage-lab-cond}, as shown in this paper, paves the path towards effectively accomplishing this goal. 

\begin{figure}[!htb]
\centering 
\includegraphics[width=0.9\linewidth]{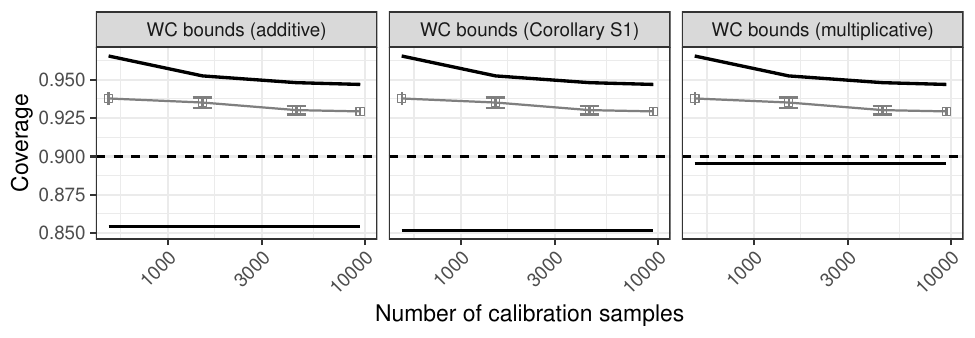}
\caption{Visualization of theoretical worst-case bounds for the coverage achieved by standard conformal prediction sets, calibrated for label-conditional coverage, on the CIFAR-10H data with noisy human-assigned labels. The thick black lines correspond to the theoretical bounds, whereas the light gray line denotes the average empirical coverage achieved by the standard conformal prediction approach in our experiments. The dashed line indicates the nominal 90\% coverage level.}
\label{fig:exp-cifar10-lab-cond-theory_bands}
\end{figure}

In contrast, theoretical worst-case analyses such as those of \cite{barber2022conformal} are not useful for our aims because they offer no actionable guidance on how to adjust the standard conformal prediction sets. 
The fixed and considerable gap between their lower and upper bounds prevents us from understanding how to correct the standard conformal prediction sets in the presence of  random label contamination. 
As an example, Figure~\ref{fig:exp-cifar10-lab-cond-theory_bands} plots different types of theoretical worst-case upper and lower bounds for the coverage achieved by standard conformal prediction sets, calibrated for label-conditional coverage, on the CIFAR-10H data set studied in Section~\ref{sec:empirical-cifar}.
See also Figure~\ref{fig:exp-cifar10-marginal-theory_bands} for similar results in the context of calibration for marginal instead of label-conditional coverage (see also Section~\ref{app:extensions-theory}).
Overall, these results demonstrate that the theoretical bounds may be technically valid but do not provide actionable insight, particularly because they do not even tell us whether the standard conformal prediction sets are too wide or too narrow in the presence of label contamination!

\begin{figure}[!htb]
\centering 
\includegraphics[width=0.9\linewidth]{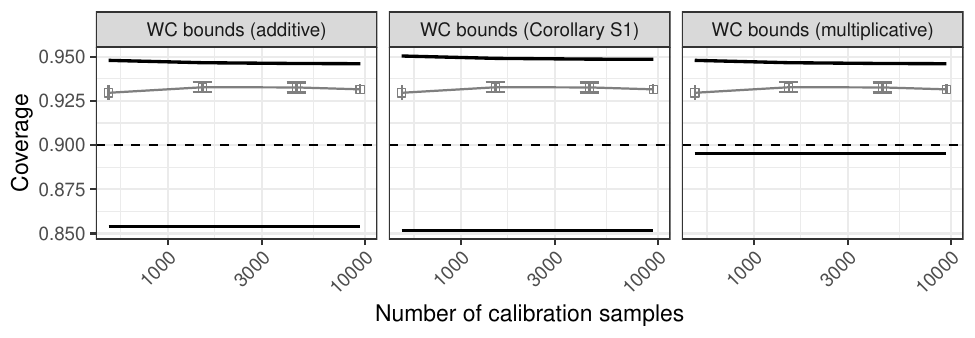}
\caption{Performance of theoretical worst-case bounds for the coverage achieved by standard conformal prediction sets, calibrated for marginal coverage, on the CIFAR-10H data. Other details are as in Figure~\ref{fig:exp-cifar10-lab-cond-theory_bands}.}
\label{fig:exp-cifar10-marginal-theory_bands}
\end{figure}

While the current worst-case analysis suggests that the sole principled approach in the context of label contamination may be to enlarge the standard conformal prediction sets, given that the theoretical lower bounds fall below the nominal $1-\alpha$ level, that is not a satisfactory solution.
It lacks adaptability to specific characteristics of the data and classifier, always resulting in prediction sets that are more conservative than the standard ones---which this paper shows already tend to be overly pessimistic.
This issue is exemplified in Figures~\ref{fig:exp-cifar10-lab-cond-theory} and~\ref{fig:exp-cifar10-marginal-theory}, which compare the performance of our adaptive method to that of a naive alternative approach inspired by the worst-case theoretical analysis.
The latter simply consists of applying standard conformal prediction methods at a modified significance level $\alpha'$ defined in such a way that the tightest available (multiplicative) coverage lower bound is exactly equal to the nominal $1-\alpha$ coverage level. 

As expected, the naive theoretical benchmark in a certain sense worsens rather than mitigates the problem considered in this paper. In fact, it leads to even less informative prediction sets compared to the standard approach ignoring the presence of label contamination.
By contrast, our adaptive method needs no worst-case analysis; instead, leveraging some knowledge of the random label contamination process, it is able to learn from the available data how to correct the standard conformal prediction sets. Thus, our method produces more informative prediction sets that achieve the desired 90\% coverage tightly.

\begin{figure}[!htb]
\centering 
\includegraphics[width=0.95\linewidth]{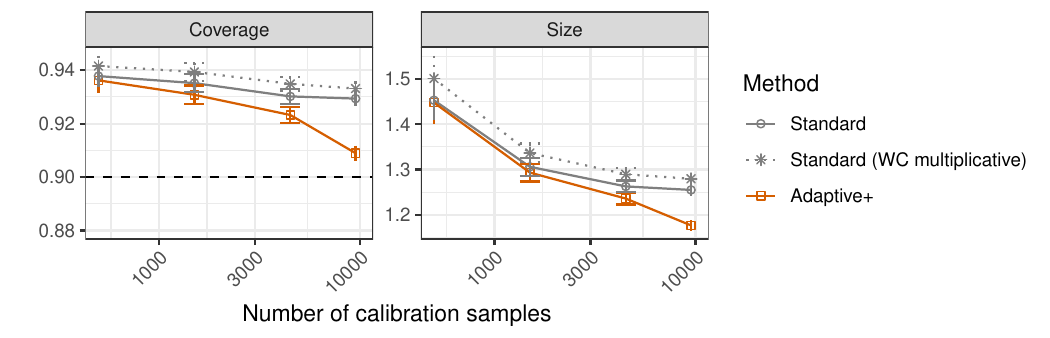}
\caption{Performance of the proposed adaptive conformal method on CIFAR-10H image data with noisy human-assigned labels, compared to two benchmarks. 
All methods are calibrated to seek 90\% label-conditional coverage. The first benchmark is the standard conformal prediction method ignoring the presence of label contamination. The second benchmark is the standard method applied at a more conservative level, chosen such that the best available worst-case theoretical lower coverage bound matches the desired coverage level $1-\alpha$.
Other details are as in Figure~\ref{fig:exp-cifar10-lab-cond}.}
\label{fig:exp-cifar10-lab-cond-theory}
\end{figure}

\begin{figure}[!htb]
\centering 
\includegraphics[width=0.95\linewidth]{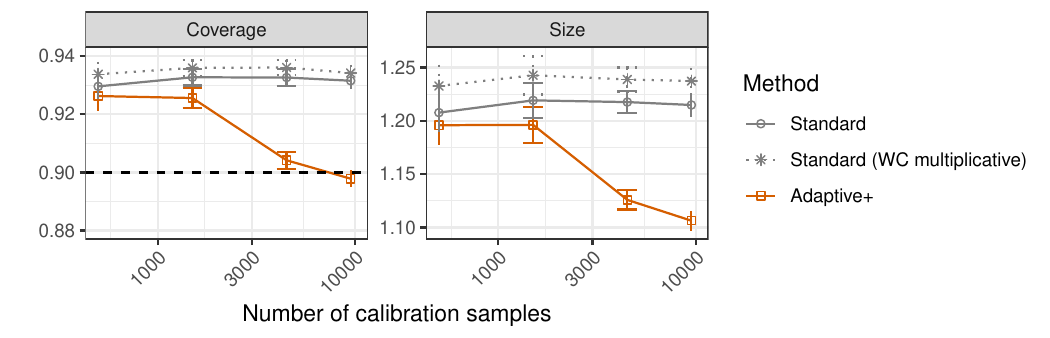}
\caption{Performance of the proposed adaptive conformal method on CIFAR-10H image data with noisy human-assigned labels, compared to two benchmarks. 
All methods are calibrated to seek 90\% marginal coverage.
Other details are as in Figure~\ref{fig:exp-cifar10-marginal} and Figure~\ref{fig:exp-cifar10-lab-cond-theory}.}
\label{fig:exp-cifar10-marginal-theory}
\end{figure}

\clearpage

\section{Extensions of preliminary theoretical results} \label{app:extensions-theory}

\subsection{General marginal coverage bounds under label contamination}

A result similar to Theorem~\ref{thm:coverage-lab-cond}, from Section~\ref{sec:general-coverage-bounds}, can be reached about the marginal coverage of the prediction sets produced by Algorithm~\ref{alg:standard-marg}.
In this case, a useful quantity to define, for any $t \in [0,1]$, is the {\em marginal coverage inflation factor},
 \begin{align} \label{eq:delta-marg}
    \Delta(t) & := \sum_{k=1}^{K} \left[ \rho_k F_k^{k}(t) - \tilde{\rho}_k \tF_k^{k}(t)\right] = F(t) - \tF(t),
  \end{align}
where $\rho_k := \mathbb{P}[Y = k]$ and $\tilde{\rho}_k = \mathbb{P}[\tilde{Y} = k]$ are the expected proportions of clean and corrupted labels equal to $k$, respectively. Above, $F(t)$ and $\tilde{F}(t)$ are the marginal cumulative distribution functions of $\hat{s}(X,Y)$ and $\hat{s}(X,\tilde{Y})$, respectively, given $\mathcal{D}$; i.e.,
 \begin{align*}
   & F(t) := \P{\hat{s}(X,Y) \leq t \mid \mathcal{D}^{\text{train}}},
   & \tilde{F}(t) := \P{\hat{s}(X,\tilde{Y}) \leq t \mid \mathcal{D}^{\text{train}}}.
 \end{align*}

\begin{theorem}\label{thm:coverage-marginal}
Suppose $(X_i,Y_i,\tilde{Y}_i)$ are i.i.d.~for all $i \in [n+1]$.
Fix any prediction function $\mathcal{C}$ satisfying Definition~\ref{def:pred-function}, and let $\hat{C}(X_{n+1})$ indicate the prediction set output by Algorithm~\ref{alg:standard-marg} applied using the corrupted labels $\tilde{Y}_i$ instead of the clean labels $Y_i$, for all $i \in [n]$.
Then,
\begin{align} \label{eq:prop-marg-coverage-lower}
  \P{Y_{n+1} \in \hat{C}(X_{n+1})} \geq 1 - \alpha + \EV{\Delta(\htau)}.
\end{align}
Further, if the scores $\hat{s}(X_i,\tilde{Y}_i)$ used by Algorithm~\ref{alg:standard-marg} are almost-surely distinct,
\begin{align} \label{eq:prop-marg-coverage-upper}
  \P{Y_{n+1} \in \hat{C}(X_{n+1})} \leq 1 - \alpha + \frac{1}{n_{\mathrm{cal}}+1} + \EV{\Delta(\htau)}.
\end{align}
\end{theorem}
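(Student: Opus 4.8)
\textbf{Proof plan for Theorem~\ref{thm:coverage-marginal} (marginal coverage under contamination).}

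The plan is to mirror, at the marginal level, the argument that underlies Theorem~\ref{thm:coverage-lab-cond}, replacing the label-conditional score CDFs $F_k^k, \tilde{F}_k^k$ with their $\rho$- and $\tilde\rho$-weighted mixtures $F, \tilde{F}$ from~\eqref{eq:delta-marg}. Concretely, I would first condition on $\mathcal{D}^{\text{train}}$ throughout, so that $\hat{s}$ is a fixed function and all the CDFs in~\eqref{eq:delta-marg} are deterministic; the final bounds follow by taking an outer expectation over $\mathcal{D}^{\text{train}}$ (and more generally over all of $\mathcal{D}$, which is how the stated $\EV{\Delta(\htau)}$ arises). Fix $\htau = (\htau_0,\dots,\htau_0)$, the single calibrated threshold produced by Algorithm~\ref{alg:standard-marg} from the held-out scores $\{\hat{s}(X_i,\tilde{Y}_i)\}_{i\in\mathcal{D}^{\mathrm{cal}}}$. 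The key observation is the exchangeability of the \emph{contaminated} calibration scores $\hat{s}(X_i,\tilde{Y}_i)$, $i\in\mathcal{D}^{\mathrm{cal}}$, together with the would-be test score $\hat{s}(X_{n+1},\tilde{Y}_{n+1})$: these $n_{\mathrm{cal}}+1$ random variables are i.i.d.\ (given $\mathcal{D}^{\text{train}}$), so the standard conformal argument gives $1-\alpha \le \P{\hat{s}(X_{n+1},\tilde{Y}_{n+1}) \le \htau_0 \mid \mathcal{D}^{\text{train}}} \le 1-\alpha + \tfrac{1}{n_{\mathrm{cal}}+1}$, the upper bound requiring the almost-sure distinctness hypothesis. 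This is exactly the event $\tilde{Y}_{n+1}\in\hat{C}(X_{n+1})$ by the definition of the conformity score~\eqref{eq:conf-scores} and the monotonicity of $\mathcal{C}$ in Definition~\ref{def:pred-function}, so we have sharp two-sided control of $\P{\tilde{F}(\htau_0)\text{-type event}}$.

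The second and central step is to convert coverage of the \emph{contaminated} label $\tilde{Y}_{n+1}$ into coverage of the \emph{clean} label $Y_{n+1}$. Using Definition~\ref{def:pred-function} again, the event $\{Y_{n+1}\in\hat{C}(X_{n+1})\}$ coincides with $\{\hat{s}(X_{n+1},Y_{n+1})\le\htau_0\}$, since membership of label $k$ depends on $\tau$ only through $\tau_k=\htau_0$. Now I would compute both coverage probabilities by conditioning on $\mathcal{D}$ (which fixes $\htau_0$) and averaging over the independent test point: by the law of total probability over $Y_{n+1}=k$ versus $\tilde{Y}_{n+1}=k$,
\begin{align*}
  \P{Y_{n+1}\in\hat{C}(X_{n+1}) \mid \mathcal{D}} &= \sum_{k=1}^K \rho_k\, F_k^k(\htau_0) = F(\htau_0),\\
  \P{\tilde{Y}_{n+1}\in\hat{C}(X_{n+1}) \mid \mathcal{D}} &= \sum_{k=1}^K \tilde\rho_k\, \tilde{F}_k^k(\htau_0) = \tilde{F}(\htau_0),
\end{align*}
so the difference is precisely $\Delta(\htau_0)=\Delta(\htau)$ from~\eqref{eq:delta-marg}. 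Combining with the conformal sandwich from the previous paragraph, namely $1-\alpha \le \tilde{F}(\htau_0) \le 1-\alpha+\tfrac{1}{n_{\mathrm{cal}}+1}$ after taking the appropriate expectation, and then adding $\Delta(\htau)$ to pass from $\tilde{F}$ to $F$, yields~\eqref{eq:prop-marg-coverage-lower} and~\eqref{eq:prop-marg-coverage-upper} after taking the outer expectation over $\mathcal{D}$; as in the remark following Theorem~\ref{thm:coverage-lab-cond}, one can also keep $\mathcal{D}$ fixed and replace $\EV{\Delta(\htau)}$ by $\Delta(\htau)$.

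The main obstacle I anticipate is being careful about what is conditioned on at each stage: the conformal exchangeability argument naturally conditions only on $\mathcal{D}^{\text{train}}$ and treats the calibration set as random, whereas the clean-versus-contaminated conversion is cleanest when $\htau_0$ is frozen, i.e.\ conditioning on all of $\mathcal{D}$. Reconciling these requires noting that $\hat{s}(X_{n+1},Y_{n+1})$ and $\hat{s}(X_{n+1},\tilde{Y}_{n+1})$ are driven by the \emph{same} test point $(X_{n+1},Y_{n+1},\tilde{Y}_{n+1})$ which is independent of $\mathcal{D}$, so that conditioning on $\mathcal{D}$ leaves the test-point law intact and both conditional coverage probabilities are genuine functions of $\htau_0$ alone; then the tower property delivers the stated expectations. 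A secondary technical point is that the upper bound genuinely needs the almost-sure distinctness of the calibration scores $\hat{s}(X_i,\tilde{Y}_i)$ (so that no ties inflate the empirical quantile), exactly as in Proposition~\ref{prop:standard-coverage-marginal}; everything else is a routine transcription of the label-conditional proof of Theorem~\ref{thm:coverage-lab-cond} with mixture CDFs in place of the per-class ones.
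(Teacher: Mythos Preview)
Your proposal is correct and follows essentially the same approach as the paper's proof: decompose $\P{Y_{n+1}\in\hat{C}(X_{n+1})}$ as $\P{\tilde{Y}_{n+1}\in\hat{C}(X_{n+1})}$ plus a correction term, identify the correction as $\EV{\Delta(\htau)}$ by expanding over the label and conditioning on $\mathcal{D}$, and then invoke Proposition~\ref{prop:standard-coverage-marginal} for the contaminated-label coverage. Your more explicit discussion of the conditioning layers (on $\mathcal{D}^{\text{train}}$ for the conformal sandwich, on all of $\mathcal{D}$ for the clean-vs-contaminated conversion) is a useful clarification but does not alter the substance of the argument.
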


\subsection{Coverage lower bounds under a linear contamination model}

A similar stochastic dominance condition as in Corollary~\ref{cor:coverage-cond} also implies that the prediction sets output by Algorithm~\ref{alg:standard-marg} are conservative in the marginal coverage sense of~\eqref{eq:def-marg-coverage}.

\begin{corollary} \label{cor:coverage-marg}
Consider the same setting of Theorem~\ref{thm:coverage-marginal} and assume Assumption~\ref{assumption:linear-contam} holds.
Suppose also that the cumulative distribution functions of the scores \eqref{eq:cdf-scores} satisfy
    \begin{equation} \label{eq:assump_scores-cond-marg}
        \max_{l \neq k} F_k^{l}(t)  \leq F_k^{k}(t),
      \end{equation}
      for all $t\in\mathbb{R}$ and $k\in[K]$.
Then, $\Delta(\hat{\tau}) \geq 0$ almost-surely, and hence the predictions $\hat{C}(X_{n+1})$ of Algorithm~\ref{alg:standard-marg} satisfy~\eqref{eq:def-marg-coverage}.
\end{corollary}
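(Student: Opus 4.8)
\textbf{Proof proposal for Corollary~\ref{cor:coverage-marg}.}
The plan is to mirror the argument used for the label-conditional case in Corollary~\ref{cor:coverage-cond}, but now working with the marginal coverage inflation factor $\Delta(t)$ defined in~\eqref{eq:delta-marg} instead of the label-conditional $\Delta_k(t)$. The key is to show that the stochastic dominance condition~\eqref{eq:assump_scores-cond-marg} forces $\Delta(t) \geq 0$ for every $t$, after which the conclusion follows immediately from the lower bound~\eqref{eq:prop-marg-coverage-lower} in Theorem~\ref{thm:coverage-marginal}, since $\EV{\Delta(\htau)} \geq 0$ then gives $\P{Y_{n+1}\in\hat C(X_{n+1})}\geq 1-\alpha$.

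First I would rewrite $\tF_k^k(t)$ in terms of the clean-label CDFs using Assumption~\ref{assumption:linear-contam}. By Proposition~\ref{prop:indep-linear}, $\tP_k = \sum_{l=1}^K M_{kl}P_l$ with $M_{kl}=\P{Y=l\mid\tilde Y=k}$, so integrating the indicator $\I{\hat s(X,k)\leq t}$ against $\tP_k$ gives
\begin{align*}
  \tF_k^k(t) = \sum_{l=1}^K M_{kl}\, F_l^k(t).
\end{align*}
Hence $\tilde\rho_k \tF_k^k(t) = \sum_{l=1}^K \tilde\rho_k M_{kl} F_l^k(t) = \sum_{l=1}^K \P{Y=l,\tilde Y=k}\, F_l^k(t)$. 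Summing over $k$ and using $\sum_k \P{Y=l,\tilde Y=k} = \rho_l$, the total mass of the $\tF$-term in~\eqref{eq:delta-marg} is $\sum_{k=1}^K\tilde\rho_k\tF_k^k(t) = \sum_{l=1}^K\sum_{k=1}^K \P{Y=l,\tilde Y=k} F_l^k(t)$, which I would then compare termwise against $\sum_{k=1}^K\rho_k F_k^k(t) = \sum_{l=1}^K\sum_{k=1}^K \P{Y=l,\tilde Y=k} F_k^k(t)$ (using again $\sum_k\P{Y=l,\tilde Y=k}=\rho_l$ to split the diagonal term). Subtracting,
\begin{align*}
  \Delta(t) = \sum_{l=1}^K\sum_{k=1}^K \P{Y=l,\tilde Y=k}\left( F_k^k(t) - F_l^k(t) \right),
\end{align*}
and every summand is nonnegative because $\P{Y=l,\tilde Y=k}\geq 0$ and, by the stochastic dominance assumption~\eqref{eq:assump_scores-cond-marg} (with the roles of the labels matched appropriately), $F_l^k(t)\leq F_k^k(t)$ for all $l$. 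Therefore $\Delta(t)\geq 0$ for all $t\in\mathbb{R}$, in particular $\Delta(\htau)\geq 0$ almost surely.

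The main obstacle I anticipate is purely bookkeeping: making sure the index $k$ (the label whose score is being evaluated, i.e.\ $\hat s(X,k)$) and the index $l$ (the conditioning label) are paired correctly so that the hypothesis~\eqref{eq:assump_scores-cond-marg} — which reads $\max_{l\neq k}F_k^l(t)\leq F_k^k(t)$, note the superscript/subscript placement differs from~\eqref{eq:assump_scores-cond} — is applied to exactly the right pairs. In the label-conditional Corollary~\ref{cor:coverage-cond} the relevant comparison was $F_l^k(t)\leq F_k^k(t)$ (dominance over the conditioning label for a fixed scored label $k$), whereas the marginal version~\eqref{eq:assump_scores-cond-marg} compares $F_k^l(t)\leq F_k^k(t)$ (dominance over the scored label for a fixed conditioning label $k$); I would double-check which decomposition of $\Delta(t)$ — grouping by scored label or by conditioning label — produces exactly the comparison guaranteed by the stated assumption, and organize the double sum accordingly. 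Once $\Delta(t)\geq 0$ is established, the final step is a one-line appeal to~\eqref{eq:prop-marg-coverage-lower}.
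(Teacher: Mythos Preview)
Your overall strategy matches the paper's: expand $\tilde\rho_k\tF_k^k(t)$ via Proposition~\ref{prop:indep-linear}, show $\Delta(t)\geq 0$ for every $t$, and conclude via~\eqref{eq:prop-marg-coverage-lower}. However, the displayed identity for the first term is wrong. You claim
\[
\sum_{k}\rho_k F_k^k(t)=\sum_{l}\sum_{k}\P{Y=l,\tilde Y=k}\,F_k^k(t),
\]
but summing the right-hand side over $l$ first (since $F_k^k$ does not depend on $l$) gives $\sum_k\tilde\rho_k F_k^k(t)$, not $\sum_k\rho_k F_k^k(t)$. The fix is the very relabeling you flag in your ``obstacle'' paragraph: rename $k\to l$ in the clean sum and then use $\rho_l=\sum_k\P{Y=l,\tilde Y=k}$ to get $\sum_k\rho_k F_k^k(t)=\sum_l\sum_k\P{Y=l,\tilde Y=k}\,F_l^l(t)$. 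Subtracting now yields
\[
  \Delta(t)=\sum_{l}\sum_{k}\P{Y=l,\tilde Y=k}\bigl(F_l^l(t)-F_l^k(t)\bigr),
\]
and each summand is nonnegative by~\eqref{eq:assump_scores-cond-marg} after the relabeling $k\to l$ there. Note that your erroneous difference $F_k^k-F_l^k$ would instead require the \emph{other} dominance condition~\eqref{eq:assump_scores-cond}, not the one in this corollary---so the index bookkeeping you worried about is exactly where the argument went off. The paper takes the same route, phrasing it through the transition weights $W_{kl}=\P{\tilde Y=k\mid Y=l}$ and a $\max$-bound to reach $\Delta(t)\geq\sum_l\rho_l(1-W_{ll})\bigl[F_l^l(t)-\max_{k\neq l}F_l^k(t)\bigr]\geq 0$; your corrected double sum is an equivalent and slightly more direct version of the same computation.
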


Equation~\eqref{eq:assump_scores-cond-marg} is similar to~\eqref{eq:assump_scores-cond}, although the two conditions are not exactly equivalent. Intuitively, \eqref{eq:assump_scores-cond-marg} states that the scores $\hat{s}(X,k)$ assigned by the machine learning model tend to be smaller than any other scores $\hat{s}(X,l)$ for $l \neq k$ among data points with true label $Y=k$. In other words, this could be interpreted as saying that the correct label is the most likely point prediction of the machine learning model, for each possible class $k \in [K]$.

\clearpage

\section{Methodology extensions} \label{app:extensions}

\subsection{Adaptive prediction sets with marginal coverage} \label{app:methods-marginal}

While this paper has so far focused on achieving tight label-conditional coverage~\eqref{eq:def-lab-cond-coverage}, the proposed methods can be adapted to alternatively control the weaker notion of marginal coverage~\eqref{eq:def-marg-coverage}.
Concretely, we present here Algorithm~\ref{alg:correction-marg}, which extends for that purpose Algorithm~\ref{alg:correction} from Section~\ref{sec:method-known-noise}.
It easy to see that the analogous extensions of the methods described in Sections~\ref{sec:method-bounded-noise}--\ref{sec:method-noise-estim} would also follow similarly.

Algorithm~\ref{alg:correction-marg} differs from Algorithm~\ref{alg:correction} in that it calculates a single threshold
\begin{align} \label{eq:Ik-set-tauk-marg}
  \hat{\tau}^{\mathrm{marg}}
  & = \begin{cases}
    S_{(\hat{i}^{\mathrm{marg}})} \text{ where } \hat{i}^{\mathrm{marg}} = \min\{i \in \hat{\mathcal{I}}^{\mathrm{marg}}\}, & \text{if } \hat{\mathcal{I}}^{\mathrm{marg}} \neq \emptyset, \\
    1, & \text{if } \hat{\mathcal{I}}^{\mathrm{marg}} = \emptyset,
  \end{cases}
\end{align}
where the set $\hat{\mathcal{I}}^{\mathrm{marg}}$ is defined as
\begin{align} \label{eq:Ik-set-marg}
  \hat{\mathcal{I}}^{\mathrm{marg}} := \left\{i \in [n_{\mathrm{cal}}] : \frac{i}{n_{\mathrm{cal}}} \geq 1 - \alpha - \hDelta(S_{(i)}) + \delta^{\mathrm{marg}}(n_{\mathrm{cal}},n_{*})  \right\},
\end{align}
for an empirical estimate $\hat{\Delta}(t)$ of the factor $\Delta(t)$ in~\eqref{eq:delta-marg} given by
\begin{align} \label{eq:delta-hat-marg}
  \hat{\Delta}(t)
  & := \sum_{k=1}^{K} \left[ (\rho_kV_{kk}-\tilde{\rho}_k) \hat{F}_k^{k}(t) + \rho_k \sum_{l \neq k} V_{kl} \hat{F}_l^{k}(t) \right],
\end{align}
where, for any $k \in [K]$,
\begin{align}
  \rho_k = \sum_{l=1}^{K} M_{lk} \tilde{\rho}_l,
\end{align}
and a finite-sample correction term taking the form
\begin{align} \label{eq:delta-constant-marg}
\begin{split}
  \delta^{\mathrm{marg}}(n_{\mathrm{cal}},n_{*})
  & := c(n_{\mathrm{cal}}) + \frac{2 \max_{k \in [K]} \sum_{l\neq k} |V_{kl}| + \sum_{k=1}^{K} | \rho_k - \tilde{\rho}_k  |}{\sqrt{n_{*}}} \\
  & \qquad \cdot \min \left\{ K^2 \sqrt{\frac{\pi}{2}} , \frac{1}{\sqrt{n_{*}}} + \sqrt{\frac{\log(2K^2) + \log(n_{*})}{2}} \right\}.
\end{split}
\end{align}

It is worth pointing out that, unlike our adaptive methods for label-conditional coverage presented in Section~\ref{sec:methods}, Algorithm~\ref{alg:correction-marg} requires knowledge of the expected class frequencies for the contaminated labels; i.e., $\tilde{\rho}_k$ for all $k \in [K]$.
This additional assumption could be relaxed by replacing $\tilde{\rho}_k$ with an empirical estimate obtained from the available contaminated data, but for simplicity we choose to treat $\tilde{\rho}_k$ as known. Fortunately, this does not prevent our solution from being practical because $\tilde{\rho}_k$ is easy to estimate.

\begin{algorithm}[!htb]
\DontPrintSemicolon

\KwIn{Data set $\{(X_i, \tilde{Y}_i)\}_{i=1}^{n}$ with corrupted labels $\tilde{Y}_i \in [K]$.}
\myinput{The inverse $V$ of the matrix $M$ in \eqref{eq:contam_model}.}
\myinput{The expected contaminated label frequencies $\tilde{\rho}_k$, for all $k \in [K]$.}
\myinput{Unlabeled test point with features $X_{n+1}$.}
\myinput{Machine learning algorithm $\mathcal{A}$ for training a $K$-class classifier.}
\myinput{Prediction function $\mathcal{C}$ satisfying Definition~\ref{def:pred-function}; e.g., \eqref{eq:pred-function-hps}.}
\myinput{Desired coverage level $1-\alpha \in (0,1)$.}

Randomly split $[n]$ into two disjoint subsets, $\mathcal{D}^{\text{train}}$ and $\mathcal{D}^{\mathrm{cal}}$.\;
Train the classifier $\mathcal{A}$ on the data in $\mathcal{D}^{\text{train}}$. \;
Compute conformity scores $\hat{s}(X_i,k)$ using \eqref{eq:conf-scores} for all $i \in \mathcal{D}^{\mathrm{cal}}$ and all $k \in [K]$.\;
Define the empirical CDF $\hat{F}_l^{k}$ of $\{\hat{s}(X_i,k) : i \in \mathcal{D}_l^{\mathrm{cal}} \}$ for all $k,l \in [K]$,  as in \eqref{eq:F-hat}. \;

  Sort $\{\hat{s}(X_i,Y_i) : i \in \mathcal{D}^{\mathrm{cal}} \}$ into $(S_{(1)}, S_{(2)}, \dots, S_{(n_{\mathrm{cal})}})$, in ascending order. \;
  Compute $\hDelta(S_{(i)})$ for all $i \in [n_{\mathrm{cal}}]$, as in \eqref{eq:delta-hat-marg}. \;
  Compute $\delta^{\mathrm{marg}}(n_{\mathrm{cal}},n_{*})$ using \eqref{eq:delta-constant-marg}, based on a Monte Carlo estimate of $c(n_{\mathrm{cal}})$.\;
  Construct the set $\hat{\mathcal{I}}^{\mathrm{marg}} \subseteq [K]$ as in \eqref{eq:Ik-set-marg}.\;
  Evaluate $\hat{\tau}^{\mathrm{marg}}$ based on $\hat{\mathcal{I}}^{\mathrm{marg}}$ as in \eqref{eq:Ik-set-tauk-marg}.\;
Evaluate $\hat{C}^{\mathrm{marg}}(X_{n+1}) = \mathcal{C}(X_{n+1}, \hat{\tau}^{\mathrm{marg}}; \hat{\pi})$.

\nonl
\textbf{Output: } Conformal prediction set $\hat{C}^{\mathrm{marg}}(X_{n+1})$ for $Y_{n+1}$.

\caption{Contamination-adaptive classification with marginal coverage}
\label{alg:correction-marg}
\end{algorithm}

Below, Theorem~\ref{thm:algorithm-correction-marg} establishes that the marginal coverage of the prediction sets output by Algorithm~\ref{alg:correction-marg} is bounded from below by $1-\alpha$, as long as our method is applied based on the true model matrix $M$ in \eqref{eq:contam_model} and the correct contaminated label frequencies $\tilde{\rho}_k$ for all $k \in [K]$.

\begin{theorem} \label{thm:algorithm-correction-marg}
Suppose $(X_i,Y_i,\tilde{Y}_i)$ are i.i.d.~for all $i \in [n+1]$.
Assume the label contamination model in Assumption~\ref{assumption:linear-contam} holds.
Fix any prediction function $\mathcal{C}$ satisfying Definition~\ref{def:pred-function}, and let $\hat{C}^{\mathrm{marg}}(X_{n+1})$ indicate the prediction set output by Algorithm~\ref{alg:correction-marg} based on the inverse $V$ of the model matrix $M$ in the label contamination model~\eqref{eq:contam_model} and the true values of the contaminated label frequencies $\tilde{\rho}_k$ for all $k \in [K]$.
Then,
  \begin{align*}
    \P{Y_{n+1} \in \hat{C}^{\mathrm{marg}}(X_{n+1})} \geq 1 - \alpha.
  \end{align*}
\end{theorem}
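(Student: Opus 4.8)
The plan is to mirror the proof of Theorem~\ref{thm:algorithm-correction} (the label-conditional case), but with the marginal coverage inflation factor $\Delta(t)$ in~\eqref{eq:delta-marg} playing the role of $\Delta_k(t)$, a single pooled calibration set $\mathcal D^{\mathrm{cal}}$ of size $n_{\mathrm{cal}}$ replacing the per-label sets $\mathcal D_k^{\mathrm{cal}}$, and $\hat\Delta$ from~\eqref{eq:delta-hat-marg} as the plug-in estimator. First I would observe that, by Theorem~\ref{thm:coverage-marginal}, the coverage of the standard marginal conformal set at threshold $\hat\tau$ equals $1-\alpha_{\mathrm{eff}}+\EV{\Delta(\hat\tau)}$ for whatever effective level $\alpha_{\mathrm{eff}}$ the calibration rank induces; equivalently, conditioning on $\mathcal D^{\mathrm{train}}$, the key identity is that $\P{Y_{n+1}\in\mathcal C(X_{n+1},\hat\tau)\mid \mathcal D}\ge \tilde F(\hat\tau)+\Delta(\hat\tau)$ where $\tilde F(\hat\tau)$ is the empirical rank fraction of the chosen order statistic among the contaminated calibration scores (this is the same monotonicity/exchangeability argument used for Theorem~\ref{thm:coverage-lab-cond}, now applied marginally, i.e.\ treating $\hat s(X_i,\tilde Y_i)$ as an i.i.d.\ sample). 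So it suffices to show that the threshold $\hat\tau^{\mathrm{marg}}$ selected in~\eqref{eq:Ik-set-tauk-marg}--\eqref{eq:Ik-set-marg} satisfies $\tilde F(\hat\tau^{\mathrm{marg}})+\Delta(\hat\tau^{\mathrm{marg}})\ge 1-\alpha$ almost surely.

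The second step is to control the two approximation errors built into the construction. By definition of $\hat{\mathcal I}^{\mathrm{marg}}$, at the chosen index $\hat i^{\mathrm{marg}}$ we have $\hat i^{\mathrm{marg}}/n_{\mathrm{cal}}\ge 1-\alpha-\hat\Delta(S_{(\hat i^{\mathrm{marg}})})+\delta^{\mathrm{marg}}(n_{\mathrm{cal}},n_*)$. The empirical CDF of the contaminated scores evaluated at that order statistic is at least $\hat i^{\mathrm{marg}}/n_{\mathrm{cal}}$ (up to the usual $\pm 1/n_{\mathrm{cal}}$ bookkeeping handled as in Theorem~\ref{thm:algorithm-correction}), so I need
\[
  -\hat\Delta(\hat\tau^{\mathrm{marg}})+\delta^{\mathrm{marg}}(n_{\mathrm{cal}},n_*)+\Delta(\hat\tau^{\mathrm{marg}})\ \ge\ 0
\]
plus a term that absorbs the gap between $\tilde F$ (the population contaminated CDF) and its empirical counterpart, and between $\tilde F(S_{(\hat i^{\mathrm{marg}})})$ and $\hat i^{\mathrm{marg}}/n_{\mathrm{cal}}$. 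The first of these is exactly $c(n_{\mathrm{cal}})$ in~\eqref{eq:delta-constant-marg}, via the definition~\eqref{eq:define-cn} of $c$ as the expected one-sided DKW-type deviation (and a uniform-order-statistics coupling identical to the one in the proof of Theorem~\ref{thm:algorithm-correction}, noting here the ties are broken by the continuity of the score distribution). The remaining piece is $\sup_t |\Delta(t)-\hat\Delta(t)|$, which I bound by plugging in~\eqref{eq:delta-marg} and~\eqref{eq:delta-hat-marg}: $\Delta(t)-\hat\Delta(t)=\sum_k[(\rho_k V_{kk}-\tilde\rho_k)(\tilde F_k^k(t)-\hat F_k^k(t))+\rho_k\sum_{l\ne k}V_{kl}(\tilde F_l^k(t)-\hat F_l^k(t))]$, using Proposition~\ref{prop:indep-linear} and $\tilde P_k=\sum_l M_{kl}P_l$ to rewrite $\Delta$ in the $\tilde F$-basis exactly as~\eqref{eq:delta-hat-marg} does for $\hat\Delta$. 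A union bound over the $\le K^2$ empirical processes $\hat F_l^k$, each with at least $n_*$ samples, together with the DKW inequality (or its expectation form, as already invoked for $c(n_k)$), gives a uniform deviation bounded by exactly the coefficient $\bigl(2\max_k\sum_{l\ne k}|V_{kl}|+\sum_k|\rho_k-\tilde\rho_k|\bigr)/\sqrt{n_*}$ times the $\min\{\cdot,\cdot\}$ factor in~\eqref{eq:delta-constant-marg}; the $\rho_k-\tilde\rho_k$ term appears because the diagonal coefficient is $\rho_k V_{kk}-\tilde\rho_k$ and one should note $\sum_l \rho_l M_{lk}=\tilde\rho_k$-type relations to see $\rho_k V_{kk}-\tilde\rho_k=-\rho_k\sum_{l\ne k}V_{kl}\cdot(\text{something})$, or more directly just bound $|\rho_k V_{kk}-\tilde\rho_k|\le |\rho_k-\tilde\rho_k|+\rho_k|V_{kk}-1|$ and use $|V_{kk}-1|\le\sum_{l\ne k}|V_{kl}|$ from $V\mathbf 1=\mathbf 1$. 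Collecting these, $\delta^{\mathrm{marg}}$ is by design an upper bound for $c(n_{\mathrm{cal}})+\sup_t|\Delta(t)-\hat\Delta(t)|$, which closes the inequality and yields coverage $\ge 1-\alpha$; the empty-set case $\hat{\mathcal I}^{\mathrm{marg}}=\emptyset$ gives $\hat\tau^{\mathrm{marg}}=1$, hence full coverage trivially.

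The main obstacle I anticipate is bookkeeping the constants so that they land \emph{exactly} on the stated $\delta^{\mathrm{marg}}$ in~\eqref{eq:delta-constant-marg}: getting the right count ($K^2$ rather than $K$) of processes in the union bound, tracking the coefficients $2\max_k\sum_{l\ne k}|V_{kl}|$ versus $\sum_k|\rho_k-\tilde\rho_k|$ correctly (the factor $2$ comes from the two-sided uniformity of the VC-type bound applied to the off-diagonal blocks, while the diagonal contributes only $|\rho_k-\tilde\rho_k|$ after the $V\mathbf 1=\mathbf 1$ cancellation), and making sure the $\min\{K^2\sqrt{\pi/2},\,\cdot\}$ alternative is justified by the crude $\ell_\infty\le 1$ bound on CDFs in the regime where the DKW bound is vacuous. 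None of this is conceptually deep---it is the marginal analogue of what was already done for Theorem~\ref{thm:algorithm-correction}---but the constant-chasing is where an error would most plausibly creep in, so I would write that part out carefully rather than gesture at it.
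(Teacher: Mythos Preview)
Your proposal is essentially correct and follows the same route as the paper's proof: decompose the conditional miscoverage as $1-\tilde F(\hat\tau^{\mathrm{marg}})-\Delta(\hat\tau^{\mathrm{marg}})$, add and subtract $\hat F$ and $\hat\Delta$, use the defining inequality of $\hat{\mathcal I}^{\mathrm{marg}}$ to extract $\alpha$, and then kill the two residual terms with $c(n_{\mathrm{cal}})$ (via the uniform-order-statistics coupling) and a DKW/union bound over the $K^2$ empirical processes $\hat F_l^k$ (Lemma~\ref{lemma:dkw-delta-expected-marg} in the paper). Your constant-tracking via $|\rho_k V_{kk}-\tilde\rho_k|\le|\rho_k-\tilde\rho_k|+\rho_k|V_{kk}-1|$ together with $V_{kk}-1=-\sum_{l\ne k}V_{kl}$ lands on exactly the coefficient in~\eqref{eq:delta-constant-marg}; the paper does the same cancellation but groups the terms as $\sum_k\rho_k\sum_{l\ne k}V_{kl}\{(\hat F_l^k-\tilde F_l^k)-(\hat F_k^k-\tilde F_k^k)\}+\sum_k(\rho_k-\tilde\rho_k)(\hat F_k^k-\tilde F_k^k)$, which makes the factor~$2$ transparent (it comes from the two summands inside the braces, not from ``two-sided uniformity'').

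One small correction: the closing inequality does \emph{not} hold almost surely as you write, only in expectation. The quantity $\delta^{\mathrm{marg}}$ upper-bounds $c(n_{\mathrm{cal}})+\mathbb E[\sup_t|\hat\Delta(t)-\Delta(t)|]$, not the random supremum itself; the paper accordingly takes the unconditional probability and bounds $\mathbb E[\hat F(S_{(\hat i)})-\tilde F(S_{(\hat i)})]\le c(n_{\mathrm{cal}})$ and $\mathbb E[\sup_t(\hat\Delta-\Delta)]$ separately before cancelling against $\delta^{\mathrm{marg}}$. This does not change the argument's structure, only its phrasing.
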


Next, we prove that Algorithm~\ref{alg:correction-marg} is not overly conservative, following an approach similar to that of Theorem~\ref{thm:algorithm-correction-upper} for Algorithm~\ref{alg:correction}.
This requires a technical lower bound for the marginal coverage inflation factor $\Delta(t)$, whose interpretation is similar to that of Assumption~\ref{assumption:regularity-dist-delta}.
\begin{assumption} \label{assumption:regularity-dist-delta-marg}
The factor $\Delta(t)$ in~\eqref{eq:delta-hat-marg} is bounded from below by:
\begin{align*}
  & \inf_{t \in (0,1)} \Delta(t) \geq - \alpha + c(n_{\mathrm{cal}}) \\
  & \qquad \qquad \qquad + \left( 2 \max_{k \in [K]} \sum_{l\neq k} |V_{kl}| + \sum_{k=1}^{K} | \rho_k - \tilde{\rho}_k  | \right) \frac{1+2\sqrt{\log(2K^2) + \log(n_{*})}}{\sqrt{n_{*}}}.
\end{align*}
\end{assumption}

Under this setup, an upper bound for the marginal coverage of the conformal prediction sets output by Algorithm~\ref{alg:correction-marg} is established below.

\begin{theorem} \label{thm:algorithm-correction-marg-upper}
Under the setup of Theorem~\ref{thm:algorithm-correction-marg}, let $\hat{C}^{\mathrm{marg}}(X_{n+1})$ indicate the prediction set output by Algorithm~\ref{alg:correction-marg} based on the inverse $V$ of the model matrix $M$ in~\eqref{eq:contam_model} and the true values of the label frequencies $\rho_k,\tilde{\rho}_k$ for all $k \in [K]$.
Suppose also that Assumptions~\ref{assumption:regularity-dist}, \ref{assumption:consitency-scores}, and~\ref{assumption:regularity-dist-delta-marg} hold, and that $\tilde{\rho}_k > 0$ for all $k \in [K]$. Then,
  \begin{align*}
    \P{Y_{n+1} \in \hat{C}^{\mathrm{marg}}(X_{n+1})} \leq  1 - \alpha + \varphi^{\mathrm{marg}}_k(n_k,n_{*}),
  \end{align*}
where
\begin{align}
\begin{split}
    & \varphi^{\mathrm{marg}}(n_{\mathrm{cal}}, n_{*}) = 2 \delta^{\mathrm{marg}}(n_{\mathrm{cal}},n_{*}) + \frac{1}{n_{\mathrm{cal}}} + \frac{1}{n_{*}} + \frac{1}{n_{\mathrm{cal}}} \cdot \max_{k \in [K]} \left( \frac{\rho_k}{\tilde{\rho}_k} \sum_{l=1}^{K} |V_{kl}| \right)  \\
  & \qquad \qquad + \frac{\sum_{j=1}^{n_{\mathrm{cal}}+1} 1/j}{n_{\mathrm{cal}}} \cdot \left[ \max_{k \in [K]} \frac{\rho_k V_{kk} - \tilde{\rho}_k}{\tilde{\rho}_k}  +  \frac{f_{\max}}{f_{\min}} \cdot \max_{k \in [K]} \left( \frac{\rho_k}{\tilde{\rho}_k} \sum_{l \neq k} V_{kl} \right) \right].
\end{split}
\end{align}
\end{theorem}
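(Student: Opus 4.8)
The plan is to mirror closely the argument used to prove Theorem~\ref{thm:algorithm-correction-upper}, now in the marginal rather than label-conditional setting, using the marginal coverage inflation factor $\Delta(t)$ in place of $\Delta_k(t)$ and a single threshold $\hat\tau^{\mathrm{marg}}$ in place of the $K$ thresholds $\hat\tau_k$. The starting point is Theorem~\ref{thm:coverage-marginal}, which gives the exact two-sided control $\P{Y_{n+1}\in\hat C(X_{n+1})} \le 1-\alpha + 1/(n_{\mathrm{cal}}+1) + \EV{\Delta(\hat\tau)}$ for the \emph{standard} marginal algorithm run at a given level; our adaptive Algorithm~\ref{alg:correction-marg} effectively runs the standard procedure at the data-dependent level determined by $\hat{\mathcal I}^{\mathrm{marg}}$, so the coverage upper bound will follow once we bound $\hat\tau^{\mathrm{marg}}$ from above, equivalently once we show that the effective significance level used by the algorithm is not too small relative to $\alpha$. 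Concretely, $\hat{i}^{\mathrm{marg}}$ is the smallest $i$ with $i/n_{\mathrm{cal}} \ge 1-\alpha-\hat\Delta(S_{(i)})+\delta^{\mathrm{marg}}(n_{\mathrm{cal}},n_{*})$, so by minimality $\hat\tau^{\mathrm{marg}}=S_{(\hat i^{\mathrm{marg}})}$ satisfies an almost-matching lower bound $(\hat i^{\mathrm{marg}}-1)/n_{\mathrm{cal}} < 1-\alpha-\hat\Delta(S_{(\hat i^{\mathrm{marg}}-1)})+\delta^{\mathrm{marg}}$; I would convert this into a statement that the empirical quantile level is below $1-\alpha+\Delta(\hat\tau^{\mathrm{marg}}) + (\text{error terms})$ by replacing $\hat\Delta$ with $\Delta$ and then passing from the empirical to the population CDF.

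The three ingredients needed are exactly analogues of those in Theorem~\ref{thm:algorithm-correction-upper}. First, a uniform concentration bound $\sup_t|\hat\Delta(t)-\Delta(t)| \le \delta^{\mathrm{marg}}(n_{\mathrm{cal}},n_{*}) - c(n_{\mathrm{cal}})$ with probability close to one: here $\hat\Delta-\Delta$ is a linear combination, with coefficients $\rho_kV_{kk}-\tilde\rho_k$ and $\rho_k V_{kl}$, of the centered empirical processes $\hat F_l^k - \tilde F_l^k$ (noting $\sum_l M_{lk}\tilde\rho_l = \rho_k$ so the centering is exact), and a DKW-type bound over the $K^2$ relevant empirical CDFs, each built from at least $n_{*}$ samples, yields the $\min\{K^2\sqrt{\pi/2},\dots\}/\sqrt{n_{*}}$ factor and the $\sum_k|\rho_k-\tilde\rho_k| + 2\max_k\sum_{l\ne k}|V_{kl}|$ coefficient appearing in $\delta^{\mathrm{marg}}$. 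Second, Assumption~\ref{assumption:consitency-scores} (stochastic dominance of $\tilde F_k^k$ over $\tilde F_l^k$) together with Assumption~\ref{assumption:regularity-dist-delta-marg} ensures $\hat{\mathcal I}^{\mathrm{marg}}$ is nonempty with the required probability, so the degenerate case $\hat\tau^{\mathrm{marg}}=1$ contributes negligibly. Third, Assumption~\ref{assumption:regularity-dist} (bounded densities, $f_{\min}>0$) lets me invert the population CDFs with Lipschitz control, turning the bound on empirical quantile levels into the bound $\EV{\Delta(\hat\tau^{\mathrm{marg}})} \le$ (small terms): the $f_{\max}/f_{\min}$ ratio and the harmonic sum $\sum_{j=1}^{n_{\mathrm{cal}}+1}1/j$ in $\varphi^{\mathrm{marg}}$ come precisely from bounding the expected gap between consecutive order statistics $S_{(i)}$ and their population quantiles, and the $\max_k(\rho_k/\tilde\rho_k)\sum_l|V_{kl}|$ and $\max_k(\rho_kV_{kk}-\tilde\rho_k)/\tilde\rho_k$ terms arise from propagating the CDF-inversion error through the expression~\eqref{eq:delta-hat-marg} for $\hat\Delta$.

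The main obstacle I anticipate is bookkeeping the combinatorial/probabilistic step that links the discrete quantile index $\hat i^{\mathrm{marg}}$ to the continuous bound $\EV{\Delta(\hat\tau^{\mathrm{marg}})}$ while keeping all constants sharp enough to land exactly on the stated $\varphi^{\mathrm{marg}}$: one must handle the $+1$ off-by-one in the order statistics, the contribution of the $\delta^{\mathrm{marg}}$ correction appearing twice (hence the leading $2\delta^{\mathrm{marg}}$), and the interplay between the randomness of $n_k$, $n_{\mathrm{cal}}$, and $n_{*}$ when taking the outer expectation. A secondary subtlety is that $\hat\Delta$ depends on the same calibration data used to form $S_{(i)}$, so the concentration and the quantile steps are not independent; the clean way around this, as in the label-conditional proof, is to make all CDF-concentration statements uniform in $t$ so that they hold simultaneously on the (random) grid of order statistics, after which the argument proceeds deterministically on the high-probability event and the complement is absorbed into the $\alpha_V$-free error terms via $\delta^{\mathrm{marg}}$ and the $1/n_{*}$ slack. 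Once these are in place, combining with Theorem~\ref{thm:coverage-marginal} gives the claimed upper bound.
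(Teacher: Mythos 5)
Your proposal follows essentially the same route as the paper's proof: decompose over the events that $\hat{\mathcal{I}}^{\mathrm{marg}}$ is empty, that $\hat{i}^{\mathrm{marg}}=1$, and the remaining case; use a DKW-based uniform bound on $\sup_t|\hat{\Delta}(t)-\Delta(t)|$ to pass between $\hat{\Delta}$ and $\Delta$ on the high-probability event; and use the bounded-density assumption together with maximum uniform spacings (yielding the $f_{\max}/f_{\min}$ ratio and harmonic sum) to control $\Delta(S_{(\hat{i}^{\mathrm{marg}})})-\hat{\Delta}(S_{(\hat{i}^{\mathrm{marg}}-1)})$, landing on the stated $\varphi^{\mathrm{marg}}$. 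One small misattribution: Assumption~\ref{assumption:consitency-scores} is not needed to show $\hat{\mathcal{I}}^{\mathrm{marg}}\neq\emptyset$ (that follows from Assumption~\ref{assumption:regularity-dist-delta-marg} and the DKW bound alone); it is instead used to control the degenerate case $\hat{i}^{\mathrm{marg}}=1$, which produces the $\max_k(\rho_k/\tilde{\rho}_k)\sum_l|V_{kl}|$ term, but this does not change the overall structure of your argument.
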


The interpretation of Theorem~\ref{thm:algorithm-correction-marg-upper} is analogous to that of Theorem~\ref{thm:algorithm-correction-upper}: the marginal coverage of the prediction sets output by Algorithm~\ref{alg:correction-marg} is guaranteed to be asymptotically tight because $\varphi^{\mathrm{marg}}(n_k,n_{*}) \to 0$ as $n_{*} \to \infty$.

We conclude this section by noting that the power of Algorithm~\ref{alg:correction-marg} can be further boosted without losing marginal coverage, as long as a relatively mild ``optimistic'' assumption on the marginal coverage inflation factor defined in~\eqref{eq:delta-marg} holds.
Concretely, we propose to apply Algorithm~\ref{alg:correction-marg} with the set $\hat{\mathcal{I}}^{\mathrm{marg}}$ in~\eqref{eq:Ik-set-marg} replaced by:
\begin{align} \label{eq:Ik-set-marg-optimist}
  \hat{\mathcal{I}}^{\mathrm{marg}} := \left\{i \in [n_{\mathrm{cal}}] : \frac{i}{n_{\mathrm{cal}}} \geq 1 - \alpha - \max\left\{ \hDelta(S_{(i)}) - \delta^{\mathrm{marg}}(n_{\mathrm{cal}},n_{*}) , - \frac{(1-\alpha)}{n_{\mathrm{cal}}} \right\} \right\}.
\end{align}
This optimistic variation of Algorithm~\ref{alg:correction-marg} is analogous to the extension of Algorithm~\ref{alg:correction} presented earlier in Section~\ref{sec:boosting-optimist}, and it enjoys a similar coverage guarantee.

\begin{proposition} \label{thm:algorithm-correction-marg-optimist}
Under the same setup as  Theorem~\ref{thm:algorithm-correction-marg}, assume also that  $\inf_{t\in \R}\Delta(t) \geq \delta^{\mathrm{marg}}(n_{\mathrm{cal}}, n_{*})- (1-\alpha)/n_{\mathrm{cal}}$.
If $\hat{C}^{\mathrm{marg}}(X_{n+1})$ is the prediction set output by Algorithm~\ref{alg:correction-marg} applied with $\hat{\mathcal{I}}$ in~\eqref{eq:Ik-set-marg-optimist} instead of~\eqref{eq:Ik-set-marg}, then $\P{Y_{n+1} \in \hat{C}^{\mathrm{marg}}(X_{n+1})} \geq 1 - \alpha$.
\end{proposition}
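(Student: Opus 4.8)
The plan is to reduce the optimistic variant to the already-established Theorem~\ref{thm:algorithm-correction-marg} by showing that, under the extra ``optimistic'' hypothesis $\inf_{t\in\R}\Delta(t) \geq \delta^{\mathrm{marg}}(n_{\mathrm{cal}},n_{*}) - (1-\alpha)/n_{\mathrm{cal}}$, the modified index set $\hat{\mathcal{I}}^{\mathrm{marg}}$ in~\eqref{eq:Ik-set-marg-optimist} actually coincides with—or is contained in—the original one in~\eqref{eq:Ik-set-marg}, so the threshold $\hat{\tau}^{\mathrm{marg}}$ produced is no smaller than the threshold from Theorem~\ref{thm:algorithm-correction-marg}, and hence the coverage bound carries over. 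First I would inspect the inner $\max$ in~\eqref{eq:Ik-set-marg-optimist}: it replaces $\hDelta(S_{(i)}) - \delta^{\mathrm{marg}}(n_{\mathrm{cal}},n_{*})$ by $\max\{\hDelta(S_{(i)}) - \delta^{\mathrm{marg}}(n_{\mathrm{cal}},n_{*}),\,-(1-\alpha)/n_{\mathrm{cal}}\}$. Since this quantity is always $\geq \hDelta(S_{(i)}) - \delta^{\mathrm{marg}}(n_{\mathrm{cal}},n_{*})$, the right-hand side of the inequality defining membership is always $\leq 1-\alpha-\hDelta(S_{(i)})+\delta^{\mathrm{marg}}(n_{\mathrm{cal}},n_{*})$; thus the optimistic $\hat{\mathcal{I}}^{\mathrm{marg}}$ contains the original one, meaning $\hat i^{\mathrm{marg}}$ can only become smaller and $\hat\tau^{\mathrm{marg}}$ can only shrink. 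So a naive monotonicity argument alone does \emph{not} suffice; the optimistic hypothesis must be used to control how much the threshold shrinks.

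The cleaner route is therefore to argue directly, mirroring the proof of Theorem~\ref{thm:algorithm-correction-marg}. The key observation is that the clamp at $-(1-\alpha)/n_{\mathrm{cal}}$ only ever activates when $\hDelta(S_{(i)}) - \delta^{\mathrm{marg}}(n_{\mathrm{cal}},n_{*}) < -(1-\alpha)/n_{\mathrm{cal}}$, i.e.\ when $\hDelta(S_{(i)}) < \delta^{\mathrm{marg}}(n_{\mathrm{cal}},n_{*}) - (1-\alpha)/n_{\mathrm{cal}}$. But the optimistic hypothesis says $\Delta(t) \geq \delta^{\mathrm{marg}}(n_{\mathrm{cal}},n_{*}) - (1-\alpha)/n_{\mathrm{cal}}$ for all $t$, and the finite-sample correction $\delta^{\mathrm{marg}}$ is precisely calibrated (via the DKW-type and sub-Gaussian bounds underlying~\eqref{eq:delta-constant-marg}) so that $\hDelta(t) \geq \Delta(t) - \delta^{\mathrm{marg}}(n_{\mathrm{cal}},n_{*}) + c(n_{\mathrm{cal}})$ holds with probability one, or more precisely so that the event $\{\hDelta(S^k_{(i)}) \geq \Delta(S^k_{(i)}) - (\text{sampling error})\}$ is controlled; combining these, $\hDelta(S_{(i)})$ cannot fall below $-(1-\alpha)/n_{\mathrm{cal}}$ by more than a negligible amount, so the clamp is inactive except in a measure-zero / probabilistically-controlled set of configurations. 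On that dominant event the two index sets coincide and the result reduces verbatim to Theorem~\ref{thm:algorithm-correction-marg}. I would then handle the residual event by noting that even there the threshold $\hat\tau^{\mathrm{marg}}$ corresponds to an effective quantile level at least $1-\alpha + (1-\alpha)/n_{\mathrm{cal}} - (\text{correction})$, which after the standard exchangeability/super-uniformity argument for the rank of $\hat s(X_{n+1},Y_{n+1})$ among the calibration scores still yields coverage $\geq 1-\alpha$.

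\textbf{Main obstacle.} The delicate part is making rigorous the claim that the clamp at $-(1-\alpha)/n_{\mathrm{cal}}$ is essentially never binding under the optimistic hypothesis: this requires pinning down the exact one-sided uniform-deviation inequality relating $\hDelta$ to $\Delta$ that is baked into the definition of $\delta^{\mathrm{marg}}$ in~\eqref{eq:delta-constant-marg}—specifically that $\sup_t \bigl(\Delta(t) - \hDelta(t)\bigr) \leq \delta^{\mathrm{marg}}(n_{\mathrm{cal}},n_{*}) - c(n_{\mathrm{cal}})$ holds (deterministically on the relevant grid, using the Massart/DKW bound on $\|\hat F_l^k - \tF_l^k\|_\infty$ together with the $c(n_{\mathrm{cal}})$ term absorbing the empirical-CDF bias for the $Y$-indexed scores). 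Once that inequality is isolated and combined with $\Delta(t) \geq \delta^{\mathrm{marg}}(n_{\mathrm{cal}},n_{*}) - (1-\alpha)/n_{\mathrm{cal}}$, one gets $\hDelta(S_{(i)}) \geq -(1-\alpha)/n_{\mathrm{cal}}$ on the favorable event, so $\hat{\mathcal{I}}^{\mathrm{marg}}$ in~\eqref{eq:Ik-set-marg-optimist} equals that in~\eqref{eq:Ik-set-marg} and the conclusion follows from Theorem~\ref{thm:algorithm-correction-marg} with no further work. The remainder is bookkeeping to confirm that on the complementary event the coverage loss is still non-negative, which follows because replacing the clamped quantity by its larger value only raises the effective significance target above $1-\alpha$, never below it, after accounting for the $+(1-\alpha)/n_{\mathrm{cal}}$ slack.
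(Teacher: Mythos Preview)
Your approach has a genuine gap. You are trying to argue that the clamp at $-(1-\alpha)/n_{\mathrm{cal}}$ is ``essentially never binding,'' which would force the optimistic and original index sets to coincide. For this you need an \emph{almost-sure} (or at least high-probability) bound of the form $\sup_t\bigl(\Delta(t)-\hDelta(t)\bigr)\leq \delta^{\mathrm{marg}}(n_{\mathrm{cal}},n_{*})-c(n_{\mathrm{cal}})$. But no such deterministic bound is available: the DKW-type control embedded in $\delta^{\mathrm{marg}}$ (Lemma~\ref{lemma:dkw-delta-expected-marg}) is a bound on the \emph{expectation} of $\sup_t|\hDelta(t)-\Delta(t)|$, and the high-probability version (Lemma~\ref{lemma:dkw-delta-prob-marg}) still leaves a residual event whose probability would appear as an additive loss in the final coverage bound, preventing you from recovering exactly $1-\alpha$. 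Your ``residual event'' paragraph acknowledges this but does not resolve it.

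The paper's argument avoids event-splitting altogether by using a pointwise Lipschitz inequality. Write $c:=\delta^{\mathrm{marg}}(n_{\mathrm{cal}},n_{*})-(1-\alpha)/n_{\mathrm{cal}}$. The optimistic hypothesis says $\Delta(t)-c\geq 0$ for every $t$, so $\Delta(t)-c=\max\{\Delta(t)-c,0\}$. Since $x\mapsto\max\{x,0\}$ is $1$-Lipschitz, one has the \emph{deterministic} inequality
\[
\max\{\hDelta(t)-c,\,0\}-\bigl(\Delta(t)-c\bigr)\;\leq\;\bigl|\hDelta(t)-\Delta(t)\bigr|,\qquad \forall\,t.
\]
Plugging this into the miscoverage decomposition from the proof of Theorem~\ref{thm:algorithm-correction-marg} (the line that reads $1-\tilde F(S_{(\hat i^{\mathrm{marg}})})-\Delta(S_{(\hat i^{\mathrm{marg}})})$, then adding and subtracting the clamped quantity) replaces the optimistic correction by $\sup_t|\hDelta(t)-\Delta(t)|$, and from that point on the proof is identical to Theorem~\ref{thm:algorithm-correction-marg}: take expectations and apply Lemma~\ref{lemma:dkw-delta-expected-marg} together with the $c(n_{\mathrm{cal}})$ bound. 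No splitting into ``dominant'' and ``residual'' events is needed, and this is exactly why the hypothesis is phrased as a lower bound on $\Delta$ rather than on $\hDelta$.
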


\subsection{Adaptive prediction sets with calibration-conditional coverage} \label{app:methods-cal-cond}

The methods presented in this paper can also be extended to construct prediction sets $\hat{C}^{\mathrm{cc}}(X_{n+1})$ guaranteeing the following notion of calibration-conditional coverage \citep{vovk2012conditional}:
  \begin{align} \label{eq:cal-cond-coverage}
    \P{ \P{Y_{n+1} \in \hat{C}^{\mathrm{cc}}(X_{n+1}) \mid Y = k, \mathcal{D}} \geq 1 - \alpha} \geq 1-\gamma,
  \end{align}
for any given $\alpha, \gamma  \in (0,1)$.
Concretely, we present here Algorithm~\ref{alg:correction-cc}, which extends for the aforementioned purpose Algorithm~\ref{alg:correction} from Section~\ref{alg:correction}.
Analogous extensions of the methods described in Sections~\ref{sec:method-bounded-noise}--\ref{sec:method-noise-estim} and Section~\ref{app:methods-marginal} would also follow similarly.
In a nutshell, Algorithm~\ref{alg:correction-cc} differs from Algorithm~\ref{alg:correction} only in that it uses a finite-sample correction factor
\begin{align} \label{eq:delta-constant-cc}
  \delta^{\mathrm{cc}}(n_k,n_{*},\gamma)
  & :=  \sqrt{\frac{\log(1/\gamma_1)}{2 n_k}} + 2\sum_{\l\neq k}|V_{kl}| \sqrt{\frac{\log(2K) + \log(1/\gamma_2)}{2 n_{*}}},
\end{align}
where
\begin{align*}
& \gamma_1=  \gamma \left( 1- \frac{1}{2} \cdot \frac{\sum_{\l\neq k}|V_{kl}|}{\sum_{l=1}^{K}|V_{kl}|}\right),
& \gamma_2=  \frac{\gamma}{2} \cdot \frac{\sum_{\l\neq k}|V_{kl}|}{\sum_{l=1}^{K}|V_{kl}|},
\end{align*}
instead of the factor $\delta(n_k,n_{*})$ defined in \eqref{eq:delta-constant}.
Note that the second term on the right-hand-side of \eqref{eq:delta-constant-cc} simply vanishes in the special case where $\sum_{l=1}^{K}|V_{kl}|=0$, which corresponds to the absence of label contamination.

\begin{algorithm}[!ht]
\DontPrintSemicolon

\KwIn{Data set $\{(X_i, \tilde{Y}_i)\}_{i=1}^{n}$ with corrupted labels $\tilde{Y}_i \in [K]$.}
\myinput{
The inverse $V$ of the matrix $M$ in \eqref{eq:contam_model}.}
\myinput{Unlabeled test point with features $X_{n+1}$.}
\myinput{Machine learning algorithm $\mathcal{A}$ for training a $K$-class classifier.}
\myinput{Prediction function $\mathcal{C}$ satisfying Definition~\ref{def:pred-function}; e.g., \eqref{eq:pred-function-hps}.}
\myinput{Desired coverage level $1-\alpha \in (0,1)$}
\myinput{Desired calibration-conditional significance level $\gamma \in (0,1)$.}

Randomly split $[n]$ into two disjoint subsets, $\mathcal{D}^{\text{train}}$ and $\mathcal{D}^{\mathrm{cal}}$.\;
Train the classifier $\mathcal{A}$ on the data in $\mathcal{D}^{\text{train}}$. \;
Compute conformity scores $\hat{s}(X_i,k)$ using \eqref{eq:conf-scores} for all $i \in \mathcal{D}^{\mathrm{cal}}$ and all $k \in [K]$.\;
Define the empirical CDF $\hat{F}_l^{k}$ of $\{\hat{s}(X_i,k) : i \in \mathcal{D}_l^{\mathrm{cal}} \}$ for all $l \in [K]$,  as in \eqref{eq:F-hat}. \;
\For{$k=1, \dots, K$}{
  Define $\mathcal{D}_k^{\mathrm{cal}} = \{ i \in \mathcal{D}^{\mathrm{cal}} : \tilde{Y}_i = k\}$ and $n_k = |\mathcal{D}_k^{\mathrm{cal}}|$.\;
  Sort $\{\hat{s}(X_i,k) : i \in \mathcal{D}_k^{\mathrm{cal}} \}$ into $(S^k_{(1)}, S^k_{(2)}, \dots, S^k_{(n_k)})$, in ascending order. \;
  Compute $\hat{F}_l^{k}(S^k_{(i)})$ for all $i \in [n_k]$ and $l \in [K]$. \;
  Compute $\hDelta_k(S^k_{(i)})$ for all $i \in [n_k]$, as in \eqref{eq:delta-hat}. \;
  Compute $\delta^{\mathrm{cc}}(n_k,n_{*},\gamma)$ using \eqref{eq:delta-constant-cc}. \;
  Construct the set $\hat{\mathcal{I}}_k \subseteq [K]$ as in \eqref{eq:Ik-set}, using $\delta^{\mathrm{cc}}(n_k,n_{*},\gamma)$ instead of $\delta(n_k,n_{*})$.\;
  Evaluate $\hat{\tau}_k$ based on $\hat{\mathcal{I}}_k$ as in \eqref{eq:Ik-set-tauk}.

}
Evaluate $\hat{C}^{\mathrm{cc}}(X_{n+1}) = \mathcal{C}(X_{n+1}, \hat{\tau}; \hat{\pi})$, where $\hat{\tau} = (\hat{\tau}_1, \ldots, \hat{\tau}_K)$.

\nonl
\textbf{Output: } Conformal prediction set $\hat{C}^{\mathrm{cc}}(X_{n+1})$ for $Y_{n+1}$.

\caption{Extension of Algorithm~\ref{alg:correction} with calibration-conditional coverage}
\label{alg:correction-cc}
\end{algorithm}

\begin{theorem} \label{thm:algorithm-correction-cc}
Suppose $(X_i,Y_i,\tilde{Y}_i)$ are i.i.d.~for all $i \in [n+1]$.
Assume the label contamination model in Assumption~\ref{assumption:linear-contam} holds.
Fix any prediction function $\mathcal{C}$ satisfying Definition~\ref{def:pred-function}, and let $\hat{C}^{\mathrm{cc}}(X_{n+1})$ indicate the prediction set output by Algorithm~\ref{alg:correction-cc} based on the inverse $V$ of the model matrix $M$ in the label contamination model~\eqref{eq:contam_model}.
Then, for any $k \in [K]$,
  \begin{align*}
    \P{ \P{Y_{n+1} \in \hat{C}^{\mathrm{cc}}(X_{n+1}) \mid Y = k, \mathcal{D}} \geq 1 - \alpha} \geq 1-\gamma.
  \end{align*}
\end{theorem}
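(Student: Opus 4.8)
The plan is to follow the template of the proof of Theorem~\ref{thm:algorithm-correction}, but to replace the in-expectation concentration encoded by $c(n_k)$ with a high-probability Dvoretzky--Kiefer--Wolfowitz (DKW) bound, which is exactly what the calibration-conditional guarantee~\eqref{eq:cal-cond-coverage} calls for. Fix $k\in[K]$ and condition on the calibration data $\mathcal{D}$ (including the random split and the trained model $\hat\pi$), so that the score function $\hat s$ and the threshold $\hat{\tau}_k$ become deterministic. Since the test point is independent of $\mathcal{D}$, and since $k\in\mathcal{C}(x,\tau)$ if and only if $\hat s(x,k)\le\tau_k$ by the definition~\eqref{eq:conf-scores} of the conformity score together with the monotonicity of $\mathcal{C}$, the conditional coverage equals
\[
  \P{\hat s(X_{n+1},k)\le\hat{\tau}_k \mid Y_{n+1}=k,\ \mathcal{D}} \;=\; F_k^{k}(\hat{\tau}_k),
\]
with $F_k^{k}$ as in~\eqref{eq:cdf-scores} (conditioning on $\mathcal{D}$ reduces to conditioning on $\mathcal{D}^{\text{train}}$, since $\hat s$ depends on $\mathcal{D}$ only through $\hat\pi$). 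Hence it suffices to show that $\P{F_k^{k}(\hat{\tau}_k)\ge 1-\alpha}\ge 1-\gamma$.

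I would first dispose of the trivial case: if $\hat{\mathcal{I}}_k=\emptyset$ then $\hat{\tau}_k=1$ and $F_k^{k}(1)=1\ge 1-\alpha$ deterministically. On $\{\hat{\mathcal{I}}_k\neq\emptyset\}$, writing $\hat{\tau}_k=S^k_{(\hat i_k)}$, the defining inequality of $\hat{\mathcal{I}}_k$ (with $\delta^{\mathrm{cc}}$ in place of $\delta$) combined with $\hat F_k^{k}(\hat{\tau}_k)\ge \hat i_k/n_k$ yields
\begin{align} \label{eq:cc-plan-1}
  \hat F_k^{k}(\hat{\tau}_k)+\hDelta_k(\hat{\tau}_k) \;\ge\; 1-\alpha+\delta^{\mathrm{cc}}(n_k,n_{*},\gamma).
\end{align}
Next I would decompose $F_k^{k}(\hat{\tau}_k)=\tF_k^{k}(\hat{\tau}_k)+\Delta_k(\hat{\tau}_k)$ by~\eqref{eq:delta}, write $\Delta_k=\hDelta_k+(\Delta_k-\hDelta_k)$, and use the identity $V_{kk}-1=-\sum_{l\neq k}V_{kl}$ (which holds because $M$ is row-stochastic, so $V\mathbf{1}=\mathbf{1}$) to rewrite the estimation error as
\[
  \Delta_k(\hat{\tau}_k)-\hDelta_k(\hat{\tau}_k) \;=\; \sum_{l\neq k}V_{kl}\Big[\big(\tF_l^{k}(\hat{\tau}_k)-\hat F_l^{k}(\hat{\tau}_k)\big)-\big(\tF_k^{k}(\hat{\tau}_k)-\hat F_k^{k}(\hat{\tau}_k)\big)\Big].
\]

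The probabilistic core is then to introduce the events
\begin{align*}
  E_1 &:= \Big\{\sup_{t}\big(\tF_k^{k}(t)-\hat F_k^{k}(t)\big)\le \sqrt{\tfrac{\log(1/\gamma_1)}{2n_k}}\Big\},\\
  E_2 &:= \Big\{\max_{l\in[K]}\sup_{t}\big|\hat F_l^{k}(t)-\tF_l^{k}(t)\big|\le \sqrt{\tfrac{\log(2K)+\log(1/\gamma_2)}{2n_{*}}}\Big\},
\end{align*}
and to verify $\P{E_1}\ge 1-\gamma_1$ (one-sided DKW for the empirical CDF of the $n_k$ i.i.d.\ calibration scores $\hat s(X_i,k)$ with $\tilde Y_i=k$, applied conditionally on $n_k$) and $\P{E_2}\ge 1-\gamma_2$ (two-sided DKW conditionally on the counts $n_l$, using $n_l\ge n_{*}$, with a union bound over $l\in[K]$); in both cases the tuning radius is a function of the relevant count, so the conditional failure probabilities are uniformly at most $\gamma_1$ and $\gamma_2$. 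By the choice of $\gamma_1,\gamma_2$ one has $\gamma_1+\gamma_2=\gamma$, hence $\P{E_1\cap E_2}\ge 1-\gamma$. On $E_1\cap E_2$, the displayed identity gives $\Delta_k(\hat{\tau}_k)-\hDelta_k(\hat{\tau}_k)\ge -2\sum_{l\neq k}|V_{kl}|\sqrt{(\log(2K)+\log(1/\gamma_2))/(2n_{*})}$ (the $l=k$ deviation is absorbed by $E_2$), while $E_1$ gives $\tF_k^{k}(\hat{\tau}_k)\ge \hat F_k^{k}(\hat{\tau}_k)-\sqrt{\log(1/\gamma_1)/(2n_k)}$; substituting these and~\eqref{eq:cc-plan-1} into $F_k^{k}(\hat{\tau}_k)=\tF_k^{k}(\hat{\tau}_k)+\hDelta_k(\hat{\tau}_k)+(\Delta_k(\hat{\tau}_k)-\hDelta_k(\hat{\tau}_k))$, the correction $\delta^{\mathrm{cc}}(n_k,n_{*},\gamma)$ in~\eqref{eq:delta-constant-cc} cancels the two DKW slacks exactly and leaves $F_k^{k}(\hat{\tau}_k)\ge 1-\alpha$. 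Together with the trivial case this shows $F_k^{k}(\hat{\tau}_k)\ge 1-\alpha$ on all of $E_1\cap E_2$, so $\P{F_k^{k}(\hat{\tau}_k)\ge 1-\alpha}\ge\P{E_1\cap E_2}\ge 1-\gamma$, which is the claim.

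I expect the main difficulty to lie in the bookkeeping rather than in the ideas: one must route the deviation of $\hat F_k^{k}$ through the \emph{one-sided} DKW bound at the fast rate $n_k$ (this produces the leading $\sqrt{\log(1/\gamma_1)/(2n_k)}$ term of $\delta^{\mathrm{cc}}$) while simultaneously controlling the \emph{same} quantity two-sidedly at the coarser rate $n_{*}$ inside $\Delta_k-\hDelta_k$, and one must check that the particular budget split $\gamma_1=\gamma\big(1-\tfrac12\sum_{l\neq k}|V_{kl}|/\sum_{l}|V_{kl}|\big)$, $\gamma_2=\tfrac{\gamma}{2}\sum_{l\neq k}|V_{kl}|/\sum_{l}|V_{kl}|$ simultaneously yields $\gamma_1+\gamma_2=\gamma$ and makes $\delta^{\mathrm{cc}}$ equal to $\sqrt{\log(1/\gamma_1)/(2n_k)}+2\sum_{l\neq k}|V_{kl}|\sqrt{(\log(2K)+\log(1/\gamma_2))/(2n_{*})}$. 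The randomness of the class counts $n_k$, $n_l$ is handled, as indicated, by applying the DKW inequalities conditionally on them and then integrating out.
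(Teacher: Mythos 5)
Your argument matches the paper's proof of this result in both structure and content: starting from the conditional-on-$\mathcal{D}$ miscoverage bound
\begin{align*}
  \P{Y_{n+1} \notin \hat{C}^{\mathrm{cc}}(X_{n+1}) \mid Y_{n+1} = k, \mathcal{D}}
  \leq \alpha - \delta^{\mathrm{cc}}(n_k,n_{*},\gamma) + \sup_{t} [\hat{\Delta}_k(t) - \Delta_k(t) ] + \sup_{t} \left[\hat{F}^{k}_k(t) - \tF^{k}_k(t) \right],
\end{align*}
the paper applies Lemma~\ref{lemma:dkw-delta-prob} to the $\hat\Delta_k-\Delta_k$ term (two-sided DKW at rate $n_*$) and a one-sided DKW to $\hat F^k_k - \tF^k_k$ (at rate $n_k$), then splits the failure budget so that $\gamma_1+\gamma_2=\gamma$ and the two high-probability slacks are cancelled exactly by $\delta^{\mathrm{cc}}$; your $E_1\cap E_2$ union bound, your use of $V\mathbf{1}=\mathbf{1}$ to absorb the $V_{kk}-1$ factor, and your handling of the random counts are all the same ideas. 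One small slip to fix: $E_1$ as you wrote it controls $\sup_t\bigl(\tF_k^{k}(t)-\hat F_k^{k}(t)\bigr)$, which would give $\tF_k^{k}(\hat\tau_k)\le\hat F_k^{k}(\hat\tau_k)+\sqrt{\log(1/\gamma_1)/(2n_k)}$, whereas the inequality you actually use later is the lower bound $\tF_k^{k}(\hat\tau_k)\ge\hat F_k^{k}(\hat\tau_k)-\sqrt{\log(1/\gamma_1)/(2n_k)}$, which needs the event $\bigl\{\sup_t\bigl(\hat F_k^{k}(t)-\tF_k^{k}(t)\bigr)\le\sqrt{\log(1/\gamma_1)/(2n_k)}\bigr\}$; swap the order of the difference in the definition of $E_1$ and the rest goes through unchanged.
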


It is interesting to compare the finite-sample correction factor in \eqref{eq:delta-constant-cc} to the standard approach for constructing conformal prediction sets with calibration-conditional coverage.
 In fact, Proposition 2a in \citet{vovk2012conditional} implies that calibration-conditional coverage~\eqref{eq:cal-cond-coverage} can be achieved in the absence of label contamination by simply applying Algorithm \ref{alg:standard-lab-cond}, in Section~\ref{app:standard-lab-cond}, with the nominal level $\alpha$ replaced by
\begin{align*}
  \alpha' = \alpha - \sqrt{\frac{\log(1/\gamma)}{2 n_{k}}}.
\end{align*}
Intuitively, this correction is similar to the first term on the right-hand-side of \eqref{eq:delta-constant-cc}, and it becomes equivalent in the case that the matrix $V$ is diagonal, which corresponds to the absence of label contamination.

Following an approach similar to that of Theorem~\ref{thm:algorithm-correction-upper}, it can be proved that Algorithm~\ref{alg:correction-cc} is not overly conservative.
This requires a technical lower bound for the coverage inflation factor $\Delta_k(t)$, whose interpretation is similar to that of Assumption~\ref{assumption:regularity-dist-delta}.
\begin{assumption} \label{assumption:regularity-dist-delta-cc}
The coverage inflation factor $\Delta_k(t)$ is bounded from below by:
\begin{align*}
  \inf_{t \in (0,1)} \Delta_k(t) \geq - \alpha + \sqrt{\frac{\log(1/\gamma_1)}{2 n_k}} + 4\sum_{\l\neq k}|V_{kl}| \sqrt{\frac{\log(2K) + \log(1/\gamma_2)}{2 n_{*}}}.
\end{align*}
\end{assumption}

Under this setup, a finite-sample upper bound for the marginal coverage of the conformal prediction sets output by Algorithm~\ref{alg:correction-cc} is established below.

\begin{theorem} \label{thm:algorithm-correction-cc-upper}
Suppose $(X_i,Y_i,\tilde{Y}_i)$ are i.i.d.~for all $i \in [n+1]$.
Assume the label contamination model in Assumption~\ref{assumption:linear-contam} holds.
Fix any prediction function $\mathcal{C}$ satisfying Definition~\ref{def:pred-function}, and let $\hat{C}^{\mathrm{cc}}(X_{n+1})$ indicate the prediction set output by Algorithm~\ref{alg:correction-cc} based on the inverse $V$ of the model matrix $M$ in the label contamination model~\eqref{eq:contam_model}.
Suppose also that Assumptions~\ref{assumption:regularity-dist}, \ref{assumption:consitency-scores}, and~\ref{assumption:regularity-dist-delta-cc} hold.
Then, for any $k \in [K]$,
  \begin{align*}
    \P{ \P{Y_{n+1} \in \hat{C}^{\mathrm{cc}}(X_{n+1}) \mid Y = k, \mathcal{D}} \leq 1 - \alpha + \varphi_k^{\mathrm{cc}}(n_k,n_{*},\gamma)} \geq 1-\gamma.
  \end{align*}
where
\begin{align*}
  \varphi_k^{\mathrm{cc}}(n_k,n_{*},\gamma)
  & =  \delta^{\mathrm{cc}}(n_k,n_{*},\gamma) + \frac{1}{n_k} \left( 1 + \frac{V_{kk} + \sum_{l \neq k} |V_{kl}|}{\gamma \bar{v}}\right) + \sqrt{\frac{\log[3/(\gamma \bar{v})]}{2 n_k}} \\
  & \qquad + 2\sum_{\l\neq k}|V_{kl}| \sqrt{\frac{\log(2K) + \log[3/(\gamma \bar{v})]}{2 n_{*}}} \\
  & \qquad + 2\sum_{l\neq k}|V_{kl}| \cdot \frac{f_{\max}}{f_{\min}} \cdot \frac{1}{n_k} \cdot \left[ \log(n_k+1) + \frac{3}{\gamma \bar{v}}  \sum_{j=1}^{n_k+1} \frac{1}{j}   \right],
\end{align*}
and
\begin{align*}
  \bar{v}
    = \frac{1}{2} \left( 1 - \frac{1}{2} \cdot \frac{\sum_{\l\neq k}|V_{kl}|}{\sum_{l=1}^{K}|V_{kl}|} \right).
\end{align*}
\end{theorem}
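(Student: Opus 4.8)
The plan is to run the argument behind the coverage upper bound of Theorem~\ref{thm:algorithm-correction-upper} on a single high-probability event, so that the resulting inequality holds conditionally on $\mathcal{D}$ rather than only in expectation. I condition throughout on $\mathcal{D}^{\mathrm{train}}$, so that $\hat{s}$, the prediction function $\mathcal{C}$, and all the distribution functions $F^k_l,\tF^k_l$ are fixed. By the definition of the conformity score and the continuity assumption (Assumption~\ref{assumption:regularity-dist}), on $\{Y_{n+1}=k\}$ and conditionally on $\mathcal{D}$ one has $Y_{n+1}\in\hat{C}^{\mathrm{cc}}(X_{n+1})\iff \hat{s}(X_{n+1},k)\le\hat\tau_k$, exactly as in the proof of Theorem~\ref{thm:coverage-lab-cond}; hence the conditional coverage equals $F^k_k(\hat\tau_k)$, and it suffices to show $F^k_k(\hat\tau_k)\le 1-\alpha+\varphi^{\mathrm{cc}}_k(n_k,n_{*},\gamma)$ with probability at least $1-\gamma$ over $\mathcal{D}^{\mathrm{cal}}$. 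Two exact relations will be used repeatedly: since $\hat{s}(X_i,k)$, $i\in\mathcal{D}^{\mathrm{cal}}_k$, are i.i.d.\ with continuous c.d.f.\ $\tF^k_k$, the probability integral transform gives $\tF^k_k(S^k_{(i)})=U_{(i)}$, the order statistics of $n_k$ i.i.d.\ uniforms, while \eqref{eq:F-hat} gives $\hat{F}^k_k(S^k_{(i)})=i/n_k$.

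Next I would assemble a good event $\mathcal{E}$ of probability at least $1-\gamma$ as the intersection of three families of bounds, apportioning the budget $\gamma$ so that events tied to the $k$-th subsample consume $\gamma_1$ (split between a DKW bound and several fluctuation/Markov bounds of size $\sim\gamma\bar v/3$, using $\gamma_1=2\gamma\bar v$) and events tied to the indices $l\neq k$ consume $\gamma_2$, with $\gamma_1+\gamma_2=\gamma$: (i) a Dvoretzky--Kiefer--Wolfowitz bound on $\sup_i|i/n_k-U_{(i)}|$, producing the $\sqrt{\log(\cdot)/(2n_k)}$ contributions; (ii) a DKW bound union-bounded over the $K-1<K$ indices $l\neq k$, controlling $\max_{l\neq k}\sup_t|\hat{F}^k_l(t)-\tF^k_l(t)|$ by a multiple of $\sqrt{(\log(2K)+\log(\cdot))/(2n_{*})}$ since $n_l\ge n_{*}$; and (iii) a Markov bound on the maximal spacing $\max_i\big(U_{(i)}-U_{(i-1)}\big)$, whose expectation is the harmonic quantity $\tfrac{1}{n_k+1}\sum_{j=1}^{n_k+1}\tfrac1j$ (this is the source of the $\sum_j 1/j$ and $3/(\gamma\bar v)$ factors). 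One realization of this maximal spacing simultaneously controls every increment $\hat{F}^k_l(S^k_{(i)})-\hat{F}^k_l(S^k_{(i-1)})$: by Assumption~\ref{assumption:regularity-dist} and a change of variables through $(\tF^k_k)^{-1}$ the population increment $\tF^k_l(S^k_{(i)})-\tF^k_l(S^k_{(i-1)})$ is at most $\tfrac{f_{\max}}{f_{\min}}\big(U_{(i)}-U_{(i-1)}\big)$, and part (ii) absorbs the empirical-population gap, so no further union over $l$ is needed here.

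On $\mathcal{E}$ the argument is a deterministic chain of inequalities mirroring Theorem~\ref{thm:algorithm-correction-upper}. Write $F^k_k(\hat\tau_k)=\tF^k_k(\hat\tau_k)+\Delta_k(\hat\tau_k)$; bound $\tF^k_k(\hat\tau_k)=U_{(\hat i_k)}\le \hat i_k/n_k+\sup_i|i/n_k-U_{(i)}|$ by (i); and replace $\Delta_k(\hat\tau_k)$ by $\hDelta_k(\hat\tau_k)$ using \eqref{eq:delta-2}, \eqref{eq:delta-hat}, the identity $\hat{F}^k_k(S^k_{(i)})=i/n_k$, and parts (i)--(ii), the estimation error being at most $|V_{kk}-1|\sup_i|i/n_k-U_{(i)}|+\sum_{l\neq k}|V_{kl}|\sup_t|\hat{F}^k_l(t)-\tF^k_l(t)|$; this is where Assumption~\ref{assumption:regularity-dist-delta-cc} is calibrated to exactly offset the errors. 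Then use the minimality of $\hat i_k=\min\hat{\mathcal{I}}_k$ in \eqref{eq:Ik-set}--\eqref{eq:Ik-set-tauk}: when $\hat i_k\ge2$, excluding $\hat i_k-1$ gives
\begin{align*}
  \hat i_k/n_k+\hDelta_k(S^k_{(\hat i_k)}) < 1-\alpha+\delta^{\mathrm{cc}}(n_k,n_{*},\gamma)+\tfrac1{n_k}+\big[\hDelta_k(S^k_{(\hat i_k)})-\hDelta_k(S^k_{(\hat i_k-1)})\big],
\end{align*}
and the one-step jump $(V_{kk}-1)/n_k+\sum_{l\neq k}V_{kl}\big(\hat{F}^k_l(S^k_{(\hat i_k)})-\hat{F}^k_l(S^k_{(\hat i_k-1)})\big)$ is bounded on $\mathcal{E}$ via (ii) and (iii), the Markov step turning the spacing into $\tfrac{3}{\gamma\bar v}\,\tfrac1{n_k+1}\sum_{j=1}^{n_k+1}\tfrac1j$. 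The edge case $\hat i_k=1$ is handled the same way by bounding $\hDelta_k(S^k_{(1)})$ directly through $\hat{F}^k_l(S^k_{(1)})$ and the same maximal-spacing estimate; and $\hat{\mathcal{I}}_k=\emptyset$ (so $\hat\tau_k=1$) cannot occur on $\mathcal{E}$ because Assumption~\ref{assumption:regularity-dist-delta-cc} together with (i)--(ii) forces $1\in\hat{\mathcal{I}}_k$, while Assumption~\ref{assumption:consitency-scores} provides the monotonicity ($\Delta_k\ge0$ up to estimation error) used to keep the chain valid, exactly as in Theorem~\ref{thm:algorithm-correction-cc}. Collecting the residuals ($1/n_k$, $1/n_{*}$, the $V_{kk}$ and $\sum_{l\neq k}|V_{kl}|$ coefficients, the DKW radii, and the harmonic/Markov term) reproduces $\varphi^{\mathrm{cc}}_k(n_k,n_{*},\gamma)$.

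The main obstacle is controlling the one-step jump of the step function $\hDelta_k$ at the \emph{data-dependent} index $\hat i_k$. Because $\hDelta_k$ is a fixed linear combination of empirical c.d.f.'s, its jump at any prescribed index is only $O(f_{\max}/(f_{\min}n_k))$ in expectation, but $\hat i_k$ is itself a complicated functional of the same calibration data and cannot be conditioned away; dominating every jump by the maximal uniform spacing is the natural fix, but that maximal spacing fails to concentrate (its standard deviation is comparable to its mean), so only Markov's inequality is available, which is precisely what forces the $3/(\gamma\bar v)$ blow-up and the harmonic sum $\sum_{j=1}^{n_k+1}1/j$ into $\varphi^{\mathrm{cc}}_k$. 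The rest is bookkeeping: apportioning $\gamma$ among the DKW and Markov events so that the constants inside $\delta^{\mathrm{cc}}$ and $\varphi^{\mathrm{cc}}_k$ come out as stated, and keeping track of the sign of $V_{kk}$ (only $V_{kk}\ge0$ is needed).
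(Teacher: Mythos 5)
Your proof proposal takes essentially the same approach as the paper's: you correctly identify the decomposition of the conditional coverage as $F^k_k(\hat\tau_k)=\tF^k_k(\hat\tau_k)+\Delta_k(\hat\tau_k)$ with the probability-integral identity $\tF^k_k(S^k_{(i)})=U_{(i)}$, the DKW bounds on the $k$-th and $l\neq k$ empirical CDFs (via Lemma~\ref{lemma:dkw-delta-prob}), the use of Assumption~\ref{assumption:regularity-dist} to dominate increments of $\tF^k_l$ by $\tfrac{f_{\max}}{f_{\min}}$ times uniform spacings, the Markov bound on the maximal spacing as the device that converts the expectation-level argument of Theorem~\ref{thm:algorithm-correction-upper} into a high-probability one, and Assumption~\ref{assumption:regularity-dist-delta-cc} to rule out $\hat{\mathcal{I}}_k=\emptyset$ on the good event. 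Your ``main obstacle'' paragraph also correctly pinpoints the crux: the jump of $\hDelta_k$ is evaluated at a data-dependent index $\hat i_k$, and the maximal spacing does not concentrate, so only Markov is available, which is indeed what produces the $3/(\gamma\bar v)$ and harmonic-sum factors.

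The one place where your plan diverges from the paper, and would not reproduce the stated constant, is the edge case $\hat i_k=1$. You propose to bound $\hDelta_k(S^k_{(1)})$ and the conditional coverage there via the same maximal-spacing estimate. The paper instead observes that on $\{\hat i_k=1\}$ one has, using Assumption~\ref{assumption:consitency-scores} together with $V_{kk}\ge 0$,
\[
F^k_k(S^k_{(1)}) \;=\;\textstyle\sum_l V_{kl}\,\tF^k_l(S^k_{(1)}) \;\le\; \Bigl(V_{kk}+\sum_{l\neq k}|V_{kl}|\Bigr)\,U_{(1)},
\]
and then bounds $U_{(1)}$ directly by $\P{U_{(1)}>a}=(1-a)^{n_k}\le 1/(n_k a)$, which after setting $a=1/(n_k\bar\gamma)$ gives the exact term $\tfrac{V_{kk}+\sum_{l\neq k}|V_{kl}|}{n_k\,\gamma\bar v}$ appearing in $\varphi^{\mathrm{cc}}_k$. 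Replacing this by a bound through $U_{(1)}\le\max_i D_i$ and Markov on the maximal spacing yields a term of order $\tfrac{\log n_k}{n_k\gamma\bar v}$ rather than $\tfrac{1}{n_k\gamma\bar v}$, so your route would give a valid but slightly weaker bound, not the stated one. (Your description also contains a small slip: non-emptiness of $\hat{\mathcal{I}}_k$ is secured by forcing $n_k\in\hat{\mathcal{I}}_k$, not $1\in\hat{\mathcal{I}}_k$.) Apart from that, your single-event formulation is just a repackaging of the paper's $\I{\cA_1}+(\cdot)\I{\cA_2}+(\cdot)\I{\cA_1^c\cap\cA_2^c}$ decomposition, and the probability budget you describe ($\gamma_1=2\gamma\bar v$ split into thirds over the DKW/Markov events for the $k$-th subsample, $\gamma_2$ for the $l\neq k$ indices) matches the paper's $\gamma_2+2\bar\gamma=\gamma$ with $\bar\gamma=\gamma\bar v$.
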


The interpretation of Theorem~\ref{thm:algorithm-correction-cc-upper} is similar to that of Theorem~\ref{thm:algorithm-correction-upper}, because $\varphi_k^{\mathrm{cc}}(n_k,n_{*},\gamma) \to 0$ as $n_{*} \to \infty$, for any fixed $\gamma > 0$.

We conclude this section by noting that the power of Algorithm~\ref{alg:correction-cc} can also be further boosted without losing the calibration-conditional coverage guarantee, as long as a relatively mild ``optimistic'' assumption on the coverage inflation factor defined in~\eqref{eq:delta} holds.
Concretely, we propose to apply Algorithm~\ref{alg:correction-cc} with the set $\hat{\mathcal{I}}_k$ in~\eqref{eq:Ik-set} replaced by:
\begin{align} \label{eq:Ik-set-cc-optimist}
  \hat{\mathcal{I}}_k := \left\{i \in [n_{k}] : \frac{i}{n_{k}} \geq 1 - \alpha - \max\left\{ \hDelta_k(S_{(i)}) - \delta^{\mathrm{cc}}(n_{k},n_{*},\gamma) , - \sqrt{\frac{\log(1/\gamma)}{2 n_{k}}} \right\} \right\}.
\end{align}
The motivation behind this approach is that, in the absence of label contamination, valid calibration-conditional coverage~\eqref{eq:cal-cond-coverage} can be achieved by simply applying Algorithm \ref{alg:standard-lab-cond} with the nominal level $\alpha$ replaced by $\alpha' = \alpha - \sqrt{\log(1/\gamma)/(2 n_{k})}$ \citep{vovk2012conditional}.
Therefore, in analogy with the optimistic variation of Algorithm~\ref{alg:correction-cc} presented earlier in Section~\ref{sec:boosting-optimist}, it is now intuitive to propose an optimistic extension of Algorithm~\ref{alg:correction-cc} that is never more conservative than the standard benchmark for clean data.
The following result establishes that this optimistic approach still guarantees the desired calibration-conditional coverage~\eqref{eq:cal-cond-coverage}, as long as some relatively mild assumption on the coverage inflation factor holds.

\begin{proposition} \label{thm:algorithm-correction-cc-optimist}
Under the setup of Theorem~\ref{thm:algorithm-correction-cc}, assume also that  $\inf_{t\in \R}\Delta_k(t) \geq \delta^{\mathrm{cc}}(n_k, n_{*}) - \sqrt{ \log(1/\gamma) / (2 n_{k})}$.
If $\hat{C}^{\mathrm{cc}}(X_{n+1})$ is the set output by Algorithm~\ref{alg:correction-cc} applied with $\hat{\mathcal{I}}_k$ in~\eqref{eq:Ik-set-cc-optimist}, then, $\P{ \P{Y_{n+1} \in \hat{C}^{\mathrm{cc}}(X_{n+1}) \mid Y = k, \mathcal{D}} \geq 1 - \alpha} \geq 1-\gamma$ for any $k \in [K]$.
\end{proposition}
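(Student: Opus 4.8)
The plan is to reduce the claim to a deterministic statement about the threshold $\hat\tau_k$ produced by the optimistic variant of Algorithm~\ref{alg:correction-cc}, and then to show that, on one high-probability event, this threshold is no smaller than a threshold already known to deliver $1-\alpha$ conditional coverage. First I would record the reduction underlying Section~\ref{app:methods-cal-cond}. With $F^k_k$ as in~\eqref{eq:cdf-scores}, the inner conditional coverage equals $\P{Y_{n+1}\in\hat C^{\mathrm{cc}}(X_{n+1})\mid Y=k,\mathcal{D}}=F^k_k(\hat\tau_k)$, because $k\in\mathcal{C}(x,\tau)$ iff $\hat s(x,k)\le\tau_k$, the threshold $\hat\tau_k$ is $\mathcal{D}$-measurable, and $X_{n+1}\mid Y_{n+1}=k$ is an independent draw from $P_k$. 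So it suffices to prove $\P{F^k_k(\hat\tau_k)\ge 1-\alpha}\ge 1-\gamma$. Under Assumption~\ref{assumption:linear-contam}, Proposition~\ref{prop:indep-linear} gives $\tilde F^k_l(t)=\sum_m M_{lm}F^k_m(t)$ for each fixed $t$; inverting $M$ and using $\sum_l V_{kl}=1$ together with~\eqref{eq:delta} yields $F^k_k(t)=\sum_l V_{kl}\tilde F^k_l(t)=\tilde F^k_k(t)+\Delta_k(t)$.

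Second, I would decompose the optimistic index set. Since $-\max\{a,b\}=\min\{-a,-b\}$, the set $\hat{\mathcal{I}}_k$ in~\eqref{eq:Ik-set-cc-optimist} is the union of the non-optimistic set of Algorithm~\ref{alg:correction-cc} (i.e.\ \eqref{eq:Ik-set} with $\delta^{\mathrm{cc}}$) and the set $\{i\in[n_k]:i/n_k\ge 1-\alpha+\sqrt{\log(1/\gamma)/(2n_k)}\}$, the latter being exactly the calibration set used by the standard construction of \citet{vovk2012conditional} at effective level $\alpha-\sqrt{\log(1/\gamma)/(2n_k)}$. Because the order statistics $S^k_{(\cdot)}$ are nondecreasing in their index (with the convention that the index of an empty set yields the value $1$), the optimistic threshold factors as $\hat\tau_k=\min\{\hat\tau^{\mathrm{a}}_k,\hat\tau^{\mathrm{s}}_k\}$, where $\hat\tau^{\mathrm{a}}_k$ is the output of Algorithm~\ref{alg:correction-cc} and $\hat\tau^{\mathrm{s}}_k$ is the standard threshold just described. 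Since $F^k_k$ is nondecreasing, $F^k_k(\hat\tau_k)=\min\{F^k_k(\hat\tau^{\mathrm{a}}_k),F^k_k(\hat\tau^{\mathrm{s}}_k)\}$, so it is enough to control each piece on a single common event.

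Third, I would control both pieces on the event $E$ constructed in the proof of Theorem~\ref{thm:algorithm-correction-cc}, which has $\P{E}\ge 1-\gamma_1-\gamma_2=1-\gamma$ and in particular contains the one-sided Dvoretzky--Kiefer--Wolfowitz bound $\sup_t\big(\hat F^k_k(t)-\tilde F^k_k(t)\big)\le\sqrt{\log(1/\gamma_1)/(2n_k)}$ (at level $\gamma_1$) and a union of two-sided DKW bounds for the $\hat F^k_l$, $l\neq k$ (at level $\gamma_2$). On $E$, Theorem~\ref{thm:algorithm-correction-cc} already gives $F^k_k(\hat\tau^{\mathrm{a}}_k)\ge 1-\alpha$. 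For the standard piece, on the same event $E$: when the defining index set is nonempty the definition of $\hat\tau^{\mathrm{s}}_k$ forces $\hat F^k_k(\hat\tau^{\mathrm{s}}_k)\ge 1-\alpha+\sqrt{\log(1/\gamma)/(2n_k)}$ (otherwise $\hat\tau^{\mathrm{s}}_k=1$ and $F^k_k(\hat\tau^{\mathrm{s}}_k)=1\ge 1-\alpha$), the one-sided DKW component of $E$ gives $\tilde F^k_k(\hat\tau^{\mathrm{s}}_k)\ge\hat F^k_k(\hat\tau^{\mathrm{s}}_k)-\sqrt{\log(1/\gamma_1)/(2n_k)}$, and therefore
\begin{align*}
F^k_k(\hat\tau^{\mathrm{s}}_k) = \tilde F^k_k(\hat\tau^{\mathrm{s}}_k)+\Delta_k(\hat\tau^{\mathrm{s}}_k)
  &\ge 1-\alpha+\sqrt{\tfrac{\log(1/\gamma)}{2n_k}}-\sqrt{\tfrac{\log(1/\gamma_1)}{2n_k}}+\inf_{t\in\R}\Delta_k(t) \\
  &\ge 1-\alpha-\sqrt{\tfrac{\log(1/\gamma_1)}{2n_k}}+\delta^{\mathrm{cc}}(n_k,n_*,\gamma) \ge 1-\alpha,
\end{align*}
where the middle inequality uses the hypothesis $\inf_{t\in\R}\Delta_k(t)\ge\delta^{\mathrm{cc}}(n_k,n_*,\gamma)-\sqrt{\log(1/\gamma)/(2n_k)}$ and the last uses $\delta^{\mathrm{cc}}(n_k,n_*,\gamma)\ge\sqrt{\log(1/\gamma_1)/(2n_k)}$, which is immediate from~\eqref{eq:delta-constant-cc}. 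Combining, $F^k_k(\hat\tau_k)=\min\{F^k_k(\hat\tau^{\mathrm{a}}_k),F^k_k(\hat\tau^{\mathrm{s}}_k)\}\ge 1-\alpha$ on $E$, hence $\P{F^k_k(\hat\tau_k)\ge 1-\alpha}\ge\P{E}\ge 1-\gamma$, which is the claim.

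The main obstacle---and the reason the statement is not a one-line corollary of Theorem~\ref{thm:algorithm-correction-cc}---is that $\hat\tau^{\mathrm{a}}_k$ and $\hat\tau^{\mathrm{s}}_k$ must be controlled \emph{simultaneously}, since it is their minimum that enters the prediction set, so one cannot afford two separate $(1-\gamma)$-events. The resolution is the observation above: the only randomness the $\hat\tau^{\mathrm{s}}_k$ bound requires is the one-sided DKW control of $\hat F^k_k$, which is already a component of the event $E$ used for the adaptive bound, so no additional probability budget is spent. A minor remaining point is the bookkeeping of the $\hat\tau_k=1$ corner cases and of the $O(1)$ index shift between the rule $i/n_k\ge 1-\alpha+\sqrt{\log(1/\gamma)/(2n_k)}$ and the $\lceil(1+n_k)(1-\alpha')\rceil$-th order statistic in the Vovk recipe; neither affects the conclusion.
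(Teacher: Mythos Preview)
Your proof is correct, but it takes a genuinely different route from the paper's. The paper proceeds by a direct algebraic manipulation that mirrors the proofs of Propositions~\ref{thm:algorithm-correction-optimistic} and~\ref{thm:algorithm-correction-ci-optimist}: it writes the conditional miscoverage as $1-\tilde F^k_k(S^k_{(\hat i_k)})-\Delta_k(S^k_{(\hat i_k)})$, inserts $\pm\max\{\hat\Delta_k-\delta^{\mathrm{cc}},-\sqrt{\log(1/\gamma)/(2n_k)}\}$, and then uses the hypothesis $\inf_t\Delta_k(t)\ge\delta^{\mathrm{cc}}-\sqrt{\log(1/\gamma)/(2n_k)}$ together with the $1$-Lipschitz property of $x\mapsto\max\{x,c\}$ to bound the resulting discrepancy by $\sup_t|\hat\Delta_k(t)-\Delta_k(t)|$, landing on exactly the same inequality as at the end of the proof of Theorem~\ref{thm:algorithm-correction-cc}. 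Your approach is more structural: you observe that~\eqref{eq:Ik-set-cc-optimist} is the union of the non-optimistic index set and the standard Vovk-style set, so the optimistic threshold factors as $\hat\tau_k=\min\{\hat\tau_k^{\mathrm{a}},\hat\tau_k^{\mathrm{s}}\}$, and you control both pieces on the single DKW event $E$ already used in Theorem~\ref{thm:algorithm-correction-cc}. The paper's argument is shorter and uniform with the other optimistic variants; your decomposition is more transparent about \emph{why} the optimistic rule works (it is literally the better of two procedures each valid on $E$) and makes the ``no extra probability budget'' point explicit, which is a nice conceptual gain.
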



\clearpage

\section{Mathematical proofs} \label{app:proofs}

\subsection{Preliminaries}

\begin{proof}[Proof of Theorem~\ref{thm:coverage-lab-cond}]
  By definition of the conformity score function in~\eqref{eq:conf-scores}, $Y_{n+1} \in \hat{C}(X_{n+1})$ if and only if $\hat{s}(X_{n+1},Y_{n+1}) \leq \hat{\tau}_{Y_{n+1}}$.
  Therefore,
  \begin{align*}
    & \P{Y_{n+1} \in \hat{C}(X_{n+1}) \mid Y_{n+1} = k} \\
    & \qquad = \P{\tilde{Y}_{n+1} \in \hat{C}(X_{n+1}) \mid \tilde{Y}_{n+1} = k} + \\
    & \qquad \qquad \P{Y_{n+1} \in \hat{C}(X_{n+1}) \mid Y_{n+1} = k} - \P{\tilde{Y}_{n+1} \in \hat{C}(X_{n+1}) \mid \tilde{Y}_{n+1} = k} , \\
    & \qquad = \P{\tilde{Y}_{n+1} \in \hat{C}(X_{n+1}) \mid \tilde{Y}_{n+1} = k} \\
    & \qquad \qquad + \EV{ \P{Y_{n+1} \in \hat{C}(X_{n+1}) \mid Y_{n+1} = k, \mathcal{D}} - \P{\tilde{Y}_{n+1} \in \hat{C}(X_{n+1}) \mid \tilde{Y}_{n+1} = k, \mathcal{D}} }\\
    & \qquad = \P{\tilde{Y}_{n+1} \in \hat{C}(X_{n+1}) \mid \tilde{Y}_{n+1} = k} \\
    & \qquad \qquad + \EV{ \P{\hat{s}(X_{n+1}, k) \leq \hat{\tau}_{k} \mid Y_{n+1} = k, \mathcal{D}} - \P{\hat{s}(X_{n+1}, k) \leq \hat{\tau}_{k}  \mid \tilde{Y}_{n+1} = k, \mathcal{D}}  } \\
   & \qquad = \P{\tilde{Y}_{n+1} \in \hat{C}(X_{n+1}) \mid \tilde{Y}_{n+1} = k}  + \EV{ F_{k}^{k}(\hat{\tau}_{k}) - \tF_{k}^{k}(\hat{\tau}_{k})  } \\
   & \qquad = \P{\tilde{Y}_{n+1} \in \hat{C}(X_{n+1}) \mid \tilde{Y}_{n+1} = k}  + \EV{\Delta_k(\htau_k)}.
  \end{align*}
  Above, the notation $\EV{\cdot \mid \mathcal{D}}$ indicates the expected value conditional on the training and calibration data sets.
  From this, Equations~\eqref{eq:prop-label-cond-coverage-lower} and~\eqref{eq:prop-label-cond-coverage-upper} follow directly because
   \begin{align*}
     1-\alpha \leq
     \P{\tilde{Y}_{n+1} \in \hat{C}(X_{n+1}) \mid \tilde{Y}_{n+1} = k}
     \leq 1 - \alpha + \frac{1}{n_k+1},
   \end{align*}
  by Proposition~\ref{prop:standard-coverage-label}, which applies here since the data pairs $(X_i,\tilde{Y}_i)$ are i.i.d.~random samples.
 \end{proof}

\begin{proof}[Proof of Proposition~\ref{prop:indep-linear}]
By definition of $\tilde{P}_k$ and $P_l$,
\begin{align*}
  \tilde{P}_k
  & = \P{X \mid \tilde{Y}=k}
    = \sum_{l=1}^{K} \P{X, Y=l \mid \tilde{Y}=k} \\
  & = \sum_{l=1}^{K} \frac{\P{X, \tilde{Y}=k \mid Y=l} \cdot \P{Y=l}}{\P{\tilde{Y}=k}}  \\
  & = \sum_{l=1}^{K} \frac{\P{X \mid Y=l} \cdot \P{\tilde{Y}=k \mid Y=l} \cdot \P{Y=l}}{\P{\tilde{Y}=k}} 
  = \sum_{l=1}^{K} P_l \cdot M_{kl}.
\end{align*}
Above, the fourth equality follows directly from Assumption~\ref{assumption:linear-contam}.
\end{proof}

\begin{proof}[Proof of Corollary~\ref{cor:coverage-cond}]
  By Theorem~\ref{thm:coverage-lab-cond}, it suffices to prove that $\EV{\Delta_k(t)} \geq 0$ for all $t \in \mathbb{R}$ and $k \in [K]$.
  To establish that, note that combining~\eqref{eq:delta} with Proposition~\ref{prop:indep-linear} gives:
  \begin{align*}
    \Delta_k(t)
    & = F_k^{k}(t) - \tF_k^{k}(t)
      = (1-M_{kk}) F_k^{k} (t) - \sum_{j\neq k} M_{kj} F_j^{k} (t) \\
    & \geq (1-M_{kk}) F_k^{k} (t) - \bigg(\sum_{j\neq k} M_{kj}\bigg) \cdot \max_{j\neq k} F_j^{k} (t).
  \end{align*}
  Further, by~\eqref{eq:assump_scores-cond},
  \begin{align*}
    \Delta_k(t)
    & \geq (1-M_{kk}) F_k^{k} (t) - \bigg(\sum_{j\neq k} M_{kj}\bigg) \cdot F_k^{k} (t)
     = \left( 1 - \sum_{j=1}^{K} M_{kj} \right) \cdot F_k^{k} (t)
      = 0,
  \end{align*}
where the last equality follows from the fact that $\sum_{j=1}^{K} M_{kj}=1$ for all $k \in [K]$.
This implies that $\EV{\Delta_k(\htau_k)} \geq 0$, completing the proof.
\end{proof}

\subsection{Adaptive coverage under a known label contamination model}

\begin{proof}[Proof of Theorem~\ref{thm:algorithm-correction}]
  Suppose $Y_{n+1} = k$, for some $k \in [K]$.
  By definition of the conformity score function in~\eqref{eq:conf-scores}, the event $k \notin \hat{C}(X_{n+1})$ occurs if and only if $\hat{s}(X_{n+1},k) > \hat{\tau}_k$.
  We will assume without loss of generality that $\hat{\mathcal{I}_k} \neq \emptyset$ and $\hat{i}_k = \min\{i \in \hat{\mathcal{I}_k}\}$; otherwise, $\hat{\tau}_k=1$ and the result trivially holds.
As in the proof of Theorem~\ref{thm:coverage-lab-cond}, the probability of miscoverage conditional on $Y_{n+1}=k$ and on the labeled data in $\mathcal{D}$ can be decomposed as:
  \begin{align*}
    & \P{Y \notin C(X, S^k_{(\hat{i}_k)}) \mid Y = k, \mathcal{D}} \\
    & \qquad = \P{ \tilde{Y} \notin C(X, S^k_{(\hat{i}_k)}) \mid \tilde{Y} = k, \mathcal{D}} \\
    & \qquad \quad + \P{Y \notin C(X, S^k_{(\hat{i}_k)}) \mid Y = k, \mathcal{D}} - \P{\tilde{Y} \notin C(X, S^k_{(\hat{i}_k)}) \mid \tilde{Y} = k, \mathcal{D}}\\
    & \qquad = 1 - \tilde{F}^{k}_k(S^k_{(\hat{i}_k)}) \\
    & \qquad \quad + \P{ \hat{s}(X_{n+1},k) > S^k_{(\hat{i}_k)} \mid Y=k, \mathcal{D} } - \P{ \hat{s}(X_{n+1},k) > S^k_{(\hat{i}_k)} \mid \tilde{Y}=k, \mathcal{D} } \\
    & \qquad = 1 - \tilde{F}^{k}_k(S^k_{(\hat{i}_k)}) + \left[ 1-F_k^k(S^k_{(\hat{i}_k)}) \right] - \left[ 1-\tilde{F}_k^k(S^k_{(\hat{i}_k)}) \right]\\
     & \qquad = 1 - \tilde{F}^{k}_k(S^k_{(\hat{i}_k)}) - \Delta_k(S^k_{(\hat{i}_k)}) \\
     & \qquad = 1 - \hat{F}^{k}_k(S^k_{(\hat{i}_k)}) + \hat{F}^{k}_k(S^k_{(\hat{i}_k)}) - \tilde{F}^{k}_k(S^k_{(\hat{i}_k)}) - \Delta_k(S^k_{(\hat{i}_k)}) \\
    & \qquad = \left[ 1 - \frac{\hat{i}_k}{n_k} - \Delta_k(S^k_{(\hat{i}_k)}) \right] + \hat{F}^{k}_k(S^k_{(\hat{i}_k)}) - \tilde{F}^{k}_k(S^k_{(\hat{i}_k)}) \\
    & \qquad = \left[ 1 - \frac{\hat{i}_k}{n_k} - \hat{\Delta}_k(S^k_{(\hat{i}_k)}) + \delta(n_k,n_{*}) \right] - \delta(n_k,n_{*})  + \\
    & \qquad \qquad + \hat{\Delta}_k(S^k_{(\hat{i}_k)}) - \Delta_k(S^k_{(\hat{i}_k)}) + \hat{F}^{k}_k(S^k_{(\hat{i}_k)}) - \tilde{F}^{k}_k(S^k_{(\hat{i}_k)}) \\
    & \qquad \leq \sup_{i \in \hat{\mathcal{I}}_k} \left\{ 1 - \frac{i}{n_k} - \hat{\Delta}_k(S^k_{(i)}) + \delta(n_k,n_{*}) \right\} - \delta(n_k,n_{*}) \\
    & \qquad \qquad + \sup_{t \in \mathbb{R}} [\hat{\Delta}_k(t) - \Delta_k(t) ] + \hat{F}^{k}_k(S^k_{(\hat{i}_k)}) - \tilde{F}^{k}_k(S^k_{(\hat{i}_k)}).
  \end{align*}
By definition of $\hat{\mathcal{I}}_k$, for all $i \in \hat{\mathcal{I}}_k$,
  \begin{align*}
    1 - \frac{i}{n_k} - \hat{\Delta}_k(S^k_{(i)}) + \delta(n_k,n_{*}) \leq \alpha,
  \end{align*}
which implies a.s.
  \begin{align*}
   \sup_{i \in \hat{\mathcal{I}}_k} \left\{ 1 - \frac{i}{n_k} - \hat{\Delta}_k(S^k_{(i)}) + \delta(n_k,n_{*}) \right\} \leq \alpha.
  \end{align*}
Therefore,
  \begin{align*}
    & \P{Y \notin C(X, S^k_{(\hat{i}_k)}) \mid Y = k} \\
    & \qquad \leq \alpha  + \EV{\sup_{t \in \mathbb{R}} [\hat{\Delta}_k(t) - \Delta_k(t)] } + \EV{ \hat{F}^{k}_k(S^k_{(\hat{i}_k)}) - \tilde{F}^{k}_k(S^k_{(\hat{i}_k)})  } - \delta(n_k,n_{*}).
  \end{align*}
The second term on the right-hand-side above can be bound using the DKW inequality, as made precise by Lemma~\ref{lemma:dkw-delta-expected}.
This leads to:
  \begin{align*}
    & \P{Y \notin C(X, S^k_{(\hat{i}_k)}) \mid Y = k} \\
    & \qquad \leq \alpha + \frac{2 \sum_{l\neq k}|V_{kl}|}{\sqrt{n_{*}}} \min \left\{ K \sqrt{\frac{\pi}{2}} , \frac{1}{\sqrt{n_{*}}} + \sqrt{\frac{\log(2K) + \log(n_{*})}{2}} \right\} \\
    & \quad \qquad + \EV{ \hat{F}^{k}_k(S^k_{(\hat{i}_k)}) - \tilde{F}^{k}_k(S^k_{(\hat{i}_k)})  } - \delta(n_k,n_{*}).
  \end{align*}
In order to bound the last expected value, let $U_1, \ldots, U_{n_k}$ be i.i.d.~uniform random variables on $[0,1]$, and denote their order statistics as $U_{(1)}, \ldots, U_{(n_k)}$. This implies that
\begin{align*}
  \hat{F}^{k}_k(S^k_{(\hat{i}_k)}) - \tilde{F}^{k}_k(S^k_{(\hat{i}_k)})
  \overset{d}{=} \frac{\hat{i}_k}{n_k} - U_{(\hat{i}_k)}.
\end{align*}
Therefore, it follows from~\eqref{eq:define-cn} that
\begin{align}
  \EV{ \hat{F}^{k}_k(S^k_{(\hat{i}_k)}) - \tilde{F}^{k}_k(S^k_{(\hat{i}_k)})  }
  \leq \EV{ \sup_{i \in [n_k]} \left\{ \frac{i}{n_k} - U_{(i)} \right\} } = c(n_k).
  \label{eq:tighter_cdf_bound}
\end{align}
We can thus conclude that:
  \begin{align*}
    & \P{Y \notin C(X, S^k_{(\hat{i}_k)}) \mid Y = k} \\
    & \qquad \leq \alpha + \frac{2 \sum_{l\neq k}|V_{kl}|}{\sqrt{n_{*}}} \min \left\{ K \sqrt{\frac{\pi}{2}} , \frac{1}{\sqrt{n_{*}}} + \sqrt{\frac{\log(2K) + \log(n_{*})}{2}} \right\} + c(n_k) - \delta(n_k,n_{*}) \\
    & \qquad = \alpha.
  \end{align*}

\end{proof}

\begin{lemma}\label{lemma:dkw-delta-expected}
Under the assumptions of Theorem~\ref{thm:algorithm-correction}, for any $k \in [K]$,
\begin{align*}
  & \EV{ \sup_{t \in \mathbb{R}} | \hat{\Delta}_k(t) - \Delta_k(t)  |}
    \leq \frac{2\sum_{\l\neq k}|V_{kl}|}{\sqrt{n_{*}}} \min \left\{ K \sqrt{\frac{\pi}{2}}, \frac{1}{\sqrt{n_{*}}} + \sqrt{\frac{\log(2K) + \log(n_{*})}{2}} \right\}.
\end{align*}
\end{lemma}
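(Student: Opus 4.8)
\textbf{Proof proposal for Lemma~\ref{lemma:dkw-delta-expected}.}
The plan is to bound the supremum deviation of $\hat\Delta_k$ from $\Delta_k$ by a weighted sum of Kolmogorov--Smirnov-type deviations of the empirical CDFs $\hat F_l^k$ from their population counterparts $\tF_l^k$, and then to control each such deviation in expectation using the Dvoretzky--Kiefer--Wolfowitz (DKW) inequality together with a layer-cake bound on the expectation of a maximum of sub-Gaussian-tailed random variables. First I would use the expression~\eqref{eq:delta-2} for $\Delta_k(t)$ and the definition~\eqref{eq:delta-hat} of $\hat\Delta_k(t)$ to write
\begin{align*}
  \hat\Delta_k(t) - \Delta_k(t)
  = (V_{kk}-1)\left(\hat F_k^k(t) - \tF_k^k(t)\right) + \sum_{l\neq k} V_{kl}\left(\hat F_l^k(t) - \tF_l^k(t)\right).
\end{align*}
The coefficient of the $k$-th term can be rewritten using $V_{kk}-1 = -\sum_{l\neq k} V_{kl}$, which follows because the rows of $M$ sum to one and hence $\sum_{l} V_{kl}\cdot(\text{appropriate identity})$; more directly, since $MV = I$ and the rows of $M$ sum to $1$, one gets $\sum_l V_{lk}$ relations, but the cleanest route is to note $\sum_{l} M_{kl}=1$ gives, after left-multiplying by $V$, that $\mathbf{1}^\top$ is a left eigenvector, so $\sum_l V_{kl} = 1$ — wait, I should instead just bound crudely: $|V_{kk}-1| = |\sum_{l\neq k}V_{kl}| \le \sum_{l\neq k}|V_{kl}|$. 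Then by the triangle inequality,
\begin{align*}
  \sup_{t\in\R}\left|\hat\Delta_k(t) - \Delta_k(t)\right|
  \le \sum_{l\neq k}|V_{kl}|\left(\sup_{t}\left|\hat F_k^k(t)-\tF_k^k(t)\right| + \sup_t\left|\hat F_l^k(t)-\tF_l^k(t)\right|\right).
\end{align*}

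Next I would take expectations and use the DKW inequality. For each fixed pair $(k,l)$, $\hat F_l^k$ is the empirical CDF of $n_l$ i.i.d.\ samples of $\hat s(X,k)$ drawn from the distribution of $X\mid\tilde Y=l$ (conditionally on $\mathcal D^{\text{train}}$), so the DKW--Massart inequality gives $\P{\sup_t|\hat F_l^k(t)-\tF_l^k(t)|>\epsilon \mid \mathcal D^{\text{train}}} \le 2e^{-2n_l\epsilon^2}$. Since $n_l\ge n_*$ for all $l$, each of these tails is dominated by $2e^{-2n_*\epsilon^2}$. There are at most $2K$ such sup-deviations appearing (the term $\sup_t|\hat F_k^k-\tF_k^k|$ appears once with total weight $\sum_{l\neq k}|V_{kl}|$, and the $l\neq k$ terms each once), so I would bound $\E\big[\max_{(k,l)}\sup_t|\hat F_l^k-\tF_l^k|\big]$ two ways. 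The first, cruder bound uses $\E[W_{kl}] \le \int_0^\infty 2e^{-2n_*\epsilon^2}d\epsilon = \sqrt{\pi/(2n_*)}$ for each individual $W_{kl}:=\sup_t|\hat F_l^k-\tF_l^k|$, and then sums: the total weighted sum is at most $\sum_{l\neq k}|V_{kl}|\cdot 2\cdot\sqrt{\pi/(2n_*)}$ if we bound each of the (at most $K$ distinct, but really $2(K-1)\le 2K$ terms) contributions separately — giving the $K\sqrt{\pi/2}/\sqrt{n_*}$ factor after folding constants. The second, sharper bound applies the standard layer-cake estimate for the expected maximum of $m=2K$ random variables each with tail $2e^{-2n_*\epsilon^2}$: splitting $\int_0^\infty \min\{1, 2m\,e^{-2n_*\epsilon^2}\}d\epsilon$ at the threshold $\epsilon_0 = \sqrt{\log(2m)/(2n_*)}$ yields $\E[\max_{(k,l)}W_{kl}] \le \epsilon_0 + \int_{\epsilon_0}^\infty 2m e^{-2n_*\epsilon^2}d\epsilon \le \sqrt{(\log(2K)+\log n_*)/2}\,/\sqrt{n_*} + 1/n_*$, matching the second term inside the $\min$ in the statement once one pulls out the factor $2\sum_{l\neq k}|V_{kl}|$ and absorbs the maximum over pairs appropriately.

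Combining, $\E\big[\sup_t|\hat\Delta_k(t)-\Delta_k(t)|\big] \le 2\sum_{l\neq k}|V_{kl}|\cdot\E[\max_{(k,l)}W_{kl}]$, and substituting whichever of the two bounds on $\E[\max W_{kl}]$ is smaller gives exactly the claimed inequality with the $\min\{\,\cdot\,,\,\cdot\,\}$. The main obstacle I anticipate is bookkeeping: being careful that the DKW constant is $2$ (Massart's sharp form) rather than the weaker classical constant, that the number of distinct empirical processes being maximized over is correctly counted as at most $2K$ (so that $\log(2K)$, not $\log(2K^2)$, appears), and that the ``$2$'' prefactor on $\sum_{l\neq k}|V_{kl}|$ really does come out of the $|V_{kk}-1|\le\sum_{l\neq k}|V_{kl}|$ step combined with the triangle inequality — i.e.\ each off-diagonal coefficient $|V_{kl}|$ is charged once for the $l$-term and once for its share of the $k$-term. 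A secondary subtlety is that all probabilities and expectations are implicitly conditional on $\mathcal D^{\text{train}}$ (since $\hat s$ and the CDFs $\tF_l^k$ depend on it), so the DKW bound is applied conditionally and then the tower property removes the conditioning; this does not affect the final constants. Everything else is routine.
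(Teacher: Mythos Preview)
Your proposal is correct and follows essentially the same route as the paper: express $\hat\Delta_k-\Delta_k$ via the row-sum identity $V_{kk}-1=-\sum_{l\neq k}V_{kl}$ (which is exactly what you use, despite the momentary hesitation---$V$ has unit row sums because $M$ does), bound by a weighted combination of the KS statistics $\sup_t|\hat F_l^k-\tF_l^k|$, and control the expectation two ways via DKW (direct integration versus a threshold-plus-tail split at $\eta=1/n_*$). One small remark: there are only $K$ distinct empirical processes $W_l$ (not $2K$), which is why $\log(2K)$ appears; and your sum-form of the triangle inequality actually yields the first bound \emph{without} the factor $K$, which is sharper than the paper's max-form bound but of course still implies the stated lemma.
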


\begin{proof}[Proof of Lemma~\ref{lemma:dkw-delta-expected}]
Note that, for any $k \in [K]$ and $t \in \mathbb{R}$,
\begin{align*}
	\hat{\Delta}_k(t) - \Delta_k(t)
	& = (V_{kk}-1) \left[ \hat{F}_k^{k}(t) - \tF^{k}_k(t)  \right] + \sum_{l\neq k} V_{kl} \left[ \hat{F}_l^{k}(t) - \tF^{k}_l(t)  \right]    \\
	& = \sum_{l\neq k} V_{kl} \left\{ \left[ \hat{F}_l^{k}(t) - \tF^{k}_l(t)  \right] - \left[ \hat{F}_k^{k}(t) - \tF^{k}_k(t)  \right] \right\}	\\
	& \leq 2 \left( \sum_{\l\neq k}|V_{kl}| \right) \max_{l \in [K]} | \hat{F}_l^{k}(t)   - \tF^{k}_{l}(t) |,
\end{align*}
where in the second-to-last line we used the fact that $V$ has row sums equal to 1 because it is the inverse of $M$, which has row sums equal to 1.
Therefore,
\begin{align*}
  \EV{ \sup_{t \in \mathbb{R}} | \hat{\Delta}_k(t) - \Delta_k(t)  | }
  \leq 2\sum_{\l\neq k}|V_{kl}| \cdot \EV{ \max_{l \in [K]} \sup_{t \in \mathbb{R}} | \hat{F}_l^{k}(t) - \tF^{k}_{l}(t)  |}.
\end{align*}

Define $l_{\min} := \arg\min_{l \in [K]} n_j$ and $n_{*} := \min_{l \in [K]} n_l$.
Then, combining the DKW inequality with a union bound gives that, for any $\eta > 0$,
\begin{align*}
  \P{ \max_{l \in [K]} \sup_{t \in \mathbb{R}} | \hat{F}_l^{k}(t) - \tF^{k}_{l}(t)    | > \eta }
  & \leq K \cdot \P{ \sup_{t \in \mathbb{R}} | \hat{F}_{l_{\min}}^{k}(t) - \tF^{k}_{l_{\min}}(t)    | > \eta } \\
  & \leq 2 K e^{- 2 n_{*} \eta^2}.
\end{align*}
This implies that
\begin{align*}
  \EV{ \max_{l \in [K]} \sup_{t \in \mathbb{R}} | \hat{F}_l^{k}(t) - \tF^{k}_{l}(t)   | }
  & \leq \int_{0}^{\infty} \P{ \max_{l \in [K]} \sup_{t \in \mathbb{R}} | \hat{F}_l^{k}(t) - \tF^{k}_{l}(t)   | > \eta } d\eta \\
  & \leq 2 K \int_{0}^{\infty} e^{- 2 n_{*} \eta^2} d\eta 
    = K\sqrt{\frac{\pi}{2 n_{*}}}.
\end{align*}

Similarly, the DKW inequality also implies that, for any $\eta > 0$,
\begin{align*}
  & \P{ \sup_{t \in \mathbb{R}} |\hat{\Delta}_k(t) - \Delta_k(t) | > 2\sum_{\l\neq k}|V_{kl}| \sqrt{\frac{\log(2K) + \log(1/\eta)}{2 n_{*}}} }
   \leq \eta;
\end{align*}
see Lemma~\ref{lemma:dkw-delta-prob}.
Therefore, setting $\eta = 1/n_{*}$, we obtain
\begin{align*}
  \EV{ \sup_{t \in \mathbb{R}} |\hat{\Delta}_k(t) - \Delta_k(t) | }
  & \leq 2\sum_{\l\neq k}|V_{kl}| \left[ \sqrt{\frac{\log(2K) + \log(n_{*})}{2 n_{*}}} + \frac{1}{n_{*}} \right].
\end{align*}
\end{proof}

\begin{lemma}\label{lemma:dkw-delta-prob}
Under the assumptions of Theorem~\ref{thm:algorithm-correction}, for any $k \in [K]$ and $\eta > 0$,
\begin{align*}
  \P{ \sup_{t \in \mathbb{R}} |\hat{\Delta}_k(t) - \Delta_k(t) | > 2\sum_{\l\neq k}|V_{kl}| \sqrt{\frac{\log(2K) + \log(1/\eta)}{2 n_{*}}} }
  & \leq \eta.
\end{align*}
\end{lemma}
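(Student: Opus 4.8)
The plan is to bound the fluctuation of $\hat\Delta_k - \Delta_k$ uniformly in $t$ by reducing it to a handful of empirical-process deviations, each controlled by the DKW inequality. First I would use the identity already established in the proof of Lemma~\ref{lemma:dkw-delta-expected}: since $V = M^{-1}$ and $M$ has row sums equal to $1$, the matrix $V$ also has row sums equal to $1$, so for any $t$,
\begin{align*}
  \hat\Delta_k(t) - \Delta_k(t)
  = \sum_{l \neq k} V_{kl} \left\{ \left[ \hat F_l^{k}(t) - \tF_l^{k}(t) \right] - \left[ \hat F_k^{k}(t) - \tF_k^{k}(t) \right] \right\}.
\end{align*}
Taking absolute values and the supremum over $t$, this is at most $2\sum_{l\neq k}|V_{kl}| \cdot \max_{l \in [K]} \sup_{t} |\hat F_l^{k}(t) - \tF_l^{k}(t)|$.

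Next I would control $\max_{l \in [K]} \sup_t |\hat F_l^{k}(t) - \tF_l^{k}(t)|$. Conditionally on $\mathcal{D}^{\mathrm{train}}$, for each fixed $l$ the function $\hat F_l^{k}$ is the empirical CDF of $n_l$ i.i.d.\ samples (the scores $\hat s(X_i,k)$ for $i \in \mathcal{D}_l^{\mathrm{cal}}$) whose population CDF is exactly $\tF_l^{k}$ by definition~\eqref{eq:cdf-scores}. Hence the DKW inequality gives $\P{\sup_t |\hat F_l^{k}(t) - \tF_l^{k}(t)| > s} \leq 2 e^{-2 n_l s^2} \leq 2 e^{-2 n_{*} s^2}$, using $n_l \geq n_{*}$. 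A union bound over the $K$ values of $l$ then yields $\P{\max_{l} \sup_t |\hat F_l^{k}(t) - \tF_l^{k}(t)| > s} \leq 2K e^{-2 n_{*} s^2}$. Setting the right-hand side equal to $\eta$ and solving gives $s = \sqrt{(\log(2K) + \log(1/\eta))/(2 n_{*})}$, so that with probability at least $1-\eta$ we have $\max_{l} \sup_t |\hat F_l^{k}(t) - \tF_l^{k}(t)| \leq \sqrt{(\log(2K) + \log(1/\eta))/(2 n_{*})}$.

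Combining the two displays, on that same event of probability at least $1-\eta$,
\begin{align*}
  \sup_{t \in \mathbb{R}} |\hat\Delta_k(t) - \Delta_k(t)|
  \leq 2 \sum_{l \neq k} |V_{kl}| \sqrt{\frac{\log(2K) + \log(1/\eta)}{2 n_{*}}},
\end{align*}
which is precisely the claim. The argument is almost entirely routine given the groundwork already laid in Lemma~\ref{lemma:dkw-delta-expected}; the only point requiring a little care is the conditioning. I would carry out the whole estimate conditionally on $\mathcal{D}^{\mathrm{train}}$ (so that the classifier $\hat\pi$, the score function $\hat s$, and hence the target CDFs $\tF_l^{k}$ are fixed), verify that the calibration scores within each class $l$ are i.i.d.\ with CDF $\tF_l^{k}$ — which holds because the data are i.i.d.\ and $\mathcal{D}_l^{\mathrm{cal}}$ conditions only on $\tilde Y_i = l$ — apply DKW class-by-class, take the union bound, and finally note that the resulting probability bound does not depend on $\mathcal{D}^{\mathrm{train}}$, so it holds unconditionally as well. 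The main (mild) obstacle is simply being careful that $n_{*}$, $n_l$, and the relevant conditioning $\sigma$-algebra are handled correctly; no new ideas beyond DKW and a union bound are needed.
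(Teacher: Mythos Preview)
Your proposal is correct and follows essentially the same approach as the paper: use the row-sum identity for $V$ to write $|\hat\Delta_k(t)-\Delta_k(t)| \le 2\sum_{l\neq k}|V_{kl}|\max_{l}\sup_t|\hat F_l^k(t)-\tF_l^k(t)|$, then apply DKW class-by-class with a union bound over $l\in[K]$ and solve $2Ke^{-2n_*s^2}=\eta$. Your additional remarks about conditioning on $\mathcal{D}^{\mathrm{train}}$ and treating the $n_l$ as fixed are careful elaborations of points the paper leaves implicit, but the argument is otherwise identical.
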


\begin{proof}[Proof of Lemma~\ref{lemma:dkw-delta-prob}]
It follows from the definitions of $\hat{\Delta}_k(t)$ and $\Delta_k(t)$, and from the DKW inequality, that, for any $\eta > 0$,
\begin{align*}
  & \P{ \sup_{t \in \mathbb{R}} |\hat{\Delta}_k(t) - \Delta_k(t) | > 2\sum_{\l\neq k}|V_{kl}|  \sqrt{\frac{\log(2K) + \log(1/\eta)}{2 n_{*}}} } \\
  & \qquad \leq \P{ \max_{j \in [K]} | \hat{F}_j^{k}(t) - \tF^{k}_{j}(t)| > \sqrt{\frac{\log(2K) + \log(1/\eta)}{2 n_{*}}} } \\
  & \qquad \leq K \cdot \P{ | \hat{F}_{l_{\min}}^{k}(t) - \tF^{k}_{l_{\min}}(t)  | > \sqrt{\frac{\log(2K) + \log(1/\eta)}{2 n_{*}}} } \\
  & \qquad \leq 2K \cdot \exp\left[ - 2 n_{*} \frac{\log(2K) + \log(1/\eta)}{2 n_{*}} \right] \\
  & \qquad = 2K \cdot \exp\left[ - \log(2K) - \log(1/\eta) \right] 
    = \eta.
\end{align*}

\end{proof}

\begin{proof}[Proof of Theorem~\ref{thm:algorithm-correction-upper}]

Define the events $\cA_1=\{\hat{\cI}_k=\emptyset \}$ and $\cA_2=\{\hat{i}_k=1\}$.
Then,
\begin{align}
  \begin{split} 	\label{eq:upper_bound_events}
	& \P{Y_{n+1} \in \hat{C}(X_{n+1}) \mid Y = k}  \\
	& \qquad \leq \P{\cA_1} + \EV{\P{Y_{n+1} \in \hat{C}(X_{n+1}) \mid Y = k, \cD}\I{\cA_2}} \\
	& \qquad \qquad + \EV{\P{Y_{n+1} \in \hat{C}(X_{n+1}) \mid Y = k, \cD}\I{\cA_1^c\cap \cA_2^c}}.
      \end{split}
\end{align}

We will now separately bound the three terms on the right-hand-side of~\eqref{eq:upper_bound_events}.
The following notation will be useful for this purpose.
For all $i \in [n_k]$, let $U_i \sim \text{Uniform}(0,1)$ be independent and identically distributed uniform random variables, and
denote their order statistics as $U_{(1)} < U_{(2)} < \ldots < U_{(n_k)}$.

\begin{itemize}
\item The probability of the event $\cA_1$ can be bound from above as:
  \begin{align} \label{eq:upper_bound_event1}
    \P{\hat{\mathcal{I}}_k = \emptyset} \leq \frac{1}{n_{*}}.
  \end{align}

  To simplify the notation in the proof of~\eqref{eq:upper_bound_event1}, define
\begin{align*}
  d_k := c(n_k) + \frac{2 \sum_{l\neq k}|V_{kl}|}{\sqrt{n_{*}}} \left( \frac{1}{\sqrt{n_{*}}} + 2 \sqrt{\frac{\log(2K) + \log(n_{*})}{2}} \right).
\end{align*}
Then, under Assumption~\ref{assumption:regularity-dist-delta},
  \begin{align*}
    & 1 - \left( \alpha + \hDelta_k(S^k_{(i)}) - \delta(n_k,n_{*}) \right) \\
    & \qquad = 1 - \alpha - \Delta_k(S^k_{(i)}) + \delta(n_k,n_{*}) + \Delta_k(S^k_{(i)}) - \hDelta_k(S^k_{(i)}) \\
    & \qquad \leq 1 - d_k + \delta(n_k,n_{*}) + \sup_{t \in [0,1]} | \Delta_k(t) - \hDelta_k(t)|.
  \end{align*}
Further, it follows from Lemma~\ref{lemma:dkw-delta-prob} and~\eqref{eq:delta-constant} that, with probability at least $1-1/n_{*}$,
  \begin{align*}
    & 1 - \left( \alpha + \hDelta_k(S^k_{(i)}) - \delta(n_k,n_{*}) \right) \\
    & \qquad \leq 1 - d_k + \delta(n_k,n_{*}) + 2 \sum_{l\neq k}|V_{kl}| \sqrt{\frac{\log(2K) + \log(n_{*})}{2 n_{*}}} \\
    & \qquad \leq 1 - d_k + c(n_k) + \frac{2 \sum_{l\neq k}|V_{kl}|}{\sqrt{n_{*}}} \left( \frac{1}{\sqrt{n_{*}}} + 2 \sqrt{\frac{\log(2K) + \log(n_{*})}{2}} \right)
      = 1.
  \end{align*}

This implies that the set $\hat{\mathcal{I}}_k$ defined in~\eqref{eq:Ik-set} is non-empty with probability at least $1-1/n_{*}$, because $n_k \in \hat{\mathcal{I}}_k$ if $1 - [ \alpha + \hDelta_k(S^k_{(i)}) - \delta(n_k,n_{*}) ] \leq 1$.

\item Under $\cA_2$, the second term on the right-hand-side of~\eqref{eq:upper_bound_events} can be written as:
  \begin{align*}
    \P{Y_{n+1} \in \hat{C}(X_{n+1}) \mid Y = k, \cD} & = \P{\hat{s}(X_{n+1}, k) \leq S^k_{(1)} \mid Y = k, \cD} = F^{k}_k(S^k_{(1)}).
\end{align*}
Therefore,
\begin{align*}
  \begin{split}
    & \EV{\P{Y_{n+1} \in \hat{C}(X_{n+1}) \mid Y = k, \cD}\I{\cA_2}} \\
    & \qquad \leq   \EV{F^{k}_k(S^k_{(1)})}  \\
    & \qquad = V_{kk} \EV{\tilde{F}^{k}_k(S^k_{(1)})} + \sum_{l \neq k} V_{kl} \EV{\tilde{F}^{k}_l(S^k_{(1)})}\\
    & \qquad = V_{kk} \EV{U_{(1)}} + \sum_{l \neq k} V_{kl} \EV{\tilde{F}^{k}_l(S^k_{(1)})}\\
    & \qquad = \frac{V_{kk}}{n_k+1} + \sum_{l \neq k} V_{kl} \EV{\tilde{F}^{k}_l(S^k_{(1)})}.
  \end{split}
\end{align*}
Next, using Assumption~\ref{assumption:consitency-scores}, we can write:
\begin{align}     \label{eq:upper_bound_event2}
  \begin{split}
    \EV{\P{Y_{n+1} \in \hat{C}(X_{n+1}) \mid Y = k, \cD}\I{\cA_2}} 
    & \leq \frac{V_{kk}}{n_k+1} + \EV{\tilde{F}^{k}_k(S^k_{(1)})} \sum_{l \neq k} |V_{kl}|  \\
    & = \frac{V_{kk}}{n_k+1} + \EV{U_{(1)} \sum_{l \neq k} |V_{kl}|}  \\
      & = \frac{V_{kk} + \sum_{l \neq k} |V_{kl}|}{n_k+1}.
  \end{split}
\end{align}

\item Under $\cA_1^c\cap \cA_2^c$, by definition of $\hat{\mathcal{I}}_k$, for any $i \leq \hat{i}_k - 1$,
  \begin{align*}
    \frac{i}{n_k} < 1 - \left( \alpha + \hDelta_k(S^k_{(i)}) - \delta(n_k,n_{*}) \right).
  \end{align*}
  Therefore, choosing $i = \hat{i}_k-1$, we get:
  \begin{align*}
    \frac{\hat{i}_k}{n_k} < 1 - \left( \alpha + \hDelta_k(S^k_{(\hat{i}_k-1)}) - \delta(n_k,n_{*}) \right) + \frac{1}{n_k}.
  \end{align*}
As in the proof of Theorem~\ref{thm:algorithm-correction}, the probability of coverage conditional on $Y_{n+1}=k$ and on the labeled data in $\mathcal{D}$ can be decomposed as:
  \begin{align*}
    & \P{Y \in C(X, S^k_{(\hat{i}_k)}) \mid Y = k, \mathcal{D}} \\
    & \quad = \P{ \tilde{Y} \in C(X, S^k_{(\hat{i}_k)}) \mid \tilde{Y} = k, \mathcal{D}} \\
    & \quad \quad + \P{Y \in C(X, S^k_{(\hat{i}_k)}) \mid Y = k, \mathcal{D}} - \P{\tilde{Y} \in C(X, S^k_{(\hat{i}_k)}) \mid \tilde{Y} = k, \mathcal{D}}\\
    & \quad = \tilde{F}^{k}_k(S^k_{(\hat{i}_k)}) + \P{ \hat{s}(X_{n+1},k) \leq S^k_{(\hat{i}_k)} \mid Y=k, \mathcal{D} } - \P{ \hat{s}(X_{n+1},k) \leq S^k_{(\hat{i}_k)} \mid \tilde{Y}=k, \mathcal{D} } \\
    & \quad = \tilde{F}^{k}_k(S^k_{(\hat{i}_k)}) + F_k^k(S^k_{(\hat{i}_k)})  - \tilde{F}_k^k(S^k_{(\hat{i}_k)}) 
      = \tilde{F}^{k}_k(S^k_{(\hat{i}_k)}) + \Delta_k(S^k_{(\hat{i}_k)}) \\
    & \quad = \hat{F}^{k}_k(S^k_{(\hat{i}_k)}) + \tilde{F}^{k}_k(S^k_{(\hat{i}_k)}) - \hat{F}^{k}_k(S^k_{(\hat{i}_k)})+ \Delta_k(S^k_{(\hat{i}_k)}) \\
    & \quad = \frac{\hat{i}_k}{n_k} + \tilde{F}^{k}_k(S^k_{(\hat{i}_k)}) - \hat{F}^{k}_k(S^k_{(\hat{i}_k)})+ \Delta_k(S^k_{(\hat{i}_k)}) \\
    & \quad < 1 - \left( \alpha + \hDelta_k(S^k_{(\hat{i}_k-1)}) - \delta(n_k,n_{*}) \right) + \frac{1}{n_k} + \tilde{F}^{k}_k(S^k_{(\hat{i}_k)}) - \hat{F}^{k}_k(S^k_{(\hat{i}_k)})+ \Delta_k(S^k_{(\hat{i}_k)}) \\
    & \quad = 1 - \alpha + \delta(n_k,n_{*}) + \frac{1}{n_k} + \tilde{F}^{k}_k(S^k_{(\hat{i}_k)}) - \hat{F}^{k}_k(S^k_{(\hat{i}_k)})+ \Delta_k(S^k_{(\hat{i}_k)}) - \hDelta_k(S^k_{(\hat{i}_k-1)}).
  \end{align*}

  To bound the last term above, note that
  \begin{align*}
     \Delta_k(S^k_{(\hat{i}_k)}) - \hDelta_k(S^k_{(\hat{i}_k-1)})
    & = (\Delta_k(S^k_{(\hat{i}_k)}) - \Delta_k(S^k_{(\hat{i}_k-1)})) + (\Delta_k(S^k_{(\hat{i}_k-1)}) - \hDelta_k(S^k_{(\hat{i}_k-1)})) \\
    & \leq \sup_{t \in \mathbb{R}} |\Delta_k(t) - \hDelta_k(t)| + (\Delta_k(S^k_{(\hat{i}_k)}) - \Delta_k(S^k_{(\hat{i}_k-1)})),
  \end{align*}
  where the expected value of the first term above can be bounded using Lemma~\ref{lemma:dkw-delta-expected}, and the second term is given by
  \begin{align*}
    & \Delta_k(S^k_{(\hat{i}_k)}) - \Delta_k(S^k_{(\hat{i}_k-1)}) \\
    & \qquad = (V_{kk}-1)  \left[ \tF^{k}_k(S^k_{(\hat{i}_k)}) -  \tF^{k}_k(S^k_{(\hat{i}_k-1)}) \right] + \sum_{l\neq k}  V_{kl} \left[ \tF^{k}_{l}(S^k_{(\hat{i}_k)}) - \tF^{k}_{l}(S^k_{(\hat{i}_k-1)}) \right]	\\
    & \qquad = \sum_{l\neq k} V_{kl} \left[ \left( \tF^{k}_{l}(S^k_{(\hat{i}_k)}) - \tF^{k}_{l}(S^k_{(\hat{i}_k-1)}) \right) - \left(\tF^{k}_k(S^k_{(\hat{i}_k)}) - \tF^{k}_k(S^k_{(\hat{i}_k-1)}) \right) \right] \\
    & \qquad \leq  2\sum_{l\neq k}|V_{kl}| \cdot  \max_{2\leq i\leq n_k}\max_{l\in[K]}\left| \tF^{k}_{l}(S^k_{(i)}) - \tF^{k}_{l}(S^k_{(i-1)}) \right|.
  \end{align*}
  Combining the above, we find that, under $\cA_1^c\cap \cA_2^c$,
  \begin{align*}
    & \P{Y \in C(X, S^k_{(\hat{i}_k)}) \mid Y = k, \mathcal{D}}	\\
    & \qquad \leq 1 - \alpha + \delta(n_k,n_{*}) + \frac{1}{n_k} + \tilde{F}^{k}_k(S^k_{(\hat{i}_k)}) - \hat{F}^{k}_k(S^k_{(\hat{i}_k)}) +  \sup_{t \in \mathbb{R}} |\Delta_k(t) - \hDelta_k(t)| \\
    & \qquad \qquad + 2\sum_{l\neq k}|V_{kl}|  \cdot \max_{2\leq i\leq n_k}\max_{l\in[K]}\left| \tF^{k}_{l}(S^k_{(i)}) - \tF^{k}_{l}(S^k_{(i-1)}) \right|.
  \end{align*}

  It remains to bound the expectation of the last term. By Assumption \ref{assumption:regularity-dist},
  \begin{align*}
    \max_{2\leq i\leq n_k}\max_{l\in[K]}\left| \tF^{k}_{l}(S^k_{(i)}) - \tF^{k}_{l}(S^k_{(i-1)}) \right| & \leq f_{\max} \max_{2\leq i\leq n_k}| S^k_{(i)} - S^k_{(i-1)}|	\\
                                                                                                         & \leq \frac{f_{\max}}{f_{\min}} \max_{2\leq i\leq n_k} |\tF^{k}_{k}(S^k_{(i)}) - \tF^{k}_{k}(S^k_{(i-1)})|	\\
    \qquad \qquad & \stackrel{d}{=} \frac{f_{\max}}{f_{\min}} \max_{2\leq i\leq n_k} (U_{(i)} - U_{(i-1)}) \\
                                                                                                         &\leq \frac{f_{\max}}{f_{\min}}  \max_{1 \leq i \leq n_k +1 } D_i,
  \end{align*}
  where $D_1=U_{(1)}$, $D_i=U_{(i)} - U_{(i-1)}$ for $i=2, \dots, n_k$, and $D_{n_k+1} = 1-U_{(n_k)}$.
  By a standard result on maximum uniform spacing,
  \begin{align*}
    \EV{\max_{1\leq i \leq n_k+1} D_i} = \frac{1}{n_k+1}\sum_{j=1}^{n_k+1} \frac{1}{j}.
  \end{align*}

  Therefore, under $\cA_1^c\cap \cA_2^c$,
\begin{align*}
	& \EV{\P{Y_{n+1} \in \hat{C}(X_{n+1}) \mid Y = k, \cD}\I{\cA_1^c\cap \cA_2^c}} \\
	& \qquad \leq  1 - \alpha + \delta(n_k,n_{*}) + \frac{1}{n_k} + \EV{ \tilde{F}^{k}_k(S^k_{(\hat{i}_k)}) - \hat{F}^{k}_k(S^k_{(\hat{i}_k)}) } + \EV{\sup_{t \in \mathbb{R}} |\Delta_k(t) - \hDelta_k(t)| } \\
 & \qquad \qquad + 2\sum_{l\neq k}|V_{kl}|  \cdot \EV{\max_{2\leq i\leq n_k}\max_{l\in[K]}\left| \tF^{k}_{l}(S^k_{(i)}) - \tF^{k}_{l}(S^k_{(i-1)}) \right|}.
\end{align*}
Thus, applying Lemma~\ref{lemma:dkw-delta-expected} and the bound in~\eqref{eq:tighter_cdf_bound}, it follows that, under $\cA_1^c\cap \cA_2^c$,
\begin{align} \label{eq:upper_bound_event3}
\begin{split}
  & \EV{\P{Y_{n+1} \in \hat{C}(X_{n+1}) \mid Y = k, \cD}\I{\cA_1^c\cap \cA_2^c}} \\
  & \qquad \leq  1 - \alpha + \delta(n_k,n_{*}) + \frac{1}{n_k} + \EV{ \max_{i \in [n_k]} \left( U_{i} - \frac{i}{n_k} \right) } + \EV{\sup_{t \in \mathbb{R}} |\Delta_k(t) - \hDelta_k(t)| } \\
	& \qquad \qquad + 2\sum_{l\neq k}|V_{kl}|  \cdot \frac{f_{\max}}{f_{\min}} \cdot \frac{1}{n_k+1}\sum_{j=1}^{n_k+1} \frac{1}{j}	\\
	& \qquad =  1 - \alpha + \delta(n_k,n_{*}) + \frac{1}{n_k} + c(n_k) + \EV{\sup_{t \in \mathbb{R}} |\Delta_k(t) - \hDelta_k(t)| } \\
	& \qquad \qquad + 2\sum_{l\neq k}|V_{kl}|  \cdot \frac{f_{\max}}{f_{\min}} \cdot \frac{1}{n_k+1}\sum_{j=1}^{n_k+1} \frac{1}{j}	\\
	& \qquad   \leq  1 - \alpha + \delta(n_k,n_{*}) + \frac{1}{n_k} + c(n_k) \\
	& \qquad \qquad +  \frac{2\sum_{l\neq k}|V_{kl}| }{\sqrt{n_{*}}} \min \left\{ K \sqrt{\frac{\pi}{2}} , \frac{1}{\sqrt{n_{*}}} + \sqrt{\frac{\log(2K) + \log(n_{*})}{2}} \right\} \\
	& \qquad \qquad \qquad + 2\sum_{l\neq k}|V_{kl}| \cdot \frac{f_{\max}}{f_{\min}} \cdot \frac{1}{n_k+1}\sum_{j=1}^{n_k+1} \frac{1}{j} \\
	& \qquad   =  1 - \alpha + 2 \delta(n_k,n_{*}) + \frac{1}{n_k} + 2\sum_{l\neq k}|V_{kl}| \cdot \frac{f_{\max}}{f_{\min}} \cdot \frac{1}{n_k+1}\sum_{j=1}^{n_k+1} \frac{1}{j}.
      \end{split}
\end{align}

\end{itemize}

Finally, combining~\eqref{eq:upper_bound_events} with \eqref{eq:upper_bound_event1}, \eqref{eq:upper_bound_event2}, and \eqref{eq:upper_bound_event3} leads to the desired result:
\begin{align*}
  & \P{Y_{n+1} \in \hat{C}(X_{n+1}) \mid Y = k}  \\
  & \qquad \leq \P{\cA_1} + \EV{\P{Y_{n+1} \in \hat{C}(X_{n+1}) \mid Y = k, \cD}\I{\cA_2}} \\
  & \qquad \qquad + \EV{\P{Y_{n+1} \in \hat{C}(X_{n+1}) \mid Y = k, \cD}\I{\cA_1^c\cap \cA_2^c}} \\
   & \qquad \leq 1 - \alpha + \frac{1}{n_{*}} + \frac{V_{kk} + \sum_{l \neq k} |V_{kl}|}{n_k+1} \\
  & \qquad \qquad + 2 \delta(n_k,n_{*}) + \frac{1}{n_k} + 2\sum_{l\neq k} |V_{kl}| \cdot \frac{f_{\max}}{f_{\min}} \cdot \frac{1}{n_k+1}\sum_{j=1}^{n_k+1} \frac{1}{j} \\
   & \qquad \leq 1 - \alpha + \frac{1}{n_{*}} + 2 \delta(n_k,n_{*}) + \frac{1}{n_k} \left[1+ 2 \sum_{l\neq k} |V_{kl}| \cdot \frac{f_{\max}}{f_{\min}} \sum_{j=1}^{n_k+1} \frac{1}{j} \right] \\
  & \qquad \qquad + \frac{V_{kk} + \sum_{l \neq k} |V_{kl}|}{n_k+1}.
\end{align*}

\end{proof}

\begin{proof}[Proof of Proposition~\ref{thm:algorithm-correction-optimistic}]
  Suppose $Y_{n+1} = k$, for some $k \in [K]$. Proceeding exactly as in the proof of Theorem~\ref{thm:algorithm-correction}, we obtain:
\begin{align*}
	& \P{Y \notin C(X, S^k_{(\hat{i}_k)}) \mid Y = k, \mathcal{D}}  \\
	& \qquad = 1 - \tilde{F}^{k}_k(S^k_{(\hat{i}_k)}) - \Delta_k(S^k_{(\hat{i}_k)}) 	\\
	& \qquad = 1- \hat{F}^{k}_k(S^k_{(\hat{i}_k)}) - \max\{\hDelta_k(S^k_{(\hat{i}_k)}) - \delta(n_k,n_{*}), -(1-\alpha)/n_k \}	\\
	& \qquad\qquad + \hat{F}^{k}_k(S^k_{(\hat{i}_k)}) - \tilde{F}^{k}_k(S^k_{(\hat{i}_k)}) \\
	& \qquad\qquad +  \max\{\hDelta_k(S^k_{(\hat{i}_k)}) - \delta(n_k,n_{*}), -(1-\alpha)/n_k \} - (\Delta_k(S^k_{(\hat{i}_k)}) - \delta(n_k,n_{*})) - \delta(n_k,n_{*}) 	\\
	& \qquad = \left[ 1- \frac{\hat{i}_k}{n_k} - \max\{\hDelta_k(S^k_{(i)}) - \delta(n_k,n_{*}), -(1-\alpha)/n_k \} \right]	\\
	& \qquad\qquad + \hat{F}^{k}_k(S^k_{(\hat{i}_k)}) - \tilde{F}^{k}_k(S^k_{(\hat{i}_k)}) \\
	& \qquad\qquad +  \max\{\hDelta_k(S^k_{(\hat{i}_k)}) - \delta(n_k,n_{*}) + (1-\alpha)/n_k, 0 \} \\
        & \qquad \qquad - \left( \Delta_k(S^k_{(\hat{i}_k)}) - \delta(n_k,n_{*}) + (1-\alpha)/n_k \right) - \delta(n_k,n_{*}) 	\\
	& \qquad \leq \left[ 1- \frac{\hat{i}_k}{n_k} - \max\{\hDelta_k(S^k_{(i)}) - \delta(n_k,n_{*}), -(1-\alpha)/n_k \} \right]	\\
	& \qquad\qquad + \hat{F}^{k}_k(S^k_{(\hat{i}_k)}) - \tilde{F}^{k}_k(S^k_{(\hat{i}_k)}) + \sup_{t \in \mathbb{R}} |\hat{\Delta}_k(t) - \Delta_k(t)| - \delta(n_k,n_{*}),
\end{align*}
using the fact that $\inf_{t\in \R}\Delta_k(t) \geq \delta(n_k, n_{*}) - (1-\alpha)/n_k$ implies
\begin{align*}
 & |\max\{\hDelta_k(t) - \delta(n_k,n_{*}) + (1-\alpha)/n_k, 0 \} - \left(\Delta_k(t) - \delta(n_k,n_{*}) + (1-\alpha)/n_k \right)| \\
   & \qquad \leq |\hat{\Delta}_k(t) - \Delta_k(t)|,
\end{align*}
for all $t \in \mathbb{R}$.
The proof is then completed by proceeding as in the proof of Theorem~\ref{thm:algorithm-correction}.
\end{proof}

\subsection{Adaptive coverage under a bounded label contamination model}

\begin{proof}[Proof of Theorem~\ref{thm:algorithm-correction-ci}]
  Suppose $Y_{n+1} = k$, for some $k \in [K]$.
  By definition of the conformity score function in~\eqref{eq:conf-scores}, the event $k \notin \hat{C}^{\mathrm{ci}}(X_{n+1})$ occurs if and only if $\hat{s}(X_{n+1},k) > \hat{\tau}_k$.
  We will assume without loss of generality that $\hat{\mathcal{I}}^{\mathrm{ci}}_k \neq \emptyset$ and $\hat{i}_k = \min\{i \in \hat{\mathcal{I}}^{\mathrm{ci}}_k\}$; otherwise, $\hat{\tau}_k=1$ and the result trivially holds.
By proceeding exactly as in the proof of Theorem~\ref{thm:algorithm-correction}, we have
  \begin{align*}
    & \P{Y \notin C(X, S^k_{(\hat{i}_k)}) \mid Y = k, \hat{V}^{\mathrm{low}}, \hat{V}^{\mathrm{upp}}} \\
    &  \qquad \leq \alpha  + \EV{\sup_{t \in \mathbb{R}} [\hat{\Delta}_k^{\mathrm{ci}}(t) - \Delta_k(t)] \mid \hat{V}^{\mathrm{low}}, \hat{V}^{\mathrm{upp}} } + c(n_k) - \delta^{\mathrm{ci}}(n_k,n_{*}).
  \end{align*}
The second term on the right-hand-side above can be bound using the DKW inequality, as made precise by Lemma~\ref{lemma:dkw-delta-expected-ci}.
This leads to:
  \begin{align*}
    & \P{Y \notin C(X, S^k_{(\hat{i}_k)}) \mid Y = k, \hat{V}^{\mathrm{low}}, \hat{V}^{\mathrm{upp}}} \\
    & \qquad \leq \alpha +
      \frac{2 \sum_{l \neq k} \left( |\hat{V}^{\mathrm{upp}}_{kl}| + \hat{\delta}^{(V)}_{kl} \right) }{\sqrt{n_{*}}} \min \left\{ K \sqrt{\frac{\pi}{2}} , \frac{1}{\sqrt{n_{*}}} + \sqrt{\frac{\log(2K) + \log(n_{*})}{2}} \right\} \\
  & \qquad \qquad + 2 \cdot \I{V \notin [\hat{V}^{\mathrm{low}}, \hat{V}^{\mathrm{upp}}] } \cdot \sum_{l \neq k} |\bar{V}^{\mathrm{upp}}_{kl}| 
    + c(n_k)- \delta^{\mathrm{ci}}(n_k,n_{*}) \\
    & \qquad = \alpha + 2 \I{V \notin [\hat{V}^{\mathrm{low}}, \hat{V}^{\mathrm{upp}}] } \sum_{l \neq k} |\bar{V}^{\mathrm{upp}}_{kl}| 
      - 2 \alpha_V \sum_{l \neq k} |\bar{V}_{kl}^{\mathrm{upp}}|  \\
    & \qquad \leq \alpha + 2 \left\{ \I{V \notin [\hat{V}^{\mathrm{low}}, \hat{V}^{\mathrm{upp}}] }  - \alpha_V \right\} \sum_{l \neq k} |\bar{V}_{kl}^{\mathrm{upp}}|.
  \end{align*}
Finally, taking an expectation with respect to the data used to estimate the confidence region for $V$ cancels the last term on the right-hand-side above, leading to:
  \begin{align*}
    \P{Y \notin C(X, S^k_{(\hat{i}_k)}) \mid Y = k}
    & \leq \alpha.
  \end{align*}
\end{proof}

\begin{lemma}\label{lemma:dkw-delta-expected-ci}
Under the assumptions of Theorem~\ref{thm:algorithm-correction-ci}, for any $k \in [K]$,
\begin{align*}
  & \EV{ \sup_{t \in \mathbb{R}} [ \hat{\Delta}_k^{\mathrm{ci}}(t) - \Delta_k(t) ] \mid \hat{V}^{\mathrm{low}}, \hat{V}^{\mathrm{upp}}} \\
  & \qquad \leq \EV{ \sup_{t \in \mathbb{R}} \max\left\{ \hat{\Delta}_k^{\mathrm{ci}}(t) - \Delta_k(t) , 0 \right\} \mid \hat{V}^{\mathrm{low}}, \hat{V}^{\mathrm{upp}}} \\
  & \qquad \leq \frac{2 \sum_{l \neq k} \left(|\hat{V}^{\mathrm{upp}}_{kl}| + \hat{\delta}^{(V)}_{kl} \right) }{\sqrt{n_{*}}} \min \left\{ K \sqrt{\frac{\pi}{2}} , \frac{1}{\sqrt{n_{*}}} + \sqrt{\frac{\log(2K) + \log(n_{*})}{2}} \right\} \\
  & \qquad \qquad + 2 \cdot \I{V \notin [\hat{V}^{\mathrm{low}}, \hat{V}^{\mathrm{upp}}] } \cdot \sum_{l \neq k} |\bar{V}_{kl}^{\mathrm{upp}}|.
\end{align*}
Further,
\begin{align*}
  & \EV{ \sup_{t \in \mathbb{R}} | \hat{\Delta}_k^{\mathrm{ci}}(t) - \Delta_k(t) | \mid \hat{V}^{\mathrm{low}}, \hat{V}^{\mathrm{upp}}} \\
  & \qquad \leq \frac{2 \sum_{l \neq k} |\hat{V}^{\mathrm{upp}}_{kl}| }{\sqrt{n_{*}}} \min \left\{ K \sqrt{\frac{\pi}{2}} , \frac{1}{\sqrt{n_{*}}} + \sqrt{\frac{\log(2K) + \log(n_{*})}{2}} \right\} \\
  & \qquad \qquad + 2 \I{V \notin [\hat{V}^{\mathrm{low}}, \hat{V}^{\mathrm{upp}}] } \cdot \sum_{l \neq k} |\bar{V}_{kl}^{\mathrm{upp}}| + (K-1) \left( \hat{\delta}^{(V)}_{k*}  + |\hat{\zeta}^{\mathrm{upp}}_k| \right).
\end{align*}

\end{lemma}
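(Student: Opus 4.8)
The plan is to split $\hat{\Delta}^{\mathrm{ci}}_k(t)-\Delta_k(t)$ into a part controlled by the Dvoretzky--Kiefer--Wolfowitz (DKW) inequality---namely the discrepancy between the empirical CDFs $\hat{F}^k_l$ and their population versions $\tilde{F}^k_l$, uniformly over $t$ and over $l\in[K]$---and a purely algebraic part which, on the event $\{V\in[\hat{V}^{\mathrm{low}},\hat{V}^{\mathrm{upp}}]\}$, is shown to be absorbed by the two nonpositive penalty terms built into $\hat{\Delta}^{\mathrm{ci}}_k$ in~\eqref{eq:delta-hat-k-epsilon-bound}. Throughout I would condition on $\hat{V}^{\mathrm{low}},\hat{V}^{\mathrm{upp}}$; by assumption these are independent of the calibration data determining the $\hat{F}^k_l$, so $\hat{V}^{\mathrm{upp}}_{kl}$, $\hat{\delta}^{(V)}_{kl}$, $\hat{\delta}^{(V)}_{k*}$, $|\hat{\zeta}^{\mathrm{upp}}_k|$, $|\bar{V}^{\mathrm{upp}}_{kl}|$ and the indicator $\I{V\notin[\hat{V}^{\mathrm{low}},\hat{V}^{\mathrm{upp}}]}$ may be treated as constants under the conditional expectation, while the CDF discrepancies keep their unconditional law.

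As a preliminary rewriting, since $V=M^{-1}$ has unit row sums, \eqref{eq:delta-2} gives $\Delta_k(t)=\sum_{l\neq k}V_{kl}\tilde{G}_l(t)$ with $\tilde{G}_l(t):=\tilde{F}^k_l(t)-\tilde{F}^k_k(t)$, and from $\hat{F}^k_k(t)-\tfrac{1}{K-1}\sum_{l\neq k}\hat{F}^k_l(t)=-\tfrac{1}{K-1}\sum_{l\neq k}G_l(t)$ with $G_l(t):=\hat{F}^k_l(t)-\hat{F}^k_k(t)$ one has
\begin{align*}
  \hat{\Delta}^{\mathrm{ci}}_k(t)=\sum_{l\neq k}\hat{V}^{\mathrm{upp}}_{kl}G_l(t)-\hat{\delta}^{(V)}_{k*}\left|\sum_{l\neq k}G_l(t)\right|-|\hat{\zeta}^{\mathrm{upp}}_k|\sum_{l\neq k}|G_l(t)|.
\end{align*}
Write $D_k:=\max_{m\in[K]}\sup_{t}|\hat{F}^k_m(t)-\tilde{F}^k_m(t)|$ and $a_l:=\hat{V}^{\mathrm{upp}}_{kl}-V_{kl}$. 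The leftmost inequality in the lemma is then immediate, since $\sup_{t}[\hat{\Delta}^{\mathrm{ci}}_k(t)-\Delta_k(t)]\le\sup_{t}\max\{\hat{\Delta}^{\mathrm{ci}}_k(t)-\Delta_k(t),0\}$ pointwise.

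The core computation is the upper bound on $\hat{\Delta}^{\mathrm{ci}}_k(t)-\Delta_k(t)$. On $\{V\in[\hat{V}^{\mathrm{low}},\hat{V}^{\mathrm{upp}}]\}$ one has $0\le a_l\le\hat{\delta}^{(V)}_{kl}$, and I would use the identity $\hat{V}^{\mathrm{upp}}_{kl}G_l-V_{kl}\tilde{G}_l=\hat{V}^{\mathrm{upp}}_{kl}(G_l-\tilde{G}_l)+a_l(\tilde{G}_l-G_l)+a_lG_l$. Since $|G_l-\tilde{G}_l|\le2D_k$, the first two groups sum to at most $2D_k\sum_{l\neq k}(|\hat{V}^{\mathrm{upp}}_{kl}|+\hat{\delta}^{(V)}_{kl})$; for the third, decomposing $a_l=\min_{m\neq k}a_m+(a_l-\min_{m\neq k}a_m)$, with $0\le\min_{m\neq k}a_m\le\hat{\delta}^{(V)}_{k*}$ and $0\le a_l-\min_{m\neq k}a_m\le\hat{\zeta}_k\le|\hat{\zeta}^{\mathrm{upp}}_k|$ (the choice of $\hat{\zeta}^{\mathrm{upp}}_k$ being assumed to satisfy $\hat{\zeta}_k\le|\hat{\zeta}^{\mathrm{upp}}_k|$ on this event), one obtains $\sum_{l\neq k}a_lG_l\le\hat{\delta}^{(V)}_{k*}|\sum_{l\neq k}G_l|+|\hat{\zeta}^{\mathrm{upp}}_k|\sum_{l\neq k}|G_l|$, which is dominated by the two penalty terms. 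Hence, on this event, $\hat{\Delta}^{\mathrm{ci}}_k(t)-\Delta_k(t)\le2D_k\sum_{l\neq k}(|\hat{V}^{\mathrm{upp}}_{kl}|+\hat{\delta}^{(V)}_{kl})$ for every $t$; on the complementary event, dropping the nonpositive penalties and using $|G_l|,|\tilde{G}_l|\le1$ and $|\hat{V}^{\mathrm{upp}}_{kl}|,|V_{kl}|\le|\bar{V}^{\mathrm{upp}}_{kl}|$ gives the crude bound $2\sum_{l\neq k}|\bar{V}^{\mathrm{upp}}_{kl}|$. Taking $\sup_{t}$, then $\EV{\cdot\mid\hat{V}^{\mathrm{low}},\hat{V}^{\mathrm{upp}}}$, and bounding $\EV{D_k}$ by $\tfrac{1}{\sqrt{n_{*}}}\min\{K\sqrt{\pi/2},\,\tfrac{1}{\sqrt{n_{*}}}+\sqrt{(\log(2K)+\log n_{*})/2}\}$ exactly as in the proof of Lemma~\ref{lemma:dkw-delta-expected} (DKW for each of the $K$ empirical CDFs, a union bound, and $n_{*}\le n_l$ to handle the random label counts) produces the first displayed bound.

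For the second bound I would additionally bound $\Delta_k(t)-\hat{\Delta}^{\mathrm{ci}}_k(t)$ from above. Writing $\Delta_k(t)-\hat{\Delta}^{\mathrm{ci}}_k(t)=\sum_{l\neq k}(V_{kl}\tilde{G}_l-\hat{V}^{\mathrm{upp}}_{kl}G_l)+\hat{\delta}^{(V)}_{k*}|\sum_{l\neq k}G_l|+|\hat{\zeta}^{\mathrm{upp}}_k|\sum_{l\neq k}|G_l|$, one expands $V_{kl}\tilde{G}_l-\hat{V}^{\mathrm{upp}}_{kl}G_l=V_{kl}(\tilde{G}_l-G_l)-a_lG_l$, bounds the DKW part by $2D_k\sum_{l\neq k}|V_{kl}|\le2D_k\sum_{l\neq k}(|\hat{V}^{\mathrm{upp}}_{kl}|+\hat{\delta}^{(V)}_{kl})$, controls $-\sum_{l\neq k}a_lG_l$ by $\sum_{l\neq k}a_l\le(K-1)\hat{\delta}^{(V)}_{k*}$, and bounds the two remaining penalty magnitudes by $(K-1)\hat{\delta}^{(V)}_{k*}$ and $(K-1)|\hat{\zeta}^{\mathrm{upp}}_k|$ using $|\sum_{l\neq k}G_l|\le K-1$ and $\sum_{l\neq k}|G_l|\le K-1$; off the good event the same crude bound $2\sum_{l\neq k}|\bar{V}^{\mathrm{upp}}_{kl}|$ applies. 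Combining the two directions and taking the conditional expectation as before yields a bound of the stated form. The main obstacle is this second computation: matching the stated coefficients exactly---$\sum_{l\neq k}|\hat{V}^{\mathrm{upp}}_{kl}|$ rather than $\sum_{l\neq k}(|\hat{V}^{\mathrm{upp}}_{kl}|+\hat{\delta}^{(V)}_{kl})$ in the DKW term, and the precise constant in front of $(K-1)\hat{\delta}^{(V)}_{k*}$---requires a careful apportioning of the $-\sum_{l\neq k}a_lG_l$ and $G_l$-versus-$\tilde{G}_l$ errors between the DKW term and the two penalty terms, and it also relies on verifying that $\hat{\zeta}^{\mathrm{upp}}_k$ upper-bounds $\hat{\zeta}_k$ whenever $V$ lies in the confidence region.
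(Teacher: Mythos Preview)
Your argument for the first inequality is correct and essentially identical to the paper's: the same decomposition of $\hat{\Delta}^{\mathrm{ci}}_k-\Delta_k$ into a DKW-controlled part with coefficient $|\hat{V}^{\mathrm{upp}}_{kl}|+|a_l|$ and an algebraic part $\sum_{l\neq k}a_lG_l$, followed by the same $\min_{m\neq k}a_m$ trick (the paper calls this $\hat\psi_k$) to show that the penalty terms absorb it on the good event.

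For the second inequality, the paper takes a shorter route than your two-directional argument. It does not bound $\Delta_k-\hat{\Delta}^{\mathrm{ci}}_k$ separately; instead it uses the coarser decomposition
\[
\hat{\Delta}^{\mathrm{ci}}_k(t)-\Delta_k(t)=\sum_{l\neq k}\hat{V}^{\mathrm{upp}}_{kl}\bigl(G_l(t)-\tilde G_l(t)\bigr)+\sum_{l\neq k}a_l\,\tilde G_l(t)-\hat{\delta}^{(V)}_{k*}\Bigl|\sum_{l\neq k}G_l\Bigr|-|\hat{\zeta}^{\mathrm{upp}}_k|\sum_{l\neq k}|G_l|
\]
and takes absolute values term by term, using only $|\tilde G_l|\le 1$, $|G_l|\le 1$. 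This immediately gives the DKW coefficient $\sum_{l\neq k}|\hat{V}^{\mathrm{upp}}_{kl}|$ (without your extra $+\hat{\delta}^{(V)}_{kl}$), plus $\sum_{l\neq k}|a_l|+(K-1)\bigl(\hat{\delta}^{(V)}_{k*}+|\hat{\zeta}^{\mathrm{upp}}_k|\bigr)$; on the good event $\sum_{l\neq k}|a_l|\le(K-1)\hat{\delta}^{(V)}_{k*}$. So the ``careful apportioning'' you flag is unnecessary---the sharper DKW coefficient comes simply from attaching the $a_l$ to $\tilde G_l$ rather than to $G_l-\tilde G_l$. (Note, incidentally, that the paper's own proof thereby produces $(K-1)\bigl(2\hat{\delta}^{(V)}_{k*}+|\hat{\zeta}^{\mathrm{upp}}_k|\bigr)$, which is the form actually used downstream in Lemma~\ref{lemma:dkw-delta-prob-ci} and the proof of Theorem~\ref{thm:algorithm-correction-upper-ci}; the constant in the lemma \emph{statement} appears to be a typo.)
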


\begin{proof}[Proof of Lemma~\ref{lemma:dkw-delta-expected-ci}]
To simplify the notation in the following, define
\begin{align*}
  \hat{\psi}_{k} := \min_{l \neq k} \left( \hat{V}^{\mathrm{upp}}_{kl} - V_{kl} \right).
\end{align*}

Note that $\hat{\Delta}_k^{\mathrm{ci}}(t)$ can be equivalently written as:
\begin{align*}
  \hat{\Delta}_k^{\mathrm{ci}}(t)
  & := \sum_{l \neq k} \hat{V}^{\mathrm{upp}}_{kl} \left( \hat{F}_l^{k}(t) - \hat{F}_k^{k}(t) \right) -  \hat{\delta}^{(V)}_{k*} \left| \sum_{l \neq k}  \left( \hat{F}_l^{k}(t) - \hat{F}_k^{k}(t) \right) \right| \\
  & \qquad - |\hat{\zeta}^{\mathrm{upp}}_k| \sum_{l \neq k} \left| \hat{F}_l^{k}(t) - \hat{F}_k^{k}(t) \right|.
\end{align*}
Therefore, for any $k \in [K]$ and $t \in \mathbb{R}$,
\begin{align*}
  \hat{\Delta}_k^{\mathrm{ci}}(t) - \Delta_k^{\mathrm{ci}}(t)
  & = \sum_{l \neq k} \hat{V}^{\mathrm{upp}}_{kl} \left( \hat{F}_l^{k}(t) - \hat{F}_k^{k}(t) \right) - \sum_{l \neq k} V_{kl} \left( \tF_l^{k}(t) - \tF_k^{k}(t)\right) \\
  & \qquad -  \hat{\delta}^{(V)}_{k*} \left| \sum_{l \neq k}  \left( \hat{F}_l^{k}(t) - \hat{F}_k^{k}(t) \right) \right| - |\hat{\zeta}^{\mathrm{upp}}_k| \sum_{l \neq k} \left| \hat{F}_l^{k}(t) - \hat{F}_k^{k}(t) \right| \\
  & = \sum_{l \neq k} \hat{V}^{\mathrm{upp}}_{kl} \left( \hat{F}_l^{k}(t) - \tilde{F}_l^{k}(t) + \tilde{F}_k^{k}(t) - \hat{F}_k^{k}(t)   \right) \\
  & \qquad + \sum_{l \neq k} \left( \hat{V}^{\mathrm{upp}}_{kl} - V_{kl} \right) \left( \tF_l^{k}(t) - \hat{F}_l^{k}(t) + \hat{F}_k^{k}(t) - \tF_k^{k}(t)\right) \\
  & \qquad + \sum_{l \neq k} \left( \hat{V}^{\mathrm{upp}}_{kl} - V_{kl} \right) \left( \hat{F}_l^{k}(t) - \hat{F}_k^{k}(t) \right) \\
  & \qquad -  \hat{\delta}^{(V)}_{k*} \left| \sum_{l \neq k}  \left( \hat{F}_l^{k}(t) - \hat{F}_k^{k}(t) \right) \right| - |\hat{\zeta}^{\mathrm{upp}}_k| \sum_{l \neq k} \left| \hat{F}_l^{k}(t) - \hat{F}_k^{k}(t) \right| \\
  & = \sum_{l \neq k} \hat{V}^{\mathrm{upp}}_{kl} \left( \hat{F}_l^{k}(t) - \tilde{F}_l^{k}(t) + \tilde{F}_k^{k}(t) - \hat{F}_k^{k}(t)   \right) \\
  & \qquad + \sum_{l \neq k} \left( \hat{V}^{\mathrm{upp}}_{kl} - V_{kl} \right) \left( \tF_l^{k}(t) - \hat{F}_l^{k}(t) + \hat{F}_k^{k}(t) - \tF_k^{k}(t)\right) \\
  & \qquad + \sum_{l \neq k} \left( \hat{V}^{\mathrm{upp}}_{kl} - V_{kl} - \hat{\psi}_k \right) \left( \hat{F}_l^{k}(t) - \hat{F}_k^{k}(t) \right) 
   + \hat{\psi}_k \sum_{l \neq k} \left( \hat{F}_l^{k}(t) - \hat{F}_k^{k}(t) \right) \\
  & \qquad -  \hat{\delta}^{(V)}_{k*} \left| \sum_{l \neq k}  \left( \hat{F}_l^{k}(t) - \hat{F}_k^{k}(t) \right) \right| - |\hat{\zeta}^{\mathrm{upp}}_k| \sum_{l \neq k} \left| \hat{F}_l^{k}(t) - \hat{F}_k^{k}(t) \right|.
\end{align*}
Next, note that,
\begin{align*}
  \sum_{l \neq k} \left( \hat{V}^{\mathrm{upp}}_{kl} - V_{kl} - \hat{\psi}_k \right) \left( \hat{F}_l^{k}(t) - \hat{F}_k^{k}(t) \right) 
  &  \leq \sum_{l \neq k} \left( \hat{V}^{\mathrm{upp}}_{kl} - V_{kl} - \hat{\psi}_k \right) \left| \hat{F}_l^{k}(t) - \hat{F}_k^{k}(t) \right| \\
  &  \leq \left[ \max_{l \neq k} \left( \hat{V}^{\mathrm{upp}}_{kl} - V_{kl} \right) - \hat{\psi}_k \right] \cdot \sum_{l \neq k} \left| \hat{F}_l^{k}(t) - \hat{F}_k^{k}(t) \right| \\
  &  = \hat{\zeta}_k \sum_{l \neq k} \left| \hat{F}_l^{k}(t) - \hat{F}_k^{k}(t) \right|,
\end{align*}
because $\hat{V}^{\mathrm{upp}}_{kl} - V_{kl} \geq \hat{\psi}_k$ for all $l \neq k$.
Therefore,
\begin{align*}
  \hat{\Delta}_k^{\mathrm{ci}}(t) - \Delta_k^{\mathrm{ci}}(t)
  & \leq 2 \left( \max_{l \in [K]} \left| \hat{F}_l^{k}(t) - \tilde{F}_l^{k}(t) \right| \right) \sum_{l \neq k} \left( |\hat{V}^{\mathrm{upp}}_{kl}| + \left| \hat{V}^{\mathrm{upp}}_{kl} - V_{kl} \right| \right)\\
  & \qquad + \hat{\zeta}_k \sum_{l \neq k} \left| \hat{F}_l^{k}(t) - \hat{F}_k^{k}(t) \right| \\
  & \qquad + \hat{\psi}_k \left| \sum_{l \neq k} \left( \hat{F}_l^{k}(t) - \hat{F}_k^{k}(t) \right) \right| \\
  & \qquad -  \hat{\delta}^{(V)}_{k*} \left| \sum_{l \neq k}  \left( \hat{F}_l^{k}(t) - \hat{F}_k^{k}(t) \right) \right| - |\hat{\zeta}^{\mathrm{upp}}_k| \sum_{l \neq k} \left| \hat{F}_l^{k}(t) - \hat{F}_k^{k}(t) \right|.
\end{align*}

In the event that $V_{kl} \in [\hat{V}^{\mathrm{low}}_{kl} , \hat{V}^{\mathrm{upp}}_{kl}]$ for all $l \neq k$, which occurs with probability $1-\alpha_V$,
\begin{align*}
  & 0 \leq \hat{V}^{\mathrm{upp}}_{kl} - V_{kl} \leq \hat{\delta}^{(V)}_{kl},
  & 0 \leq \hat{\psi}_k \leq \hat{\delta}^{(V)}_{kl},
\end{align*}
which implies $\hat{V}^{\mathrm{upp}}_{kl} - V_{kl} - \hat{\psi}_k \leq \hat{\delta}^{(V)}_{kl} \leq \hat{\delta}^{(V)}_{k*}$.
Further, in this event, $\hat{\zeta}_k \leq \hat{\zeta}^{\mathrm{upp}}_k$.
Therefore, if $V_{kl} \in [\hat{V}^{\mathrm{low}}_{kl} , \hat{V}^{\mathrm{upp}}_{kl}]$ for all $l \neq k$,
\begin{align*}
  \hat{\Delta}_k^{\mathrm{ci}}(t) - \Delta_k^{\mathrm{ci}}(t)
  & \leq 2 \left( \max_{l \in [K]} \left| \hat{F}_l^{k}(t) - \tilde{F}_l^{k}(t) \right| \right) \sum_{l \neq k} \left( |\hat{V}^{\mathrm{upp}}_{kl}| + \hat{\delta}^{(V)}_{kl} \right).
\end{align*}

Imagine now that it is not the case that $V_{kl} \in [\hat{V}^{\mathrm{low}}_{kl} , \hat{V}^{\mathrm{upp}}_{kl}]$ for all $l \neq k$.
Then,
\begin{align*}
  \hat{\Delta}_k^{\mathrm{ci}}(t) - \Delta_k^{\mathrm{ci}}(t)
  & = \sum_{l \neq k} \hat{V}^{\mathrm{upp}}_{kl} \left( \hat{F}_l^{k}(t) - \hat{F}_k^{k}(t) \right) - \sum_{l \neq k} V_{kl} \left( \tF_l^{k}(t) - \tF_k^{k}(t)\right) \\
  & \qquad -  \hat{\delta}^{(V)}_{k*} \left| \sum_{l \neq k}  \left( \hat{F}_l^{k}(t) - \hat{F}_k^{k}(t) \right) \right| - |\hat{\zeta}^{\mathrm{upp}}_k| \sum_{l \neq k} \left| \hat{F}_l^{k}(t) - \hat{F}_k^{k}(t) \right| \\
  & = \sum_{l \neq k} \hat{V}^{\mathrm{upp}}_{kl} \left( \hat{F}_l^{k}(t) - \tilde{F}_l^{k}(t) + \tilde{F}_k^{k}(t) - \hat{F}_k^{k}(t)   \right) \\
  & \qquad + \sum_{l \neq k} \left( \hat{V}^{\mathrm{upp}}_{kl} - V_{kl} \right) \left( \tF_l^{k}(t) - \tF_k^{k}(t)\right) \\
  & \qquad -  \hat{\delta}^{(V)}_{k*} \left| \sum_{l \neq k}  \left( \hat{F}_l^{k}(t) - \hat{F}_k^{k}(t) \right) \right| - |\hat{\zeta}^{\mathrm{upp}}_k| \sum_{l \neq k} \left| \hat{F}_l^{k}(t) - \hat{F}_k^{k}(t) \right| \\
  & \leq 2 \left( \max_{l \in [K]} \left| \hat{F}_l^{k}(t) - \tilde{F}_l^{k}(t) \right| \right) \sum_{l \neq k} |\hat{V}^{\mathrm{upp}}_{kl}|
    + \sum_{l \neq k} \left| \hat{V}^{\mathrm{upp}}_{kl} - V_{kl} \right|.
\end{align*}

In conclusion,
\begin{align*}
  \hat{\Delta}_k^{\mathrm{ci}}(t) - \Delta_k^{\mathrm{ci}}(t)
  & \leq 2 \left( \max_{l \in [K]} \left| \hat{F}_l^{k}(t) - \tilde{F}_l^{k}(t) \right| \right) \sum_{l \neq k} \left( |\hat{V}^{\mathrm{upp}}_{kl}| + \hat{\delta}^{(V)}_{kl} \right)\\
  & \qquad + \I{V \notin [\hat{V}^{\mathrm{low}}, \hat{V}^{\mathrm{upp}}] } \cdot \sum_{l \neq k} \left| \hat{V}^{\mathrm{upp}}_{kl} - V_{kl} \right|.
\end{align*}
From here, proceeding as in the proof of Lemma~\ref{lemma:dkw-delta-expected}, one easily arrives at:
\begin{align*}
  & \EV{ \sup_{t \in \mathbb{R}} \max\left\{ \hat{\Delta}_k^{\mathrm{ci}}(t) - \Delta_k(t) , 0 \right\} \mid \hat{V}^{\mathrm{low}}, \hat{V}^{\mathrm{upp}}} \\
  & \qquad \leq \frac{2 \sum_{l \neq k} \left(|\hat{V}^{\mathrm{upp}}_{kl}| + \hat{\delta}^{(V)}_{kl} \right) }{\sqrt{n_{*}}} \min \left\{ K \sqrt{\frac{\pi}{2}} , \frac{1}{\sqrt{n_{*}}} + \sqrt{\frac{\log(2K) + \log(n_{*})}{2}} \right\} \\
  & \qquad \qquad + \I{V \notin [\hat{V}^{\mathrm{low}}, \hat{V}^{\mathrm{upp}}] } \cdot \sum_{l \neq k} \left| \hat{V}^{\mathrm{upp}}_{kl} - V_{kl} \right| \\
  & \qquad \leq \frac{2 \sum_{l \neq k} \left(|\hat{V}^{\mathrm{upp}}_{kl}| + \hat{\delta}^{(V)}_{kl} \right) }{\sqrt{n_{*}}} \min \left\{ K \sqrt{\frac{\pi}{2}} , \frac{1}{\sqrt{n_{*}}} + \sqrt{\frac{\log(2K) + \log(n_{*})}{2}} \right\} \\
  & \qquad \qquad + 2 \cdot \I{V \notin [\hat{V}^{\mathrm{low}}, \hat{V}^{\mathrm{upp}}] } \cdot \sum_{l \neq k} |\bar{V}_{kl}^{\mathrm{upp}}|.
\end{align*}
This completes the first part of the proof.

To prove the second part, recall that
\begin{align*}
  \hat{\Delta}_k^{\mathrm{ci}}(t) - \Delta_k^{\mathrm{ci}}(t)
  & = \sum_{l \neq k} \hat{V}^{\mathrm{upp}}_{kl} \left( \hat{F}_l^{k}(t) - \tilde{F}_l^{k}(t) + \tilde{F}_k^{k}(t) - \hat{F}_k^{k}(t)   \right) \\
  & \qquad + \sum_{l \neq k} \left( \hat{V}^{\mathrm{upp}}_{kl} - V_{kl} \right) \left( \tF_l^{k}(t) - \tF_k^{k}(t)\right) \\
  & \qquad -  \hat{\delta}^{(V)}_{k*} \left| \sum_{l \neq k}  \left( \hat{F}_l^{k}(t) - \hat{F}_k^{k}(t) \right) \right| 
    - |\hat{\zeta}^{\mathrm{upp}}_k| \sum_{l \neq k} \left| \hat{F}_l^{k}(t) - \hat{F}_k^{k}(t) \right|.
\end{align*}
Therefore,
\begin{align*}
  \left| \hat{\Delta}_k^{\mathrm{ci}}(t) - \Delta_k^{\mathrm{ci}}(t) \right|
  & \leq \sum_{l \neq k} |\hat{V}^{\mathrm{upp}}_{kl}| \left| \hat{F}_l^{k}(t) - \tilde{F}_l^{k}(t) + \tilde{F}_k^{k}(t) - \hat{F}_k^{k}(t)   \right| \\
  & \qquad + \sum_{l \neq k} \left| \hat{V}^{\mathrm{upp}}_{kl} - V_{kl} \right| + (K-1) \left( \hat{\delta}^{(V)}_{k*}  + |\hat{\zeta}^{\mathrm{upp}}_k| \right) \\
  & \leq 2 \left( \max_{l \in [K]} \left| \hat{F}_l^{k}(t) - \tilde{F}_l^{k}(t)\right| \right) \sum_{l \neq k} |\hat{V}^{\mathrm{upp}}_{kl}| \\
  & \qquad + \sum_{l \neq k} \left| \hat{V}^{\mathrm{upp}}_{kl} - V_{kl} \right| + (K-1) \left( \hat{\delta}^{(V)}_{k*}  + |\hat{\zeta}^{\mathrm{upp}}_k| \right).
\end{align*}

Then, in the event that $V_{kl} \in [\hat{V}^{\mathrm{low}}_{kl} , \hat{V}^{\mathrm{upp}}_{kl}]$ for all $l \neq k$,
\begin{align*}
  \left| \hat{\Delta}_k^{\mathrm{ci}}(t) - \Delta_k^{\mathrm{ci}}(t) \right|
  & \leq 2 \left( \max_{l \in [K]} \left| \hat{F}_l^{k}(t) - \tilde{F}_l^{k}(t)\right| \right) \sum_{l \neq k} |\hat{V}^{\mathrm{upp}}_{kl}|
    +  (K-1) \left( 2 \hat{\delta}^{(V)}_{k*}  + |\hat{\zeta}^{\mathrm{upp}}_k| \right).
\end{align*}
By contrast, if it is not the case that $V_{kl} \in [\hat{V}^{\mathrm{low}}_{kl} , \hat{V}^{\mathrm{upp}}_{kl}]$ for all $l \neq k$,
\begin{align*}
  \left| \hat{\Delta}_k^{\mathrm{ci}}(t) - \Delta_k^{\mathrm{ci}}(t) \right|
  & \leq 2 \left( \max_{l \in [K]} \left| \hat{F}_l^{k}(t) - \tilde{F}_l^{k}(t)\right| \right) \sum_{l \neq k} |\hat{V}^{\mathrm{upp}}_{kl}| \\
  & \qquad + \sum_{l \neq k} \left| \hat{V}^{\mathrm{upp}}_{kl} - V_{kl} \right| + (K-1) \left( \hat{\delta}^{(V)}_{k*}  + |\hat{\zeta}^{\mathrm{upp}}_k| \right) \\
  & \leq 2 \left( \max_{l \in [K]} \left| \hat{F}_l^{k}(t) - \tilde{F}_l^{k}(t)\right| \right) \sum_{l \neq k} |\hat{V}^{\mathrm{upp}}_{kl}| \\
  & \qquad + \sum_{l \neq k} \left( |\hat{V}^{\mathrm{upp}}_{kl}| + |V_{kl}| \right) + (K-1) \left( \hat{\delta}^{(V)}_{k*}  + |\hat{\zeta}^{\mathrm{upp}}_k| \right) \\
  & \leq 2 \left( \max_{l \in [K]} \left| \hat{F}_l^{k}(t) - \tilde{F}_l^{k}(t)\right| \right) \sum_{l \neq k} |\hat{V}^{\mathrm{upp}}_{kl}| \\
  & \qquad + 2 \sum_{l \neq k} |\bar{V}_{kl}^{\mathrm{upp}}| + (K-1) \left( \hat{\delta}^{(V)}_{k*}  + |\hat{\zeta}^{\mathrm{upp}}_k| \right).
\end{align*}
From here, proceeding as in the proof of Lemma~\ref{lemma:dkw-delta-expected}, one easily arrives at:
\begin{align*}
  & \EV{ \sup_{t \in \mathbb{R}} | \hat{\Delta}_k^{\mathrm{ci}}(t) - \Delta_k(t) | \mid \hat{V}^{\mathrm{low}}, \hat{V}^{\mathrm{upp}}} \\
  & \qquad \leq \frac{2 \sum_{l \neq k} |\hat{V}^{\mathrm{upp}}_{kl}|  }{\sqrt{n_{*}}} \min \left\{ K \sqrt{\frac{\pi}{2}} , \frac{1}{\sqrt{n_{*}}} + \sqrt{\frac{\log(2K) + \log(n_{*})}{2}} \right\} \\
  & \qquad \qquad + \I{V \notin [\hat{V}^{\mathrm{low}}, \hat{V}^{\mathrm{upp}}] } \cdot 2 \sum_{l \neq k} |\bar{V}_{kl}^{\mathrm{upp}}| + (K-1) \left( 2 \hat{\delta}^{(V)}_{k*}  + |\hat{\zeta}^{\mathrm{upp}}_k| \right).
\end{align*}
This completes the second part of the proof.
\end{proof}

\begin{proof}[Proof of Theorem~\ref{thm:algorithm-correction-upper-ci}]
We follow the same strategy as in the proof of Theorem~\ref{thm:algorithm-correction-upper}.
Define the events $\cA_1=\{\hat{\cI}^{\mathrm{ci}}_k=\emptyset \}$ and $\cA_2=\{\hat{i}_k=1\}$.
Then,
\begin{align}
  \begin{split} 	\label{eq:upper_bound_events-ci}
	& \P{Y_{n+1} \in \hat{C}^{\mathrm{ci}}(X_{n+1}) \mid Y = k, \hat{V}^{\mathrm{low}}, \hat{V}^{\mathrm{upp}}}  \\
	& \qquad \leq \P{\cA_1 \mid \hat{V}^{\mathrm{low}}, \hat{V}^{\mathrm{upp}}} \\
        & \qquad \qquad + \EV{\P{Y_{n+1} \in \hat{C}^{\mathrm{ci}}(X_{n+1}) \mid Y = k, \cD, \hat{V}^{\mathrm{low}}, \hat{V}^{\mathrm{upp}}}\I{\cA_2} \mid \hat{V}^{\mathrm{low}}, \hat{V}^{\mathrm{upp}}} \\
	& \qquad \qquad + \EV{\P{Y_{n+1} \in \hat{C}^{\mathrm{ci}}(X_{n+1}) \mid Y = k, \cD, \hat{V}^{\mathrm{low}}, \hat{V}^{\mathrm{upp}}}\I{\cA_1^c\cap \cA_2^c} \mid \hat{V}^{\mathrm{low}}, \hat{V}^{\mathrm{upp}}}.
      \end{split}
\end{align}

We will now separately bound the three terms on the right-hand-side of~\eqref{eq:upper_bound_events-ci}.
The following notation will be useful for this purpose.
For all $i \in [n_k]$, let $U_i \sim \text{Uniform}(0,1)$ be independent and identically distributed uniform random variables, and
denote their order statistics as $U_{(1)} < U_{(2)} < \ldots < U_{(n_k)}$.

\begin{itemize}
\item The probability of $\cA_1$ conditional on $\hat{V}^{\mathrm{low}}$ and $\hat{V}^{\mathrm{upp}}$  can be bound from above as:
  \begin{align} \label{eq:upper_bound_event1-ci}
    \P{\hat{\mathcal{I}}^{\mathrm{ci}}_k = \emptyset \mid \hat{V}^{\mathrm{low}}, \hat{V}^{\mathrm{upp}}}
    \leq \frac{1}{n_*} + \I{V \notin [\hat{V}^{\mathrm{low}}, \hat{V}^{\mathrm{upp}}]}.
  \end{align}

  To simplify the notation in the proof of~\eqref{eq:upper_bound_event1-ci}, define
\begin{align*}
  d_k^{\mathrm{ci}}
  & := c(n_k) + \frac{2 \sum_{l \neq k} \left( |\hat{V}^{\mathrm{upp}}_{kl}| + \hat{\delta}^{(V)}_{kl} \right) }{\sqrt{n_{*}}} \left( \frac{1}{\sqrt{n_{*}}} + 2 \sqrt{\frac{\log(2K) + \log(n_{*})}{2}} \right) \\
  & \qquad + 2 \alpha_V \sum_{l \neq k} |\bar{V}_{kl}^{\mathrm{upp}}| + (K-1) \left( 2 \hat{\delta}^{(V)}_{k*}  + |\hat{\zeta}^{\mathrm{upp}}_k| \right).
\end{align*}

Then, under Assumption~\ref{assumption:regularity-dist-delta-ci},
  \begin{align*}
    & 1 - \left( \alpha + \hDelta_k^{\mathrm{ci}}(S^k_{(i)}) - \delta^{\mathrm{ci}}(n_k, n_*) \right) \\
    & \qquad = 1 - \alpha - \Delta_k(S^k_{(i)}) + \delta^{\mathrm{ci}}(n_k, n_*) + \Delta_k(S^k_{(i)}) - \hDelta_k^{\mathrm{ci}}(S^k_{(i)}) \\
    & \qquad \leq 1 - d_k^{\mathrm{ci}} + \delta^{\mathrm{ci}}(n_k, n_*) + \sup_{t \in [0,1]} | \Delta_k(t) - \hDelta_k^{\mathrm{ci}}(t)|.
  \end{align*}
Further, it follows from the definition of $\delta^{\mathrm{ci}}(n_k, n_*)$ in~\eqref{eq:delta-constant-epsilon-ci} and Lemma~\ref{lemma:dkw-delta-prob-ci} applied with $\eta = 1/n_*$ that, with probability at least
$$1 - \frac{1}{n_*} - \I{V \notin [\hat{V}^{\mathrm{low}}, \hat{V}^{\mathrm{upp}}]},$$
we have
  \begin{align*}
    & 1 - \left( \alpha + \hDelta_k^{\mathrm{ci}}(S^k_{(i)}) - \delta^{\mathrm{ci}}(n_k, n_*) \right) \\
    & \quad \leq 1 - d_k^{\mathrm{ci}} + \delta^{\mathrm{ci}}(n_k, n_*) +  (K-1) \left( 2 \hat{\delta}^{(V)}_{k*}  + |\hat{\zeta}^{\mathrm{upp}}_k| \right) + 2 \sum_{l \neq k} |\hat{V}^{\mathrm{upp}}_{kl}| \sqrt{\frac{\log(2K) + \log(n_*)}{2 n_{*}}}\\
    & \quad \leq 1 - d_k^{\mathrm{ci}} + c(n_k) + \frac{2 \sum_{l \neq k} \left( |\hat{V}^{\mathrm{upp}}_{kl}| + \hat{\delta}^{(V)}_{kl} \right) }{\sqrt{n_{*}}} \left( \frac{1}{\sqrt{n_{*}}} + 2 \sqrt{\frac{\log(2K) + \log(n_{*})}{2}} \right) \\
    & \quad \quad + 2 \alpha_V \sum_{l \neq k} |\bar{V}_{kl}^{\mathrm{upp}}| + (K-1) \left( 2 \hat{\delta}^{(V)}_{k*}  + |\hat{\zeta}^{\mathrm{upp}}_k| \right)
     = 1.
  \end{align*}

This implies that the set $\hat{\mathcal{I}}^{\mathrm{ci}}_k$ defined in~\eqref{eq:Ik-set}, with $\hat{\Delta}_k(t)$ and $\delta^{\mathrm{ci}}(n_k, n_*)$ instead of $\hat{\Delta}_k^{\mathrm{ci}}(t)$ and $\delta(n_k,n_{*})$, is non-empty with probability at least $1 - 1/n_* - \I{V \notin [\hat{V}^{\mathrm{low}}, \hat{V}^{\mathrm{upp}}]}$, because $n_k \in \hat{\mathcal{I}}^{\mathrm{ci}}_k$ if $1 - [ \alpha + \hDelta_k^{\mathrm{ci}}(S^k_{(i)}) - \delta^{\mathrm{ci}}(n_k, n_*) ] \leq 1$.

\item Under $\cA_2$, the second term on the right-hand-side of~\eqref{eq:upper_bound_events} can be written as:
  \begin{align*}
    \P{Y_{n+1} \in \hat{C}^{\mathrm{ci}}(X_{n+1}) \mid Y = k, \cD} & = \P{\hat{s}(X_{n+1}, k) \leq S^k_{(1)} \mid Y = k, \cD} = F^{k}_k(S^k_{(1)}).
\end{align*}
Therefore, by the independence of $\hat{V}^{\mathrm{low}}$ and $\hat{V}^{\mathrm{upp}}$ from the other data,
\begin{align*}
  \begin{split}
    & \EV{\P{Y_{n+1} \in \hat{C}^{\mathrm{ci}}(X_{n+1}) \mid Y = k, \cD, \hat{V}^{\mathrm{low}}, \hat{V}^{\mathrm{upp}}}\I{\cA_2} \mid \hat{V}^{\mathrm{low}}, \hat{V}^{\mathrm{upp}}} \\
    & \qquad \leq \EV{\tF^{k}_k(S^k_{(1)})} +  \EV{F^{k}_k(S^k_{(1)})-\tilde{F}^{k}_k(S^k_{(1)})}  \\
    & \qquad = \EV{U_{(1)}} + (V_{kk}-1) \EV{\tilde{F}^{k}_k(S^k_{(1)})} + \sum_{l \neq k} V_{kl} \EV{\tilde{F}^{k}_l(S^k_{(1)})}\\
    & \qquad = V_{kk} \EV{U_{(1)}} + \sum_{l \neq k} V_{kl} \EV{\tilde{F}^{k}_l(S^k_{(1)})}\\
    & \qquad = \frac{V_{kk}}{n_k+1} + \sum_{l \neq k} V_{kl} \EV{\tilde{F}^{k}_l(S^k_{(1)})}.
  \end{split}
\end{align*}
Next, using Assumption~\ref{assumption:consitency-scores} as in the proof of Theorem~\ref{thm:algorithm-correction-upper}, we arrive to:
\begin{align}     \label{eq:upper_bound_event2-ci}
  \begin{split}
    & \EV{\P{Y_{n+1} \in \hat{C}^{\mathrm{ci}}(X_{n+1}) \mid Y = k, \cD, \hat{V}^{\mathrm{low}}, \hat{V}^{\mathrm{upp}}}\I{\cA_2} \mid \hat{V}^{\mathrm{low}}, \hat{V}^{\mathrm{upp}}} \\
      & \qquad \leq \frac{V_{kk} + \sum_{l \neq k} |V_{kl}|}{n_k+1} 
       = \frac{1 + \sum_{l \neq k} \left( |V_{kl}| - V_{kl} \right) }{n_k+1} \\
      & \qquad \leq \frac{1 + 2 \sum_{l \neq k} |V_{kl}|}{n_k+1}.
  \end{split}
\end{align}

\item Under $\cA_1^c\cap \cA_2^c$, by definition of $\hat{\mathcal{I}}^{\mathrm{ci}}_k$, for any $i \leq \hat{i}_k - 1$,
  \begin{align*}
    \frac{i}{n_k} < 1 - \left( \alpha + \hDelta_k^{\mathrm{ci}}(S^k_{(i)}) - \delta^{\mathrm{ci}}(n_k, n_*) \right).
  \end{align*}
  Therefore, choosing $i = \hat{i}_k-1$, we get:
  \begin{align*}
    \frac{\hat{i}_k}{n_k} < 1 - \left( \alpha + \hDelta_k^{\mathrm{ci}}(S^k_{(\hat{i}_k-1)}) - \delta^{\mathrm{ci}}(n_k, n_*) \right) + \frac{1}{n_k}.
  \end{align*}
As in the proof of Theorem~\ref{thm:algorithm-correction-upper}, the probability of coverage conditional on $Y_{n+1}=k$, on $\mathcal{D}$, and on the confidence interval $[\hat{V}^{\mathrm{low}}, \hat{V}^{\mathrm{upp}}]$ can be decomposed as:
  \begin{align*}
    & \P{Y \in C(X, S^k_{(\hat{i}_k)}) \mid Y = k, \mathcal{D}, \hat{V}^{\mathrm{low}}, \hat{V}^{\mathrm{upp}}} \\
    & \qquad = \frac{\hat{i}_k}{n_k} + \tilde{F}^{k}_k(S^k_{(\hat{i}_k)}) - \hat{F}^{k}_k(S^k_{(\hat{i}_k)})+ \Delta_k(S^k_{(\hat{i}_k)}) \\
    & \qquad < 1 - \left( \alpha + \hDelta_k^{\mathrm{ci}}(S^k_{(\hat{i}_k-1)}) - \delta^{\mathrm{ci}}(n_k, n_*) \right) + \frac{1}{n_k} + \tilde{F}^{k}_k(S^k_{(\hat{i}_k)}) - \hat{F}^{k}_k(S^k_{(\hat{i}_k)})+ \Delta_k(S^k_{(\hat{i}_k)}) \\
    & \qquad = 1 - \alpha + \delta^{\mathrm{ci}}(n_k, n_*) + \frac{1}{n_k} + \tilde{F}^{k}_k(S^k_{(\hat{i}_k)}) - \hat{F}^{k}_k(S^k_{(\hat{i}_k)})- \hDelta_k^{\mathrm{ci}}(S^k_{(\hat{i}_k)}) + \Delta_k(S^k_{(\hat{i}_k-1)}).
  \end{align*}

  To bound the last term above, note that
  \begin{align*}
     \Delta_k(S^k_{(\hat{i}_k)}) - \hDelta_k^{\mathrm{ci}}(S^k_{(\hat{i}_k-1)})
    &  = (\Delta_k(S^k_{(\hat{i}_k)}) - \Delta_k(S^k_{(\hat{i}_k-1)})) + (\Delta_k(S^k_{(\hat{i}_k-1)}) - \hDelta_k^{\mathrm{ci}}(S^k_{(\hat{i}_k-1)})) \\
    &  \leq \sup_{t \in \mathbb{R}} |\Delta_k(t) - \hDelta_k^{\mathrm{ci}}(t)| + (\Delta_k(S^k_{(\hat{i}_k)}) - \Delta_k(S^k_{(\hat{i}_k-1)})),
  \end{align*}
  where the expected value of the first term above can be bounded using Lemma~\ref{lemma:dkw-delta-expected-ci}, and the second term is given by
  \begin{align*}
    & \Delta_k(S^k_{(\hat{i}_k)}) - \Delta_k(S^k_{(\hat{i}_k-1)}) \\
    & \qquad = (V_{kk}-1)  \left[ \tF^{k}_k(S^k_{(\hat{i}_k)}) -  \tF^{k}_k(S^k_{(\hat{i}_k-1)}) \right] + \sum_{l\neq k}  V_{kl} \left[ \tF^{k}_{l}(S^k_{(\hat{i}_k)}) - \tF^{k}_{l}(S^k_{(\hat{i}_k-1)}) \right]	\\
    & \qquad = \sum_{l\neq k} V_{kl} \left[ \left( \tF^{k}_{l}(S^k_{(\hat{i}_k)}) - \tF^{k}_{l}(S^k_{(\hat{i}_k-1)}) \right) - \left(\tF^{k}_k(S^k_{(\hat{i}_k)}) - \tF^{k}_k(S^k_{(\hat{i}_k-1)}) \right) \right] \\
    & \qquad \leq  2\sum_{l\neq k}|V_{kl}| \cdot  \max_{2\leq i\leq n_k}\max_{l\in[K]}\left| \tF^{k}_{l}(S^k_{(i)}) - \tF^{k}_{l}(S^k_{(i-1)}) \right|.
  \end{align*}
  Combining the above, we find that, under $\cA_1^c\cap \cA_2^c$,
  \begin{align*}
    & \P{Y \in C(X, S^k_{(\hat{i}_k)}) \mid Y = k, \mathcal{D}, \hat{V}^{\mathrm{low}}, \hat{V}^{\mathrm{upp}}}	\\
    & \qquad \leq 1 - \alpha + \delta^{\mathrm{ci}}(n_k, n_*) + \frac{1}{n_k} + \tilde{F}^{k}_k(S^k_{(\hat{i}_k)}) - \hat{F}^{k}_k(S^k_{(\hat{i}_k)})+  \sup_{t \in \mathbb{R}} |\Delta_k(t) - \hDelta_k^{\mathrm{ci}}(t)| \\
    & \qquad \qquad + 2\sum_{l\neq k}|V_{kl}| \cdot \max_{2\leq i\leq n_k}\max_{l\in[K]}\left| \tF^{k}_{l}(S^k_{(i)}) - \tF^{k}_{l}(S^k_{(i-1)}) \right|.
  \end{align*}

  It remains to bound the expectation of the last term. By Assumption \ref{assumption:regularity-dist},
  \begin{align*}
    \max_{2\leq i\leq n_k}\max_{l\in[K]}\left| \tF^{k}_{l}(S^k_{(i)}) - \tF^{k}_{l}(S^k_{(i-1)}) \right| & \leq f_{\max} \max_{2\leq i\leq n_k}| S^k_{(i)} - S^k_{(i-1)}|	\\
                                                                                                         & \leq \frac{f_{\max}}{f_{\min}} \max_{2\leq i\leq n_k} |\tF^{k}_{k}(S^k_{(i)}) - \tF^{k}_{k}(S^k_{(i-1)})|	\\
    \qquad \qquad & \stackrel{d}{=} \frac{f_{\max}}{f_{\min}} \max_{2\leq i\leq n_k} (U_{(i)} - U_{(i-1)}) \\
                                                                                                         &\leq \frac{f_{\max}}{f_{\min}}  \max_{1 \leq i \leq n_k +1 } D_i,
  \end{align*}
  where $D_1=U_{(1)}$, $D_i=U_{(i)} - U_{(i-1)}$ for $i=2, \dots, n_k$, and $D_{n_k+1} = 1-U_{(n_k)}$.
  By a standard result on maximum uniform spacing,
  \begin{align*}
    \EV{\max_{1\leq i \leq n_k+1} D_i} = \frac{1}{n_k+1}\sum_{j=1}^{n_k+1} \frac{1}{j}.
  \end{align*}

  Therefore, under $\cA_1^c\cap \cA_2^c$,
\begin{align} \label{eq:upper_bound_event3-ci}
\begin{split}
	& \EV{\P{Y_{n+1} \in \hat{C}^{\mathrm{ci}}(X_{n+1}) \mid Y = k, \cD, \hat{V}^{\mathrm{low}}, \hat{V}^{\mathrm{upp}}}\I{\cA_1^c\cap \cA_2^c} \mid \hat{V}^{\mathrm{low}}, \hat{V}^{\mathrm{upp}}} \\
	& \qquad \leq  1 - \alpha + \delta^{\mathrm{ci}}(n_k, n_*) + \frac{1}{n_k} + c(n_k) + \EV{\sup_{t \in \mathbb{R}} |\Delta_k(t) - \hDelta_k^{\mathrm{ci}}(t)| \mid \hat{V}^{\mathrm{low}}, \hat{V}^{\mathrm{upp}}} \\
	& \qquad \qquad + 2\sum_{l\neq k}|V_{kl}|  \cdot \frac{f_{\max}}{f_{\min}} \cdot \frac{1}{n_k+1}\sum_{j=1}^{n_k+1} \frac{1}{j}	\\
	& \qquad   \leq  1 - \alpha + \delta^{\mathrm{ci}}(n_k, n_*) + \frac{1}{n_k} + c(n_k) \\
        & \qquad \qquad + \frac{2 \sum_{l \neq k} |\hat{V}^{\mathrm{upp}}_{kl}| }{\sqrt{n_{*}}} \min \left\{ K \sqrt{\frac{\pi}{2}} , \frac{1}{\sqrt{n_{*}}} + \sqrt{\frac{\log(2K) + \log(n_{*})}{2}} \right\} \\
  & \qquad \qquad + \I{V \notin [\hat{V}^{\mathrm{low}}, \hat{V}^{\mathrm{upp}}] } \cdot \sum_{l \neq k} \left( |\hat{V}^{\mathrm{upp}}_{kl}|  + |V_{kl}| \right) + (K-1) \left( 2 \hat{\delta}^{(V)}_{k*}  + |\hat{\zeta}^{\mathrm{upp}}_k| \right)  \\
	& \qquad \qquad + 2\sum_{l\neq k}|V_{kl}|  \cdot \frac{f_{\max}}{f_{\min}} \cdot \frac{1}{n_k+1}\sum_{j=1}^{n_k+1} \frac{1}{j}	\\
	& \qquad  =  1 - \alpha + 2 \delta^{\mathrm{ci}}(n_k, n_*) + \frac{1}{n_k} \\
        & \qquad \qquad + \I{V \notin [\hat{V}^{\mathrm{low}}, \hat{V}^{\mathrm{upp}}] } \cdot \sum_{l \neq k} \left( |\hat{V}^{\mathrm{upp}}_{kl}|  + |V_{kl}| \right) - 2 \alpha_V \sum_{l \neq k} |\bar{V}_{kl}^{\mathrm{upp}}| \\
	& \qquad \qquad + (K-1) \left( 2 \hat{\delta}^{(V)}_{k*}  + |\hat{\zeta}^{\mathrm{upp}}_k| \right) + 2\sum_{l\neq k}|V_{kl}| \cdot \frac{f_{\max}}{f_{\min}} \cdot \frac{1}{n_k+1}\sum_{j=1}^{n_k+1} \frac{1}{j} \\
	& \qquad  \leq  1 - \alpha + 2 \delta^{\mathrm{ci}}(n_k, n_*) + \frac{1}{n_k} \\
        & \qquad \qquad + 2   \sum_{l \neq k} |\bar{V}_{kl}^{\mathrm{upp}}| \cdot \left( \I{V \notin [\hat{V}^{\mathrm{low}}, \hat{V}^{\mathrm{upp}}] } - \alpha_V \right) \\
	& \qquad \qquad + (K-1) \left( 2 \hat{\delta}^{(V)}_{k*}  + |\hat{\zeta}^{\mathrm{upp}}_k| \right) + 2\sum_{l\neq k}|V_{kl}| \cdot \frac{f_{\max}}{f_{\min}} \cdot \frac{1}{n_k+1}\sum_{j=1}^{n_k+1} \frac{1}{j}.
      \end{split}
\end{align}
Above, the second inequality is obtained from Lemma~\ref{lemma:dkw-delta-expected-ci} and the bound in~\eqref{eq:tighter_cdf_bound}.

\end{itemize}

Combining~\eqref{eq:upper_bound_events-ci} with \eqref{eq:upper_bound_event1-ci}, \eqref{eq:upper_bound_event2-ci}, and \eqref{eq:upper_bound_event3-ci} leads to:
\begin{align*}
  & \P{Y_{n+1} \in \hat{C}^{\mathrm{ci}}(X_{n+1}) \mid Y = k, \hat{V}^{\mathrm{low}}, \hat{V}^{\mathrm{upp}}}  \\
  & \qquad \leq \P{\cA_1 \mid \hat{V}^{\mathrm{low}}, \hat{V}^{\mathrm{upp}}} \\
  & \qquad \qquad + \EV{\P{Y_{n+1} \in \hat{C}^{\mathrm{ci}}(X_{n+1}) \mid Y = k, \cD, \hat{V}^{\mathrm{low}}, \hat{V}^{\mathrm{upp}}}\I{\cA_2} \mid \hat{V}^{\mathrm{low}}, \hat{V}^{\mathrm{upp}}} \\
  & \qquad \qquad + \EV{\P{Y_{n+1} \in \hat{C}^{\mathrm{ci}}(X_{n+1}) \mid Y = k, \cD, \hat{V}^{\mathrm{low}}, \hat{V}^{\mathrm{upp}}}\I{\cA_1^c\cap \cA_2^c} \mid \hat{V}^{\mathrm{low}}, \hat{V}^{\mathrm{upp}}} \\
  & \qquad \leq 1 - \alpha + \frac{1}{n_{*}} + \I{V \notin [\hat{V}^{\mathrm{low}}, \hat{V}^{\mathrm{upp}}]} + \frac{1 + 2 \sum_{l \neq k} |V_{kl}|}{n_k+1} \\
  & \qquad \qquad + 2 \delta^{\mathrm{ci}}(n_k, n_*) + \frac{1}{n_k} + 2   \sum_{l \neq k} |\bar{V}_{kl}^{\mathrm{upp}}| \cdot \left( \I{V \notin [\hat{V}^{\mathrm{low}}, \hat{V}^{\mathrm{upp}}] } - \alpha_V \right) \\
	& \qquad \qquad + (K-1) \left( 2 \hat{\delta}^{(V)}_{k*}  + |\hat{\zeta}^{\mathrm{upp}}_k| \right) + 2\sum_{l\neq k}|V_{kl}| \cdot \frac{f_{\max}}{f_{\min}} \cdot \frac{1}{n_k+1}\sum_{j=1}^{n_k+1} \frac{1}{j} \\
  & \qquad \leq 1 - \alpha + \frac{1}{n_{*}} + \I{V \notin [\hat{V}^{\mathrm{low}}, \hat{V}^{\mathrm{upp}}]}  + (K-1) \left( 2 \hat{\delta}^{(V)}_{k*}  + |\hat{\zeta}^{\mathrm{upp}}_k| \right) \\
  & \qquad \qquad + 2 \delta^{\mathrm{ci}}(n_k, n_*) + 2   \sum_{l \neq k} |\bar{V}_{kl}^{\mathrm{upp}}| \cdot \left( \I{V \notin [\hat{V}^{\mathrm{low}}, \hat{V}^{\mathrm{upp}}] } - \alpha_V \right) \\ 
	& \qquad \qquad + \frac{2}{n_k} \left[ 1 + \sum_{l \neq k} |V_{kl}| + \sum_{l\neq k}|V_{kl}| \cdot \frac{f_{\max}}{f_{\min}} \cdot \sum_{j=1}^{n_k+1} \frac{1}{j} \right].
\end{align*}

Finally, taking an expectation with respect to $\hat{V}^{\mathrm{low}}$ and $\hat{V}^{\mathrm{upp}}$, we arrive at:
\begin{align*}
  & \P{Y_{n+1} \in \hat{C}^{\mathrm{ci}}(X_{n+1}) \mid Y = k}  \\ 
  & \qquad \leq 1 - \alpha + \frac{1}{n_{*}} + \alpha_V + 2 \EV{ \delta^{\mathrm{ci}}(n_k, n_*) } + (K-1) \left( 2 \EV{\hat{\delta}^{(V)}_{k*}}  + \EV{|\hat{\zeta}^{\mathrm{upp}}_k|} \right) \\
	& \qquad \qquad + \frac{2}{n_k} \left[ 1 + \sum_{l \neq k} |V_{kl}| + \sum_{l\neq k}|V_{kl}| \cdot \frac{f_{\max}}{f_{\min}} \cdot \sum_{j=1}^{n_k+1} \frac{1}{j} \right] \\
  & \qquad = 1 - \alpha + \frac{1}{n_{*}} + \alpha_V + 2c(n_k) + 4 \alpha_V \sum_{l \neq k} |\bar{V}_{kl}^{\mathrm{upp}}| \\
    & \qquad\qquad + \frac{4 \sum_{l \neq k} \EV{ |\hat{V}^{\mathrm{upp}}_{kl}| + \hat{\delta}^{(V)}_{kl} } }{\sqrt{n_{*}}} \min \left\{ K \sqrt{\frac{\pi}{2}} , \frac{1}{\sqrt{n_{*}}} + \sqrt{\frac{\log(2K) + \log(n_{*})}{2}} \right\}\\
  & \qquad \qquad  + (K-1) \left( 2 \EV{\hat{\delta}^{(V)}_{k*}}  + \EV{|\hat{\zeta}^{\mathrm{upp}}_k|} \right)
    + \frac{2}{n_k} \left[ 1 + \sum_{l \neq k} |V_{kl}| + \sum_{l\neq k}|V_{kl}| \cdot \frac{f_{\max}}{f_{\min}} \cdot \sum_{j=1}^{n_k+1} \frac{1}{j} \right] \\
  & \qquad = 1 - \alpha + \frac{1}{n_{*}} + \left( 1 + 4 \sum_{l \neq k} |\bar{V}_{kl}^{\mathrm{upp}}| \right) \alpha_V + 2c(n_k) \\
    & \qquad\qquad + \frac{4 \sum_{l \neq k} \EV{ |\hat{V}^{\mathrm{upp}}_{kl}| + \hat{\delta}^{(V)}_{kl} } }{\sqrt{n_{*}}} \min \left\{ K \sqrt{\frac{\pi}{2}} , \frac{1}{\sqrt{n_{*}}} + \sqrt{\frac{\log(2K) + \log(n_{*})}{2}} \right\}\\
  & \qquad \qquad + (K-1) \left( 2 \EV{\hat{\delta}^{(V)}_{k*}}  + \EV{|\hat{\zeta}^{\mathrm{upp}}_k|} \right) + \frac{2}{n_k} \left[ 1 + \sum_{l \neq k} |V_{kl}| + \sum_{l\neq k}|V_{kl}| \cdot \frac{f_{\max}}{f_{\min}} \cdot \sum_{j=1}^{n_k+1} \frac{1}{j} \right].
\end{align*}

\end{proof}

\begin{lemma}\label{lemma:dkw-delta-prob-ci}
Under the assumptions of Theorem~\ref{thm:algorithm-correction-ci}, for any $k \in [K]$ and $\eta > 0$,
\begin{align*}
  & \P{ \sup_{t \in \mathbb{R}} |\hat{\Delta}_k^{\mathrm{ci}}(t) - \Delta_k(t) | > (K-1) \left( 2 \hat{\delta}^{(V)}_{k*}  + |\hat{\zeta}^{\mathrm{upp}}_k| \right) + 2 \Upsilon(K,\eta,n_{*}) \sum_{l \neq k} |\hat{V}^{\mathrm{upp}}_{kl}|  \mid \hat{V}^{\mathrm{low}}, \hat{V}^{\mathrm{upp}} } \\
  & \qquad \leq \eta \I{V \in [\hat{V}^{\mathrm{low}}, \hat{V}^{\mathrm{upp}}]}   + \I{V \notin [\hat{V}^{\mathrm{low}}, \hat{V}^{\mathrm{upp}}]},
\end{align*}
where $\Upsilon(K,\eta,n_{*}) := \sqrt{[\log(2K) + \log(1/\eta)]/(2 n_{*})}$.
\end{lemma}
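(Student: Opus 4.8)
The plan is to prove a high-probability version of the in-expectation estimate established inside the proof of Lemma~\ref{lemma:dkw-delta-expected-ci}, recycling the pathwise decomposition derived there rather than redoing it. Recall that that argument shows, deterministically, that on the event $\{V \in [\hat{V}^{\mathrm{low}}, \hat{V}^{\mathrm{upp}}]\}$ (meaning $V_{kl} \in [\hat{V}^{\mathrm{low}}_{kl}, \hat{V}^{\mathrm{upp}}_{kl}]$ for every $l \neq k$) one has
\[
  \sup_{t \in \R} \bigl| \hat{\Delta}_k^{\mathrm{ci}}(t) - \Delta_k(t) \bigr| \;\leq\; 2 \Bigl( \sup_{t\in\R}\max_{l \in [K]} \bigl| \hat{F}_l^{k}(t) - \tilde{F}_l^{k}(t) \bigr| \Bigr) \sum_{l \neq k} |\hat{V}^{\mathrm{upp}}_{kl}| \;+\; (K-1)\bigl( 2 \hat{\delta}^{(V)}_{k*} + |\hat{\zeta}^{\mathrm{upp}}_k| \bigr).
\]
The first step is simply to invoke this inequality as a black box.

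Next, I would work throughout conditionally on $\hat{V}^{\mathrm{low}}, \hat{V}^{\mathrm{upp}}$. Since the confidence region is assumed independent of the calibration data, the indicator $\I{V \in [\hat{V}^{\mathrm{low}}, \hat{V}^{\mathrm{upp}}]}$ is measurable with respect to the conditioning and hence acts as a constant ($0$ or $1$) inside the conditional probability. On $\{V \notin [\hat{V}^{\mathrm{low}}, \hat{V}^{\mathrm{upp}}]\}$ the claimed right-hand side already contains the term $\I{V \notin [\hat{V}^{\mathrm{low}}, \hat{V}^{\mathrm{upp}}]} = 1$, so the asserted bound holds trivially there (any conditional probability is $\leq 1$); only the complementary event remains.

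On $\{V \in [\hat{V}^{\mathrm{low}}, \hat{V}^{\mathrm{upp}}]\}$, combining the displayed pathwise bound with the explicit form of the threshold $(K-1)\bigl(2\hat{\delta}^{(V)}_{k*} + |\hat{\zeta}^{\mathrm{upp}}_k|\bigr) + 2\Upsilon(K,\eta,n_{*})\sum_{l\neq k}|\hat{V}^{\mathrm{upp}}_{kl}|$ shows that the event $\bigl\{\sup_t |\hat{\Delta}_k^{\mathrm{ci}}(t) - \Delta_k(t)| > (\text{threshold})\bigr\}$ is contained in $\bigl\{\sup_t \max_{l\in[K]} |\hat{F}_l^{k}(t) - \tilde{F}_l^{k}(t)| > \Upsilon(K,\eta,n_{*})\bigr\}$. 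Each $\hat{F}_l^{k}$ is an empirical CDF built from $n_l \geq n_{*}$ i.i.d.\ samples, so applying the DKW inequality to each $l$ and a union bound over $l \in [K]$ gives that this last event has probability at most $2K\exp(-2n_{*}\Upsilon(K,\eta,n_{*})^2) = 2K\exp\bigl(-[\log(2K)+\log(1/\eta)]\bigr) = \eta$. Combining the two cases yields precisely $\eta \I{V \in [\hat{V}^{\mathrm{low}}, \hat{V}^{\mathrm{upp}}]} + \I{V \notin [\hat{V}^{\mathrm{low}}, \hat{V}^{\mathrm{upp}}]}$.

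I do not expect a genuine obstacle here: the content is entirely contained in the decomposition already available from Lemma~\ref{lemma:dkw-delta-expected-ci}, and the two mildly delicate points are purely bookkeeping — verifying that $2K\exp(-2n_{*}\Upsilon^2)$ collapses exactly to $\eta$ for the stated choice of $\Upsilon$, and noting that conditioning on the (independent) confidence region legitimately freezes $\I{V \in [\hat{V}^{\mathrm{low}}, \hat{V}^{\mathrm{upp}}]}$ so that the DKW randomness concerns only the calibration samples.
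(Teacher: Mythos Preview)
Your proposal is correct and follows essentially the same route as the paper: invoke the deterministic pathwise bound from the proof of Lemma~\ref{lemma:dkw-delta-expected-ci} on the event $\{V \in [\hat{V}^{\mathrm{low}},\hat{V}^{\mathrm{upp}}]\}$, treat the complementary event trivially, and then apply DKW with a union bound over $l\in[K]$ (exactly as in Lemma~\ref{lemma:dkw-delta-prob}) to collapse $2K\exp(-2n_*\Upsilon^2)$ to $\eta$. The paper's proof is the same argument, stated slightly more tersely.
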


\begin{proof}[Proof of Lemma~\ref{lemma:dkw-delta-prob-ci}]
By proceeding as in the proof of Lemma~\ref{lemma:dkw-delta-expected-ci}, we obtain that, in the event that $V \in [\hat{V}^{\mathrm{low}}, \hat{V}^{\mathrm{upp}}]$, for any $k \in [K]$ and $t \in \mathbb{R}$,
\begin{align*}
  |\hat{\Delta}_k^{\mathrm{ci}}(t) - \Delta_k(t)|
  & \leq 2 \left( \max_{l \in [K]} \left| \hat{F}_l^{k}(t) - \tilde{F}_l^{k}(t)\right| \right) \sum_{l \neq k} |\hat{V}^{\mathrm{upp}}_{kl}| + (K-1) \left( 2 \hat{\delta}^{(V)}_{k*}  + |\hat{\zeta}^{\mathrm{upp}}_k| \right).
\end{align*}
Combined with the DKW inequality, this implies that, for any $\eta > 0$,
\begin{align*}
  & \P{ \sup_{t \in \mathbb{R}} |\hat{\Delta}_k^{\mathrm{ci}}(t) - \Delta_k(t) | > (K-1) \left( 2 \hat{\delta}^{(V)}_{k*}  + |\hat{\zeta}^{\mathrm{upp}}_k| \right) + 2 \Upsilon(K,\eta,n_{*}) \sum_{l \neq k} |\hat{V}^{\mathrm{upp}}_{kl}| \mid \hat{V}^{\mathrm{low}}, \hat{V}^{\mathrm{upp}} } \\
  & \qquad \leq \P{ \max_{j \in [K]} | \hat{F}_j^{k}(t) - \tF^{k}_{j}(t)| > \sqrt{\frac{\log(2K) + \log(1/\eta)}{2 n_{*}}} } \I{V \in [\hat{V}^{\mathrm{low}}, \hat{V}^{\mathrm{upp}}]} \\
  & \qquad \qquad + \I{V \notin [\hat{V}^{\mathrm{low}}, \hat{V}^{\mathrm{upp}}]} \\
  & \qquad \leq \eta \I{V \in [\hat{V}^{\mathrm{low}}, \hat{V}^{\mathrm{upp}}]}   + \I{V \notin [\hat{V}^{\mathrm{low}}, \hat{V}^{\mathrm{upp}}]}.
\end{align*}
Note that the last inequality above follows exactly as in the proof of Lemma~\ref{lemma:dkw-delta-prob}.
\end{proof}

\begin{proof}[Proof of Proposition~\ref{thm:algorithm-correction-ci-optimist}]
This proof combines elements of the proofs of Proposition~\ref{thm:algorithm-correction-optimistic}, Theorem~\ref{thm:algorithm-correction}, and Theorem~\ref{thm:algorithm-correction-ci}.
  Suppose $Y_{n+1} = k$, for some $k \in [K]$. Proceeding exactly as in the proof of Theorem~\ref{thm:algorithm-correction}, we obtain:
\begin{align*}
	& \P{Y \notin C(X, S^k_{(\hat{i}_k)}) \mid Y = k, \mathcal{D}, \hat{V}^{\mathrm{low}}, \hat{V}^{\mathrm{upp}}}  \\
	& \qquad = 1 - \tilde{F}^{k}_k(S^k_{(\hat{i}_k)}) - \Delta_k(S^k_{(\hat{i}_k)}) 	\\
	& \qquad = 1- \hat{F}^{k}_k(S^k_{(\hat{i}_k)}) - \max\left\{\hDelta^{\mathrm{ci}}_k(S^k_{(\hat{i}_k)}) - \delta^{\mathrm{ci}}(n_k,n_{*}), -(1-\alpha)/n_k \right\}	\\
	& \qquad\qquad + \hat{F}^{k}_k(S^k_{(\hat{i}_k)}) - \tilde{F}^{k}_k(S^k_{(\hat{i}_k)}) - \delta^{\mathrm{ci}}(n_k,n_{*}) 	\\
	& \qquad\qquad +  \max\left\{\hDelta^{\mathrm{ci}}_k(S^k_{(\hat{i}_k)}) - \delta^{\mathrm{ci}}(n_k,n_{*}), -(1-\alpha)/n_k \right\} - (\Delta_k(S^k_{(\hat{i}_k)}) - \delta^{\mathrm{ci}}(n_k,n_{*})) \\
	& \qquad = 1- \hat{F}^{k}_k(S^k_{(\hat{i}_k)}) - \max\left\{\hDelta^{\mathrm{ci}}_k(S^k_{(\hat{i}_k)}) - \delta^{\mathrm{ci}}(n_k,n_{*}), -(1-\alpha)/n_k \right\}	\\
	& \qquad\qquad + \hat{F}^{k}_k(S^k_{(\hat{i}_k)}) - \tilde{F}^{k}_k(S^k_{(\hat{i}_k)}) - \delta^{\mathrm{ci}}(n_k,n_{*})  \\
	& \qquad\qquad +  \max\left\{\hDelta^{\mathrm{ci}}_k(S^k_{(\hat{i}_k)}) - \delta^{\mathrm{ci}}(n_k,n_{*}), -(1-\alpha)/n_k \right\} \\
        & \qquad \qquad - \max \left\{ \Delta_k(S^k_{(\hat{i}_k)}) - \delta^{\mathrm{ci}}(n_k,n_{*}), -(1-\alpha)/n_k \right\} 	\\
	& \qquad = 1- \hat{F}^{k}_k(S^k_{(\hat{i}_k)}) - \max\left\{\hDelta^{\mathrm{ci}}_k(S^k_{(\hat{i}_k)}) - \delta^{\mathrm{ci}}(n_k,n_{*}), -(1-\alpha)/n_k \right\}	\\
	& \qquad\qquad + \hat{F}^{k}_k(S^k_{(\hat{i}_k)}) - \tilde{F}^{k}_k(S^k_{(\hat{i}_k)}) - \delta^{\mathrm{ci}}(n_k,n_{*})  \\
	& \qquad\qquad +  \max\left\{\hDelta^{\mathrm{ci}}_k(S^k_{(\hat{i}_k)}), \delta^{\mathrm{ci}}(n_k,n_{*}) -(1-\alpha)/n_k \right\} \\
        & \qquad \qquad - \max \left\{ \Delta_k(S^k_{(\hat{i}_k)}),  \delta^{\mathrm{ci}}(n_k,n_{*}) -(1-\alpha)/n_k\right\} 	\\
	& \qquad \leq 1- \hat{F}^{k}_k(S^k_{(\hat{i}_k)}) - \max\left\{\hDelta^{\mathrm{ci}}_k(S^k_{(\hat{i}_k)}) - \delta^{\mathrm{ci}}(n_k,n_{*}), -(1-\alpha)/n_k \right\}	\\
	& \qquad\qquad + \hat{F}^{k}_k(S^k_{(\hat{i}_k)}) - \tilde{F}^{k}_k(S^k_{(\hat{i}_k)}) - \delta^{\mathrm{ci}}(n_k,n_{*})
 +  \max\left\{\hDelta^{\mathrm{ci}}_k(S^k_{(\hat{i}_k)}) - \Delta_k(S^k_{(\hat{i}_k)}), 0 \right\} \\
	 & \qquad \leq \left[ 1- \frac{\hat{i}_k}{n_k} - \max\left\{\hDelta^{\mathrm{ci}}_k(S^k_{(i)}) - \delta^{\mathrm{ci}}(n_k,n_{*}), -(1-\alpha)/n_k \right\} \right]	\\
	 & \qquad\qquad + \hat{F}^{k}_k(S^k_{(\hat{i}_k)}) - \tilde{F}^{k}_k(S^k_{(\hat{i}_k)}) - \delta^{\mathrm{ci}}(n_k,n_{*}) + \sup_{t \in \mathbb{R}} \max\left\{ \hat{\Delta}^{\mathrm{ci}}_k(t) - \Delta_k(t), 0 \right\},
\end{align*}
using the assumption that $\inf_{t\in \R}\Delta_k(t) \geq \delta^{\mathrm{ci}}(n_k, n_{*}) -(1-\alpha)/n_k$ in the third equality above.
The proof is then completed by proceeding as in the proof of Theorem~\ref{thm:algorithm-correction-ci}, using Lemma~\ref{lemma:dkw-delta-expected-ci}.

\end{proof}

\subsection{Fitting the label contamination model} \label{app:proofs-model-fitting}

\begin{proof}[Proof of Equation~\eqref{eq:Q-M}]

By definition of $Q$ and $\tilde{Q}$ in~\eqref{eq:Q-def}, for any $k,l \in [K]$,
\begin{align*}
  \tilde{Q}_{lk}
  & = \P{ \hat{f}(X) = k \mid \tilde{Y}=l, \hat{f} } \\  
  & = \sum_{l'=1}^{K} \P{ \hat{f}(X) = k, Y=l' \mid \tilde{Y}=l, \hat{f} } \\  
  & = \sum_{l'=1}^{K} \P{ Y=l' \mid \tilde{Y}=l, \hat{f} } \P{\hat{f}(X) = k \mid Y=l', \tilde{Y}=l, \hat{f} } \\
  & = \sum_{l'=1}^{K} M_{ll'} \P{\hat{f}(X) = k \mid Y=l', \hat{f} } \\
  & = \sum_{l'=1}^{K} M_{ll'} Q_{l'k} \\
  & = (MQ)_{lk}.
\end{align*}

\end{proof}

\begin{proof}[Proof of Equation~\eqref{eq:epsilon-estim-RR}]
The proof of~\eqref{eq:epsilon-estim-RR} follows quite directly from Equation (\ref{eq:Q-M}) and the definitions of $\tilde{\psi}$ and $\psi$ in Equation~\eqref{eq:psi-psi-tilde}:
\begin{align*}
  \tilde{\psi} 
  & := \sum_{k=1}^{K} \P{ \hat{f}(X) = k, \tilde{Y}=k \mid  \hat{f} } \\
  & = \sum_{k=1}^{K} \tilde{\rho}_k \tilde{Q}_{kk}
    = \sum_{k=1}^{K} \tilde{\rho}_k \sum_{s=1}^{K} M_{ks} Q_{sk} \\
  & = \sum_{k=1}^{K} \tilde{\rho}_k \left( M_{kk} Q_{kk} + \sum_{s \neq k} M_{ks} Q_{sk} \right) \\
  & = \sum_{k=1}^{K} \tilde{\rho}_k \left( \frac{(1-\epsilon) \rho_k}{(1-\epsilon)\rho_k + \epsilon/K}  Q_{kk} + \sum_{s=1}^{K} \frac{\rho_s \cdot \epsilon/K }{(1-\epsilon)\rho_k + \epsilon/K} Q_{sk} \right) \\
  & = (1-\epsilon) \sum_{k=1}^{K} \rho_k  Q_{kk} + \frac{\epsilon}{K} \sum_{k=1}^{K} \sum_{s=1}^{K} \rho_s Q_{sk}  \\
  & = (1-\epsilon) \sum_{k=1}^{K} \P{ \hat{f}(X) = k, Y=k \mid  \hat{f} } + \frac{\epsilon}{K} \sum_{k=1}^{K} \sum_{s=1}^{K} \P{ \hat{f}(X) = k, Y=s \mid  \hat{f} }  \\
  & = (1-\epsilon) \psi + \frac{\epsilon}{K}.
\end{align*}
\end{proof}

\begin{proof}[Proof of Equation~\eqref{eq:zeta-upp-BRR}]
By the definition of $\hat{\zeta}_k$ in~\eqref{eq:def-zeta-k},
\begin{align*}
  \hat{\zeta}_k
  & := \max_{l \neq k} \left( \hat{V}^{\mathrm{upp}}_{kl} - V_{kl} \right) - \min_{l \neq k} \left( \hat{V}^{\mathrm{upp}}_{kl} - V_{kl} \right) \\
  & = \Bigg| \left( \frac{\xi}{K} \cdot \frac{ 1 + \nu(1 + 2 \xi)}{1 + \nu \xi } - \frac{\hat{\xi}^{\mathrm{low}}}{K} \cdot \frac{ 1 + \hat{\nu}^{\mathrm{low}} ( 1 + 2 \hat{\xi}^{\mathrm{low}})}{1 + \hat{\nu}^{\mathrm{low}} \hat{\xi}^{\mathrm{low}}} \right) \\
  & \qquad - \left( \frac{\xi}{K} \cdot \frac{ 1 - \nu}{1 + \nu \xi } - \frac{\hat{\xi}^{\mathrm{low}}}{K} \cdot \frac{ 1 - \hat{\nu}^{\mathrm{upp}}}{1 + \hat{\nu}^{\mathrm{upp}} \hat{\xi}^{\mathrm{low}}} \right) \Bigg|  \\
  & = \left| \frac{\xi}{K}  \cdot \frac{ 2 \nu(1 + \xi) }{1 + \nu \xi } - \frac{\hat{\xi}^{\mathrm{low}}}{K}  \left( \frac{ 1 + \hat{\nu}^{\mathrm{low}} ( 1 + 2 \hat{\xi}^{\mathrm{low}})}{1 + \hat{\nu}^{\mathrm{low}} \hat{\xi}^{\mathrm{low}}} - \frac{ 1 - \hat{\nu}^{\mathrm{upp}}}{1 + \hat{\nu}^{\mathrm{upp}} \hat{\xi}^{\mathrm{low}}}  \right) \right|  \\
  & = \left| \frac{\xi}{K}  \cdot \frac{ 2 \nu(1 + \xi) }{1 + \nu \xi } - \frac{\hat{\xi}^{\mathrm{low}}}{K} \cdot \frac{ \left( 1 + \hat{\xi}^{\mathrm{low}} \right) \left( \hat{\nu}^{\mathrm{low}} + \hat{\nu}^{\mathrm{upp}} + 2 \hat{\nu}^{\mathrm{low}} \hat{\nu}^{\mathrm{upp}} \hat{\xi}^{\mathrm{low}} \right) }{\left( 1 + \hat{\nu}^{\mathrm{low}} \hat{\xi}^{\mathrm{low}} \right) \left( 1 + \hat{\nu}^{\mathrm{upp}} \hat{\xi}^{\mathrm{low}} \right)} \right|.
\end{align*}
Then, applying the triangle inequality gives us:
\begin{align*}
  \hat{\zeta}_k
& \leq \left| \frac{\xi}{K}  \cdot \frac{ 2 \nu(1 + \xi) }{1 + \nu \xi } - \frac{\hat{\xi}^{\mathrm{low}}}{K}  \cdot \frac{ 2 \hat{\nu}^{\mathrm{low}}(1 + \hat{\xi}^{\mathrm{low}}) }{1 + \hat{\nu}^{\mathrm{low}} \hat{\xi}^{\mathrm{low}} } \right| \\
  & \qquad + \frac{\hat{\xi}^{\mathrm{low}}}{K} \left| \frac{ 2 \hat{\nu}^{\mathrm{low}}(1 + \hat{\xi}^{\mathrm{low}}) }{1 + \hat{\nu}^{\mathrm{low}} \hat{\xi}^{\mathrm{low}} } - \frac{ \left( 1 + \hat{\xi}^{\mathrm{low}} \right) \left( \hat{\nu}^{\mathrm{low}} + \hat{\nu}^{\mathrm{upp}} + 2 \hat{\nu}^{\mathrm{low}} \hat{\nu}^{\mathrm{upp}} \hat{\xi}^{\mathrm{low}} \right) }{\left( 1 + \hat{\nu}^{\mathrm{low}} \hat{\xi}^{\mathrm{low}} \right) \left( 1 + \hat{\nu}^{\mathrm{upp}} \hat{\xi}^{\mathrm{low}} \right)} \right|  \\
  & = \left| \frac{\xi}{K}  \cdot \frac{ 2 \nu(1 + \xi) }{1 + \nu \xi } - \frac{\hat{\xi}^{\mathrm{low}}}{K}  \cdot \frac{ 2 \hat{\nu}^{\mathrm{low}}(1 + \hat{\xi}^{\mathrm{low}}) }{1 + \hat{\nu}^{\mathrm{low}} \hat{\xi}^{\mathrm{low}} } \right| 
    + \frac{\hat{\xi}^{\mathrm{low}}}{K} \cdot \frac{\left( \hat{\nu}^{\mathrm{upp}} - \hat{\nu}^{\mathrm{low}} \right) \left( 1 + \hat{\xi}^{\mathrm{low}} \right) }{\left( 1 + \hat{\nu}^{\mathrm{low}} \hat{\xi}^{\mathrm{low}} \right) \left( 1 + \hat{\nu}^{\mathrm{upp}} \hat{\xi}^{\mathrm{low}} \right)}.
\end{align*}
Next, the monotonicity of the function $(\nu, \epsilon) \mapsto (\nu(1+\epsilon))/((1+\nu\epsilon))$ in both inputs allows us to conclude that:
\begin{align*}
  \hat{\zeta}_k
& \leq \left| \frac{\hat{\xi}^{\mathrm{upp}}}{K}  \cdot \frac{ 2 \hat{\nu}^{\mathrm{upp}}(1 + \hat{\xi}^{\mathrm{upp}}) }{1 + \hat{\nu}^{\mathrm{upp}} \hat{\xi}^{\mathrm{upp}} } - \frac{\hat{\xi}^{\mathrm{low}}}{K}  \cdot \frac{ 2 \hat{\nu}^{\mathrm{low}}(1 + \hat{\xi}^{\mathrm{low}}) }{1 + \hat{\nu}^{\mathrm{low}} \hat{\xi}^{\mathrm{low}} } \right| \\
  & \qquad  + \frac{\hat{\xi}^{\mathrm{low}}}{K} \cdot \frac{\left( \hat{\nu}^{\mathrm{upp}} - \hat{\nu}^{\mathrm{low}} \right) \left( 1 + \hat{\xi}^{\mathrm{low}} \right) }{\left( 1 + \hat{\nu}^{\mathrm{low}} \hat{\xi}^{\mathrm{low}} \right) \left( 1 + \hat{\nu}^{\mathrm{upp}} \hat{\xi}^{\mathrm{low}} \right)}  \\
  & = \frac{2}{K} \cdot \frac{\left( \hat{\nu}^{\mathrm{upp}} - \hat{\nu}^{\mathrm{low}} \right) + \left[ \hat{\nu}^{\mathrm{upp}} \left( \hat{\xi}^{\mathrm{upp}}\right)^2 - \hat{\nu}^{\mathrm{low}} \left( \hat{\xi}^{\mathrm{low}}\right)^2 \right] + \hat{\nu}^{\mathrm{low}} \hat{\nu}^{\mathrm{upp}} \hat{\xi}^{\mathrm{low}} \hat{\xi}^{\mathrm{upp}} \left( \hat{\xi}^{\mathrm{upp}} - \hat{\xi}^{\mathrm{low}} \right) }{\left( 1 + \hat{\nu}^{\mathrm{upp}} \hat{\xi}^{\mathrm{upp}} \right) \left( 1 + \hat{\nu}^{\mathrm{low}} \hat{\xi}^{\mathrm{low}} \right) }  \\
  & \qquad  + \frac{\hat{\xi}^{\mathrm{low}}}{K} \cdot \frac{\left( \hat{\nu}^{\mathrm{upp}} - \hat{\nu}^{\mathrm{low}} \right) \left( 1 + \hat{\xi}^{\mathrm{low}} \right) }{\left( 1 + \hat{\nu}^{\mathrm{low}} \hat{\xi}^{\mathrm{low}} \right) \left( 1 + \hat{\nu}^{\mathrm{upp}} \hat{\xi}^{\mathrm{low}} \right)}.
\end{align*}
\end{proof}

\begin{proof}[Proof of Equation~\eqref{eq:BRR-estimating-system}]
We begin by proving the first equation, namely
\begin{align*}
    \tilde{\psi} & = (1-\epsilon) \psi + \frac{\epsilon}{K}  + \frac{\epsilon \nu}{K} \left( 2 \phi - 1 \right).
\end{align*}

It follows directly from the general estimating equation (\ref{eq:Q-M}) that 
\begin{align*}
  \tilde{\psi} 
  & := \sum_{k=1}^{K} \P{ \hat{f}(X) = k, \tilde{Y}=k \mid  \hat{f} } \\
  & = \frac{1}{K} \sum_{k=1}^{K}  \tilde{Q}_{kk} \\
  & = \frac{1}{K} \sum_{k=1}^{K}  \sum_{s=1}^{K} M_{ks} Q_{sk} \\
  & = \frac{1}{K} \sum_{k=1}^{K}  \left( M_{kk} Q_{kk} + \sum_{s \neq k} M_{ks} Q_{sk} \right) \\
  & = \frac{1}{K} \sum_{k=1}^{K}  \left( (1-\epsilon) Q_{kk} + \sum_{s \in \mathcal{B}_k} \frac{\epsilon}{K} \left( 1 + \nu \right) Q_{sk} + \sum_{s \in \mathcal{B}^{\mathrm{c}}_k} \frac{\epsilon}{K} \left( 1 - \nu \right) Q_{sk} \right) \\
  & = (1-\epsilon) \psi + \frac{\epsilon(1+\nu)}{K^2} \sum_{k=1}^{K} \sum_{s \in \mathcal{B}_k} Q_{sk} + \frac{\epsilon(1-\nu)}{K^2} \sum_{k=1}^{K} \sum_{s \in \mathcal{B}^{\mathrm{c}}_k} Q_{sk} \\
  & = (1-\epsilon) \psi + \frac{\epsilon(1+\nu)}{K} \sum_{k=1}^{K} \sum_{s \in \mathcal{B}_k} \P{ \hat{f}(X) = k, Y=s \mid  \hat{f} } \\
  & \qquad + \frac{\epsilon(1-\nu)}{K} \sum_{k=1}^{K} \sum_{s \in \mathcal{B}^{\mathrm{c}}_k} \P{ \hat{f}(X) = k, Y=s \mid  \hat{f} } \\
  & = (1-\epsilon) \psi + \frac{\epsilon}{K} \sum_{k=1}^{K} \sum_{s=1}^{K} \P{ \hat{f}(X) = k, Y=s \mid  \hat{f} } \\
  & \qquad + \frac{\epsilon \nu}{K} \sum_{k=1}^{K} \left( \sum_{s \in \mathcal{B}_k} \P{ \hat{f}(X) = k, Y=s \mid  \hat{f} } - \sum_{s \in \mathcal{B}^{\mathrm{c}}_k} \P{ \hat{f}(X) = k, Y=s \mid  \hat{f} } \right)\\
  & = (1-\epsilon) \psi + \frac{\epsilon}{K}  + \frac{\epsilon \nu}{K} \left( 2 \sum_{k=1}^{K} \sum_{s \in \mathcal{B}_k} \P{ \hat{f}(X) = k, Y=s \mid  \hat{f} } - 1 \right)\\
  & = (1-\epsilon) \psi + \frac{\epsilon}{K}  + \frac{\epsilon \nu}{K} \left( 2 \phi - 1 \right),
\end{align*}
where
\begin{align*}
  \phi 
  & := \sum_{k=1}^{K} \sum_{l \in \mathcal{B}_k} \P{ \hat{f}(X) = l, Y=k \mid  \hat{f} }.
\end{align*}
This completes the first part of the proof. 

Let us now prove the second equation, namely
\begin{align*}
  \tilde{\phi} 
  & = \phi - \epsilon (1-\nu) \left( \phi - \frac{1}{2} \right).
\end{align*}

It follows directly from the general estimating equation (\ref{eq:Q-M}) that 
\begin{align*}
  \tilde{\phi} 
  & := \sum_{k=1}^{K} \sum_{l \in \mathcal{B}_k} \P{ \hat{f}(X) = l, \tilde{Y}=k \mid  \hat{f} } \\
  & = \frac{1}{K} \sum_{k=1}^{K} \sum_{l \in \mathcal{B}_k} \tilde{Q}_{kl} \\
  & = \frac{1}{K} \sum_{k=1}^{K} \sum_{l \in \mathcal{B}_k} \sum_{s=1}^{K} M_{ks} Q_{sl} \\
  & = \frac{1}{K} \sum_{k=1}^{K} \sum_{l \in \mathcal{B}_k} \left( M_{kk} Q_{kl} + \sum_{s \neq k} M_{ks} Q_{sl} \right) \\
  & = \frac{1}{K} \sum_{k=1}^{K} \sum_{l \in \mathcal{B}_k} \left( (1-\epsilon) Q_{kl} + \sum_{s \in \mathcal{B}_k} \frac{\epsilon}{K} \left( 1 + \nu \right) Q_{sl} + \sum_{s \in \mathcal{B}^{\mathrm{c}}_k} \frac{\epsilon}{K} \left( 1 - \nu \right) Q_{sk}\right) \\
  & = (1-\epsilon)  \sum_{k=1}^{K} \sum_{l \in \mathcal{B}_k} \frac{Q_{kl}}{K} + \frac{\epsilon \left( 1 + \nu \right)}{K} \sum_{k=1}^{K} \sum_{l \in \mathcal{B}_k} \sum_{s \in \mathcal{B}_k} \frac{Q_{sl}}{K} + \frac{\epsilon \left( 1 - \nu \right)}{K} \sum_{k=1}^{K} \sum_{l \in \mathcal{B}_k} \sum_{s \in \mathcal{B}^{\mathrm{c}}_k} \frac{Q_{sl}}{K} \\
   & = (1-\epsilon) \phi + \frac{\epsilon \left( 1 + \nu \right)}{K} \frac{K}{2} \sum_{k=1}^{K} \sum_{s \in \mathcal{B}_k} \frac{Q_{sk}}{K} + \frac{\epsilon \left( 1 - \nu \right)}{K} \frac{K}{2} \sum_{k=1}^{K} \sum_{s \in \mathcal{B}^{\mathrm{c}}_k} \frac{Q_{sk}}{K} \\
   & = (1-\epsilon) \phi + \frac{\epsilon \left( 1 + \nu \right)}{2} \phi + \frac{\epsilon \left( 1 - \nu \right)}{2} (1-\phi).
\end{align*}
This concludes the second part of the proof.

\end{proof}

\subsection{Extensions of preliminary theoretical results}

\begin{proof}[Proof of Theorem~\ref{thm:coverage-marginal}]
The proof strategy is similar to that of Theorem~\ref{thm:coverage-lab-cond}.
  By definition of the conformity score function in~\eqref{eq:conf-scores}, $Y_{n+1} \in \hat{C}(X_{n+1})$ if and only if $\hat{s}(X_{n+1},Y_{n+1}) \leq \hat{\tau}_{Y_{n+1}}$.
  Therefore,
  \begin{align*}
    & \P{Y_{n+1} \in \hat{C}(X_{n+1})} \\
    & \qquad = \P{\tilde{Y}_{n+1} \in \hat{C}(X_{n+1})} + \left( \P{Y_{n+1} \in \hat{C}(X_{n+1})} - \P{\tilde{Y}_{n+1} \in \hat{C}(X_{n+1})} \right).
  \end{align*}
The proof is then completed by noting that the second term on the right-hand-side above can be written as:
  \begin{align*}
    & \P{Y_{n+1} \in \hat{C}(X_{n+1})} - \P{\tilde{Y}_{n+1} \in \hat{C}(X_{n+1})} \\
    & \qquad = \sum_{k=1}^{K} \left( \rho_k \cdot \P{Y_{n+1} \in \hat{C}(X_{n+1}) \mid Y_{n+1} = k} - \tilde{\rho}_k \cdot  \P{\tilde{Y}_{n+1} \in \hat{C}(X_{n+1}) \mid \tilde{Y}_{n+1} = k} \right) \\
    & \qquad = \sum_{k=1}^{K} \left( \rho_k \cdot \P{\hat{s}(X_{n+1}, k) \leq \hat{\tau} \mid Y_{n+1} = k} - \tilde{\rho}_k \cdot  \P{ \hat{s}(X_{n+1}, k) \leq \hat{\tau}  \mid \tilde{Y}_{n+1} = k} \right) \\
    & \qquad = \EV{ \sum_{k=1}^{K} \left[ \rho_k F_k^{k}(\hat{\tau}) - \tilde{\rho}_k \tF_k^{k}(\hat{\tau})\right] }
     = \EV{ \Delta(\hat{\tau})}.
  \end{align*}
\end{proof}

\begin{proof}[Proof of Corollary~\ref{cor:coverage-marg}]
  By Theorem~\ref{thm:coverage-marginal}, it suffices to prove $\EV{\Delta(t)} \geq 0$ for all $t \in \mathbb{R}$.
  To establish that, note that combining~\eqref{eq:delta-marg} with Proposition~\ref{prop:indep-linear} gives:
  \begin{align*}
    \Delta(t)
    & = \sum_{k=1}^{K} \left[ \rho_k F_k^{k}(t) - \tilde{\rho}_k \tF_k^{k}(t)\right]
     = \sum_{k=1}^{K} \left[ \rho_k  F_k^{k}(t) - \sum_{l=1}^{K} M_{kl}  \tilde{\rho}_k  F_l^{k}(t)  \right].
  \end{align*}
  To simplify the notation in the following, define
  \begin{align*}
    W_{kl} & := \mathbb{P}[\tilde{Y}=k \mid  X, Y=l]
     = M_{kl} \frac{\tilde{\rho}_k}{\rho_l}.
  \end{align*}
  Then,
  \begin{align*}
    \Delta(t)
    & = \sum_{l=1}^{K} \rho_l  F_l^{l}(t) -  \sum_{l=1}^{K} \rho_l  \sum_{k=1}^{K}	W_{kl} F_l^{k}(t) \\
    & = \sum_{l=1}^{K} \rho_l  F_l^{l}(t) -  \sum_{l=1}^{K} \rho_l  \left[ W_{ll} F_l^{l}(t) + \sum_{k \neq l} W_{kl} F_l^{k}(t) \right] \\
    & \geq \sum_{l=1}^{K} \rho_l  F_l^{l}(t) -  \sum_{l=1}^{K} \rho_l  \left[ W_{ll} F_l^{l}(t) + \left(\sum_{k \neq l} W_{kl}\right) \max_{k \neq l} F_l^{k}(t) \right] \\
    & = \sum_{l=1}^{K} \rho_l  F_l^{l}(t) -  \sum_{l=1}^{K} \rho_l  \left[ W_{ll} F_l^{l}(t) + \left(\sum_{k=1}^{K} W_{kl} \right) \max_{k \neq l} F_l^{k}(t) - W_{ll} \max_{k \neq l} F_l^{k}(t) \right] \\
    & = \sum_{l=1}^{K} \rho_l  F_l^{l}(t) -  \sum_{l=1}^{K} \rho_l  \left[ W_{ll} F_l^{l}(t) + \max_{k \neq l} F_l^{k}(t) - W_{ll} \max_{k \neq l} F_l^{k}(t) \right] \\
    & = \sum_{l=1}^{K} \rho_l (1-W_{ll}) \left[ F_l^{l}(t) - \max_{k\neq l} F_l^{k}(t) \right] \geq 0,
  \end{align*}
  where the last inequality is given by~\eqref{eq:assump_scores-cond-marg}.
This implies that $\Delta(\htau) \geq 0$ almost-surely.
\end{proof}

\subsection{Prediction sets with marginal coverage}

\begin{proof}[Proof of Theorem~\ref{thm:algorithm-correction-marg}]
The proof strategy is similar to those of Theorem~\ref{thm:coverage-marginal} and Theorem~\ref{thm:algorithm-correction}.
  By definition of the conformity score function in~\eqref{eq:conf-scores}, the event $Y_{n+1} \in \hat{C}^{\mathrm{marg}}(X_{n+1})$ occurs if and only if $\hat{s}(X_{n+1},Y_{n+1}) \leq \hat{\tau}^{\mathrm{marg}}$.
Further, note that
  \begin{align*}
    \tilde{F}^{k}(t) = \sum_{k=1}^{K} \tilde{\rho}_{k} \tilde{F}^{k}_k(t).
  \end{align*}
  Therefore, the probability of a miscoverage event conditional on the data in $\mathcal{D}$ can be decomposed as:
  \begin{align*}
    & \P{Y_{n+1} \notin \hat{C}^{\mathrm{marg}}(X_{n+1}) \mid \mathcal{D}} \\
    & \qquad = \P{\tilde{Y}_{n+1} \notin \hat{C}^{\mathrm{marg}}(X_{n+1}) \mid \mathcal{D}} \\
    & \qquad \qquad + \left( \P{Y_{n+1} \notin \hat{C}^{\mathrm{marg}}(X_{n+1}) \mid \mathcal{D}} - \P{\tilde{Y}_{n+1} \notin \hat{C}^{\mathrm{marg}}(X_{n+1}) \mid \mathcal{D}} \right) \\
    & \qquad = \sum_{k=1}^{K} \tilde{\rho}_{k} \left[ 1 - \tilde{F}^{k}_k(S_{(\hat{i}^{\mathrm{marg}})}) \right] \\
    & \qquad \qquad + \sum_{k=1}^{K}  \rho_{k} \P{ \hat{s}(X_{n+1},k) > S^k_{(\hat{i}^{\mathrm{marg}})} \mid \mathcal{D}, Y_{n+1}=k } \\
    & \qquad \qquad - \sum_{k=1}^{K} \tilde{\rho}_{k} \P{ \hat{s}(X_{n+1},k) > S^k_{(\hat{i}^{\mathrm{marg}})} \mid \mathcal{D}, \tilde{Y}_{n+1}=k } \\
    & \qquad = \sum_{k=1}^{K} \tilde{\rho}_{k} \left[ 1 - \tilde{F}^{k}_k(S_{(\hat{i}^{\mathrm{marg}})}) \right]
      + \sum_{k=1}^{K} \left[ \rho_{k} - \rho_{k} F^{k}_k(S_{(\hat{i}^{\mathrm{marg}})}) -  \tilde{\rho}_{k} + \tilde{\rho}_{k}  \tilde{F}^{k}_k(S_{(\hat{i}^{\mathrm{marg}})}) \right]\\
    & \qquad = 1 - \sum_{k=1}^{K} \tilde{\rho}_{k} \tilde{F}^{k}_k(S_{(\hat{i}^{\mathrm{marg}})}) - \Delta(S_{(\hat{i}^{\mathrm{marg}})}) \\
    & \qquad = 1 - \tilde{F}(S_{(\hat{i}^{\mathrm{marg}})}) - \Delta(S_{(\hat{i}^{\mathrm{marg}})}) \\
     & \qquad = 1 - \hat{F}(S_{(\hat{i}^{\mathrm{marg}})}) - \hat{\Delta}(S_{(\hat{i}^{\mathrm{marg}})}) \\
     & \qquad \qquad + \left[ \hat{F}(S_{(\hat{i}^{\mathrm{marg}})}) - \tilde{F}(S_{(\hat{i}^{\mathrm{marg}})}) \right] + \left[ \hat{\Delta}(S_{(\hat{i}^{\mathrm{marg}})}) - \Delta(S_{(\hat{i}^{\mathrm{marg}})}) \right] \\
     & \qquad = \left\{ 1 - \frac{\hat{i}^{\mathrm{marg}}}{n_{\mathrm{cal}}} - \hat{\Delta}(S_{(\hat{i}^{\mathrm{marg}})}) + \delta^{\mathrm{marg}}(n_{\mathrm{cal}},n_{*}) \right\} - \delta^{\mathrm{marg}}(n_{\mathrm{cal}},n_{*})\\
     & \qquad \qquad + \left[ \hat{F}(S_{(\hat{i}^{\mathrm{marg}})}) - \tilde{F}(S_{(\hat{i}^{\mathrm{marg}})}) \right] + \left[ \hat{\Delta}(S_{(\hat{i}^{\mathrm{marg}})}) - \Delta(S_{(\hat{i}^{\mathrm{marg}})}) \right] \\
     & \qquad \leq \sup_{i \in \hat{\mathcal{I}}^{\mathrm{marg}}} \left\{ 1 - \frac{i}{n_{\mathrm{cal}}} - \hat{\Delta}(S_{(i)}) + \delta^{\mathrm{marg}}(n_{\mathrm{cal}},n_{*}) \right\} - \delta^{\mathrm{marg}}(n_{\mathrm{cal}},n_{*})\\
     & \qquad \qquad + \left[ \hat{F}(S_{(\hat{i}^{\mathrm{marg}})}) - \tilde{F}(S_{(\hat{i}^{\mathrm{marg}})}) \right] + \sup_{t \in \mathbb{R}} \left[ \hat{\Delta}(t) - \Delta(t) \right].
  \end{align*}
By definition of $\hat{\mathcal{I}}^{\mathrm{marg}}$, for all $i \in \hat{\mathcal{I}}^{\mathrm{marg}}$,
  \begin{align*}
    1 - \frac{i}{n_{\mathrm{cal}}} - \hat{\Delta}(S_{(i)}) + \delta^{\mathrm{marg}}(n_{\mathrm{cal}},n_{*}) \leq \alpha,
  \end{align*}
which implies a.s.
  \begin{align*}
    \sup_{i \in \hat{\mathcal{I}}^{\mathrm{marg}}} \left\{ 1 - \frac{i}{n_{\mathrm{cal}}} - \hat{\Delta}(S_{(i)}) + \delta^{\mathrm{marg}}(n_{\mathrm{cal}},n_{*}) \right\} \leq \alpha.
  \end{align*}
Therefore,
  \begin{align*}
    & \P{Y_{n+1} \in \hat{C}^{\mathrm{marg}}(X_{n+1})} \\
    & \qquad \leq \alpha + \EV{ \sup_{t \in \mathbb{R}} \left[ \hat{\Delta}(t) - \Delta(t) \right] }
      + \EV{ \hat{F}(S_{(\hat{i}^{\mathrm{marg}})}) - \tilde{F}(S_{(\hat{i}^{\mathrm{marg}})}) }  - \delta^{\mathrm{marg}}(n_{\mathrm{cal}},n_{*}).
  \end{align*}
The first expected value on the right-hand-side above can be bound using the DKW inequality, as made precise by Lemma~\ref{lemma:dkw-delta-expected-marg}.
This leads to:
  \begin{align*}
    \P{Y_{n+1} \in \hat{C}^{\mathrm{marg}}(X_{n+1})}
    & \leq \alpha  + \EV{ \hat{F}(S_{(\hat{i}^{\mathrm{marg}})}) - \tilde{F}(S_{(\hat{i}^{\mathrm{marg}})}) }  - \delta^{\mathrm{marg}}(n_{\mathrm{cal}},n_{*}) \\
    & \qquad + \frac{2 \max_{k \in [K]} \sum_{l\neq k} |V_{kl}| + \sum_{k=1}^{K} | \rho_k - \tilde{\rho}_k  |}{\sqrt{n_{*}}} \\
    & \qquad \cdot \min \left\{ K^2 \sqrt{\frac{\pi}{2}} , \frac{1}{\sqrt{n_{*}}} + \sqrt{\frac{\log(2K^2) + \log(n_{*})}{2}} \right\}.
  \end{align*}
Finally, the remaining expected value can be bound as the corresponding term in the proof of Theorem~\ref{thm:algorithm-correction}.
Let $U_1, \ldots, U_{n_{\mathrm{cal}}}$ be i.i.d.~uniform random variables on $[0,1]$, and denote their order statistics as $U_{(1)}, \ldots, U_{(n_{\mathrm{cal}})}$.
Then,
\begin{align*}
  \hat{F}(S_{(\hat{i}^{\mathrm{marg}})}) - \tilde{F}(S_{(\hat{i}^{\mathrm{marg}})})
  \overset{d}{=} \frac{\hat{i}^{\mathrm{marg}}}{n_{\mathrm{cal}}} - U_{(\hat{i}^{\mathrm{marg}})}.
\end{align*}
Therefore, it follows from~\eqref{eq:define-cn} that
\begin{align}
  \EV{ \hat{F}(S_{(\hat{i}^{\mathrm{marg}})}) - \tilde{F}(S_{(\hat{i}^{\mathrm{marg}})}) }
  \leq \EV{ \sup_{i \in [n_{\mathrm{cal}}]} \left\{ \frac{i}{n_{\mathrm{cal}}} - U_{(i)} \right\} } = c(n_{\mathrm{cal}}).
  \label{eq:tighter_cdf_bound-marg}
\end{align}
  \begin{align*}
    \P{Y_{n+1} \in \hat{C}^{\mathrm{marg}}(X_{n+1})}
    & \leq \alpha  + c(n_{\mathrm{cal}})  - \delta^{\mathrm{marg}}(n_{\mathrm{cal}},n_{*}) \\
    & \qquad + \frac{2 \max_{k \in [K]} \sum_{l\neq k} |V_{kl}| + \sum_{k=1}^{K} | \rho_k - \tilde{\rho}_k  |}{\sqrt{n_{*}}} \\
    & \qquad \cdot \min \left\{ K^2 \sqrt{\frac{\pi}{2}} , \frac{1}{\sqrt{n_{*}}} + \sqrt{\frac{\log(2K^2) + \log(n_{*})}{2}} \right\} \\
    & = \alpha.
  \end{align*}

\end{proof}

\begin{lemma}\label{lemma:dkw-delta-expected-marg}
Under the assumptions of Theorem~\ref{thm:algorithm-correction-marg},
\begin{align*}
  \EV{ \sup_{t \in \mathbb{R}} | \hat{\Delta}(t) - \Delta(t)  | }
  & \leq \frac{2 \max_{k \in [K]} \sum_{l\neq k} |V_{kl}| + \sum_{k=1}^{K} | \rho_k - \tilde{\rho}_k  |}{\sqrt{n_{*}}} \\
  & \qquad \cdot \min \left\{ K^2 \sqrt{\frac{\pi}{2}} , \frac{1}{\sqrt{n_{*}}} + \sqrt{\frac{\log(2K^2) + \log(n_{*})}{2}} \right\}.
\end{align*}

\end{lemma}

\begin{proof}[Proof of Lemma~\ref{lemma:dkw-delta-expected-marg}]
Note that, for any $k \in [K]$ and $t \in \mathbb{R}$,
\begin{align*}
  \hat{\Delta}(t) - \Delta(t)
  & = \sum_{k=1}^{K} \left( \rho_k V_{kk}- \tilde{\rho}_k \right) \left[ \hat{F}_k^{k}(t) - \tF^{k}_k(t)  \right]
    + \sum_{k=1}^{K} \rho_k \sum_{l\neq k} V_{kl} \left[ \hat{F}_l^{k}(t) - \tF^{k}_l(t)  \right]    \\
  & = \sum_{k=1}^{K} \rho_k \left(  V_{kk}- 1 \right) \left[ \hat{F}_k^{k}(t) - \tF^{k}_k(t)  \right] + \sum_{k=1}^{K} \rho_k \sum_{l\neq k} V_{kl} \left[ \hat{F}_l^{k}(t) - \tF^{k}_l(t)  \right] \\
  & \qquad \qquad + \sum_{k=1}^{K} \left( \rho_k - \tilde{\rho}_k  \right) \left[ \hat{F}_k^{k}(t) - \tF^{k}_k(t)  \right] \\
  & = \sum_{k=1}^{K} \rho_k \sum_{l\neq k} V_{kl} \left\{ \left[ \hat{F}_l^{k}(t) - \tF^{k}_l(t)  \right] - \left[ \hat{F}_k^{k}(t) - \tF^{k}_k(t)  \right] \right\}\\
  & \qquad \qquad + \sum_{k=1}^{K} \left( \rho_k - \tilde{\rho}_k  \right) \left[ \hat{F}_k^{k}(t) - \tF^{k}_k(t)  \right],
\end{align*}
where in the last equality we used the fact that $V$ has row sums equal to 1 because it is the inverse of $M$, which has row sums equal to 1.
Therefore,
\begin{align*}
  |\hat{\Delta}(t) - \Delta(t)|
  & \leq 2 \sum_{k=1}^{K} \rho_k \left( \sum_{l\neq k} |V_{kl}| \right) \left( \max_{l \neq k} | \hat{F}_l^{k}(t) - \tF^{k}_l(t) | \right) \\
  & \qquad \qquad + \sum_{k=1}^{K} | \rho_k - \tilde{\rho}_k  | \left( \max_{k \in [K]} | \hat{F}_k^{k}(t) - \tF^{k}_k(t) | \right) \\
  & \leq 2 \left( \max_{k \in [K]} \sum_{l \neq k} |V_{kl}| \right) \left( \max_{k,l \in [K] \,:\, l \neq k} | \hat{F}_l^{k}(t) - \tF^{k}_l(t) | \right) \sum_{k=1}^{K} \rho_k  \\
  & \qquad \qquad + \left( \max_{k \in [K]} | \hat{F}_k^{k}(t) - \tF^{k}_k(t) | \right) \sum_{k=1}^{K} | \rho_k - \tilde{\rho}_k  | \\
  & = 2 \left( \max_{k \in [K]} \sum_{l \neq k} |V_{kl}| \right) \left( \max_{k,l \in [K] \,:\, l \neq k} | \hat{F}_l^{k}(t) - \tF^{k}_l(t) | \right)  \\
  & \qquad \qquad + \left( \sum_{k=1}^{K} | \rho_k - \tilde{\rho}_k  | \right) \left( \max_{k \in [K]} | \hat{F}_k^{k}(t) - \tF^{k}_k(t) | \right) \\
  & \leq  \left( 2 \max_{k \in [K]} \sum_{l \neq k} |V_{kl}| + \sum_{k=1}^{K} | \rho_k - \tilde{\rho}_k  | \right) \left( \max_{k,l \in [K] } | \hat{F}_l^{k}(t) - \tF^{k}_l(t) | \right),
\end{align*}
and thus
\begin{align*}
  \EV{ |\hat{\Delta}(t) - \Delta(t)| }
  & \leq \left( 2 \max_{k \in [K]} \sum_{l\neq k} |V_{kl}| + \sum_{k=1}^{K} | \rho_k - \tilde{\rho}_k  | \right) \EV{ \max_{k,l \in [K] } | \hat{F}_l^{k}(t) - \tF^{k}_l(t) | }.
\end{align*}

As in the proof of Lemma~\ref{lemma:dkw-delta-expected}, define
\begin{align*}
  & l_{\min} := \arg\min_{l \in [K]} n_j.
  & n_{*} := \min_{l \in [K]} n_l.
\end{align*}
Then, combining the DKW inequality with a union bound leads to:
\begin{align*}
  \EV{ \max_{k,l \in [K]} \sup_{t \in \mathbb{R}} | \hat{F}_l^{k}(t) - \tF^{k}_{l}(t)   | }
  & \leq K^2 \sqrt{\frac{\pi}{2n_{*}}}.
\end{align*}

Similarly, the DKW inequality also implies that, for any $\eta > 0$,
\begin{align*}
  & \P{ \sup_{t \in \mathbb{R}} |\hat{\Delta}(t) - \Delta(t) | > \left( 2 \max_{k \in [K]} \sum_{l\neq k} |V_{kl}| + \sum_{k=1}^{K} | \rho_k - \tilde{\rho}_k  | \right)  \sqrt{\frac{\log(2K^2) + \log(1/\eta)}{2 n_{*}}} } \leq \eta.
\end{align*}
see Lemma~\ref{lemma:dkw-delta-prob-marg}.
Therefore, setting $\eta = 1/n_{*}$, we obtain
\begin{align*}
  & \EV{ \sup_{t \in \mathbb{R}} |\hat{\Delta}(t) - \Delta(t) | }
    \leq \left[ 2 \max_{k \in [K]} \sum_{l\neq k} |V_{kl}| + \sum_{k=1}^{K} | \rho_k - \tilde{\rho}_k  | \right] \left[ \sqrt{\frac{\log(2K^2) + \log(n_{*})}{2 n_{*}}} + \frac{1}{n_{*}} \right].
\end{align*}
This concludes the proof that
\begin{align*}
  \EV{ \sup_{t \in \mathbb{R}} | \hat{\Delta}(t) - \Delta(t)  | }
  & \leq \frac{2 \max_{k \in [K]} \sum_{l\neq k} |V_{kl}| + \sum_{k=1}^{K} | \rho_k - \tilde{\rho}_k  |}{\sqrt{n_{*}}} \\
  & \qquad \cdot \min \left\{ K^2 \sqrt{\frac{\pi}{2}} , \frac{1}{\sqrt{n_{*}}} + \sqrt{\frac{\log(2K^2) + \log(n_{*})}{2}} \right\}.
\end{align*}

\end{proof}

\begin{lemma}\label{lemma:dkw-delta-prob-marg}
Under the assumptions of Theorem~\ref{thm:algorithm-correction-marg}, for any $\eta > 0$,
\begin{align*}
  & \P{ \sup_{t \in \mathbb{R}} |\hat{\Delta}(t) - \Delta(t) | > \left( 2 \max_{k \in [K]} \sum_{l\neq k} |V_{kl}| + \sum_{k=1}^{K} | \rho_k - \tilde{\rho}_k  | \right)  \sqrt{\frac{\log(2K^2) + \log(1/\eta)}{2 n_{*}}} } \leq \eta.
\end{align*}
\end{lemma}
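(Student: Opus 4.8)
\textbf{Proof plan for Lemma~\ref{lemma:dkw-delta-prob-marg}.}
The plan is to reduce the statement to a union of DKW-type deviation events for the individual empirical CDFs $\hat F_l^k$, exactly as in the proof of Lemma~\ref{lemma:dkw-delta-prob} for the label-conditional case. First I would recall the pointwise bound established inside the proof of Lemma~\ref{lemma:dkw-delta-expected-marg}: for every $t\in\mathbb R$,
\begin{align*}
  |\hat{\Delta}(t) - \Delta(t)|
  & \leq \left( 2 \max_{k \in [K]} \sum_{l\neq k} |V_{kl}| + \sum_{k=1}^{K} | \rho_k - \tilde{\rho}_k  | \right)
    \max_{k,l \in [K]} \big| \hat{F}_l^{k}(t) - \tF^{k}_{l}(t) \big|,
\end{align*}
which holds because $V$ has unit row sums and $\sum_k\rho_k=1$. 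Taking a supremum over $t$ on both sides, the event in the statement is contained in the event that $\sup_t\max_{k,l}|\hat F_l^k(t)-\tF_l^k(t)|$ exceeds $\sqrt{(\log(2K^2)+\log(1/\eta))/(2n_*)}$.

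Next I would apply a union bound over the $K^2$ pairs $(k,l)$ together with the Dvoretzky--Kiefer--Wolfowitz inequality. Each empirical CDF $\hat F_l^k$ is built from $n_l\ge n_*$ i.i.d.\ samples (conditionally on $\mathcal D^{\text{train}}$), so for any threshold $u>0$ we have $\P{\sup_t|\hat F_l^k(t)-\tF_l^k(t)|>u}\le 2e^{-2n_lu^2}\le 2e^{-2n_*u^2}$. With $u=\sqrt{(\log(2K^2)+\log(1/\eta))/(2n_*)}$ the union bound gives
\begin{align*}
  \P{\max_{k,l\in[K]}\sup_{t\in\mathbb R}|\hat F_l^k(t)-\tF_l^k(t)|>u}
  \le K^2\cdot 2e^{-2n_*u^2}
  = 2K^2\exp\big(-\log(2K^2)-\log(1/\eta)\big)
  = \eta,
\end{align*}
which is precisely the claimed bound. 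I would phrase this conditionally on $\mathcal D^{\text{train}}$ if needed, then remove the conditioning since the bound is uniform.

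There is no real obstacle here — the lemma is a routine corollary of the deterministic decomposition already proved in Lemma~\ref{lemma:dkw-delta-expected-marg} combined with DKW and a union bound, and it mirrors the earlier Lemma~\ref{lemma:dkw-delta-prob} line for line. The only point requiring minor care is bookkeeping on the sample sizes: one must use that the relevant empirical CDF is based on at least $n_*=\min_{l}n_l$ observations so that the exponent $2n_*u^2$ is valid for every term in the union, and this is exactly what makes the constant $\log(2K^2)$ (rather than $\log(2K)$) appear, since the maximum now ranges over $K^2$ pairs instead of $K$ indices.
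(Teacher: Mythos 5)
Your proposal is correct and follows essentially the same route as the paper's own proof: the same deterministic pointwise bound (inherited from the proof of Lemma~\ref{lemma:dkw-delta-expected-marg}, which uses the unit row sums of $V$ and $\sum_k\rho_k=1$), followed by a union bound over the $K^2$ pairs and the DKW inequality with the uniform exponent $2n_*u^2$. The bookkeeping remark about using $n_l \geq n_*$ and the resulting $\log(2K^2)$ constant matches the paper's reasoning exactly.
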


\begin{proof}[Proof of Lemma~\ref{lemma:dkw-delta-prob-marg}]
It follows from the definitions of $\hat{\Delta}(t)$ and $\Delta(t)$, and from the DKW inequality, that, for any $\eta > 0$ and $k_0 \in [K]$,
\begin{align*}
  & \P{ \sup_{t \in \mathbb{R}} |\hat{\Delta}(t) - \Delta(t) | > \left( 2 \max_{k \in [K]} \sum_{l\neq k} |V_{kl}| + \sum_{k=1}^{K} | \rho_k - \tilde{\rho}_k  | \right)  \sqrt{\frac{\log(2K^2) + \log(1/\eta)}{2 n_{*}}} } \\
  & \qquad \leq \P{ \max_{k,l \in [K]} | \hat{F}_l^{k}(t) - \tF^{k}_{l}(t)| > \sqrt{\frac{\log(2K^2) + \log(1/\eta)}{2 n_{*}}} } \\
  & \qquad \leq K^2 \cdot \P{ | \hat{F}_{l_{\min}}^{k_0}(t) - \tF^{k_0}_{l_{\min}}(t)  | > \sqrt{\frac{\log(2K^2) + \log(1/\eta)}{2 n_{*}}} } \\
  & \qquad \leq 2K^2 \cdot \exp\left[ - 2 n_{*} \frac{\log(2K^2) + \log(1/\eta)}{2 n_{*}} \right] \\
  & \qquad = 2K^2 \cdot \exp\left[ - \log(2K^2) - \log(1/\eta) \right] \\
  & \qquad = \eta.
\end{align*}

\end{proof}

\begin{proof}[Proof of Theorem~\ref{thm:algorithm-correction-marg-upper}]

The proof combines the strategies from the proof of Theorem~\ref{thm:algorithm-correction-upper} and the proof of Theorem~\ref{thm:algorithm-correction-marg}.
Define the events $\cA_1=\{\hat{\cI}^{\mathrm{marg}}=\emptyset \}$ and $\cA_2=\{\hat{i}^{\mathrm{marg}}=1\}$.
Then,
\begin{align}
  \begin{split} 	\label{eq:upper_bound_events-marg}
	\P{Y_{n+1} \in \hat{C}^{\mathrm{marg}}(X_{n+1}) }
	& \leq \P{\cA_1} + \EV{\P{Y_{n+1} \in \hat{C}^{\mathrm{marg}}(X_{n+1}) \mid \cD}\I{\cA_2}} \\
	& \qquad + \EV{\P{Y_{n+1} \in \hat{C}^{\mathrm{marg}}(X_{n+1}) \mid \cD}\I{\cA_1^c\cap \cA_2^c}}.
      \end{split}
\end{align}

We will now separately bound the three terms on the right-hand-side of~\eqref{eq:upper_bound_events-marg}.
The following notation will be useful for this purpose.
For all $i \in [n_{\mathrm{cal}}]$, let $U_i \sim \text{Uniform}(0,1)$ be independent and identically distributed uniform random variables, and
denote their order statistics as $U_{(1)} < U_{(2)} < \ldots < U_{(n_{\mathrm{cal}})}$.

\begin{itemize}
\item The probability of the event $\cA_1$ can be bound from above as:
  \begin{align} \label{eq:upper_bound_event1-marg}
    \P{\hat{\mathcal{I}}_k = \emptyset} \leq \frac{1}{n_{*}}.
  \end{align}

  To simplify the notation in the proof of~\eqref{eq:upper_bound_event1-marg}, define
\begin{align*}
  & d^{\mathrm{marg}}  \\
  & \quad := c(n_{\mathrm{cal}})  + \left( 2 \max_{k \in [K]} \sum_{l\neq k} |V_{kl}| + \sum_{k=1}^{K} | \rho_k - \tilde{\rho}_k  | \right) \frac{1+2\sqrt{\log(2K^2) + \log(n_{*})}}{\sqrt{n_{*}}}.
\end{align*}
Then, under Assumption~\ref{assumption:regularity-dist-delta-marg},
  \begin{align*}
    & 1 - \left( \alpha + \hDelta(S_{(i)}) - \delta^{\mathrm{marg}}(n_{\mathrm{cal}},n_{*}) \right) \\
    & \qquad = 1 - \alpha - \delta^{\mathrm{marg}}(S_{(i)}) + \delta^{\mathrm{marg}}(n_{\mathrm{cal}},n_{*}) + \delta^{\mathrm{marg}}(S_{(i)}) - \hDelta(S_{(i)}) \\
    & \qquad \leq 1 - d^{\mathrm{marg}} + \delta^{\mathrm{marg}}(n_{\mathrm{cal}},n_{*}) + \sup_{t \in [0,1]} | \delta^{\mathrm{marg}}(t) - \hDelta(t)|.
  \end{align*}
Further, it follows from Lemma~\ref{lemma:dkw-delta-prob-marg} and~\eqref{eq:delta-constant-marg} that, with probability at least $1-1/n_{*}$,
  \begin{align*}
    & 1 - \left( \alpha + \hDelta(S_{(i)}) - \delta^{\mathrm{marg}}(n_{\mathrm{cal}},n_{*}) \right) \\
    & \qquad \leq 1 - d^{\mathrm{marg}} + \delta^{\mathrm{marg}}(n_{\mathrm{cal}},n_{*}) \\
    & \qquad \qquad + \left( 2 \max_{k \in [K]} \sum_{l\neq k} |V_{kl}| + \sum_{k=1}^{K} | \rho_k - \tilde{\rho}_k  | \right)  \sqrt{\frac{\log(2K^2) + \log(n_{*})}{2 n_{*}}} \\
    & \qquad \leq 1 - d^{\mathrm{marg}} + c(n_{\mathrm{cal}}) \\
    & \qquad \qquad + \left( 2 \max_{k \in [K]} \sum_{l\neq k} |V_{kl}| + \sum_{k=1}^{K} | \rho_k - \tilde{\rho}_k  | \right) \frac{1+2\sqrt{\log(2K^2) + \log(n_{*})}}{\sqrt{n_{*}}} \\
    & \qquad = 1.
  \end{align*}

This implies that the set $\hat{\mathcal{I}}^{\mathrm{marg}}$ defined in~\eqref{eq:Ik-set-marg} is non-empty with probability at least $1-1/n_{*}$, because $n_{\mathrm{cal}} \in \hat{\mathcal{I}}^{\mathrm{marg}}$ if $1 - [ \alpha + \hDelta(S_{(i)}) - \delta^{\mathrm{marg}}(n_{\mathrm{cal}},n_{*}) ] \leq 1$.

\item Under $\cA_2$, the second term on the right-hand-side of~\eqref{eq:upper_bound_events-marg} can be written as:
  \begin{align*}
    \P{Y_{n+1} \in \hat{C}^{\mathrm{marg}}(X_{n+1}) \mid \cD}
    & = \P{\hat{s}(X_{n+1}, Y_{n+1}) \leq S_{(1)} \mid \cD} \\
    & = F(S_{(1)}) \\
    & = \sum_{k=1}^{K} \rho_k F^k_k(S_{(1)}) \\
    & = \sum_{k=1}^{K} \rho_k \sum_{l=1}^{K} V_{kl} \tilde{F}^k_l(S_{(1)}).
\end{align*}
Then, using Assumption~\ref{assumption:consitency-scores}, we obtain:
  \begin{align*}
    & \P{Y_{n+1} \in \hat{C}^{\mathrm{marg}}(X_{n+1}) \mid \cD}  \\
    & \qquad \leq \sum_{k=1}^{K} \rho_k \left( V_{kk} + \sum_{l \neq k} |V_{kl}| \right)  \tilde{F}^k_k(S_{(1)}) \\
    & \qquad = \sum_{k=1}^{K} \frac{\rho_k}{\tilde{\rho}_k} \left( V_{kk} + \sum_{l \neq k} |V_{kl}| \right) \tilde{\rho}_k \tilde{F}^k_k(S_{(1)}) \\
    & \qquad \leq \left[ \max_{k \in [K]} \left( \frac{\rho_k}{\tilde{\rho}_k} \sum_{l=1}^{K} |V_{kl}| \right)   \right] \sum_{k=1}^{K} \tilde{\rho}_k  \tilde{F}^k_k(S_{(1)}) \\
    & \qquad = \left[ \max_{k \in [K]} \left( \frac{\rho_k}{\tilde{\rho}_k} \sum_{l=1}^{K} |V_{kl}| \right)   \right] \tilde{F}(S_{(1)}).
\end{align*}
Therefore,
\begin{align}
  \begin{split}
     \EV{\P{Y_{n+1} \in \hat{C}^{\mathrm{marg}}(X_{n+1}) \mid \cD}\I{\cA_2}}
    & \leq \left[ \max_{k \in [K]} \left( \frac{\rho_k}{\tilde{\rho}_k} \sum_{l=1}^{K} |V_{kl}| \right)   \right] \EV{\tilde{F}(S_{(1)})} \\
    & = \left[ \max_{k \in [K]} \left( \frac{\rho_k}{\tilde{\rho}_k} \sum_{l=1}^{K} |V_{kl}| \right)   \right] \EV{U_{(1)}} \\
    & = \left[ \max_{k \in [K]} \left( \frac{\rho_k}{\tilde{\rho}_k} \sum_{l=1}^{K} |V_{kl}| \right)   \right] \frac{1}{n_{\mathrm{cal}}+1}.
  \end{split}\label{eq:upper_bound_event2-marg}
\end{align}

\item Under $\cA_1^c\cap \cA_2^c$, by definition of $\hat{\mathcal{I}}^{\mathrm{marg}}$, for any $i \leq \hat{i}^{\mathrm{marg}} - 1$,
  \begin{align*}
    \frac{i}{n_{\mathrm{cal}}} < 1 - \left( \alpha + \hDelta(S_{(i)}) - \delta^{\mathrm{marg}}(n_{\mathrm{cal}},n_{*}) \right).
  \end{align*}
  Therefore, choosing $i = \hat{i}^{\mathrm{marg}}-1$, we get:
 \begin{align*}
   \frac{\hat{i}^{\mathrm{marg}}}{n_{\mathrm{cal}}} < 1 - \left( \alpha + \hDelta(S_{(\hat{i}^{\mathrm{marg}}-1)}) - \delta^{\mathrm{marg}}(n_{\mathrm{cal}},n_{*}) \right) + \frac{1}{n_{\mathrm{cal}}}.
 \end{align*}

As in the proofs of Theorems~\ref{thm:algorithm-correction-upper} and~\ref{thm:algorithm-correction-marg}, the probability of coverage conditional on the labeled data in $\mathcal{D}$ can be written as:
\begin{align*}
    & \P{Y_{n+1} \in \hat{C}^{\mathrm{marg}}(X_{n+1}) \mid \mathcal{D}} \\
    & \qquad = \tilde{F}(S_{(\hat{i}^{\mathrm{marg}})}) + \Delta(S_{(\hat{i}^{\mathrm{marg}})}) \\
    & \qquad = \hat{F}(S_{(\hat{i}^{\mathrm{marg}})}) + \Delta(S_{(\hat{i}^{\mathrm{marg}})}) + \left[ \tilde{F}(S_{(\hat{i}^{\mathrm{marg}})}) - \hat{F}(S_{(\hat{i}^{\mathrm{marg}})}) \right] \\
    & \qquad = \frac{\hat{i}^{\mathrm{marg}}}{n_{\mathrm{cal}}} + \Delta(S_{(\hat{i}^{\mathrm{marg}})}) + \left[ \tilde{F}(S_{(\hat{i}^{\mathrm{marg}})}) - \hat{F}(S_{(\hat{i}^{\mathrm{marg}})}) \right] \\
    & \qquad < 1 - \left( \alpha + \hDelta(S_{(\hat{i}^{\mathrm{marg}}-1)}) - \delta^{\mathrm{marg}}(n_{\mathrm{cal}},n_{*}) \right) + \frac{1}{n_{\mathrm{cal}}} \\
    & \qquad \qquad + \Delta(S_{(\hat{i}^{\mathrm{marg}})}) + \left[ \tilde{F}(S_{(\hat{i}^{\mathrm{marg}})}) - \hat{F}(S_{(\hat{i}^{\mathrm{marg}})}) \right] \\
    & \qquad = 1 - \alpha + \delta^{\mathrm{marg}}(n_{\mathrm{cal}},n_{*}) + \frac{1}{n_{\mathrm{cal}}} \\
    & \qquad \qquad + \left[ \tilde{F}(S_{(\hat{i}^{\mathrm{marg}})}) - \hat{F}(S_{(\hat{i}^{\mathrm{marg}})}) \right] + \left[ \Delta(S_{(\hat{i}^{\mathrm{marg}})}) - \hDelta(S_{(\hat{i}^{\mathrm{marg}}-1)}) \right].
  \end{align*}
  The last term above can be bound by proceeding as in the proof of Theorem~\ref{thm:algorithm-correction-upper}:
  \begin{align*}
    & \Delta(S_{(\hat{i}^{\mathrm{marg}})}) - \hDelta(S_{(\hat{i}^{\mathrm{marg}}-1)})\\
    & \qquad \leq \sup_{t \in \mathbb{R}} |\Delta(t) - \hDelta(t)| + (\Delta(S_{(\hat{i}^{\mathrm{marg}})}) - \Delta(S_{(\hat{i}^{\mathrm{marg}}-1)})),
  \end{align*}
  where the expected value of the first term above can be bounded using Lemma~\ref{lemma:dkw-delta-expected-marg}, and the second term is given by
  \begin{align*}
    & \Delta(S_{(\hat{i}^{\mathrm{marg}})}) - \Delta(S_{(\hat{i}^{\mathrm{marg}}-1)}) \\
    & \qquad = \sum_{k=1}^{K} \left( \rho_k V_{kk} - \tilde{\rho}_k \right) \left[ \tF^{k}_k(S_{(\hat{i}^{\mathrm{marg}})}) -  \tF^{k}_k(S_{(\hat{i}^{\mathrm{marg}}-1)}) \right] \\
    & \qquad \qquad + \sum_{k=1}^{K} \rho_k \sum_{l \neq k} V_{kl} \left[ \tF^{k}_{l}(S^k_{(\hat{i}^{\mathrm{marg}})}) - \tF^{k}_{l}(S^k_{(\hat{i}^{\mathrm{marg}}-1)}) \right] \\
    & \qquad = \sum_{k=1}^{K} \tilde{\rho}_k \cdot \frac{\rho_k V_{kk} - \tilde{\rho}_k}{\tilde{\rho}_k} \left[ \tF^{k}_k(S_{(\hat{i}^{\mathrm{marg}})}) -  \tF^{k}_k(S_{(\hat{i}^{\mathrm{marg}}-1)}) \right] \\
    & \qquad \qquad + \sum_{k=1}^{K} \rho_k \sum_{l \neq k} V_{kl} \left[ \tF^{k}_{l}(S_{(\hat{i}^{\mathrm{marg}})}) - \tF^{k}_{l}(S_{(\hat{i}^{\mathrm{marg}}-1)}) \right] \\
    & \qquad \leq \left( \max_{k \in [K]} \frac{\rho_k V_{kk} - \tilde{\rho}_k}{\tilde{\rho}_k} \right) \sum_{k=1}^{K} \tilde{\rho}_k  \left[ \tF^{k}_k(S_{(\hat{i}^{\mathrm{marg}})}) -  \tF^{k}_k(S_{(\hat{i}^{\mathrm{marg}}-1)}) \right] \\
    & \qquad \qquad + \sum_{k=1}^{K} \rho_k \left( \sum_{l \neq k} V_{kl} \right) \max_{l \neq k} \left[ \tF^{k}_{l}(S_{(\hat{i}^{\mathrm{marg}})}) - \tF^{k}_{l}(S_{(\hat{i}^{\mathrm{marg}}-1)}) \right] \\
    & \qquad = \left( \max_{k \in [K]} \frac{\rho_k V_{kk} - \tilde{\rho}_k}{\tilde{\rho}_k} \right) \left[ \tF(S_{(\hat{i}^{\mathrm{marg}})}) -  \tF(S_{(\hat{i}^{\mathrm{marg}}-1)}) \right] \\
    & \qquad \qquad + \sum_{k=1}^{K} \rho_k \left( \sum_{l \neq k} V_{kl} \right) \max_{l \neq k} \left[ \tF^{k}_{l}(S_{(\hat{i}^{\mathrm{marg}})}) - \tF^{k}_{l}(S_{(\hat{i}^{\mathrm{marg}}-1)}) \right] \\
    & \qquad \leq \left( \max_{k \in [K]} \frac{\rho_k V_{kk} - \tilde{\rho}_k}{\tilde{\rho}_k} \right) \max_{2 \leq i \leq n_{\mathrm{cal}}} \left[ \tF(S_{(i)}) -  \tF(S_{(i-1)}) \right] \\
    & \qquad \qquad + \sum_{k=1}^{K} \rho_k \left( \sum_{l \neq k} V_{kl} \right) \max_{l \neq k} \left[ \tF^{k}_{l}(S_{(\hat{i}^{\mathrm{marg}})}) - \tF^{k}_{l}(S_{(\hat{i}^{\mathrm{marg}}-1)}) \right],
  \end{align*}
where the first inequality follows from the fact that, by definition of the order statistics, $\tF^{k}_l(S_{(\hat{i}^{\mathrm{marg}})}) \geq \tF^{k}_l(S_{(\hat{i}^{\mathrm{marg}}-1)})$ for all $l \in [K]$.
  By a standard result on maximum uniform spacing,
  \begin{align*}
    \EV{ \max_{2 \leq i \leq n_{\mathrm{cal}}} \left[ \tF(S_{(i)}) -  \tF(S_{(i-1)}) \right] }
    & = \EV{\max_{1\leq i \leq n_{\mathrm{cal}}+1} D_i} \\
    & = \frac{1}{n_{\mathrm{cal}}+1}\sum_{j=1}^{n_{\mathrm{cal}}+1} \frac{1}{j},
  \end{align*}
  where $D_1=U_{(1)}$, $D_i=U_{(i)} - U_{(i-1)}$ for $i=2, \dots, n_{\mathrm{cal}}$, and $D_{n_{\mathrm{cal}}+1} = 1-U_{(n_{\mathrm{cal}})}$.
Therefore,
  \begin{align*}
    & \EV{ \Delta(S_{(\hat{i}^{\mathrm{marg}})}) - \Delta(S_{(\hat{i}^{\mathrm{marg}}-1)}) } \\
    & \qquad \leq \left( \max_{k \in [K]} \frac{\rho_k V_{kk} - \tilde{\rho}_k}{\tilde{\rho}_k} \right) \frac{1}{n_{\mathrm{cal}}+1}\sum_{j=1}^{n_{\mathrm{cal}}+1} \frac{1}{j}\\
    & \qquad \qquad + \sum_{k=1}^{K} \rho_k \left( \sum_{l \neq k} V_{kl} \right) \EV{ \max_{l \neq k} \left[ \tF^{k}_{l}(S_{(\hat{i}^{\mathrm{marg}})}) - \tF^{k}_{l}(S_{(\hat{i}^{\mathrm{marg}}-1)}) \right] }.
  \end{align*}
  By Assumption \ref{assumption:regularity-dist}, for any $k \in [K]$,
  \begin{align*}
    & \sum_{k=1}^{K} \rho_k \left( \sum_{l \neq k} V_{kl} \right) \EV{ \max_{l \neq k} \left[ \tF^{k}_{l}(S_{(\hat{i}^{\mathrm{marg}})}) - \tF^{k}_{l}(S_{(\hat{i}^{\mathrm{marg}}-1)}) \right] } \\
    & \qquad \leq \sum_{k=1}^{K} \rho_k \left( \sum_{l \neq k} V_{kl} \right) \EV{ f_{\max} \cdot \left[ S_{(\hat{i}^{\mathrm{marg}})} - S_{(\hat{i}^{\mathrm{marg}}-1)} \right] } \\
    & \qquad \leq \sum_{k=1}^{K} \rho_k \left( \sum_{l \neq k} V_{kl} \right) \EV{ \frac{f_{\max}}{f_{\min}} \cdot \left[ \tF^{k}_{k}(S_{(\hat{i}^{\mathrm{marg}})}) - \tF^{k}_{k}(S_{(\hat{i}^{\mathrm{marg}}-1)}) \right] } \\
    & \qquad = \frac{f_{\max}}{f_{\min}} \sum_{k=1}^{K} \tilde{\rho}_k \cdot \frac{\rho_k}{\tilde{\rho}_k} \left( \sum_{l \neq k} V_{kl} \right) \EV{ \tF^{k}_{k}(S_{(\hat{i}^{\mathrm{marg}})}) - \tF^{k}_{k}(S_{(\hat{i}^{\mathrm{marg}}-1)}) } \\
    & \qquad \leq \frac{f_{\max}}{f_{\min}} \left[ \max_{k \in [K]} \left( \frac{\rho_k}{\tilde{\rho}_k} \sum_{l \neq k} V_{kl} \right) \right] \sum_{k=1}^{K} \tilde{\rho}_k  \EV{ \tF^{k}_{k}(S_{(\hat{i}^{\mathrm{marg}})}) - \tF^{k}_{k}(S_{(\hat{i}^{\mathrm{marg}}-1)}) } \\
    & \qquad = \frac{f_{\max}}{f_{\min}} \left[ \max_{k \in [K]} \left( \frac{\rho_k}{\tilde{\rho}_k} \sum_{l \neq k} V_{kl} \right) \right] \EV{ \tF(S_{(\hat{i}^{\mathrm{marg}})}) - \tF(S_{(\hat{i}^{\mathrm{marg}}-1)}) } \\
    & \qquad \leq \frac{f_{\max}}{f_{\min}} \left[ \max_{k \in [K]} \left( \frac{\rho_k}{\tilde{\rho}_k} \sum_{l \neq k} V_{kl} \right) \right] \EV{ \max_{2\leq i\leq n_{\mathrm{cal}}} | \tF(S_{(i)}) - \tF(S_{(i-1)})|  } \\
    & \qquad = \frac{f_{\max}}{f_{\min}} \left[ \max_{k \in [K]} \left( \frac{\rho_k}{\tilde{\rho}_k} \sum_{l \neq k} V_{kl} \right) \right] \EV{ \max_{2\leq i\leq n_{\mathrm{cal}}} D_i  } \\
    & \qquad = \frac{f_{\max}}{f_{\min}} \left[ \max_{k \in [K]} \left( \frac{\rho_k}{\tilde{\rho}_k} \sum_{l \neq k} V_{kl} \right) \right] \frac{1}{n_{\mathrm{cal}}+1}\sum_{j=1}^{n_{\mathrm{cal}}+1} \frac{1}{j}.
  \end{align*}
Therefore,  under $\cA_1^c\cap \cA_2^c$,
  \begin{align*}
    & \EV{ \Delta(S_{(\hat{i}^{\mathrm{marg}})}) - \Delta(S_{(\hat{i}^{\mathrm{marg}}-1)}) } \\
    & \qquad \leq \left( \max_{k \in [K]} \frac{\rho_k V_{kk} - \tilde{\rho}_k}{\tilde{\rho}_k} \right) \frac{1}{n_{\mathrm{cal}}+1}\sum_{j=1}^{n_{\mathrm{cal}}+1} \frac{1}{j}\\
    & \qquad \qquad + \frac{f_{\max}}{f_{\min}} \left[ \max_{k \in [K]} \left( \frac{\rho_k}{\tilde{\rho}_k} \sum_{l \neq k} V_{kl} \right) \right] \frac{1}{n_{\mathrm{cal}}+1}\sum_{j=1}^{n_{\mathrm{cal}}+1} \frac{1}{j} \\
    & \qquad \leq \frac{\sum_{j=1}^{n_{\mathrm{cal}}+1} 1/j}{n_{\mathrm{cal}}+1} \cdot \left[ \max_{k \in [K]} \frac{\rho_k V_{kk} - \tilde{\rho}_k}{\tilde{\rho}_k}  + \frac{f_{\max}}{f_{\min}} \cdot \max_{k \in [K]} \left( \frac{\rho_k}{\tilde{\rho}_k} \sum_{l \neq k} V_{kl} \right) \right].
\end{align*}

At this point, we have proved that
\begin{align}
\begin{split}
    & \EV{ \P{Y_{n+1} \in \hat{C}^{\mathrm{marg}}(X_{n+1}) \mid \cD}\I{\cA_1^c\cap \cA_2^c}} \\
    & \qquad \leq 1 - \alpha + \delta^{\mathrm{marg}}(n_{\mathrm{cal}},n_{*}) + \frac{1}{n_{\mathrm{cal}}} + \EV{ \max_{i \in [n_{\mathrm{cal}}]} | \tilde{F}(S_{(i)}) - \hat{F}(S_{(i}) } \\
  & \qquad \qquad + \frac{2 \max_{k \in [K]} \sum_{l\neq k} |V_{kl}| + \sum_{k=1}^{K} | \rho_k - \tilde{\rho}_k  |}{\sqrt{n_{*}}} \\
  & \qquad \qquad \cdot \min \left\{ K^2 \sqrt{\frac{\pi}{2}} , \frac{1}{\sqrt{n_{*}}} + \sqrt{\frac{\log(2K^2) + \log(n_{*})}{2}} \right\} \\
  & \qquad \qquad + \frac{\sum_{j=1}^{n_{\mathrm{cal}}+1} 1/j}{n_{\mathrm{cal}}+1} \cdot \left[ \max_{k \in [K]} \frac{\rho_k V_{kk} - \tilde{\rho}_k}{\tilde{\rho}_k}  + \frac{f_{\max}}{f_{\min}} \cdot \max_{k \in [K]} \left( \frac{\rho_k}{\tilde{\rho}_k} \sum_{l \neq k} V_{kl} \right) \right] \\
    & \qquad = 1 - \alpha + \delta^{\mathrm{marg}}(n_{\mathrm{cal}},n_{*}) + \frac{1}{n_{\mathrm{cal}}} + c(n_{\mathrm{cal}})\\
  & \qquad \qquad + \frac{2 \max_{k \in [K]} \sum_{l\neq k} |V_{kl}| + \sum_{k=1}^{K} | \rho_k - \tilde{\rho}_k  |}{\sqrt{n_{*}}} \\
  & \qquad \qquad \cdot \min \left\{ K^2 \sqrt{\frac{\pi}{2}} , \frac{1}{\sqrt{n_{*}}} + \sqrt{\frac{\log(2K^2) + \log(n_{*})}{2}} \right\} \\
  & \qquad \qquad + \frac{\sum_{j=1}^{n_{\mathrm{cal}}+1} 1/j}{n_{\mathrm{cal}}+1} \cdot \left[ \max_{k \in [K]} \frac{\rho_k V_{kk} - \tilde{\rho}_k}{\tilde{\rho}_k}  + \frac{f_{\max}}{f_{\min}} \cdot \max_{k \in [K]} \left( \frac{\rho_k}{\tilde{\rho}_k} \sum_{l \neq k} V_{kl} \right) \right] \\
    & \qquad = 1 - \alpha + 2 \delta^{\mathrm{marg}}(n_{\mathrm{cal}},n_{*}) + \frac{1}{n_{\mathrm{cal}}} + \\
  & \qquad \qquad + \frac{\sum_{j=1}^{n_{\mathrm{cal}}+1} 1/j}{n_{\mathrm{cal}}+1} \cdot \left[ \max_{k \in [K]} \frac{\rho_k V_{kk} - \tilde{\rho}_k}{\tilde{\rho}_k}  + \frac{f_{\max}}{f_{\min}} \cdot \max_{k \in [K]} \left( \frac{\rho_k}{\tilde{\rho}_k} \sum_{l \neq k} V_{kl} \right) \right],
\end{split} \label{eq:upper_bound_event3-marg}
\end{align}
where the inequality above follows from Lemma~\ref{lemma:dkw-delta-expected-marg} and the bound in~\eqref{eq:tighter_cdf_bound}.

\end{itemize}

 Finally, combining~\eqref{eq:upper_bound_events-marg} with \eqref{eq:upper_bound_event1-marg}, \eqref{eq:upper_bound_event2-marg}, and \eqref{eq:upper_bound_event3-marg} leads to the desired result:
 \begin{align*}
   & \P{Y_{n+1} \in \hat{C}^{\mathrm{marg}}(X_{n+1})}  \\
   & \qquad \leq \P{\cA_1} + \EV{\P{Y_{n+1} \in \hat{C}^{\mathrm{marg}}(X_{n+1}) \mid \cD}\I{\cA_2}} \\
   & \qquad \qquad + \EV{\P{Y_{n+1} \in \hat{C}^{\mathrm{marg}}(X_{n+1}) \mid \cD}\I{\cA_1^c\cap \cA_2^c}} \\
    & \qquad \leq 1 - \alpha + 2 \delta^{\mathrm{marg}}(n_{\mathrm{cal}},n_{*}) + \frac{1}{n_{\mathrm{cal}}} + \frac{1}{n_{*}} + \left[ \max_{k \in [K]} \left( \frac{\rho_k}{\tilde{\rho}_k} \sum_{l=1}^{K} |V_{kl}| \right)   \right] \frac{1}{n_{\mathrm{cal}}} \\
  & \qquad \qquad + \frac{\sum_{j=1}^{n_{\mathrm{cal}}+1} 1/j}{n_{\mathrm{cal}}} \cdot \left[ \max_{k \in [K]} \frac{\rho_k V_{kk} - \tilde{\rho}_k}{\tilde{\rho}_k}  + \frac{f_{\max}}{f_{\min}} \cdot \max_{k \in [K]} \left( \frac{\rho_k}{\tilde{\rho}_k} \sum_{l \neq k} V_{kl} \right) \right].
 \end{align*}

\end{proof}

\begin{proof}[Proof of Theorem~\ref{thm:algorithm-correction-marg-optimist}]
This proof combines elements of the proofs of Proposition~\ref{thm:algorithm-correction-optimistic} and Theorem~\ref{thm:algorithm-correction-marg}.
Proceeding exactly as in the proof of Theorem~\ref{thm:algorithm-correction-marg}, we obtain:
  \begin{align*}
    & \P{Y_{n+1} \notin \hat{C}^{\mathrm{marg}}(X_{n+1}) \mid \mathcal{D}} \\
    & \qquad = 1 - \tilde{F}(S_{(\hat{i}^{\mathrm{marg}})}) - \Delta(S_{(\hat{i}^{\mathrm{marg}})}) \\
     & \qquad = 1 - \hat{F}(S_{(\hat{i}^{\mathrm{marg}})}) - \max\left\{\hat{\Delta}(S_{(\hat{i}^{\mathrm{marg}})}) - \delta^{\mathrm{marg}}(n_{\mathrm{cal}},n_{*}), 0 \right\} - \delta^{\mathrm{marg}}(n_{\mathrm{cal}},n_{*}) \\
     & \qquad \qquad + \left[ \hat{F}(S_{(\hat{i}^{\mathrm{marg}})}) - \tilde{F}(S_{(\hat{i}^{\mathrm{marg}})}) \right] \\
    & \qquad \qquad + \max\{\hat{\Delta}(S_{(\hat{i}^{\mathrm{marg}})}) - \delta^{\mathrm{marg}}(n_{\mathrm{cal}},n_{*}), 0 \} - \left[ \Delta(S_{(\hat{i}^{\mathrm{marg}})}) - \delta^{\mathrm{marg}}(n_{\mathrm{cal}},n_{*}) \right] \\
     & \qquad \leq \left[ 1 - \frac{\hat{i}^{\mathrm{marg}}}{n_{\mathrm{cal}}} - \max\left\{\hat{\Delta}(S_{(\hat{i}^{\mathrm{marg}})}) - \delta^{\mathrm{marg}}(n_{\mathrm{cal}},n_{*}), 0 \right\} \right] - \delta^{\mathrm{marg}}(n_{\mathrm{cal}},n_{*}) \\
     & \qquad \qquad + \left[ \hat{F}(S_{(\hat{i}^{\mathrm{marg}})}) - \tilde{F}(S_{(\hat{i}^{\mathrm{marg}})}) \right] + \sup_{t \in \mathbb{R}} | \hat{\Delta}(t) - \Delta(t) |,
\end{align*}
using the fact that $\inf_{t\in \R}\Delta(t) \geq \delta(n_{\mathrm{cal}},n_{*})$ implies
$$
 |\max\{\hDelta(t) - \delta(n_{\mathrm{cal}},n_{*}), 0 \} - (\Delta(t) - \delta(n_{\mathrm{cal}},n_{*}))| \leq |\hat{\Delta}(t) - \Delta(t)|
$$
for all $t \in \mathbb{R}$.
The proof is then completed by proceeding as in the proof of Theorem~\ref{thm:algorithm-correction-marg}.
\end{proof}

\subsection{Prediction sets with calibration-conditional coverage}

\begin{proof}[Proof of Theorem~\ref{thm:algorithm-correction-cc}]
  The proof is similar to that of Theorem~\ref{thm:algorithm-correction}.
  Suppose $Y_{n+1} = k$, for some $k \in [K]$.
  We assume without loss of generality that $\sum_{l=1}^{K}|V_{kl}| \neq 0$; otherwise, there is no label contamination and the result is trivially true.
  By definition of the conformity score function in~\eqref{eq:conf-scores}, the event $k \notin \hat{C}(X_{n+1})$ occurs if and only if $\hat{s}(X_{n+1},k) > \hat{\tau}_k$.
  We will assume without loss of generality that $\hat{\mathcal{I}}_k \neq \emptyset$ and $\hat{i}_k = \min\{i \in \hat{\mathcal{I}}_k\}$; otherwise, $\hat{\tau}_k=1$ and the result trivially holds.
As in the proof of Theorem~\ref{thm:algorithm-correction}, the probability of miscoverage conditional on $Y_{n+1}=k$ and on the labeled data in $\mathcal{D}$ can be bounded from above as:
  \begin{align*}
    & \P{Y \notin C(X, S^k_{(\hat{i}_k)}) \mid Y = k, \mathcal{D}} \\
    & \qquad \leq \alpha - \delta^{\mathrm{cc}}(n_k,n_{*},\gamma) + \sup_{t \in \mathbb{R}} [\hat{\Delta}_k(t) - \Delta_k(t) ] + \sup_{t \in \mathbb{R}} \left[\hat{F}^{k}_k(t) - \tilde{F}^{k}_k(t) \right].
  \end{align*}
We know from Lemma~\ref{lemma:dkw-delta-prob} that, for any $\gamma_1 > 0$,
\begin{align*}
  & \P{ \sup_{t \in \mathbb{R}} |\hat{\Delta}_k(t) - \Delta_k(t) | > 2\sum_{\l\neq k}|V_{kl}| \sqrt{\frac{\log(2K) + \log(1/\gamma_1)}{2 n_{*}}} }
   \leq \gamma_1.
\end{align*}
Similarly, it follows directly from the DKW inequality that, for any $\gamma_2 > 0$,
\begin{align*}
  \P{ \sup_{t \in \mathbb{R}} \left[ \hat{F}_k^{k}(t) - \tF^{k}_{k}(t)  \right] > \sqrt{\frac{\log(1/\gamma_2)}{2 n_k}} }
  & \leq \gamma_2.
\end{align*}
Therefore, for any  $\gamma, \gamma_1, \gamma_2 > 0$ such that $\gamma = \gamma_1 + \gamma_2$, with probability at least $1-\gamma$,
\begin{align*}
  & \P{Y \notin C(X, S^k_{(\hat{i}_k)}) \mid Y = k, \mathcal{D}} \\
  & \qquad \leq \alpha - \delta^{\mathrm{cc}}(n_k,n_{*},\gamma)
    + 2\sum_{\l\neq k}|V_{kl}| \sqrt{\frac{\log(2K) + \log(1/\gamma_1)}{2 n_{*}}}
    + \sqrt{\frac{\log(1/\gamma_2)}{2 n_k}}.
\end{align*}
Finally, setting
\begin{align*}
& \gamma_1=  \frac{\gamma}{2} \cdot \frac{\sum_{\l\neq k}|V_{kl}|}{\sum_{l=1}^{K}|V_{kl}|},
& \gamma_2=  \gamma \left( 1- \frac{1}{2} \cdot \frac{\sum_{\l\neq k}|V_{kl}|}{\sum_{l=1}^{K}|V_{kl}|}\right),
\end{align*}
gives the desired result; that is, with probability at least $1-\gamma$,
\begin{align*}
  & \P{Y \notin C(X, S^k_{(\hat{i}_k)}) \mid Y = k, \mathcal{D}} \leq \alpha.
\end{align*}

\end{proof}

\begin{proof}[Proof of Theorem~\ref{thm:algorithm-correction-cc-upper}]

The proof follows an approach similar to that of the proof of Theorem~\ref{thm:algorithm-correction-upper}.
Define the events $\cA_1=\{\hat{\cI}_k=\emptyset \}$ and $\cA_2=\{\hat{i}_k=1\}$.
Then,
\begin{align}
  \begin{split} 	\label{eq:upper_bound_events-cc}
	& \P{Y_{n+1} \in \hat{C}^{\mathrm{cc}}(X_{n+1}) \mid \mathcal{D}, Y = k}  \\
	& \qquad \leq \I{\cA_1} + \P{Y_{n+1} \in \hat{C}^{\mathrm{cc}}(X_{n+1}) \mid Y = k, \cD}\I{\cA_2} \\
	& \qquad \qquad + \P{Y_{n+1} \in \hat{C}^{\mathrm{cc}}(X_{n+1}) \mid Y = k, \cD}\I{\cA_1^c\cap \cA_2^c}.
      \end{split}
\end{align}

We will now separately bound the three terms on the right-hand-side of~\eqref{eq:upper_bound_events-cc}.
The following notation will be useful for this purpose.
For all $i \in [n_k]$, let $U_i \sim \text{Uniform}(0,1)$ be independent and identically distributed uniform random variables, and
denote their order statistics as $U_{(1)} < U_{(2)} < \ldots < U_{(n_k)}$.

\begin{itemize}

\item The probability of the event $\cA_1$ can be bound from above as:
  \begin{align} \label{eq:upper_bound_event1-cc}
    \P{\hat{\mathcal{I}}_k = \emptyset} \leq \gamma_2.
  \end{align}

  To simplify the notation in the proof of~\eqref{eq:upper_bound_event1-cc}, define
\begin{align*}
  d_k := \sqrt{\frac{\log(1/\gamma_1)}{2 n_k}} + 4\sum_{\l\neq k}|V_{kl}| \sqrt{\frac{\log(2K) + \log(1/\gamma_2)}{2 n_{*}}}.
\end{align*}
Then, under Assumption~\ref{assumption:regularity-dist-delta-cc},
  \begin{align*}
    & 1 - \left( \alpha + \hDelta_k(S^k_{(i)}) - \delta^{\mathrm{cc}}(n_k,n_{*},\gamma) \right) \\
    & \qquad = 1 - \alpha - \Delta_k(S^k_{(i)}) + \delta^{\mathrm{cc}}(n_k,n_{*},\gamma)  + \Delta_k(S^k_{(i)}) - \hDelta_k(S^k_{(i)}) \\
    & \qquad \leq 1 - d_k + \delta^{\mathrm{cc}}(n_k,n_{*},\gamma) + \sup_{t \in [0,1]} | \Delta_k(t) - \hDelta_k(t)|.
  \end{align*}
Further, it follows from Lemma~\ref{lemma:dkw-delta-prob} and~\eqref{eq:delta-constant-cc} that, with probability at least $1-\gamma_2$,
  \begin{align*}
    & 1 - \left( \alpha + \hDelta_k(S^k_{(i)}) - \delta^{\mathrm{cc}}(n_k,n_{*},\gamma) \right) \\
    & \qquad \leq 1 - d_k + \delta^{\mathrm{cc}}(n_k,n_{*},\gamma) + 2 \sum_{l\neq k}|V_{kl}| \sqrt{\frac{\log(2K) + \log(1/\gamma_2)}{2 n_{*}}} \\
    & \qquad \leq 1 - d_k + \sqrt{\frac{\log(1/\gamma_1)}{2 n_k}} + 4\sum_{\l\neq k}|V_{kl}| \sqrt{\frac{\log(2K) + \log(1/\gamma_2)}{2 n_{*}}}  \\
    & \qquad = 1.
  \end{align*}

This implies that the set $\hat{\mathcal{I}}_k$ defined in~\eqref{eq:Ik-set} is non-empty with probability at least $1-\gamma_2$, because $n_k \in \hat{\mathcal{I}}_k$ if $1 - [ \alpha + \hDelta_k(S^k_{(i)}) - \delta^{\mathrm{cc}}(n_k,n_{*},\gamma) ] \leq 1$.

\item Under $\cA_2$, the second term on the right-hand-side of~\eqref{eq:upper_bound_events-cc} can be written as:
  \begin{align*}
    \P{Y_{n+1} \in \hat{C}^{\mathrm{cc}}(X_{n+1}) \mid Y = k, \cD}\I{\cA_2}
    & = \P{\hat{s}(X_{n+1}, k) \leq S^k_{(1)} \mid Y = k, \cD} \\
    & = F^{k}_k(S^k_{(1)}).
\end{align*}
Therefore, using Assumption~\ref{assumption:consitency-scores}, we can write that, for any $a > 0$,
  \begin{align*}
    & \P{ \P{Y_{n+1} \in \hat{C}^{\mathrm{cc}}(X_{n+1}) \mid Y = k, \cD}\I{\cA_2} \geq a } \\
    & \qquad \leq \P{ F^{k}_k(S^k_{(1)}) \geq a } \\
    & \qquad \leq \P{ \left( V_{kk} + \sum_{l \neq k} |V_{kl}| \right) \tF^{k}_k(S^k_{(1)}) \geq a } \\
    & \qquad = \P{ U_{(1)} \geq \frac{a}{V_{kk} + \sum_{l \neq k} |V_{kl}|} } \\
    & \qquad = \left( 1 - \frac{a}{V_{kk} + \sum_{l \neq k} |V_{kl}|} \right)^{n_k} \\
    & \qquad \leq \frac{1}{n_k} \cdot \frac{V_{kk} + \sum_{l \neq k} |V_{kl}|}{a}.
  \end{align*}
  In particular, choosing $a = (V_{kk} + \sum_{l \neq k} |V_{kl}|)/(n_k \bar{\gamma})$, for any $\bar{\gamma} \in (0,1)$, gives us:
  \begin{align} \label{eq:upper_bound_event2-cc}
     \P{ \P{Y_{n+1} \in \hat{C}^{\mathrm{cc}}(X_{n+1}) \mid Y = k, \cD}\I{\cA_2} \leq \frac{V_{kk} + \sum_{l \neq k} |V_{kl}|}{n_k \bar{\gamma}} }
    \geq 1 - \bar{\gamma}.
  \end{align}

\item By proceeding  as in the proof of Theorem~\ref{thm:algorithm-correction-upper}, using Assumption~\ref{assumption:regularity-dist} we obtain that, under $\cA_1^c\cap \cA_2^c$,
  \begin{align} \label{eq:upper_bound_events-cc-upper}
    \begin{split}
    & \P{Y \in C(X, S^k_{(\hat{i}_k)}) \mid Y = k, \mathcal{D}} \\
    & \qquad < 1 - \alpha + \delta^{\mathrm{cc}}(n_k,n_{*},\gamma) + \frac{1}{n_k} + \sup_{t \in \mathbb{R}} \left[ \tilde{F}^{k}_k(t) - \hat{F}^{k}_k(t)  \right] + \sup_{t \in \mathbb{R}} |\Delta_k(t) - \hDelta_k(t)| \\
    & \qquad \qquad + 2\sum_{l\neq k}|V_{kl}| \cdot \frac{f_{\max}}{f_{\min}} \cdot  \max_{2\leq i\leq n_k} \left| \tF^{k}_{k}(S^k_{(i)}) - \tF^{k}_{k}(S^k_{(i-1)}) \right|.
  \end{split}
  \end{align}
  Now, recall that for any $\bar{\gamma} \in (0,1)$, by the DKW inequality,
\begin{align*}
  \P{ \sup_{t \in \mathbb{R}} \left[ \hat{F}_k^{k}(t) - \tF^{k}_{k}(t)  \right] > \sqrt{\frac{\log(1/\bar{\gamma})}{2 n_k}} }
  & \leq \bar{\gamma}.
\end{align*}
Similarly, we know from Lemma~\ref{lemma:dkw-delta-prob} that, for any $\bar{\gamma} \in (0,1)$,
\begin{align*}
  & \P{ \sup_{t \in \mathbb{R}} |\hat{\Delta}_k(t) - \Delta_k(t) | > 2\sum_{\l\neq k}|V_{kl}| \sqrt{\frac{\log(2K) + \log(1/\bar{\gamma})}{2 n_{*}}} }
   \leq \bar{\gamma}.
\end{align*}
It remains to bound the last term on the right-hand-side of~\eqref{eq:upper_bound_events-cc-upper}.
Note that, for any $\bar{\gamma} \in (0,1)$,
\begin{align*}
  & \P{\max_{2\leq i\leq n_k} \left| \tF^{k}_{k}(S^k_{(i)}) - \tF^{k}_{k}(S^k_{(i-1)}) \right| > \bar{\gamma} }
    = \P{\max_{1 \leq i \leq n_k +1 } D_i  > \bar{\gamma} },
\end{align*}
where $U_i \sim \text{Uniform}(0,1)$ are independent and identically distributed uniform random variables for all $i \in [n_k]$, and $D_1=U_{(1)}$, $D_i=U_{(i)} - U_{(i-1)}$ for $i=2, \dots, n_k$, and $D_{n_k+1} = 1-U_{(n_k)}$.

Further, by combining the Markov inequality with a standard result on the asymptotic behavior of the maximum uniform spacing, we obtain that, for any $a > 0$,
\begin{align*}
  \P{\max_{1 \leq i \leq n_k +1 } D_i  > a }
  & \leq \frac{1}{a} \cdot \EV{\max_{1 \leq i \leq n_k +1 } D_i}
    = \frac{1}{a} \cdot \frac{1}{n_k} \sum_{j=1}^{n_k+1} \frac{1}{j}.
\end{align*}
Therefore, setting $a = (\sum_{j=1}^{n_k+1} 1/j) / (n_k \bar{\gamma})$ for any $\bar{\gamma} > 0$, we obtain that
\begin{align*}
  \P{\max_{1 \leq i \leq n_k +1 } D_i  > \frac{1}{\bar{\gamma}} \cdot \frac{1}{n_k} \sum_{j=1}^{n_k+1} \frac{1}{j} }
  \leq \bar{\gamma}.
\end{align*}

    With a union bound, this implies that, for any $\bar{\gamma} \in (0,1)$,
  \begin{align} \label{eq:upper_bound_event3-cc}
    \P{ \P{Y \in C(X, S^k_{(\hat{i}_k)}) \mid Y = k, \mathcal{D}} \I{\cA_1^c\cap \cA_2^c} > 1 - \alpha + \omega(n_k, n_*, \gamma, \bar{\gamma}) } \leq \bar{\gamma},
  \end{align}
  where
  \begin{align*}
    & \omega(n_k, n_*, \gamma, \bar{\gamma}) \\
    & \qquad =  \delta^{\mathrm{cc}}(n_k,n_{*},\gamma) + \frac{1}{n_k} + \sqrt{\frac{\log(3/\bar{\gamma})}{2 n_k}}
      + 2\sum_{\l\neq k}|V_{kl}| \sqrt{\frac{\log(2K) + \log(3/\bar{\gamma})}{2 n_{*}}} \\
    & \qquad\qquad + 2\sum_{l\neq k}|V_{kl}| \cdot \frac{f_{\max}}{f_{\min}} \cdot \frac{1}{n_k} \cdot \left[ \log(n_k+1) + \frac{3}{\bar{\gamma}}  \sum_{j=1}^{n_k+1} \frac{1}{j}   \right].
  \end{align*}

\end{itemize}

Finally, combining~\eqref{eq:upper_bound_events-cc} with \eqref{eq:upper_bound_event1-cc}, \eqref{eq:upper_bound_event2-cc}
, and \eqref{eq:upper_bound_event3-cc} leads to
\begin{align*}
	& \P{ \P{Y_{n+1} \in \hat{C}^{\mathrm{cc}}(X_{n+1}) \mid \mathcal{D}, Y = k} \geq 1 - \alpha + \omega(n_k, n_*, \gamma, \bar{\gamma}) + \frac{V_{kk} + \sum_{l \neq k} |V_{kl}|}{n_k \bar{\gamma}} } \\
	& \qquad \leq \gamma_2 + 2 \bar{\gamma}.
\end{align*}
Thus, the desired result is obtained by setting
\begin{align*}
  \bar{\gamma}
  & = \frac{\gamma - \gamma_2}{2}
    = \frac{\gamma}{2} \left( 1 - \frac{1}{2} \cdot \frac{\sum_{\l\neq k}|V_{kl}|}{\sum_{l=1}^{K}|V_{kl}|} \right).
\end{align*}

\end{proof}

\begin{proof}[Proof of Theorem~\ref{thm:algorithm-correction-cc-optimist}]
This proof combines elements of the proofs of Proposition~\ref{thm:algorithm-correction-optimistic} and Theorem~\ref{thm:algorithm-correction-cc}.
  Suppose $Y_{n+1} = k$, for some $k \in [K]$.
Proceeding exactly as in the proof of Theorem~\ref{thm:algorithm-correction-cc}, we obtain:
  \begin{align*}
    & \P{Y \notin C(X, S^k_{(\hat{i}_k)}) \mid Y = k, \mathcal{D}} \\
     & \qquad \leq \left[ 1 - \frac{\hat{i}_k}{n_k} - \max\left\{\hat{\Delta}_k(S_{(\hat{i}_k)}) - \delta^{\mathrm{cc}}(n_k,n_{*}), - \sqrt{\frac{\log(1/\gamma)}{2 n_{k}}} \right\} \right] - \delta^{\mathrm{cc}}(n_k,n_{*}) \\
     & \qquad \qquad + \left[ \hat{F}_k^k(S_{(\hat{i}_k)}) - \tilde{F}_k^k(S_{(\hat{i}_k)}) \right] \\
    & \qquad \qquad + \max\left\{\hat{\Delta}_k(S_{(\hat{i}_k)}) - \delta^{\mathrm{cc}}(n_{k},n_{*}), - \sqrt{\frac{\log(1/\gamma)}{2 n_{k}}} \right\} - \left[ \Delta_k(S_{(\hat{i}_k)}) - \delta^{\mathrm{cc}}(n_k,n_{*}) \right] \\
     & \qquad = \left[ 1 - \frac{\hat{i}_k}{n_k} - \max\left\{\hat{\Delta}_k(S_{(\hat{i}_k)}) - \delta^{\mathrm{cc}}(n_k,n_{*}), - \sqrt{\frac{\log(1/\gamma)}{2 n_{k}}} \right\} \right] - \delta^{\mathrm{cc}}(n_k,n_{*}) \\
     & \qquad \qquad + \left[ \hat{F}_k^k(S_{(\hat{i}_k)}) - \tilde{F}_k^k(S_{(\hat{i}_k)}) \right] \\
    & \qquad \qquad + \max\left\{\hat{\Delta}_k(S_{(\hat{i}_k)}) + \sqrt{\frac{\log(1/\gamma)}{2 n_{k}}} - \delta^{\mathrm{cc}}(n_{k},n_{*}), 0 \right\} \\
    & \qquad \qquad - \left[ \Delta_k(S_{(\hat{i}_k)}) + \sqrt{\frac{\log(1/\gamma)}{2 n_{k}}} - \delta^{\mathrm{cc}}(n_k,n_{*}) \right] \\
    & \qquad \leq \alpha - \delta^{\mathrm{cc}}(n_k,n_{*},\gamma) + \sup_{t \in \mathbb{R}} |\hat{\Delta}_k(t) - \Delta_k(t) | + \sup_{t \in \mathbb{R}} \left[\hat{F}^{k}_k(t) - \tilde{F}^{k}_k(t) \right].
  \end{align*}
using the fact that $\inf_{t\in \R}\Delta_k(t) \geq \delta^{\mathrm{cc}}(n_k, n_{*}) - \sqrt{ \log(1/\gamma) / (2 n_{k})}$ implies
  \begin{align*}
 & \left| \max\left\{\hat{\Delta}_k(S_{(\hat{i}_k)}) + \sqrt{\frac{\log(1/\gamma)}{2 n_{k}}} - \delta^{\mathrm{cc}}(n_{k},n_{*}), 0 \right\}  - \left[ \Delta_k(t) + \sqrt{\frac{\log(1/\gamma)}{2 n_{k}}} - \delta^{\mathrm{cc}}(n_k,n_{*}) \right] \right| \\
 & \qquad \leq |\hat{\Delta}_k(t) - \Delta_k(t)|
  \end{align*}
for all $t \in \mathbb{R}$.
\end{proof}

\section*{Comparison to worst-case coverage bounds}

\begin{proof}[Proof of Corollary~\ref{thm:coverage-lab-cond-worst-case}]

The expression for $\Delta_k(t)$ in (\ref{eq:delta-2}) can be equivalently rewritten as
\begin{align*}
  \Delta_k(t)
  & = \sum_{l \neq k} V_{kl} \left( \tF_l^{k}(t) - \tF_k^{k}(t) \right),
\end{align*}
which implies
\begin{align*}
  - \sum_{l \neq k} |V_{kl}| \leq
  \Delta_k(t)
  & \leq \sum_{l \neq k} |V_{kl}|, \qquad \forall t \in \mathbb{R}.
\end{align*}
The proof is then completed by applying Theorem~\ref{thm:coverage-lab-cond}.
\end{proof}

\clearpage

\section{Supplementary numerical results} \label{app:figures}

\subsection{Simulations under a known label contamination model} \label{app:figures-known}

\subsubsection{Additional views and performance metrics}

\begin{figure}[!htb]
\centering
\includegraphics[width=0.9\linewidth]{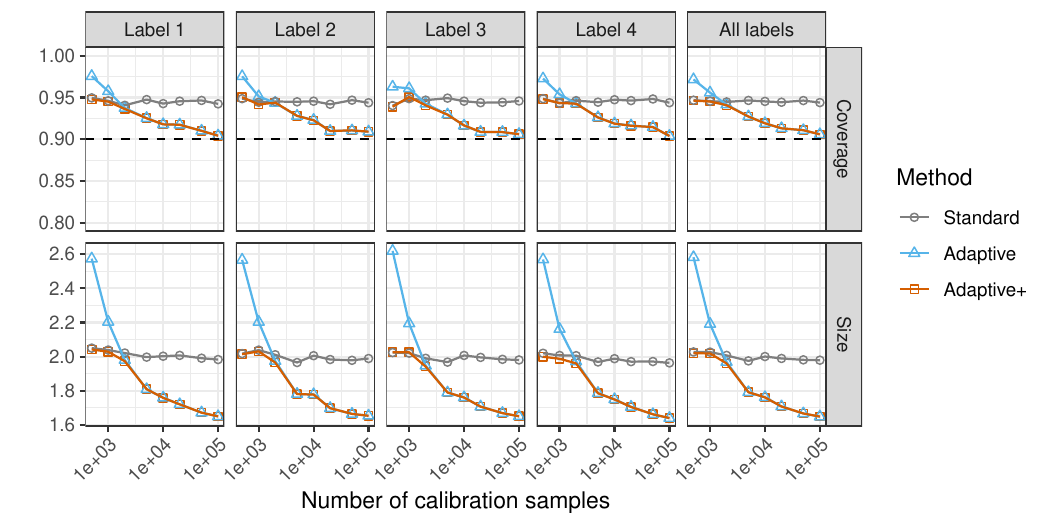}
\caption{Performances of different conformal methods on simulated data with randomly contaminated labels, as a function of the number of calibration samples.
The reported empirical coverage and average size of the prediction sets are stratified based on the true label of the test points. The strength parameter of the label contamination process is $\epsilon=0.1$. Other details are as in Figure~\ref{fig:exp-synthetic-1-lab-cond-K4-ncal}.}
\label{fig:exp-synthetic-1-lab-cond-K4-ncal_lc}
\end{figure}

\begin{figure}[!htb]
\centering
\includegraphics[width=0.9\linewidth]{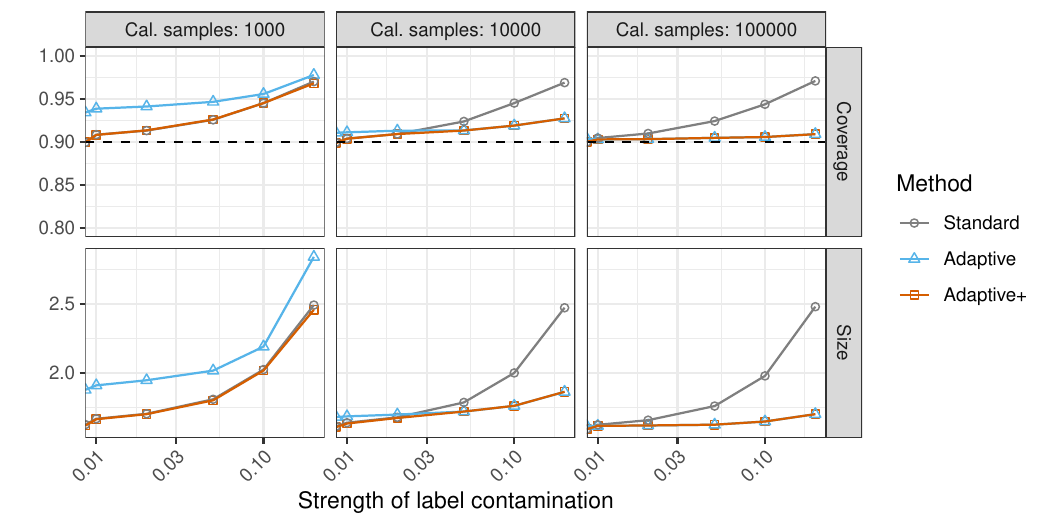}
\caption{Performances of different conformal methods on simulated data with varying numbers of calibration samples, as a function of the label contamination strength.
Other details are as in Figure~\ref{fig:exp-synthetic-1-lab-cond-K4-ncal}.}
\label{fig:exp-synthetic-1-lab-cond-K4-ncal-epsilon}
\end{figure}

\FloatBarrier
\subsubsection{The effect of the number of classes}

\begin{figure}[!htb]
\centering
\includegraphics[width=0.9\linewidth]{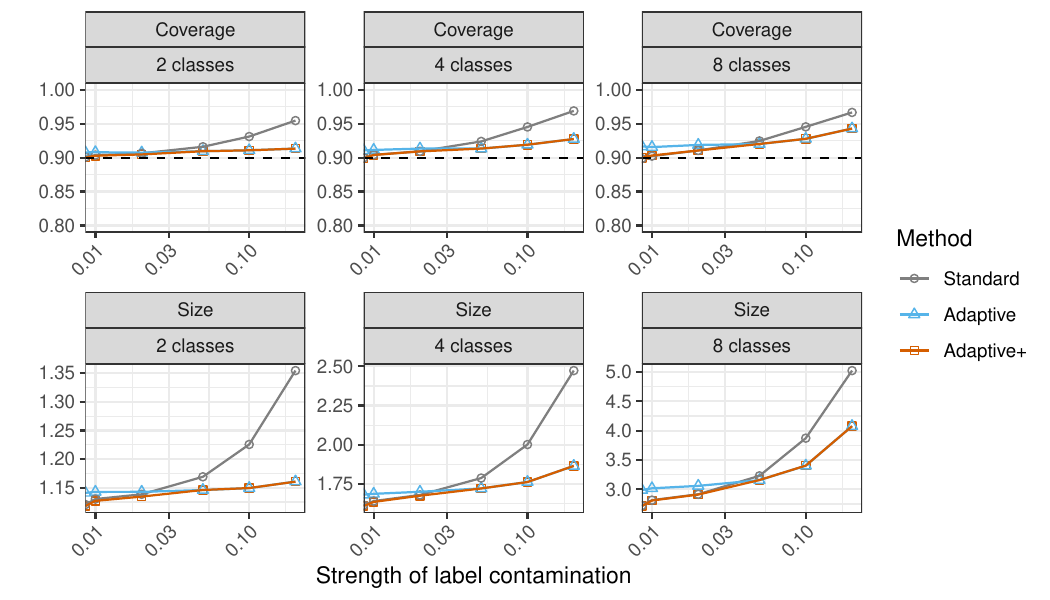}
\caption{Performances of different conformal methods on simulated classification data with different numbers of labels, as a function of the contamination strength.
The strength parameter of the label contamination process is $\epsilon=0.1$.
Other details are as in Figure~\ref{fig:exp-synthetic-1-lab-cond-K4-ncal-epsilon}.
}
\label{fig:exp-synthetic-1-lab-cond-ncal-epsilon}
\end{figure}

\clearpage
\subsubsection{The effect of the classifier}

\begin{figure}[!htb]
\centering
\includegraphics[width=0.9\linewidth]{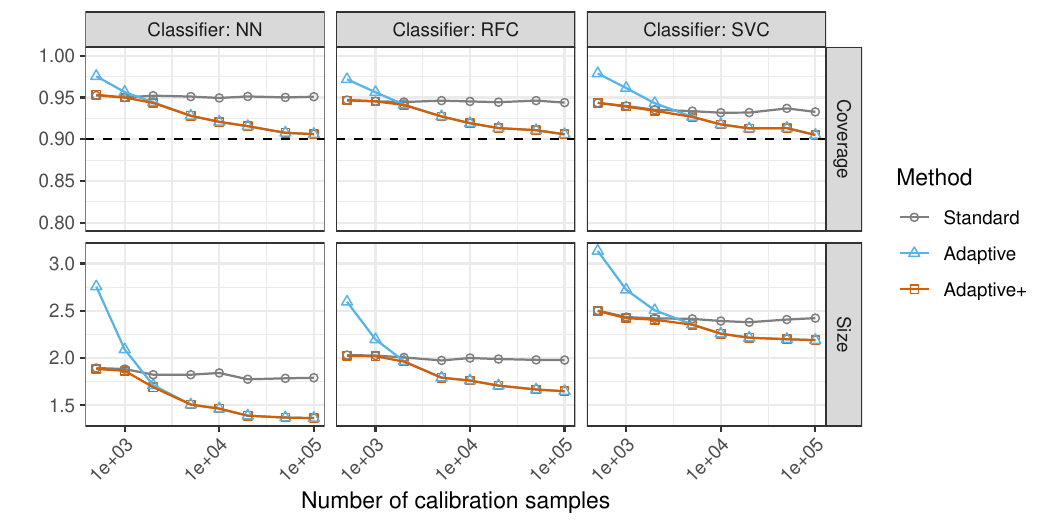}
\caption{Performances of different conformal methods on simulated data using different machine learning classifiers, as a function of the number of calibration samples. The strength parameter of the label contamination process is $\epsilon=0.1$. Other details are as in Figure~\ref{fig:exp-synthetic-1-lab-cond-K4-ncal}.}
\label{fig:exp-synthetic-1-lab-cond-K4-ncal-models}
\end{figure}

\FloatBarrier
\subsubsection{The effect of the data distribution}

\begin{figure}[!htb]
\centering
\includegraphics[width=0.9\linewidth]{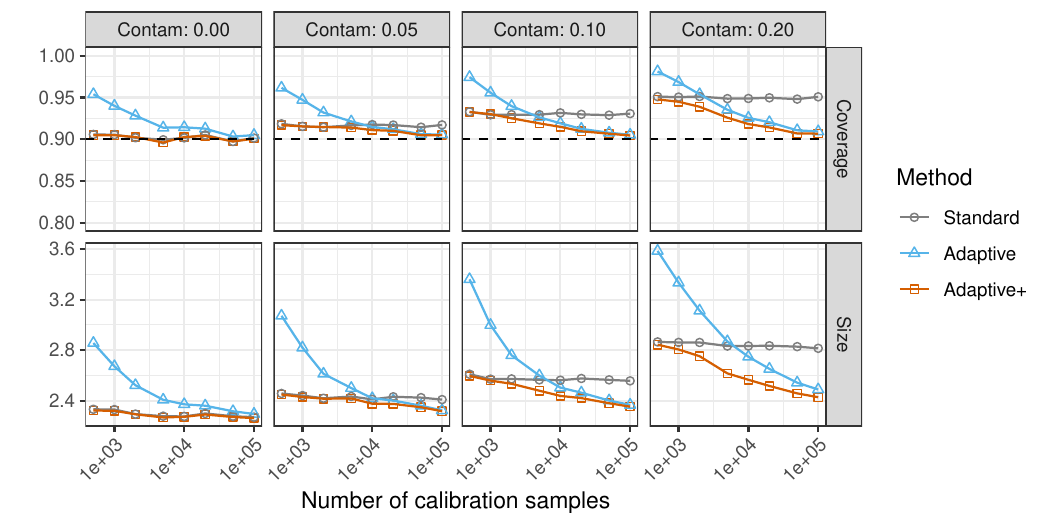}
\caption{Performances of different conformal methods on simulated data with random label contamination of varying strength, as a function of the number of calibration samples.
 All methods guarantee 90\% label-conditional coverage.
The data are simulated from a logistic model with random parameters.
Other details are as in Figure~\ref{fig:exp-synthetic-1-lab-cond-K4-ncal}.}
\label{fig:exp-synthetic-1-lab-cond-K4-ncal_synthetic2}
\end{figure}

\begin{figure}[!htb]
\centering
\includegraphics[width=0.9\linewidth]{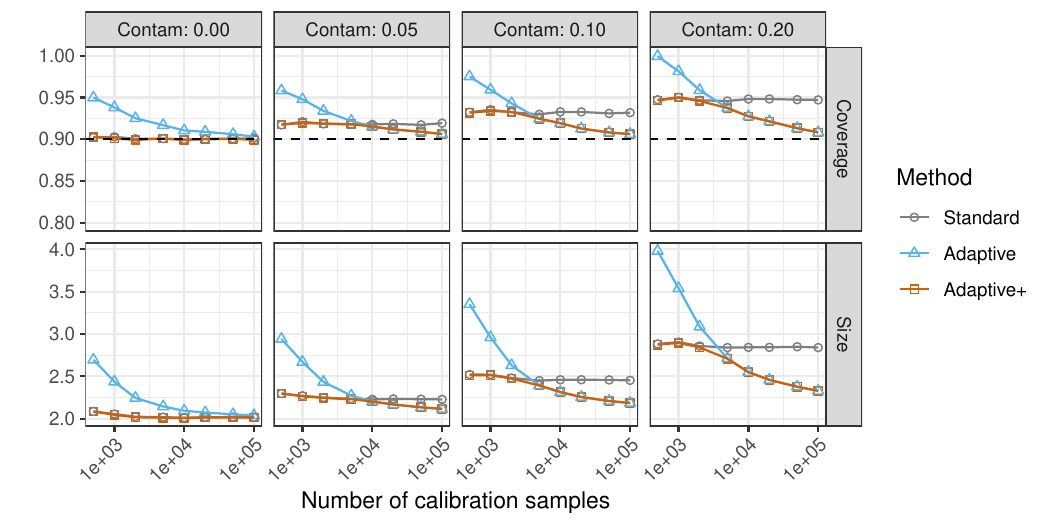}
\caption{Performances of different conformal methods on simulated data with random label contamination of varying strength, as a function of the number of calibration samples.
The data are simulated from a heteroscedastic decision-tree model.
Other details are as in Figure~\ref{fig:exp-synthetic-1-lab-cond-K4-ncal}.}
\label{fig:exp-synthetic-1-lab-cond-K4-ncal_synthetic3}
\end{figure}

\FloatBarrier
\subsubsection{The effect of the label contamination process}

\begin{figure}[!htb]
\centering
\includegraphics[width=0.9\linewidth]{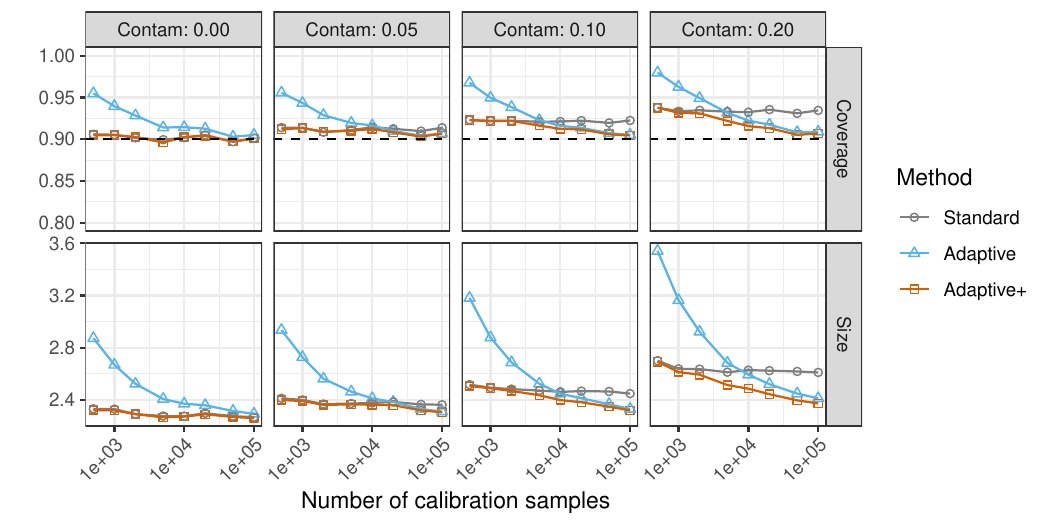}
\caption{Performances of different conformal methods on simulated data with random label contamination of varying strength, as a function of the number of calibration samples.
The contamination process has a block-like structure.
Other details are as in Figure~\ref{fig:exp-synthetic-1-lab-cond-K4-ncal_synthetic2}.}
\label{fig:exp-synthetic-1-lab-cond-K4-ncal_block}
\end{figure}

\begin{figure}[!htb]
\centering
\includegraphics[width=0.9\linewidth]{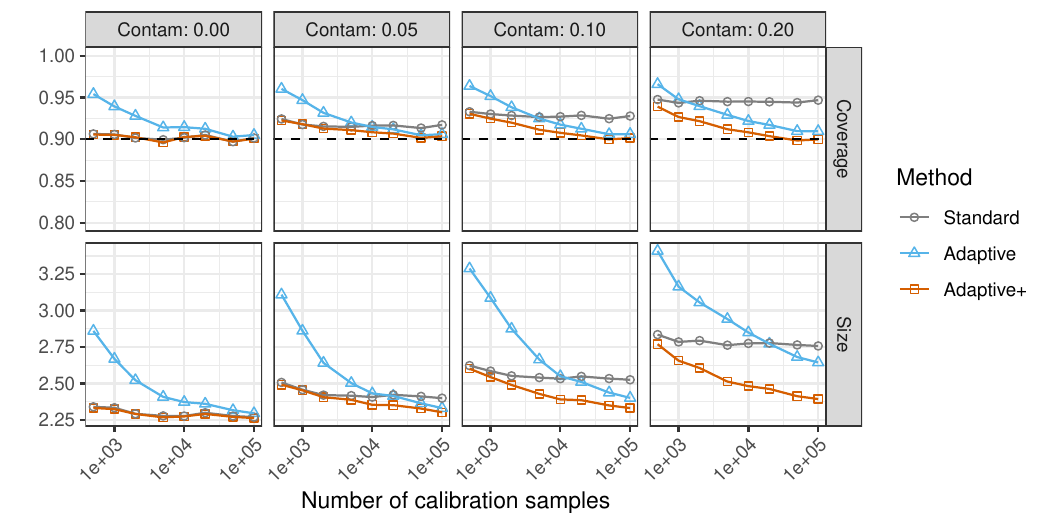}
\caption{Performances of different conformal methods on simulated data with random label contamination of varying strength, as a function of the number of calibration samples.
The label contamination process has a random heterogeneous structure.
Other details are as in Figure~\ref{fig:exp-synthetic-1-lab-cond-K4-ncal_synthetic2}.}
\label{fig:exp-synthetic-1-lab-cond-K4-ncal_random}
\end{figure}

\clearpage
\subsubsection{Prediction sets with marginal coverage}

\begin{figure}[!htb]
\centering
\includegraphics[width=0.9\linewidth]{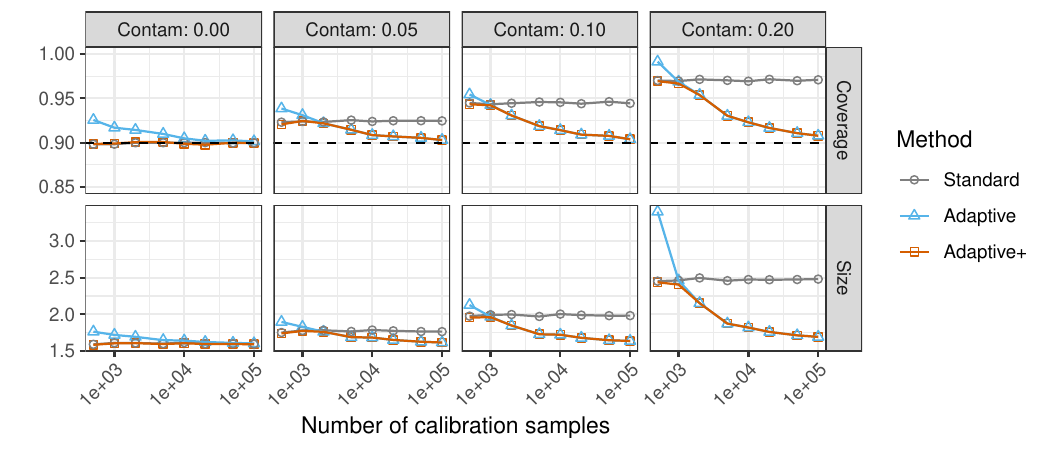}
\caption{Performances of different conformal prediction methods with marginal coverage on simulated data with random label contamination of varying strength, as a function of the number of calibration samples.
The dashed horizontal line indicates the 90\% nominal marginal coverage level.
Other details are as in Figure~\ref{fig:exp-synthetic-1-lab-cond-K4-ncal}.}
\label{fig:exp-synthetic-1-marginal-K4-ncal}
\end{figure}

\begin{figure}[!htb]
\centering
\includegraphics[width=0.9\linewidth]{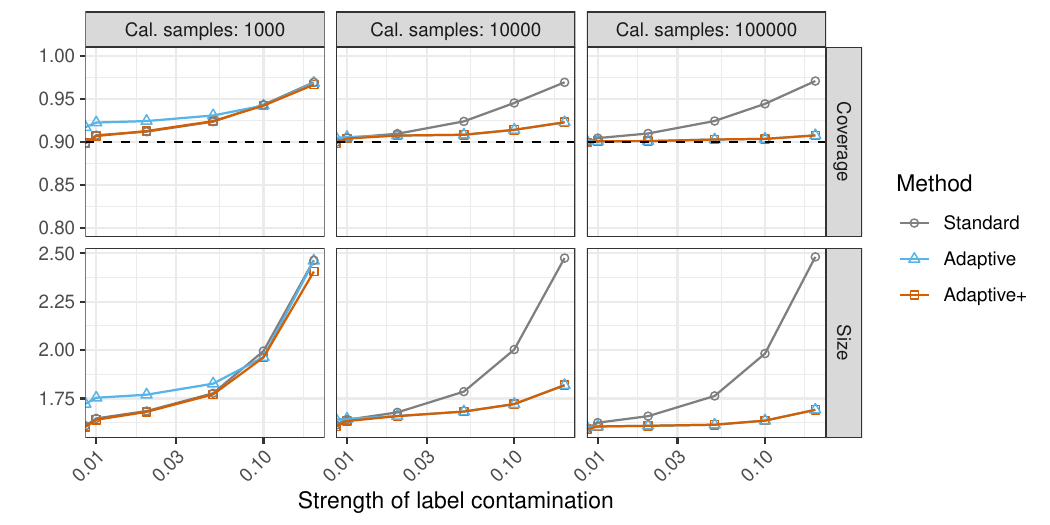}
\caption{Performances of different conformal prediction methods with marginal coverage on simulated data with varying numbers of calibration samples, as a function of the label contamination strength. The dashed horizontal line indicates the 90\% nominal marginal coverage level.
Other details are as in Figure~\ref{fig:exp-synthetic-1-lab-cond-K4-ncal-epsilon}.}
\label{fig:exp-synthetic-1-marginal-K4-ncal-epsilon}
\end{figure}

\begin{figure}[!htb]
\centering
\includegraphics[width=0.95\linewidth]{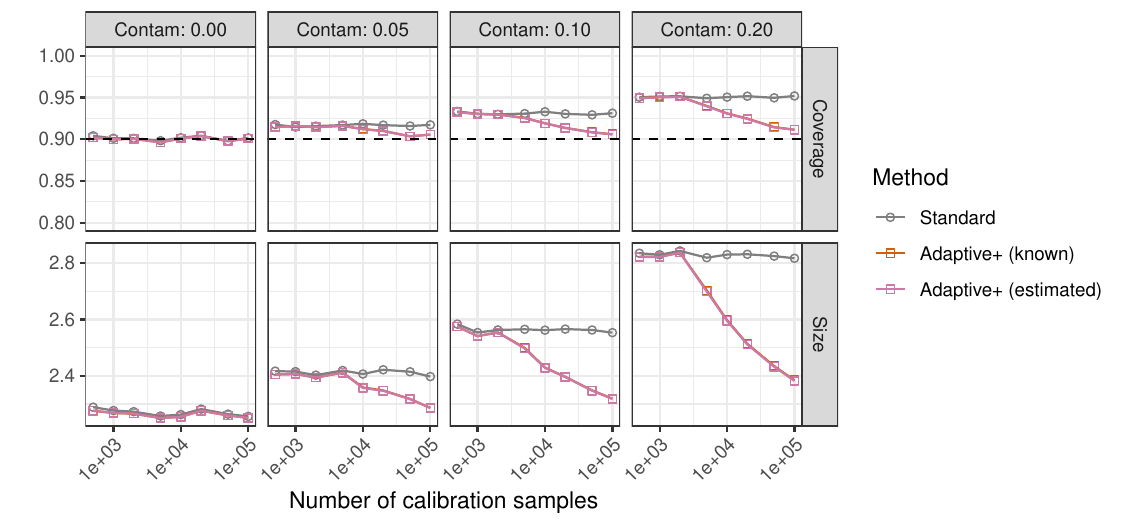}
\caption{Performances of different conformal methods on simulated data with label contamination.
The label contamination process follows a classical randomized response model.
The Adaptive+ method is applied with and without perfect knowledge of the contaminated label frequencies.
Note that there is no significant difference in performance here, because it is easy to estimate $\tilde{\rho}$ accurately from the available data.
The nominal marginal coverage level is 90\%.
Other details are as in Figure~\ref{fig:exp-synthetic-1-lab-cond-K4-ncal_synthetic2}.}
\label{fig:exp-synthetic-1-marginal-K4-ncal_synthetic2_uniform_estim-rho}
\end{figure}

\begin{figure}[!htb]
\centering
\includegraphics[width=0.95\linewidth]{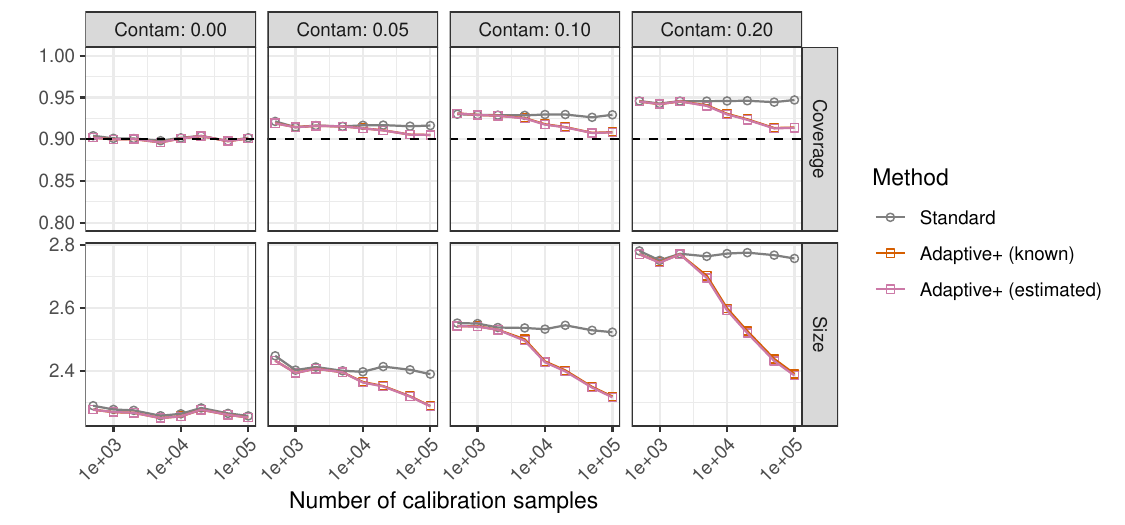}
\caption{Performances of different conformal methods on simulated data with label contamination.
 The label contamination process has a random heterogeneous structure.
The Adaptive+ method is applied with and without perfect knowledge of the contaminated label frequencies.
Note that there is no significant difference in performance here, because it is easy to estimate $\tilde{\rho}$ accurately from the available data.
The nominal marginal coverage level is 90\%.
Other details are as in Figure~\ref{fig:exp-synthetic-1-marginal-K4-ncal_synthetic2_uniform_estim-rho}.}
\label{fig:exp-synthetic-1-marginal-K4-ncal_synthetic2_random_estim-rho}
\end{figure}

\clearpage

\subsubsection{Prediction sets with calibration-conditional coverage}

\begin{figure}[!htb]
\centering
\includegraphics[width=0.9\linewidth]{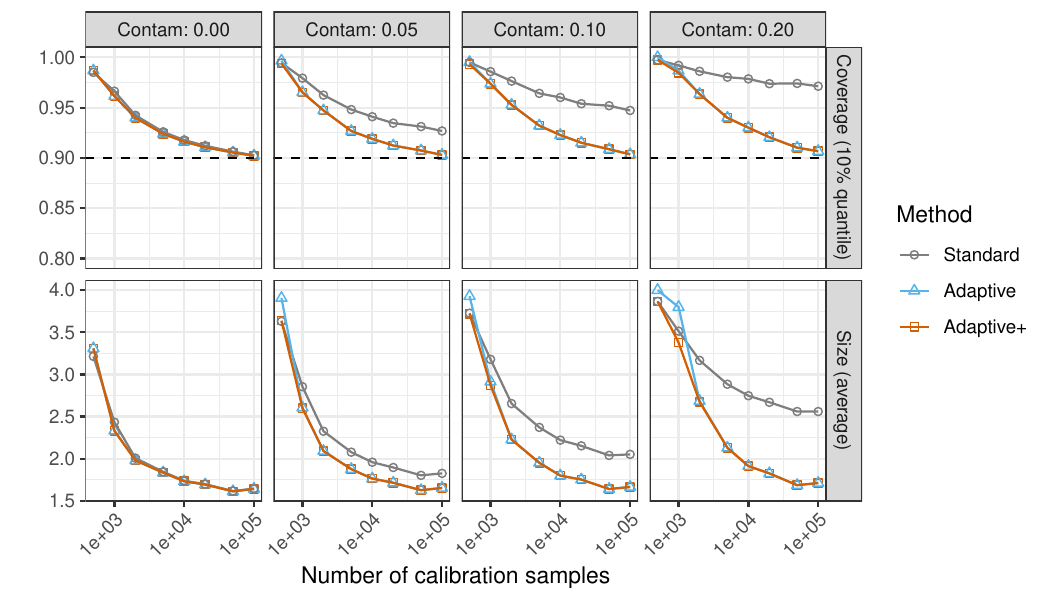}
\caption{Performances of different conformal methods on simulated data with random label contamination of varying strength, as a function of the number of calibration samples. All methods guarantee 90\% calibration and label-conditional coverage with probability at least 90\% over the calibration data. Other details are as in Figure~\ref{fig:exp-synthetic-1-marginal-K4-ncal}.}
\label{fig:exp-synthetic-1-lab-cond-K4-ncal-cc}
\end{figure}

\clearpage

\subsection{Simulations under a bounded label contamination model}  \label{app:figures-bounded}

\subsubsection{Randomized response model}  \label{app:figures-bounded-RR}

\begin{figure}[!htb]
\centering
\includegraphics[width=0.9\linewidth]{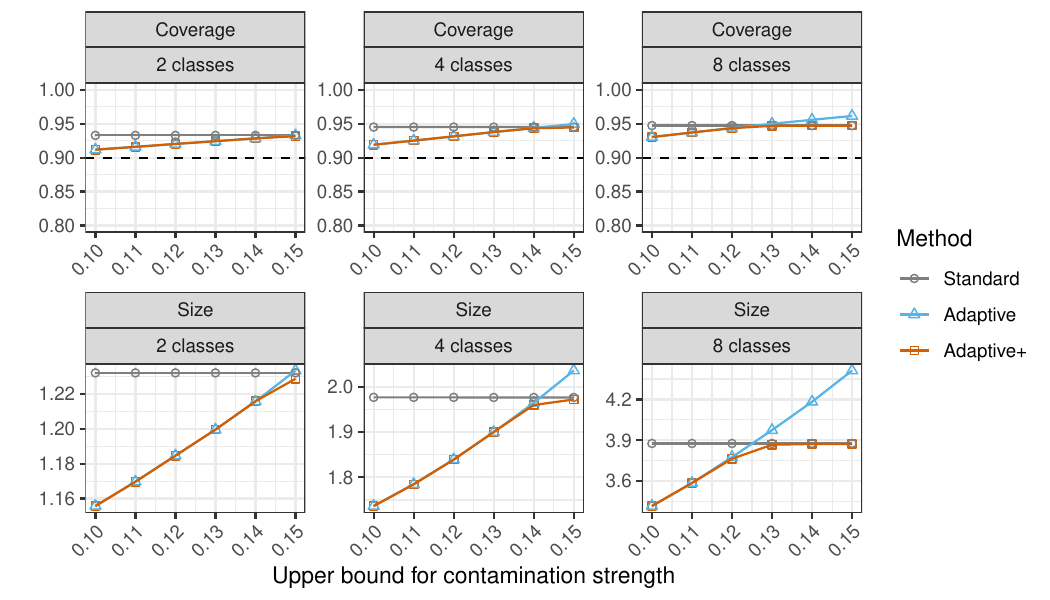}
\caption{Performances of different conformal methods on simulated data with random label contamination based on a randomized response model with unknown contamination strength
parameter $\epsilon$. 
The results are shown as a function of the known upper confidence bound for $\epsilon$, whose true value is $\epsilon=0.1$.
Other details are as in Figure~\ref{fig:exp-synthetic-1-bounded-ncal-eps0.2-lower}.}
\label{fig:exp-synthetic-1-lab-cond-K4-ncal-upper}
\end{figure}

\begin{figure}[!htb]
\centering
\includegraphics[width=0.9\linewidth]{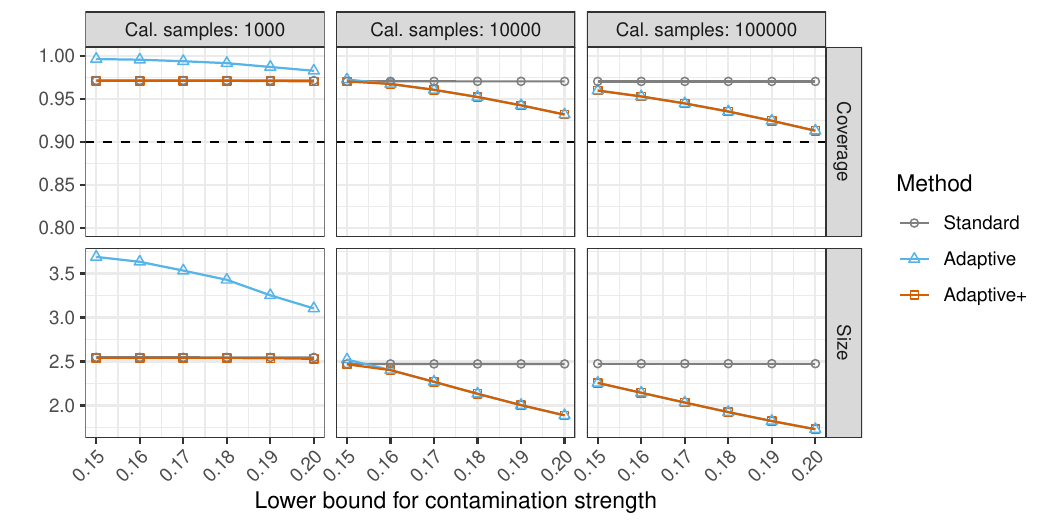}
\caption{Performances of different conformal methods on simulated data with random label contamination, as a function of the known lower confidence bound for the contamination strength parameter $\epsilon=0.2$.
The results are stratified based on the numbers of calibration samples.
Other details are as in Figure~\ref{fig:exp-synthetic-1-bounded-ncal-eps0.2-lower}.}
\label{fig:exp-synthetic-1-bounded-K4-ncal-eps0.2}
\end{figure}

\begin{figure}[!htb]
\centering
\includegraphics[width=0.9\linewidth]{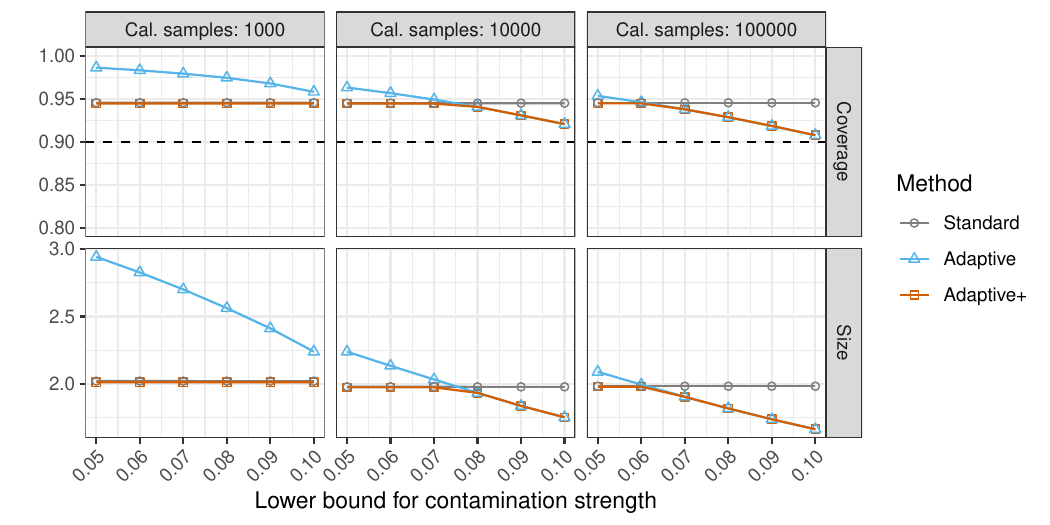}
\caption{Performances of different conformal methods on simulated data with random label contamination, as a function of the known lower confidence bound for the contamination strength parameter $\epsilon=0.1$.
The results are stratified based on the numbers of calibration samples.
Other details are as in Figure~\ref{fig:exp-synthetic-1-bounded-ncal-eps0.2-lower}.}
\label{fig:exp-synthetic-1-bounded-K4-ncal-eps0.1}
\end{figure}

\FloatBarrier
\clearpage

\subsubsection{Two-level randomized response model}  \label{app:figures-bounded-BRR}

\begin{figure}[!htb]
\centering
\includegraphics[width=\linewidth]{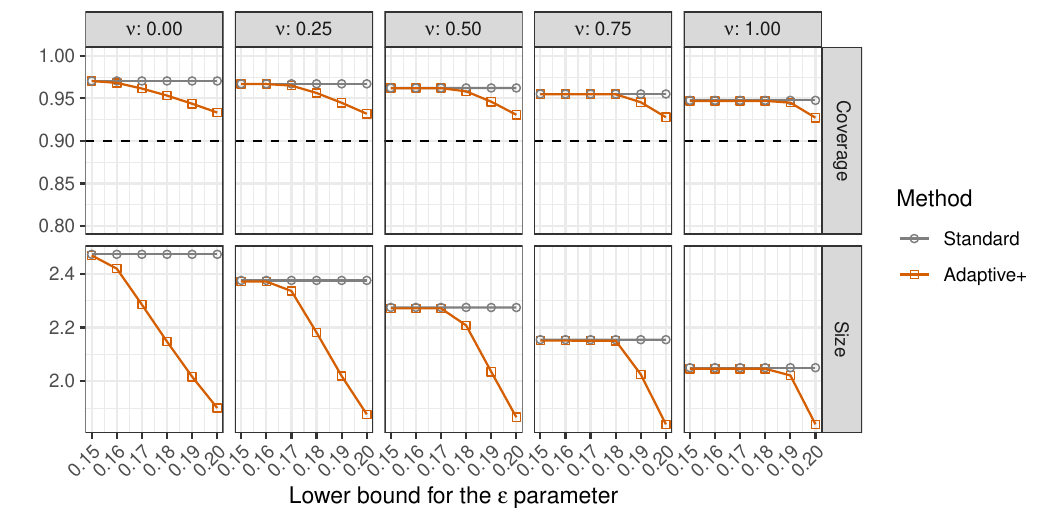}
\caption{Performances of different conformal methods on simulated data with random label contamination from a two-level randomized response model with partly unknown parameters.
The Adaptive+ method is implemented by applying the specialized version of Algorithm~\ref{alg:correction-ci} described in Section~\ref{app:BRR-model}, based on a 99\% confidence interval $[\hat{\epsilon}^{\mathrm{upp}}, \epsilon]$ for $\epsilon = 0.2$ and a degenerate interval $[\nu,\nu]$ for $\nu \in \{0,0.25,0.5,0.75,1\}$.
The results are shown as a function of $\hat{\epsilon}^{\mathrm{upp}}$. 
The calibration set size is $10,000$.
Other details are as in Figure~\ref{fig:exp-synthetic-1-bounded-ncal-eps0.2-lower}.}
\label{fig:exp-synthetic-1-bounded-BRR-K4-eps0.2-ncal10000-nuse0}
\end{figure}

\begin{figure}[!htb]
\centering
\includegraphics[width=\linewidth]{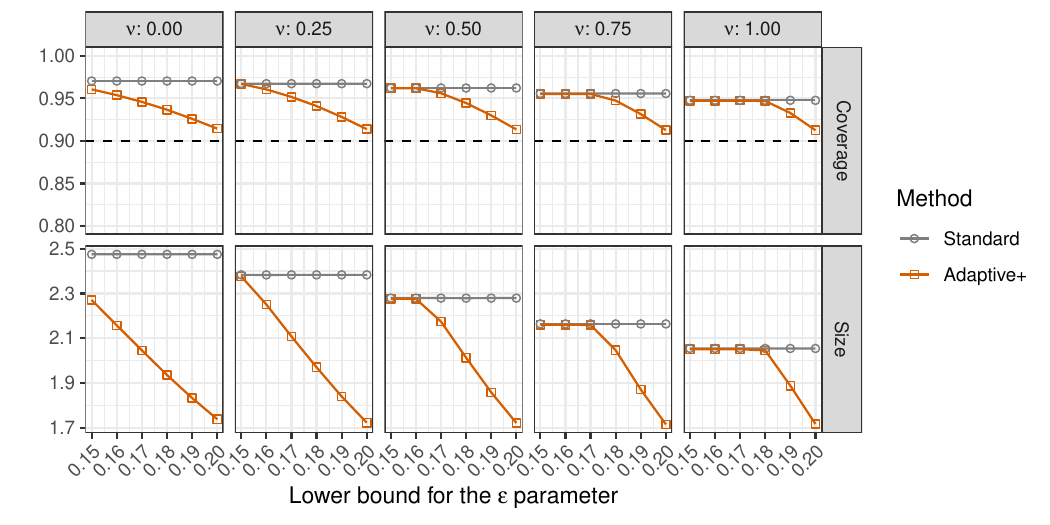}
\caption{Performances of different conformal methods on simulated data with random label contamination from a two-level randomized response model.
The Adaptive+ method is applied based on a 99\% confidence interval $[\hat{\epsilon}^{\mathrm{upp}}, \epsilon]$ for $\epsilon = 0.2$ and a degenerate interval $[\nu,\nu]$ for $\nu \in \{0,0.25,0.5,0.75,1\}$.
The results are shown as a function of $\hat{\epsilon}^{\mathrm{upp}}$. 
The calibration set size is $100,000$.
Other details are as in Figure~\ref{fig:exp-synthetic-1-bounded-BRR-K4-eps0.2-ncal10000-nuse0}.}
\label{fig:exp-synthetic-1-bounded-BRR-K4-eps0.2-ncal100000-nuse0}
\end{figure}

\begin{figure}[!htb]
\centering
\includegraphics[width=\linewidth]{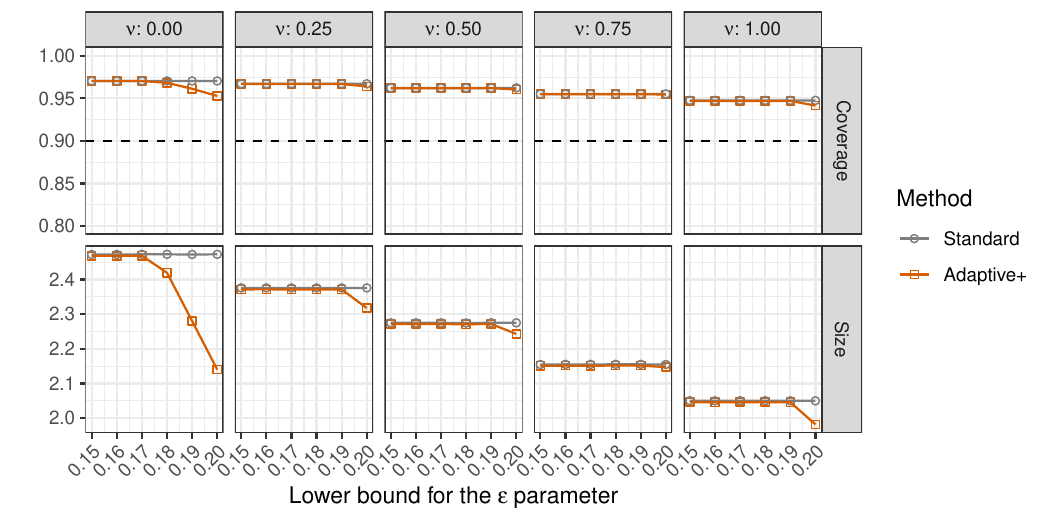}
\caption{Performances of different conformal methods on simulated data with random label contamination from a two-level randomized response model.
The Adaptive+ method is applied based on a 99\% confidence interval $[\hat{\epsilon}^{\mathrm{upp}}, \epsilon]$ for $\epsilon = 0.2$ and an interval $[\max\{0,\nu-0.02\},\min\{\nu+0.02,1\}]$ for $\nu \in \{0,0.25,0.5,0.75,1\}$.
The results are shown as a function of $\hat{\epsilon}^{\mathrm{upp}}$. 
The calibration set size is $10,000$.
Other details are as in Figure~\ref{fig:exp-synthetic-1-bounded-BRR-K4-eps0.2-ncal10000-nuse0}.}
\label{fig:exp-synthetic-1-bounded-BRR-K4-eps0.2-ncal10000-nuse0.02}
\end{figure}

\begin{figure}[!htb]
\centering
\includegraphics[width=\linewidth]{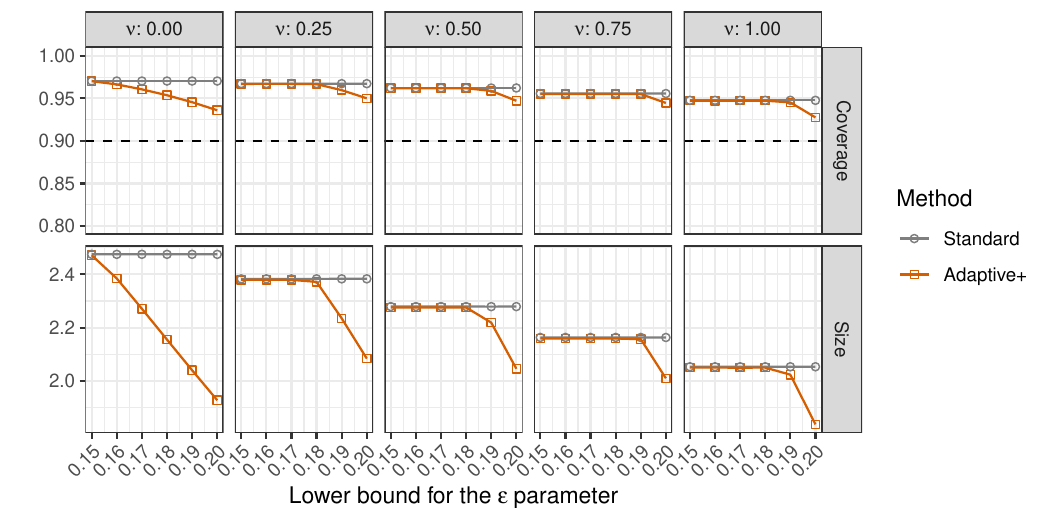}
\caption{Performances of different conformal methods on simulated data with random label contamination from a two-level randomized response model.
The Adaptive+ method is applied based on a 99\% confidence interval $[\hat{\epsilon}^{\mathrm{upp}}, \epsilon]$ for $\epsilon = 0.2$ and an interval $[\max\{0,\nu-0.02\},\min\{\nu+0.02,1\}]$ for $\nu \in \{0,0.25,0.5,0.75,1\}$.
The results are shown as a function of $\hat{\epsilon}^{\mathrm{upp}}$. 
The calibration set size is $100,000$.
Other details are as in Figure~\ref{fig:exp-synthetic-1-bounded-BRR-K4-eps0.2-ncal10000-nuse0.02}.}
\label{fig:exp-synthetic-1-bounded-BRR-K4-eps0.2-ncal100000-nuse0.02}
\end{figure}

\begin{figure}[!htb]
\centering
\includegraphics[width=\linewidth]{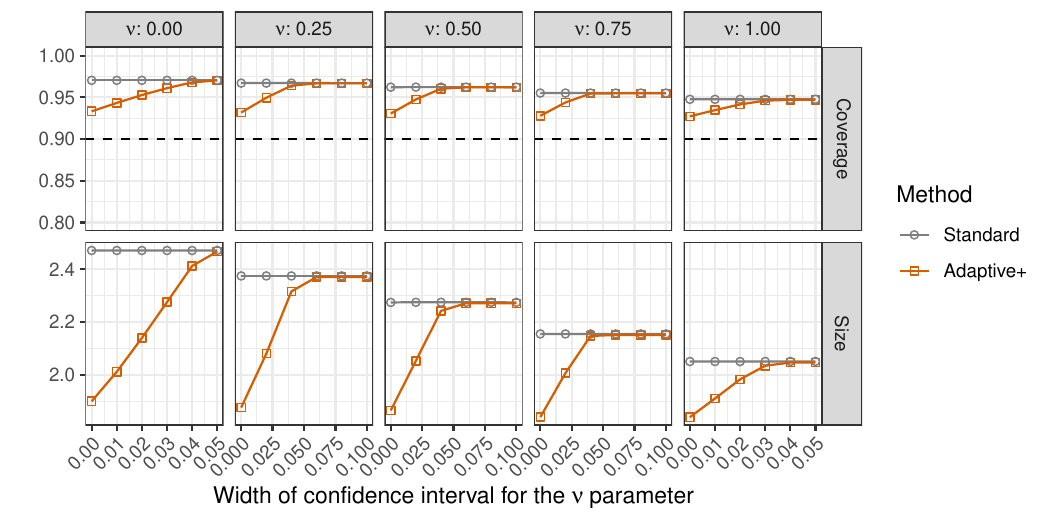}
\caption{Performances of different conformal methods on simulated data with random label contamination from a two-level randomized response model.
The Adaptive+ method is implemented by applying the specialized version of Algorithm~\ref{alg:correction-ci} described in Section~\ref{app:BRR-model}, based on a 99\% symmetric confidence interval $[\hat{\nu}^{\mathrm{low}}, \hat{\nu}^{\mathrm{upp}}]$ for $\nu \in \{0,0.25,0.5,0.75,1\}$ and a degenerate interval $[\epsilon, \epsilon]$ for $\epsilon = 0.2$.
The results are shown as a function of $\hat{\nu}^{\mathrm{upp}}-\hat{\nu}^{\mathrm{low}}$.
The calibration set size is $10,000$.
Other details are as in Figure~\ref{fig:exp-synthetic-1-bounded-BRR-K4-eps0.2-ncal10000-nuse0}.}
\label{fig:exp-synthetic-1-bounded-BRR-K4-eps0.2-ncal10000-epsne0}
\end{figure}

\begin{figure}[!htb]
\centering
\includegraphics[width=\linewidth]{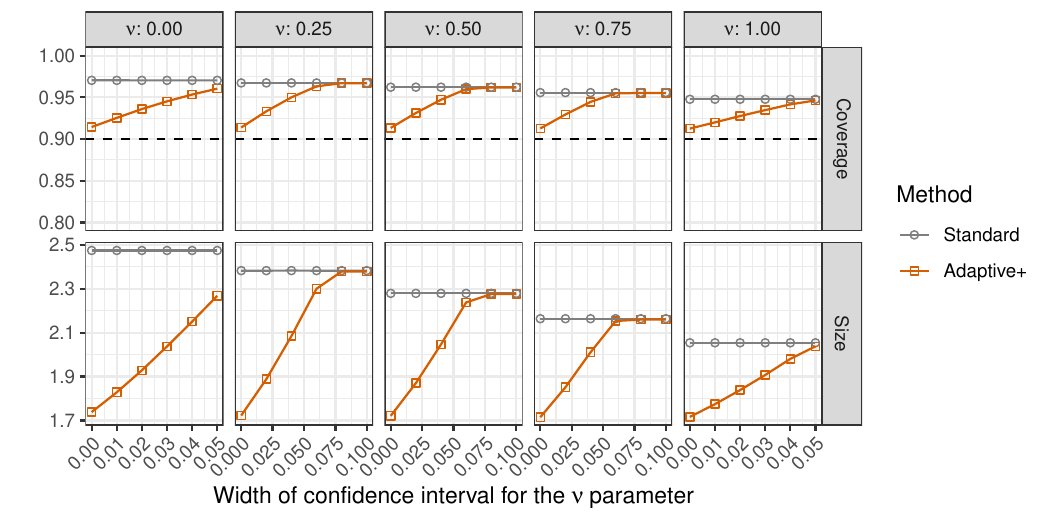}
\caption{Performances of different conformal methods on simulated data with random label contamination from a two-level randomized response model with partly unknown parameters.
The calibration set size is $100,000$.
Other details are as in Figure~\ref{fig:exp-synthetic-1-bounded-BRR-K4-eps0.2-ncal10000-epsne0}.}
\label{fig:exp-synthetic-1-bounded-BRR-K4-eps0.2-ncal100000-epsne0}
\end{figure}

\begin{figure}[!htb]
\centering
\includegraphics[width=\linewidth]{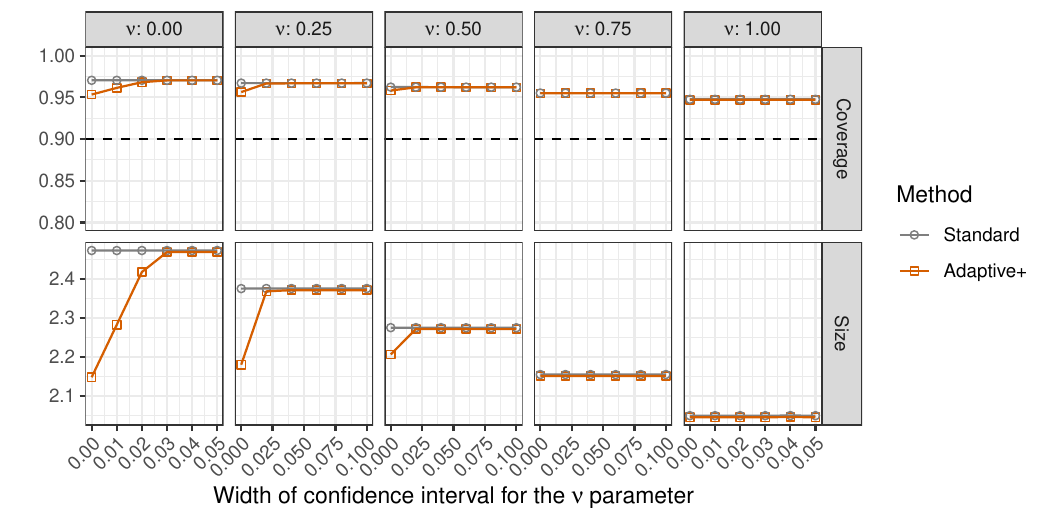}
\caption{Performances of different conformal methods on simulated data with random label contamination from a two-level randomized response model.
The Adaptive+ method is implemented by applying the specialized version of Algorithm~\ref{alg:correction-ci} described in Section~\ref{app:BRR-model}, based on a 99\% symmetric confidence interval $[\hat{\nu}^{\mathrm{low}}, \hat{\nu}^{\mathrm{upp}}]$ for $\nu \in \{0,0.25,0.5,0.75,1\}$ and a degenerate interval $[\epsilon-0.02, \epsilon]$ for $\epsilon = 0.2$.
The results are shown as a function of $\hat{\nu}^{\mathrm{upp}}-\hat{\nu}^{\mathrm{low}}$.
The calibration set size is $10,000$.
Other details are as in Figure~\ref{fig:exp-synthetic-1-bounded-BRR-K4-eps0.2-ncal10000-nuse0}.}
\label{fig:exp-synthetic-1-bounded-BRR-K4-eps0.2-ncal10000-epsne0.02}
\end{figure}

\begin{figure}[!htb]
\centering
\includegraphics[width=\linewidth]{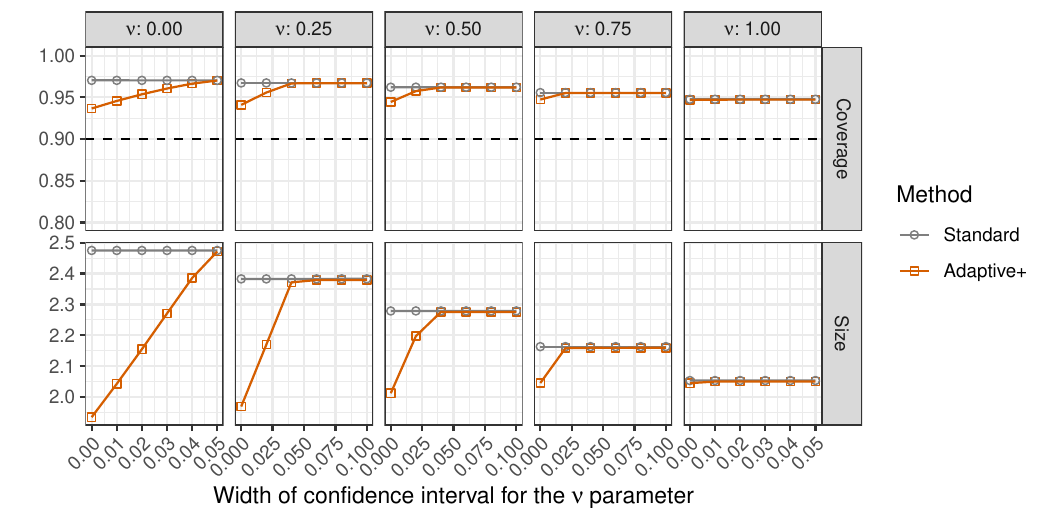}
\caption{Performances of different conformal methods on simulated data with random label contamination from a two-level randomized response model with partly unknown parameters.
The calibration set size is $100,000$.
Other details are as in Figure~\ref{fig:exp-synthetic-1-bounded-BRR-K4-eps0.2-ncal10000-epsne0.02}.}
\label{fig:exp-synthetic-1-bounded-BRR-K4-eps0.2-ncal100000-epsne0.02}
\end{figure}

\FloatBarrier

\subsection{Robustness to model estimation}  \label{app:figures-ci}

\subsubsection{Randomized response model}  \label{app:figures-ci-RR}

\begin{figure}[!htb]
\centering
\includegraphics[width=0.9\linewidth]{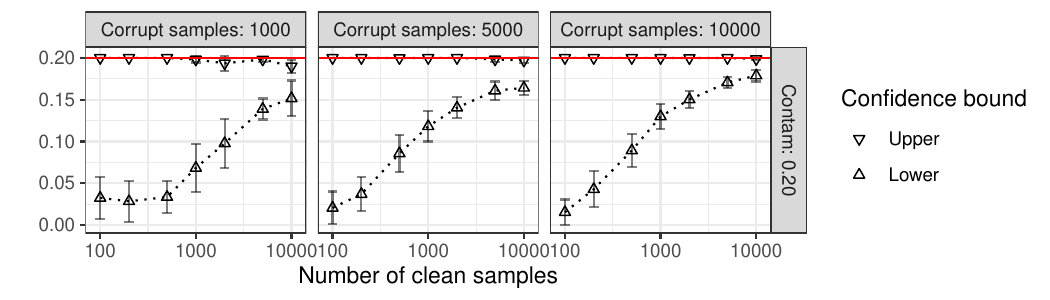}
\caption{Bootstrap confidence intervals for the parameter $\epsilon$ of a randomized response model describing the label contamination process.
The results are shown as a function of the number of clean and contaminated samples used to fit the model.
These confidence intervals are utilized by the {\em Adaptive+ (CI)} method in Figure~\ref{fig:exp-synthetic-1-lab-cond_ci_K2}.}
\label{fig:exp-synthetic-1-lab-cond_ci_bounds_K2}
\end{figure}

\begin{figure}[!htb]
\centering
\includegraphics[width=0.9\linewidth]{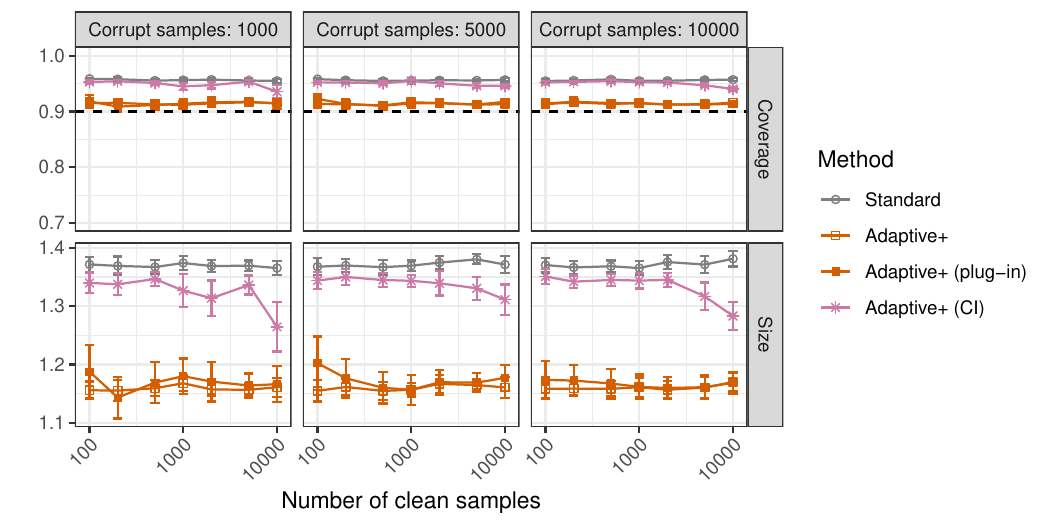}
\caption{Performances of different conformal prediction methods, as a function of the numbers of clean and contaminated samples used to fit a randomized response label contamination model.
The fixed upper bound for $\epsilon=0.2$ utilized by the {\em Adaptive+ (CI)} method, namely Algorithm~\ref{alg:correction-ci}, is equal to $\bar{\varepsilon}=0.25$.
Other details are as in Figure~\ref{fig:exp-synthetic-1-lab-cond_ci_K2}.}
\label{fig:exp-synthetic-1-lab-cond_ci_K2_emax0.25}
\end{figure}

\begin{figure}[!htb]
\centering
\includegraphics[width=0.9\linewidth]{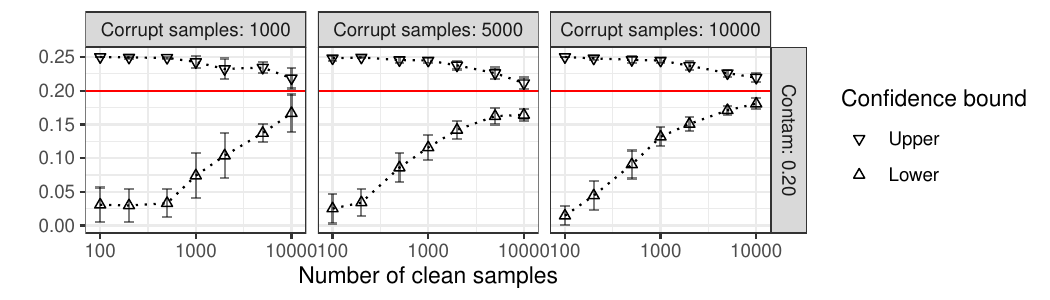}
\caption{Bootstrap confidence intervals for the parameter $\epsilon$ of a randomized response model describing the label contamination process.
The results are shown as a function of the number of clean and contaminated samples used to fit the model.
These confidence intervals are utilized by the {\em Adaptive+ (CI)} method in Figure~\ref{fig:exp-synthetic-1-lab-cond_ci_K2_emax0.25}.}
\label{fig:exp-synthetic-1-lab-cond_ci_bounds_K2_emax0.25}
\end{figure}

\begin{figure}[!htb]
\centering
\includegraphics[width=0.9\linewidth]{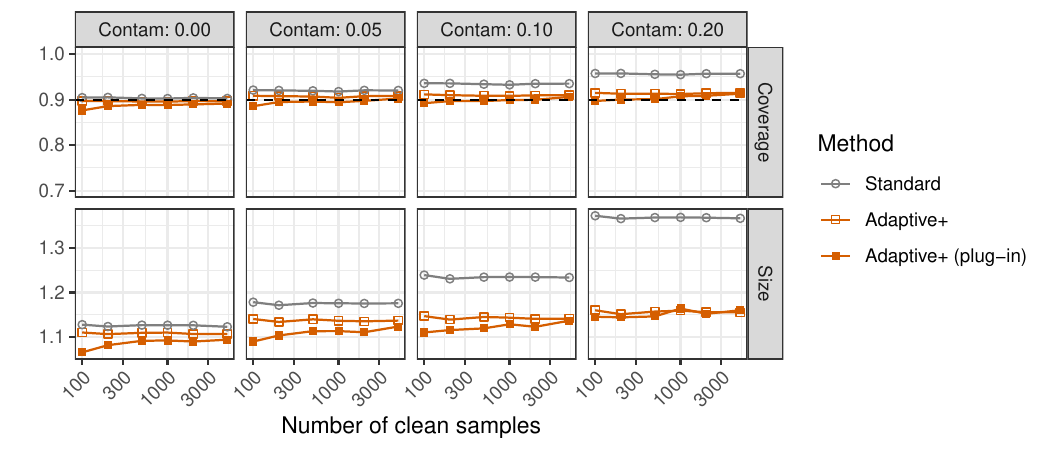}
\caption{Performances of different conformal methods on simulated data with random label contamination following a randomized response model.
The results are shown as a function of the number of clean samples used to fit the noise parameter $\epsilon$.
The results are stratified based on the true value of $\epsilon$.
Other details are as in Figure~\ref{fig:exp-synthetic-1-lab-cond_ci_K2}.}
\label{fig:exp-synthetic-1-lab-cond_point_K2}
\end{figure}

\begin{figure}[!htb]
\centering
\includegraphics[width=0.9\linewidth]{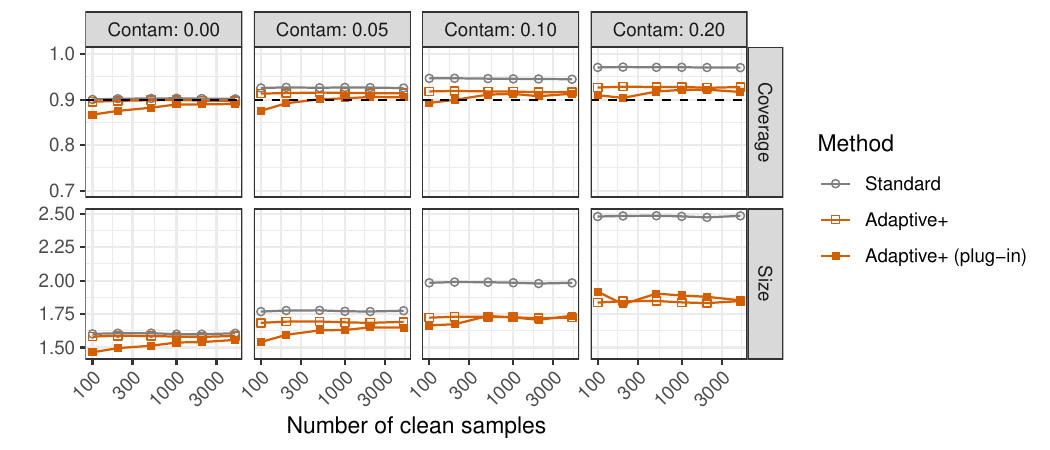}
\caption{Performances of different conformal methods on simulated data with random label contamination  following a randomized response model.
The results are shown as a function of the number of clean samples used to fit the noise parameter $\epsilon$.
The number of possible classes is $K=4$.
Other details are as in Figure~\ref{fig:exp-synthetic-1-lab-cond_point_K2}.}
\label{fig:exp-synthetic-1-lab-cond_point_K4}
\end{figure}

\begin{figure}[!htb]
\centering
\includegraphics[width=0.9\linewidth]{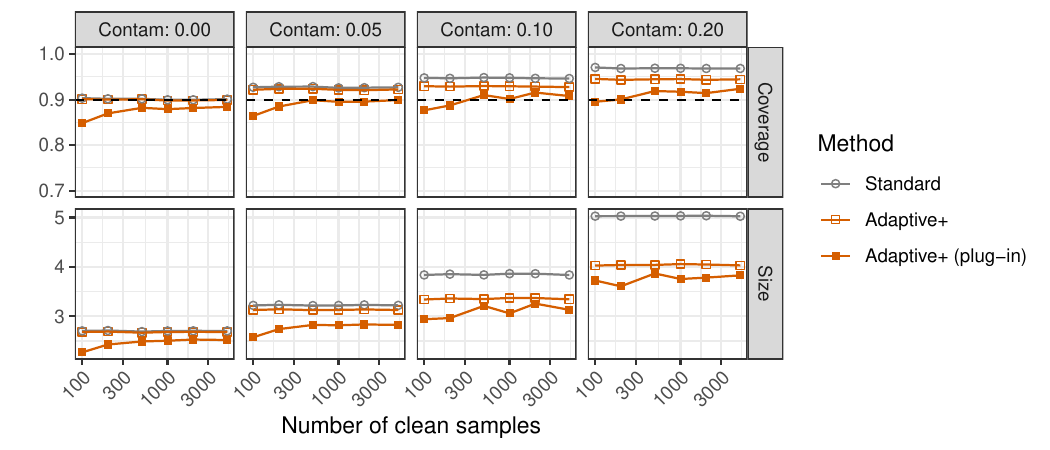}
\caption{Performances of different conformal methods on simulated data with random label contamination following a randomized response model.
The results are shown as a function of the number of clean samples used to fit the noise parameter $\epsilon$.
The number of possible classes is $K=8$.
Other details are as in Figure~\ref{fig:exp-synthetic-1-lab-cond_point_K2}.}
\label{fig:exp-synthetic-1-lab-cond_point_K8}
\end{figure}

\FloatBarrier
\subsubsection{Two-level randomized response model}  \label{app:figures-ci-BRR}

\begin{figure}[!htb]
\centering
\includegraphics[width=0.95\linewidth]{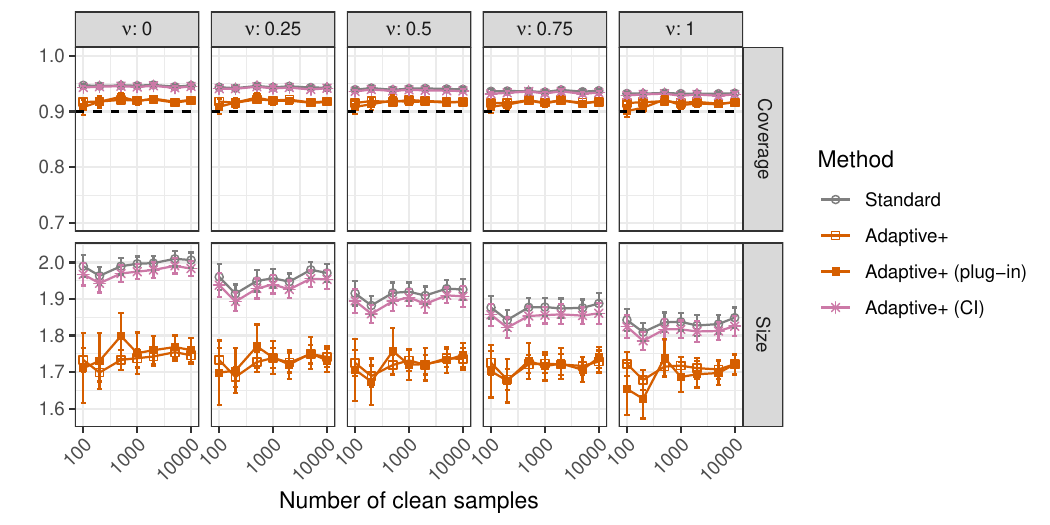}
\caption{Performances of different conformal methods on simulated data with random label contamination from a two-level randomized response model.
The results are shown as a function of the number of clean samples used to fit the parameters $\epsilon$ and $\nu$, for different true values of $\nu$.
The true parameter $\epsilon$ is equal to 0.1.
The number of possible classes is $K=4$.
Other details are as in Figure~\ref{fig:exp-synthetic-1-lab-cond_point_K4}.}
\label{fig:exp-synthetic-1-lab-cond_eps0.1_K4_BRR}
\end{figure}

\begin{figure}[!htb]
\centering
\includegraphics[width=0.95\linewidth]{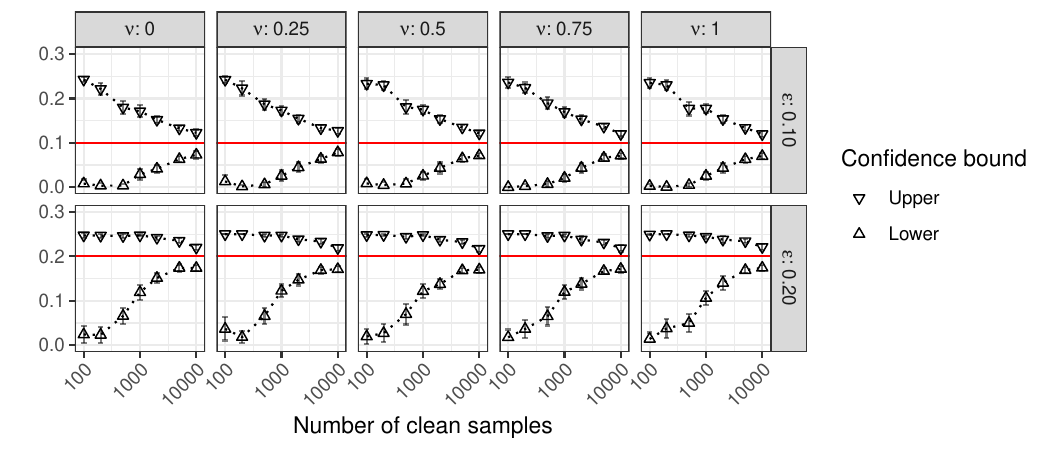}
\caption{Bootstrap confidence intervals for the parameter $\epsilon$ in a two-level randomized response model for the label contamination process.
The results are shown as a function of the number of clean samples used to fit the model, stratified based on the true values of the model parameters $\epsilon$ and $\nu$.
The number of contaminated training samples is $10,000$.
These confidence intervals correspond to those utilized by the {\em Adaptive+ (CI)} method in Figure~\ref{fig:exp-synthetic-1-lab-cond_eps0.1_K4_BRR}.}
\label{fig:exp-synthetic-1-lab-cond_ci_K4_emax0.25_BRR}
\end{figure}

\begin{figure}[!htb]
\centering
\includegraphics[width=0.95\linewidth]{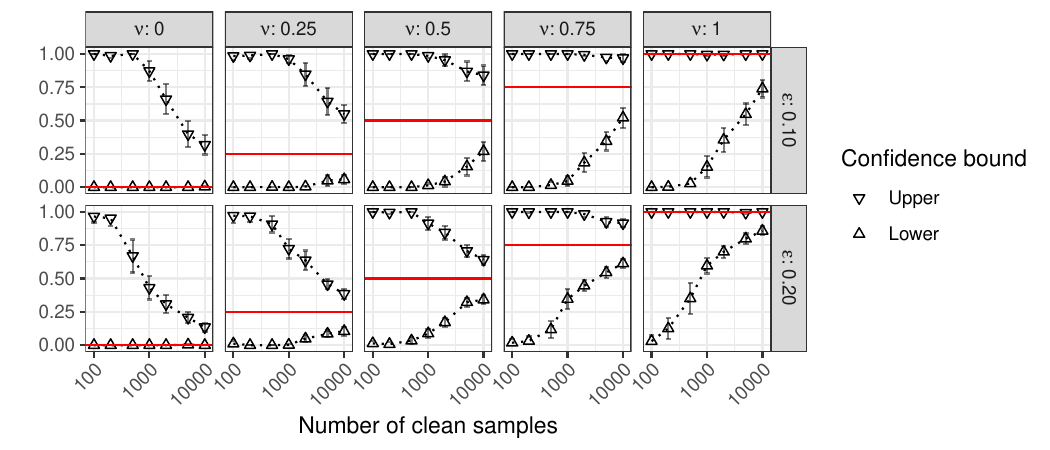}
\caption{Bootstrap confidence intervals for the parameter $\nu$ in a two-level randomized response model for the label contamination process.
Other details are as in Figure~\ref{fig:exp-synthetic-1-lab-cond_ci_K4_emax0.25_BRR}.}
\label{fig:exp-synthetic-1-lab-cond_ci_nu_K4_emax0.25_BRR}
\end{figure}

\begin{figure}[!htb]
\centering
\includegraphics[width=0.95\linewidth]{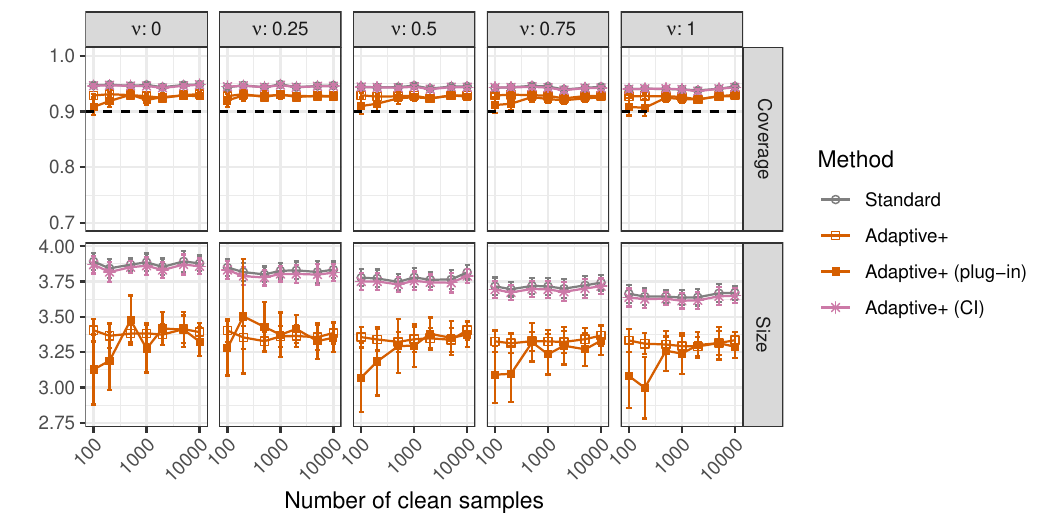}
\caption{Performances of different conformal methods on simulated data with random label contamination from a two-level randomized response model.
The results are shown as a function of the number of clean samples used to fit the parameters $\epsilon$ and $\nu$, for different true values of $\nu$.
The true parameter $\epsilon$ is equal to 0.1.
The number of possible classes is $K=8$.
Other details are as in Figure~\ref{fig:exp-synthetic-1-lab-cond_eps0.1_K4_BRR}.}
\label{fig:exp-synthetic-1-lab-cond_eps0.1_K8_BRR}
\end{figure}

\begin{figure}[!htb]
\centering
\includegraphics[width=0.95\linewidth]{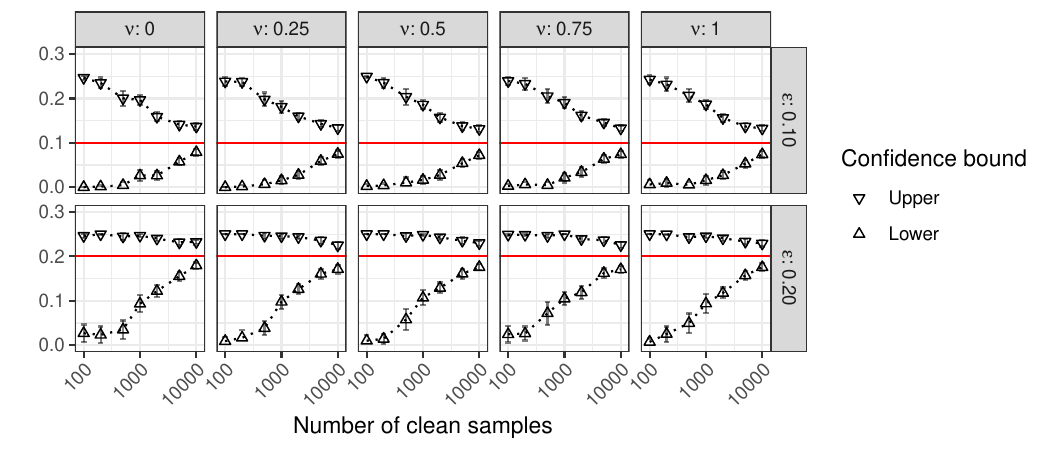}
\caption{Bootstrap confidence intervals for the parameter $\epsilon$ in a two-level randomized response model for the label contamination process.
The results are shown as a function of the number of clean samples used to fit the model, stratified based on the true values of the model parameters $\epsilon$ and $\nu$.
The number of contaminated training samples is $10,000$. The number of possible labels is $K=8$.
These confidence intervals correspond to those utilized by the {\em Adaptive+ (CI)} method in Figure~\ref{fig:exp-synthetic-1-lab-cond_eps0.1_K8_BRR}.}
\label{fig:exp-synthetic-1-lab-cond_ci_K8_emax0.25_BRR}
\end{figure}

\begin{figure}[!htb]
\centering
\includegraphics[width=0.95\linewidth]{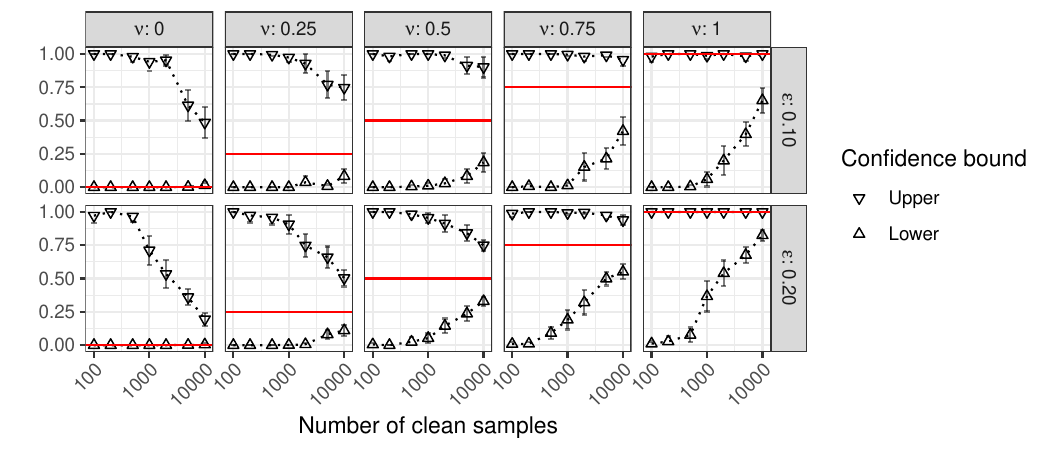}
\caption{Bootstrap confidence intervals for the parameter $\nu$ in a two-level randomized response model for the label contamination process.
The number of possible labels is $K=8$.
Other details are as in Figure~\ref{fig:exp-synthetic-1-lab-cond_ci_K8_emax0.25_BRR}.}
\label{fig:exp-synthetic-1-lab-cond_ci_nu_K8_emax0.25_BRR}
\end{figure}

\FloatBarrier
\subsection{Robustness to model mis-specification}   \label{app:figures-robust}

This section demonstrates the robustness of our adaptive methods to other types of mis-specifications in the label contamination process, going beyond the issue of estimating the noise parameter $\epsilon$ of the randomized response model described in Section~\ref{app:RR-model}.
For this purpose, we generate synthetic data with $K=4$ possible labels as explained in Section~\ref{sec:empirical-known}, but using a label contamination process with a block-like structure.
That is, the contamination process is described by a transition matrix $T \in [0,1]^{K \times K}$, defined as $T_{kl} = \mathbb{P}[\tilde{Y}=k \mid  X, Y=l]$ for all $l,k \in [K]$, given by $T = (1-\epsilon)I_{K} + \epsilon/K \cdot B_{K,2}$, where $B_{K,2}$ is a block-diagonal matrix with $K/2$ constant blocks equal to $J_2$---the $2 \times 2$ matrix of ones.
Then, we apply the optimistic version of Algorithm~\ref{alg:correction} under the mis-specified assumption that $T = (1-\epsilon)I_K + \epsilon/K \cdot J_{K}$, separately using either the correct value of $\epsilon$ or a plug-in empirical estimate obtained from an independent model-fitting data set as in Section~\ref{app:RR-model-estim}.
Consistently with the previous section, we refer to the former method as {\em Adaptive+}  and to the latter as {\em Adaptive+ (plug-in)}.
Figure~\ref{fig:exp-synthetic-1-lab-cond_ci_K4_block} compares the performances of our {\em Adaptive+} and {\em Adaptive+ (plug-in)} to that of the standard conformal inference approach, as a function of the true $\epsilon$ and of the number of clean model-fitting data points used to estimate this parameter.
The number of calibration data points here is set equal to 10,000.
The results show that both methods are quite robust to model mis-specification and can generally produce more informative prediction sets compared to the standard benchmark.

\begin{figure}[!htb]
\centering
\includegraphics[width=0.9\linewidth]{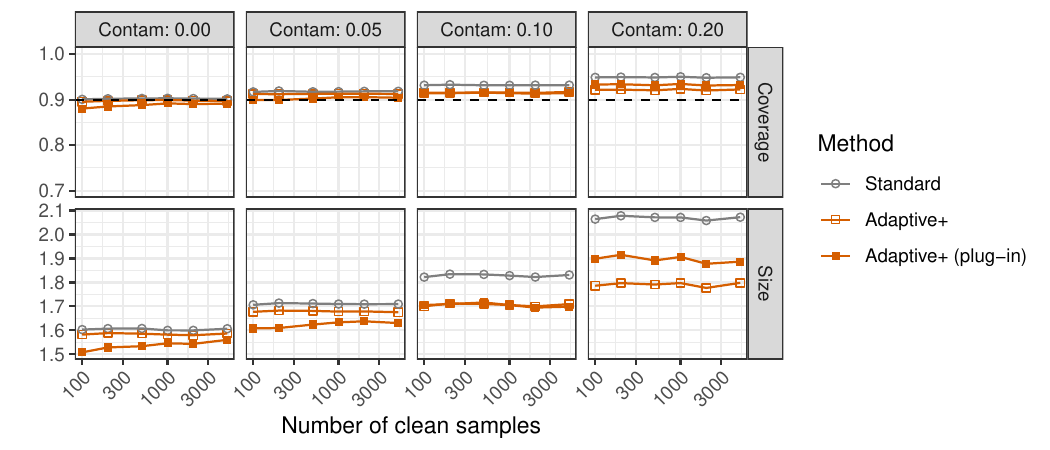}
\caption{Performances of different conformal methods on simulated data, as a function of the number of clean samples used to estimate a mis-specified contamination model.
  The true model matrix $M$ has a block-diagonal structure, and the results are stratified based on the contamination strength.
Other details are as in Figure~\ref{fig:exp-synthetic-1-lab-cond-K4-ncal}.}
\label{fig:exp-synthetic-1-lab-cond_ci_K4_block}
\end{figure}

Finally, Figure~\ref{fig:exp-synthetic-1-lab-cond_ci_K4_random} demonstrates that our methods also enjoy similar robustness under different label contamination processes, focusing specifically on a transition matrix $T = (1-\epsilon)I_{K} + \epsilon/K \cdot U_K$, where $U_K$ is a matrix of i.i.d.~uniform random numbers on $[0,1]$, standardized to have its columns sum to one.

\begin{figure}[!htb]
\centering
\includegraphics[width=0.9\linewidth]{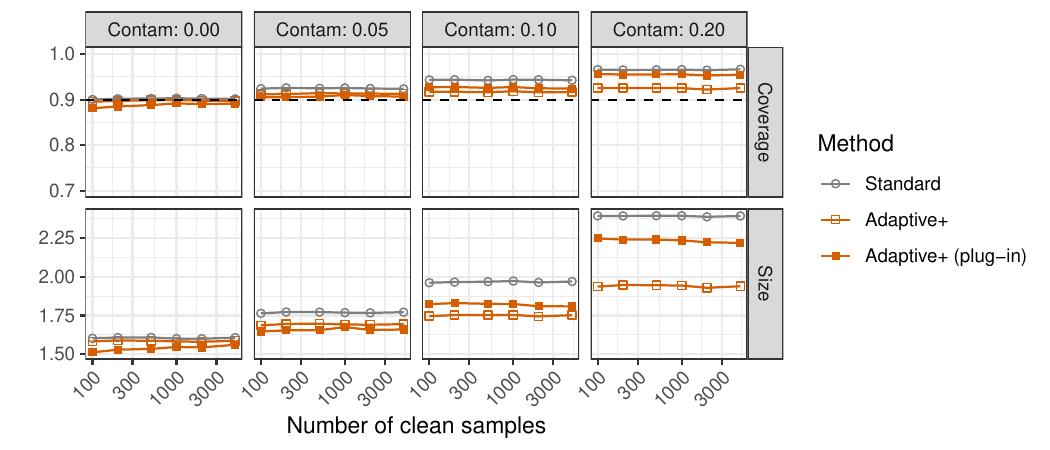}
\caption{Performances of different conformal methods on simulated data, as a function of the number of clean samples used to estimate a mis-specified contamination model.
The contamination model matrix $M$ has a random structure.
Other details are as in Figure~\ref{fig:exp-synthetic-1-lab-cond_ci_K4_block}.}
\label{fig:exp-synthetic-1-lab-cond_ci_K4_random}
\end{figure}

\FloatBarrier
\clearpage

\subsection{Demonstrations with CIFAR-10 image data} \label{app:figures-cifar10}

\begin{figure}[!htb]
\centering
\includegraphics[width=0.85\linewidth]{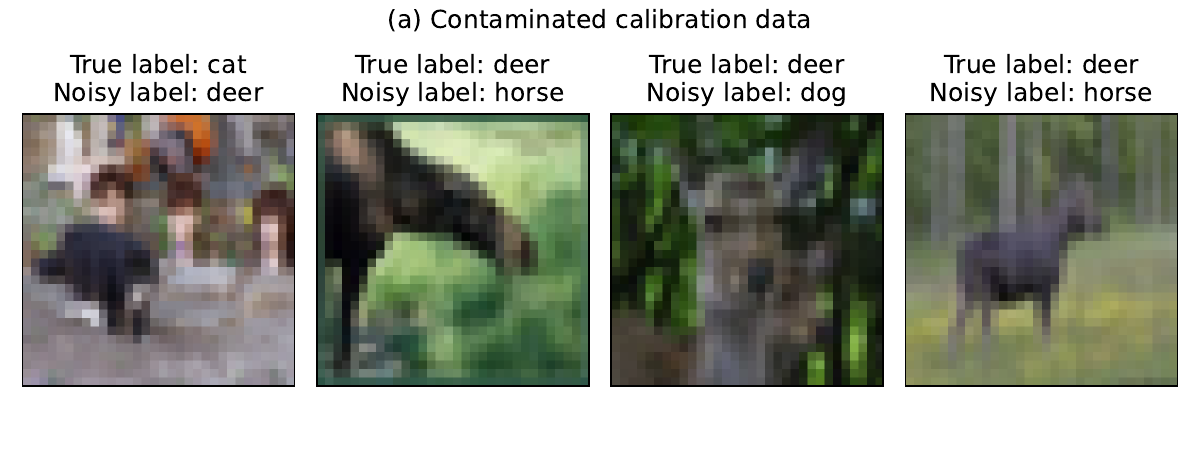}\\[-0.5em]
\includegraphics[width=0.85\linewidth]{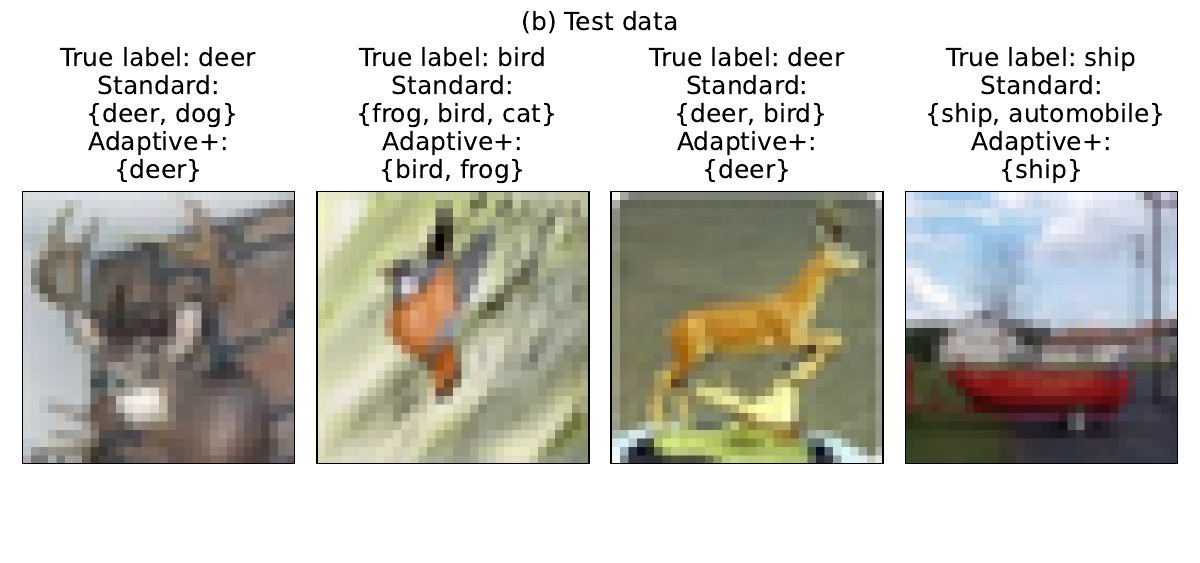}
\caption{Demonstration of some CIFAR-10 images with contaminated labels (a), and of conformal prediction sets (b) obtained with and without correcting for the presence of label contamination in the calibration data, as in the experiments of Figure~\ref{fig:exp-cifar10-marginal}.}
\label{fig:cifar10-demo}
\end{figure}

\begin{figure}[!htb]
\centering
\includegraphics[width=0.85\linewidth]{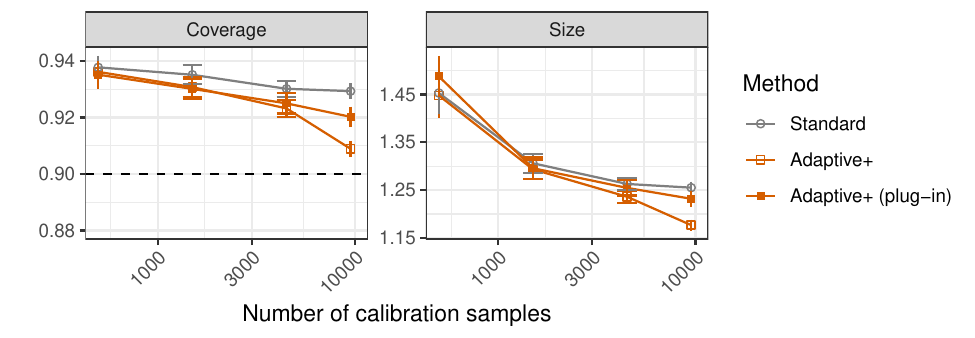}
\caption{Performances of different conformal methods on CIFAR-10 image data with noisy human-assigned labels, as a function of calibration set size.
The horizontal dashed line indicates the nominal 90\% label-conditional coverage level.
Other details are as in Figure~\ref{fig:exp-cifar10-marginal}.}
\label{fig:exp-cifar10-lab-cond}
\end{figure}

\FloatBarrier
\section{Additional details on numerical experiments} \label{app:experiments}

\subsection{Experiments with heteroscedastic decision-tree model} \label{app:details-experiments-tree}

We set $d=50$ and generate each sample of features $X \in \mathbb{R}^{d}$ independently as follows:
$X_1=+1$ w.p. $3/4$, and $X_1=-1$ w.p. $1/4$;
$X_2=+1$ w.p. $3/4$, and $X_2=-2$ w.p. $1/4$;
$X_3=+1$ w.p. $1/4$, and $X_3=-2$ w.p. $1/2$;
$X_4$ is uniformly distributed on $\{1,\ldots,4\}$;
and $X_j \overset{i.i.d.}{\sim} \mathcal{N}(0,1)$ for all $j \in \{5,\ldots,d\}$.
The labels $Y$ belong to one of $K=4$ possible classes, and their conditional distribution given $X=x$ is given by the decision tree shown in Figure~\ref{fig:tree}, which only depends on the first four features.

\begin{figure}[!htb]
  \centering
  \includegraphics[width=0.8\textwidth]{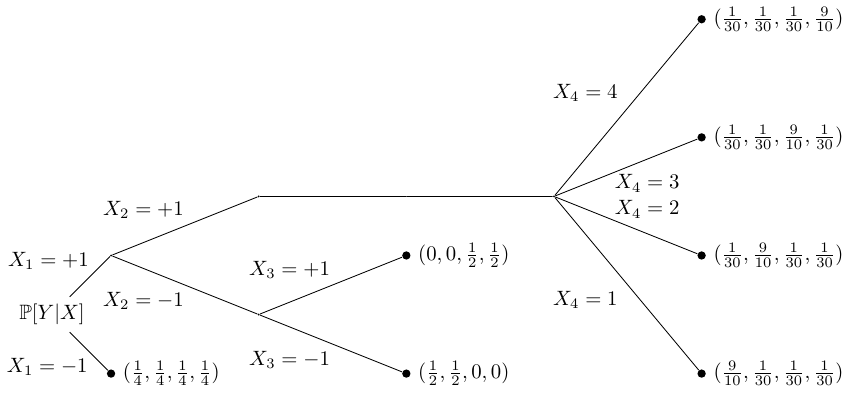}
  \caption{A toy heteroscedastic decision-tree model for $P_{Y \mid X}$ used to generate syntheticin some numerical experiments.}
  \label{fig:tree}
\end{figure}

\end{document}